\documentclass[journal,twocolumn,twoside]{IEEEtran}

\usepackage[cmex10]{amsmath}
\usepackage{amsfonts}
\usepackage{amssymb}
\usepackage{amsthm}
\usepackage{microtype}
\usepackage{graphicx}
\usepackage[usenames,dvipsnames]{color}   %
\usepackage{tikz}                         %

\usepackage{hyperref}
\hypersetup{colorlinks = true, linkcolor = red, citecolor = blue, urlcolor = blue}
\usepackage[all]{hypcap}   %

\usepackage[capitalise,noabbrev]{cleveref}   %
\usepackage{autonum}                         %
\crefformat{equation}{(#2#1#3)}
\crefmultiformat{equation}{(#2#1#3)}{ and~(#2#1#3)}{, (#2#1#3)}{, and~(#2#1#3)}%
\crefrangeformat{equation}{(#3#1#4) to (#5#2#6)}

\usepackage{bbm}                     %
\usepackage{cite}                    %
\usepackage{comment}                 %
\usepackage[shortlabels]{enumitem}   %
\usepackage[T3,T1]{fontenc}          %
\usepackage[utf8]{inputenc}
\usepackage{mathrsfs}                %
\usepackage{mathtools}               %
\usepackage{mleftright}              %
\usepackage{romannum}                %
\usepackage{stmaryrd}                %
\usepackage{upgreek}                 %
\usepackage{balance}

\newtheorem{theorem}{Theorem}
\newtheorem{corollary}{Corollary}
\newtheorem{definition}{Definition}
\newtheorem{lemma}{Lemma}
\newtheorem{proposition}{Proposition}

\allowdisplaybreaks   %
\makeatletter
\newcommand\gobblepars{%
    \@ifnextchar\par%
        { \expandafter\gobblepars\@gobble}%
        {}}
\makeatother

\newcommand\ifrac[2]{{#1}/{#2}}
\newcommand\ifraca[2]{{(#1)}/{#2}}
\newcommand\ifracb[2]{{#1}/{(#2)}}
\newcommand\ifracab[2]{{(#1)}/{(#2)}}

\newcommand{\abs}[1]{\mleft\lvert #1 \mright\rvert}
\newcommand{\ang}[1]{\mleft\langle #1 \mright\rangle}
\newcommand{\bkt}[1]{\mleft[ #1 \mright]}
\newcommand{\brc}[1]{\mleft\{ #1 \mright\}}
\newcommand{\frnorm}[1]{\mleft\lVert #1 \mright\rVert_\mathrm{F}}
\newcommand{\norm}[1]{\mleft\lVert #1 \mright\rVert}
\newcommand{\prn}[1]{\mleft( #1 \mright)}

\newcommand{\bigprn}[1]{\bigl( #1 \bigr)}
\newcommand{\Bigprn}[1]{\Bigl( #1 \Bigr)}
\newcommand{\biggprn}[1]{\biggl( #1 \biggr)}
\newcommand{\Biggprn}[1]{\Biggl( #1 \Biggr)}
\newcommand{\bigbkt}[1]{\bigl[ #1 \bigr]}
\newcommand{\Bigbkt}[1]{\Bigl[ #1 \Bigr]}
\newcommand{\biggbkt}[1]{\biggl[ #1 \biggr]}

\newcommand{\bigbrc}[1]{\bigl\{ #1 \bigr\}}
\newcommand{\biggbrc}[1]{\biggl\{ #1 \biggr\}}
\newcommand{\Bignorm}[1]{\Bigl\Vert #1 \Bigr\Vert}

\newcommand{\mbkt}[1]{\mleft[ #1 \mright]}
\newcommand{\mbrc}[1]{\mleft\{ #1 \mright\}}
\newcommand{\mprn}[1]{\mleft( #1 \mright)}

\newcommand{\ibrc}[1]{\{ #1 \}}
\newcommand{\inorm}[1]{\lVert #1 \rVert}
\newcommand{\iprn}[1]{( #1 )}
\newcommand{\iabs}[1]{\lvert #1 \rvert}
\newcommand{\ibkt}[1]{[ #1 ]}

\DeclareMathOperator*{\cl}{cl}
\DeclareMathOperator*{\proj}{proj}
\DeclareMathOperator*{\rad}{rad}

\newcommand{\E}[1]{\mathbb{E} \mbkt{#1}}
\newcommand{\iE}[1]{\mathbb{E} \bigbkt{#1}}

\newcommand{\Ewrt}[2]{\mathop{\mathbb{E}}_{#1} \mbkt{#2}}
\newcommand{\Iv}[1]{\mathbbm{1} \mbrc{#1}}
\renewcommand{\P}[1]{\mathbb{P} \mprn{#1}}
\newcommand{\iP}[1]{\mathbb{P} \iprn{#1}}
\newcommand{\Pwrt}[2]{\mathop{\mathbb{P}}_{#1} \mprn{#2}}

\newcommand{\tallinf}{\mathop{\mathrm{inf}\vphantom{\mathrm{sup}}}}

\renewcommand{\complement}{\mathrm{c}}

\newcommand{\given}{\:\middle\vert\:}

\DeclareMathOperator*{\latinf}{\tilde{\bigwedge}}
\DeclareMathOperator*{\ilatinf}{\tilde{\wedge}}
\newcommand{\stackclap}[2]{\stackrel{\mathclap{#1}}{#2}}   %
\renewcommand{\top}{\mathrm{T}}

\newcommand{\circled}[1]{
    \tikz[baseline=(char.base)]{\node[shape=circle, draw, inner sep=0.5pt] (char) {#1};}
}

\DeclareSymbolFont{tipa}{T3}{cmr}{m}{n}
\DeclareMathAccent{\invbreve}{\mathalpha}{tipa}{16}

\makeatletter
\def\@tvsp{\mathchoice{{}\mkern-4.5mu}{{}\mkern-4.5mu}{{}\mkern-2.5mu}{}}
\def\ltVert{\mleft|\@tvsp\left|\@tvsp\left|}
\def\rtVert{\right|\@tvsp\right|\@tvsp\mright|}
\newcommand{\tnorm}[1]{\ltVert #1 \rtVert}

\newcommand{\itnorm}[1]{|\@tvsp\mathinner{|\@tvsp\mathinner{|#1|}\@tvsp|}\@tvsp|}
\makeatother

\newcommand{\tdots}{\mathinner{\ldotp\mkern-2mu\ldotp\mkern-2mu\ldotp}}
\newcommand{\tcdots}{\mathinner{\cdotp\mkern-2mu\cdotp\mkern-2mu\cdotp}}

\newcommand{\wildcard}{{\mkern2mu\cdot\mkern2mu}}
\newcommand{\lwildcard}{{\mkern2mu\cdot}}
\newcommand{\rwildcard}{{\cdot\mkern2mu}}
\newcommand{\diff}{\mathop{}\!\mathrm{d}}
\newcommand{\idiff}{\mathrm{d}}
\newcommand{\ibigwedge}{\wedge}

\begin{document}

\pagenumbering{arabic}
\bstctlcite{IEEEexample:BSTcontrol}

\title{Doeblin Curves}
\author{Dongmin~Lee,~William~Lu,~Anuran~Makur, and~Japneet~Singh
    \thanks{The author ordering is alphabetical. This work was supported by the National Science Foundation (NSF) CAREER Award under Grant CCF-2337808. An earlier version of this paper was presented in part at the 2024 IEEE International Symposium on Information Theory (ISIT) \cite{ConferenceVersion}.}
    \thanks{Dongmin Lee is with the Department of Computer Science, Purdue University, West Lafayette, IN 47907, USA (e-mail: lee4818@purdue.edu).}
    \thanks{William Lu is with the Department of Computer Science, Purdue University, West Lafayette, IN 47907, USA (e-mail: lu909@purdue.edu).}
    \thanks{Anuran Makur is with the Department of Computer Science and the Elmore Family School of Electrical and Computer Engineering, Purdue University, West Lafayette, IN 47907, USA (e-mail: amakur@purdue.edu).}
    \thanks{Japneet Singh is with the Elmore Family School of Electrical and Computer Engineering, Purdue University, West Lafayette, IN 47907, USA (e-mail: sing1041@purdue.edu).}
}

\maketitle

\begin{abstract}
    Recent research on Doeblin coefficients has shed light on their usefulness as a multi-way generalization of the Dobrushin contraction coefficient for TV distance, in a separate vein from their classic role in the theory of Markov chain ergodicity. However, strong conditions, such as being bounded away from $0$, are typically necessary for Doeblin coefficients to establish the existence of information contraction. Building on recently formulated concepts of nonlinear information contraction, we aim to propose a finer-grained Doeblin-based characterization of multi-way contraction behavior which yields non-vacuous contraction guarantees even for channels whose Doeblin coefficient is $0$. To this end, we introduce the notion of a \emph{Doeblin curve}\textemdash a nonlinear function which quantifies the contraction behavior of a Markov kernel on collections of input distributions at specific levels of divergence and power. Through the course of our analysis, we develop a new variational characterization of Doeblin coefficients, present several properties of Doeblin curves, define several versions of power-constrained Doeblin curves, and derive upper and lower bounds using our aforementioned variational characterization. We then utilize these results in diverse areas, including generalization bounds for noisy iterative optimization, error bounds for reliable computation with noisy circuits, and differential privacy guarantees for online iterative algorithms. In particular, we extend results in these areas to broader domains or group settings, leveraging Doeblin curves to reveal finer-grained contraction phenomena than Doeblin coefficients.
\end{abstract}

\begin{IEEEkeywords}
    Doeblin coefficient, Markov kernel, information contraction, reliable computation, generalization error, differential privacy.
\end{IEEEkeywords}

\section{Introduction}

Analyzing the contraction properties of channels or Markov kernels is a fundamental problem in information theory stemming back to the celebrated \emph{data processing inequality}, which states that the $f$-divergence between two distributions does not increase after they are passed through a Markov kernel \cite{Csiszar1967, Csiszar1972}. Data processing inequalities can be strengthened using contraction coefficients \cite{Hirschfeld_1935, Dobrushin1956, Renyi1959, AhlswedeGacs1976, CohenKempermanZbaganu1998, evans1999signal}, which quantify the degree to which two distributions contract after being pushed forward through a Markov kernel (also see more recent work \cite{Anantharametal2014, MakurZheng2015, Raginsky2016, PolyanskiyWu2017, CalmonPolyanskiyWu2018, Makur2019, MakurZheng2020, Huangetal2023, LeeMakur2024} and the references therein). In some cases, the quantification of this contraction behavior can be extended to an arbitrary collection of distributions represented by another Markov kernel, e.g., using Doeblin coefficients \cite{MakurSingh2023b}.

Historically, such contraction analyses were closely tied to investigating the convergence rate of a Markov chain to its steady-state distribution. Initial work in this vein by Doeblin \cite{Doeblin1937,Doeblin1938} established exponential convergence rates for all Markov matrices with strictly positive Doeblin coefficients, while more general weak ergodicity results for inhomogeneous Markov chains have been obtained using similar ideas, e.g., \cite[Lemma 3]{ChestnutContinous}. Furthermore, \cite{MakurSingh2023b} recently showed that Doeblin coefficients may be perceived as an $n$-way generalization of total variation (TV) distance, a view in which the submultiplicativity of Doeblin coefficients corresponds to the data processing inequality for $n$ input distributions.

However, the utility of Doeblin coefficients remains limited in the regime of channels whose Doeblin coefficient is $0$, for which information contraction properties cannot be easily established using existing Doeblin-based techniques. For example, prior work has demonstrated that Markov chains exhibit convergence to stationarity even under much weaker conditions such as drift and local minorization, cf. \cite{Rosenthal2004}, and a strictly positive Doeblin coefficient may be considered a strong assumption. Consequently, contemporary results in Markov process analysis often shed the dependence on Doeblin coefficients altogether, instead being rooted in alternative methods such as spectral gap (or Poincar\'{e}) and logarithmic Sobolev inequalities \cite{MontenegroTetali2006}.

The issue of contraction coefficients taking on trivial values exists in broader information-theoretic settings too \cite{PolyanskiyWu2016,CalmonPolyanskiyWu2018}. For example, Dobrushin's contraction coefficient for TV distance often takes the value $1$ which does not demonstrate any meaningful information contraction \cite{PolyanskiyWu2016}. To circumvent this issue, \cite{PolyanskiyWu2016} develops nonlinear functions called \emph{Dobrushin curves} that capture TV contraction even if the contraction coefficient is $1$. Propelled by this line of inquiry, we seek to understand Doeblin-based information contraction properties of Markov kernels under general conditions, including the case of zero Doeblin coefficient. To this end, we introduce the notion of a \emph{Doeblin curve} in this work (also see \cite{ConferenceVersion}). This nonlinear function quantifies the degree of contraction exhibited by its associated Markov kernel $K$, when applied to an arbitrary collection of input distributions represented by another kernel $W$. In contrast to conventional Doeblin coefficients, the nonlinearity of the Doeblin curve captures a more nuanced view of the contraction behavior of $K$.

\subsection{Main Contributions and Outline}

We briefly outline the structure of our paper and list our main contributions. Following a discussion of preliminaries, in \cref{subsection:variational-characterization-of-doeblin-coefficients}, we present a new variational characterization of Doeblin coefficients as an infimum over arbitrary partitions of the output space. Next, we introduce and formally define the Doeblin curve of a general Markov kernel and enumerate basic properties in \cref{subsection:doeblin-curves}. We discuss power-constrained versions in \cref{subsection:power-constrained-doeblin-curves}, derive upper and lower bounds using the aforementioned variational characterization in \cref{subsection:bounds-on-power-constrained-doeblin-curves}, and present examples of kernels with closed-form Doeblin curves. Notably, in contrast to much of the literature on information contraction and Doeblin coefficients, this work is developed in the context of Markov kernels over general Polish spaces (as opposed to Markov matrices over finite sets).

Building on these theoretical results, we discuss several applications of Doeblin curves in \cref{section:main-results-on-applications}. In \cref{subsection:bounds-on-generalization-error}, we employ Doeblin curves to derive generalization error bounds for noisy iterative algorithms operating on feasible sets with infinite diameter, a situation in which Doeblin coefficients fail to produce non-trivial contraction bounds. In \cref{subsection:reliable-computation-using-noisy-q-ary-gates}, we discuss lower bounds for reliable computation using circuits of noisy $q$-ary gates. Lastly, in \cref{subsection:relation-to-differential-privacy-and-privacy-amplification}, we utilize our variational characterization of Doeblin coefficients to extend the definition of $(\epsilon, \delta)$-local differential privacy (LDP) to a group setting, and provide differential privacy guarantees for noisy iterative algorithms in terms of the Doeblin curve of the privacy mechanism.

We provide proofs of our main results in \cref{section:proofs-of-main-results-on-doeblin-curves} and proofs for the aforementioned applications in \cref{section:proofs-of-main-results-on-applications}. We defer further technical miscellany to \cref{appendix:technical-lemmata,appendix:variational-characterization-under-equicontinuity,appendix:markov-kernel-examples} and relate Doeblin curves back to the classic setting of Markov chain ergodicity in \cref{appendix:doeblin-curves-and-markov-chains}.

\subsection{Related Literature}

We summarize the prior literature on Doeblin coefficients, multi-way divergence metrics, and strong data processing inequalities (SDPIs). At the outset, Doeblin's seminal work \cite{Doeblin1937,Doeblin1938} introduced coupling as a technique for establishing uniform exponential convergence rates of Markov chains with respect to TV distance (also see \cite{Kimbleton1969,BhattacharyaWaymire2001,Chestnut2010}), and minorization as a technique to prove the weak ergodicity of inhomogeneous Markov chains (also see \cite{ChestnutContinous}). Notably, Doeblin coefficients were extremal minorization constants in such techniques. Subsequent results \cite[Theorem 3.1]{BhattacharyaWaymire2001}, \cite[Lemma 5]{GohariGunluKramer2020}, \cite[Section IV-D]{Makur2020} established the equivalence between Doeblin minorization and degradation by erasure channels, akin to the view of contraction coefficients for Kullback-Leibler (KL) divergence as domination by erasure channels under the less noisy preorder \cite{PolyanskiyWu2017}. Doeblin coefficients also find relevance in various machine learning and applied probability problems, such as change detection algorithms \cite{chen2022change}, regret bounds in multi-armed bandit problems \cite{Vrettos2020}, Markov chain Monte Carlo methods \cite{rosenthal1995}, analysis of mixing times \cite{doeblin1940elements}, and estimation of entropy rates \cite{kontoyiannis1994}. Previous work on strong data processing inequalities \cite[Remark 3.2]{Raginsky2016}, \cite[Section I-D]{MakurPolyanskiy2018} has also shown how the Doeblin minorization condition yields upper bounds on contraction coefficients for $f$-divergences.

Extending the notion of $f$-divergence to compare three or more distributions simultaneously is another key avenue of research. For example, \cite{Gushchin2008,Williamson2022} develop some general approaches for such extensions. Moreover, \cite{MakurSingh2023b} established several properties of Doeblin coefficients (including geometric aspects, simultaneous and maximal coupling characterizations, and contraction properties over Bayesian networks), and noted that Doeblin coefficients may be perceived as a multi-way generalization of TV distance. (Interestingly, max-Doeblin coefficients \cite{MakurSingh2023b}, or maximal leakage \cite{IssaWagnerKamath2020}, also share many of these properties \cite{MakurSingh2025}.)

Much like Doeblin coefficients, the Dobrushin contraction coefficient for TV distance also plays a seminal role in the analysis of Markov kernels. Both the Doeblin and Dobrushin coefficients share key properties such as submultiplicativity \cite{PolyanskiyWu2017,MakurSingh2023b}, and the latter enjoys utility in applications such as generalization error bounds \cite{CalmonGeneralizationError}, differential privacy guarantees \cite{asoodeh2021local}, and analysis of reinforcement learning algorithms \cite{RLDobrushinApplication}. As noted earlier, \cite{PolyanskiyWu2016} introduced Dobrushin curves to quantify information dissipation in channels with average input cost constraint even when their Dobrushin coefficient is trivially $1$. Akin to this nonlinear perspective, other studies \cite{PolyanskiyWu2017,CalmonPolyanskiyWu2018} developed strong data processing inequalities which capture similar fine-grained contraction behavior. In this paper, we extend these ideas to the realm of Doeblin coefficients, demonstrating that analogous nonlinear enhancements can be devised to broaden the efficacy of Doeblin-based approaches for characterizing contraction behavior in general settings.

\subsection{Preliminaries}

We define relevant notation, review technical preliminaries, and provide several pertinent characterizations of Doeblin coefficients used in our paper. Let $\mathbb{N} \triangleq \ibrc{1, 2, \tdots}$ denote the natural numbers starting from $1$. Let $\mathbb{R}_+$ denote the non-negative real numbers. Let $[n] \triangleq \ibrc{1, \tdots, n}$ denote integer intervals. Given a function $f\colon \mathcal{X} \rightarrow \mathbb{R}$ defined on a convex domain $\mathcal{X} \subseteq \mathbb{R}$, let $\invbreve{f}\colon \mathcal{X} \rightarrow \mathbb{R}$ denote its upper concave envelope (i.e., the pointwise infimum of all its concave upper bounds, or the ``convex hull of $f$'' \cite[Chapter B, Proposition 2.5.1, Definition 2.5.3]{HiriartUrrutyLemarechal2001}).   %
Given a Banach space $(\mathcal{X}, \inorm{\wildcard})$, denote the \emph{diameter} and \emph{Chebyshev radius} of a (multi-)subset $\mathcal{S} \subseteq \mathcal{X}$ as   \begin{equation}
    \tnorm{\mathcal{S}}_\infty \smash{\triangleq} \sup_{x, y \in \mathcal{S}} \norm{x - y} , \quad
    \rad(\mathcal{S}) \triangleq \tallinf_{a \in \mathcal{X}} \sup_{x \in \mathcal{S}} \norm{x - a} . \label{eq:diameter}
\end{equation}\par
All measures considered in this paper are finite.   %
Let $(\mathcal{X}, \mathcal{F}_\mathcal{X})$ and $(\mathcal{Y}, \mathcal{F}_\mathcal{Y})$ be (measurable) Polish spaces equipped with Borel $\sigma$-algebras, and let $\mathscr{P}$ denote the set of probability measures on $(\mathcal{Y}, \mathcal{F}_\mathcal{Y})$. (Throughout this paper, all Polish spaces considered are assumed to have cardinality at least $2$.) A \emph{Markov kernel} (or channel) $K$ acting from $\mathcal{X}$ to $\mathcal{Y}$ is a mapping $K\colon \mathcal{X} \times \mathcal{F}_\mathcal{Y} \rightarrow [0, 1]$ such that ${K(\lwildcard \mid x)}$ is a probability measure for each $x \in \mathcal{X}$ and $K(A \mid \rwildcard)$ is a measurable function for each $A \in \mathcal{F}_\mathcal{Y}$. For any scalar $\alpha \in \mathbb{R}$, kernel $K$, and measure $\pi$, we write $K \geq \alpha \pi$ to denote that
\begin{align}
    \forall x \in \mathcal{X},\, \forall A \in \mathcal{F}_\mathcal{Y}, \quad K(A \mid x) \geq \alpha \cdot \pi(A).
\end{align}
Given probability distributions $P$ and $Q$, let $P \otimes Q$ denote their product distribution and let $P * Q$ denote their convolution. Given probability measures $P$ and $Q$ both on $(\mathcal{X}, \mathcal{F}_{\mathcal{X}})$, let the TV distance be denoted $\inorm{P-Q}_{\mathsf{TV}}\triangleq \sup_{A \in \mathcal{F}_\mathcal{X}} |P(A)-Q(A)|$.  Given $\sigma$-algebras $\mathcal{F}_\mathcal{X}$ and $\mathcal{F}_\mathcal{Y}$, let $\mathcal{F}_\mathcal{X} \otimes \mathcal{F}_\mathcal{Y}$ denote their product $\sigma$-algebra. Given measures $P_i: \mathcal{F}_\mathcal{Y} \rightarrow \mathbb{R}_+$ for $i \in [n]$, let $[P_1, \tdots, P_n]\colon [n] \times \mathcal{F}_\mathcal{Y} \rightarrow \mathbb{R}_+$ denote the kernel formed by collecting these measures, i.e.,
\begin{align}
    \forall i \in [n], \, \forall A \in \mathcal{F}_\mathcal{Y}, \quad \bkt{P_1, \tdots, P_n}(A \mid i) \triangleq P_i(A).
\end{align}
Let $\updelta_x$ denote the Dirac measure (i.e., point mass) concentrated at $x$.   %
Let $\Xi$ denote the identity kernel, i.e.,
\begin{align}
    \forall x \in \mathcal{X}, \quad \Xi(\lwildcard \mid x) \triangleq \updelta_x(\wildcard) .
\end{align}

Given measures $\pi, \mu\colon \mathcal{F}_\mathcal{Y} \rightarrow \mathbb{R}_+$, we write $\pi \ll \mu$ to indicate that $\pi$ is dominated by (or \emph{absolutely continuous with respect to}) $\mu$, i.e.,   \begin{math}
    \mu(A) = 0 \Rightarrow \pi(A) = 0
\end{math} for all $A \in \mathcal{F}_\mathcal{Y}$.
When $\pi \ll \mu$ and $\mu$ is $\sigma$-finite, we let $\frac{\idiff\pi}{\idiff\mu}\colon \mathcal{Y} \rightarrow \mathbb{R}_+$ denote the Radon-Nikodym derivative of $\pi$ with respect to $\mu$. We say that a Markov kernel $K\colon \mathcal{X} \times \mathcal{F}_\mathcal{Y} \rightarrow [0, 1]$ is \emph{absolutely continuous} if there exists a $\sigma$-finite measure $\mu\colon \mathcal{F}_\mathcal{Y} \rightarrow \mathbb{R}_+$ such that $K(\lwildcard \mid x) \ll \mu$ for all $x \in \mathcal{X}$.\footnote{Note that such a common dominating measure $\mu$ always exists for any $K$, but we additionally require $\sigma$-finiteness of $\mu$ so that \cite[Chapter V, Theorem 4.44]{Cinlar2011} can be invoked.} For example, note that if $\mathcal{X} = \ibrc{x_1, x_2, \tdots}$ is countably infinite, one such dominating finite measure $\mu$ is
\begin{equation}
    \mu(\wildcard) = \sum_{i=1}^\infty \frac{K(\lwildcard \mid x_i)}{2^i} . \label{eq:common-dominating-measure}
\end{equation}
By Doob's variant of the Radon-Nikodym theorem \cite[Chapter V, Theorem 4.44]{Cinlar2011}, absolute continuity of $K$ implies the existence of a kernel function of Radon-Nikodym derivatives $\frac{\idiff K}{\idiff \mu}(y \mid x)\colon \mathcal{Y} \times \mathcal{X} \rightarrow \mathbb{R}_+$, such that
\begin{align}
    \forall x \in \mathcal{X}, \, \forall A \in \mathcal{F}_\mathcal{Y}, \quad K(A \mid x) = \int_A \frac{\diff K}{\diff \mu}(\lwildcard \mid x) \diff \mu
\end{align}
and $\frac{\idiff K}{\idiff \mu}$ is jointly measurable with respect to the product $\sigma$-algebra $\mathcal{F}_\mathcal{Y} \otimes \mathcal{F}_\mathcal{X}$. We impose this mild regularity condition on $K$ as a precondition for several of our results; it is standard in measure-theoretic treatments of information theory (cf. \cite[Theorem 2.12]{PolyanskiyWu2025}).

Given a collection of measures $\ibrc{\pi_i}_{i \in \mathcal{I}}$ on a Polish space $(\mathcal{Y}, \mathcal{F}_\mathcal{Y})$, where $\mathcal{I}$ is an arbitrary index set, define their \emph{greatest common component} $\ibigwedge_{i \in \mathcal{I}} \pi_i$ \cite[Chapter 3, Section 7.1]{mainbook} as the unique measure on $(\mathcal{Y}, \mathcal{F}_\mathcal{Y})$ given by
\begin{equation}
    \forall A \in \mathcal{F}_\mathcal{Y}, \quad \bigwedge_{i \in \mathcal{I}} \pi_i(A) \triangleq \sup \brc{\nu(A): \forall i \in \mathcal{I}, \, \pi_i \geq \nu}, \label{eq:greatest-common-component}
\end{equation}
where the supremum is taken over all measures $\nu\colon \mathcal{F}_\mathcal{Y} \rightarrow \mathbb{R}_+$ such that $\pi_i \geq \nu$ for each $i \in \mathcal{I}$. We remark that $\ibigwedge_{i \in \mathcal{I}} \pi_i(A) \neq \inf_{i \in \mathcal{I}} \pi_i(A)$ in general, because the pointwise infimum of measures is not a measure in general (as it may fail to satisfy countable additivity). We refer readers to \cite[Chapter 3, Theorem 7.1]{mainbook} for a proof that the pointwise supremum in \cref{eq:greatest-common-component} defines a valid measure.

When we want to emphasize the discrete, finite-dimensional nature of a setting, we denote vectors and matrices with lowercase and uppercase bold letters, respectively. Let $\mathscr{P}_{d-1} \subset \mathbb{R}^d$ be the $(d - 1)$-dimensional probability simplex of \emph{row} pmf vectors, and let $\mathbb{R}_\mathsf{sto}^{d \times d}$ be the set of all $d \times d$ row stochastic matrices. Let $\mathbf{e}_i \in \mathscr{P}_{d-1}$ be the $i$th standard basis \emph{row} vector. Given a matrix $\mathbf{A}$, let $[\mathbf{A}]_{i,j}$ denote its entry at row $i$ and column $j$, and let $[\mathbf{A}]_{\ang{i}} = \mathbf{e}_i \mathbf{A}$ denote its $i$th row, represented as a \emph{row} vector. If $\mathbf{A}$ is square, let $\lambda_i(\mathbf{A})$ denote its $i$th eigenvalue (counting algebraic multiplicity and ordered by descending magnitude). Let $\inorm{\wildcard}_p$ denote the $\ell^p$-norm in Euclidean space.

Given a Markov kernel $K$, define its \emph{Doeblin coefficient} $\tau(K) \in [0, 1]$ and \emph{complementary Doeblin coefficient} $\rho(K) \in [0, 1]$ as
\begin{align}
    \tau(K) &\triangleq \sup \brc{\alpha \in \mathbb{R}: \exists \pi \in \mathscr{P}, \, K \geq \alpha \pi} , \label{eq:doeblin-coefficient} \\
    \rho(K) &\triangleq 1 - \tau(K)  .
\end{align}
The Doeblin coefficient of $K$ may be characterized in terms of its greatest common component as
\begin{equation}
    \tau(K) = \bigwedge_{x \in \mathcal{X}} K(\mathcal{Y} \mid x)  , \label{eq:doeblin-coefficient-gcc}
\end{equation}
because the scalar $\alpha = \ibigwedge_{x \in \mathcal{X}} K(\mathcal{Y} \mid x)$ and the measure $\pi(\wildcard) = \frac{1}{\alpha} \ibigwedge_{x \in \mathcal{X}} K(\lwildcard \mid x)$ achieve the supremum in \cref{eq:doeblin-coefficient} by \cite[Theorem 1]{ChestnutContinous}.\footnote{If $\alpha = 0$, we may take $\pi$ to be any probability measure on $\mathcal{Y}$.} For finite spaces (i.e., row stochastic matrices $\mathbf{K} \in \mathbb{R}_\mathsf{sto}^{m \times n}$), $\tau(\mathbf{K})$ reduces to \emph{Doeblin's coefficient of ergodicity} \cite{MakurSingh2023a}, \cite[Definition 1]{MakurSingh2023b}, \cite[Eq. 11]{ChestnutContinous}, defined as
\begin{equation}
    \tau(\mathbf{K}) = \sum_{j=1}^n \min_{i \in [m]} \bkt{\mathbf{K}}_{i,j} . \label{eq:doeblin-coefficient-matrix}
\end{equation}
This implies that the complementary Doeblin coefficient $\rho(K)$ may be interpreted as a ``generalized TV distance'' among a collection of probability distributions $\{K(\lwildcard \mid x): x \in \mathcal{X}\}$ \cite{MakurSingh2023b}.

\section{Main Results on Doeblin Curves} \label{section:main-results-on-doeblin-curves}

We begin by recalling the \emph{maximal coupling characterization} of Doeblin coefficients \cite[Theorem 2]{MakurSingh2023b} (also see \cite{mainbook}) defined as follows.
Consider a collection of probability measures $\ibrc{P_i}_{i\in\mathcal{I}}$ where $\mathcal{I}$ is a Polish index set and each probability measure $P_i$ is on a respective Polish space $(\mathcal{Y}_i, \mathcal{F}_{\mathcal{Y}_i})$. A $\emph{coupling}$ of probability measures $\ibrc{P_i}_{i\in\mathcal{I}}$ is a collection $\ibrc{Y_i}_{i\in\mathcal{I}}$ of random variables all defined on the common measure space $\otimes_{i \in \mathcal{I}} (\mathcal{Y}_i, \mathcal{F}_{\mathcal{Y}_i})$, equipped with a (joint) probability measure $\mathbb{P}$ such that for all $i \in \mathcal{I}$, the marginal probability law of $Y_i$ on $(\mathcal{Y}_i, \mathcal{F}_{\mathcal{Y}_i})$ is equal to $P_i$. (Since each $(\mathcal{Y}_i, \mathcal{F}_{\mathcal{Y}_i})$ is Polish, results such as the Kolmogorov extension theorem guarantee the existence of measures $\mathbb{P}$ on the infinite collection of random variables $\ibrc{Y_i}_{i \in \mathcal{I}}$ which are consistent with finite dimensional distributions \cite{Kallenberg2021}.) With this groundwork established, the following proposition states the desired maximal coupling characterization for Polish spaces.

\begin{proposition}[Maximal Coupling Characterization of Doeblin Coefficients] \label{proposition:maximal-coupling-characterization-of-doeblin-coefficients}
    Let $K\colon \mathcal{X} \times \mathcal{F}_\mathcal{Y} \rightarrow [0, 1]$ be a Markov kernel defined over Polish spaces $(\mathcal{X}, \mathcal{F}_\mathcal{X})$ and $(\mathcal{Y}, \mathcal{F}_\mathcal{Y})$. Then,
    \begin{equation}
        \tau(K) = \sup_{\mathbb{P}: Y_x \sim K(\cdot \mid x)} \P{\forall x, x' \in \mathcal{X}, \, Y_x = Y_{x'}} , \label{eq:doeblin-coefficient-max-coupling}
    \end{equation}
    where the supremum is taken over all couplings $\{Y_x\}_{x\in\mathcal{X}}$ of the probability measures $\{K(\lwildcard \mid x)\}_{x\in\mathcal{X}}$.
\end{proposition}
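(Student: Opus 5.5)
We prove the two inequalities separately. Throughout, write $\nu \triangleq \ibigwedge_{x \in \mathcal{X}} K(\lwildcard \mid x)$ for the greatest common component. By \cref{eq:doeblin-coefficient-gcc}, $\tau(K) = \nu(\mathcal{Y})$.

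\emph{Upper bound ($\leq$ direction for couplings).} Fix any coupling $\{Y_x\}_{x \in \mathcal{X}}$ with joint law $\mathbb{P}$. Let $E \triangleq \{\forall x, x' \in \mathcal{X},\, Y_x = Y_{x'}\}$. When $\mathcal{X}$ is uncountable, $E$ need not be measurable; we interpret its probability as an inner probability, and we state this convention explicitly.

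On $E$ all coordinates agree, so we may pick a reference index $x_0$ and define the measure
\begin{equation}
    \nu_0(A) \triangleq \mathbb{P}\prn{E \cap \{Y_{x_0} \in A\}}, \quad A \in \mathcal{F}_\mathcal{Y}.
\end{equation}
For every $x$, on $E$ we have $Y_{x_0} = Y_x$. Hence $\nu_0(A) \leq \mathbb{P}(Y_x \in A) = K(A \mid x)$, so $\nu_0$ is a common component of all the $K(\lwildcard \mid x)$. By the definition \cref{eq:greatest-common-component}, $\nu_0 \leq \nu$. Therefore $\mathbb{P}(E) = \nu_0(\mathcal{Y}) \leq \tau(K)$.

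\emph{Achievability ($\geq$ direction).} We construct a coupling attaining $\tau(K)$. If $\tau(K) \in \{0, 1\}$, the construction degenerates: for $\tau(K) = 0$ any coupling (e.g., the independent one) suffices, and for $\tau(K) = 1$ all the $K(\lwildcard \mid x)$ coincide, so we set every $Y_x$ equal to a single sample.

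Otherwise set $\alpha = \tau(K)$ and $\pi = \nu/\alpha$. For each $x$ define the residual probability measure
\begin{equation}
    R_x \triangleq \frac{K(\lwildcard \mid x) - \nu}{1 - \alpha},
\end{equation}
which is a nonnegative measure because $\nu \leq K(\lwildcard \mid x)$. Now draw the following, all independent:
\begin{itemize}
    \item $B \sim \mathrm{Bernoulli}(\alpha)$;
    \item $Z \sim \pi$;
    \item a family $\{W_x\}_{x \in \mathcal{X}}$ of residual draws with $W_x \sim R_x$, taken as a product measure on $\mathcal{Y}^\mathcal{X}$, which exists by the Kolmogorov extension theorem on Polish spaces.
\end{itemize}
Set $Y_x = Z$ if $B = 1$ and $Y_x = W_x$ otherwise. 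Each marginal is then $\alpha \pi + (1 - \alpha) R_x = K(\lwildcard \mid x)$. Moreover $E \supseteq \{B = 1\}$, and this set is measurable with probability $\alpha$. Together with the upper bound, this gives equality.

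\emph{Main obstacle.} The main difficulty is measurability when $\mathcal{X}$ is uncountable: the event $E$ involves uncountably many coordinates. In the achievability direction we sidestep this, since the measurable subevent $\{B = 1\} \subseteq E$ already has probability $\tau(K)$. In the converse we must rely on inner probability, or else restrict to a countable dense subset when $K$ has continuity.

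A further subtlety is the definition of $\nu_0$, which should be built from a measurable subset of $E$ so that it is a genuine measure. For any measurable $F \subseteq E$, the same argument gives $\mathbb{P}(F) \leq \tau(K)$, and taking the supremum over such $F$ bounds the inner probability of $E$.
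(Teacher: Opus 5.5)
Your proof is correct and has the same skeleton as the paper's. Both identify $\tau(K)$ with the mass of the greatest common component via \cref{eq:doeblin-coefficient-gcc}, and your achievability construction is exactly the maximal coupling \cref{eq:maximal-coupling} that the paper records after its proof.

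The difference is that the paper does not prove the equality between that mass and the coupling supremum; it cites \cite[Theorem 2]{MakurSingh2023b} and \cite[Chapter 3, Theorem 7.3]{mainbook}. You supply the argument yourself. Any measurable $F\subseteq E$ induces the measure $A\mapsto \mathbb{P}(F\cap\{Y_{x_0}\in A\})$, which lies below every $K(\cdot\mid x)$. This direction does not even need the greatest common component. The induced measure, normalized, directly witnesses $K\geq \mathbb{P}(F)\,\pi$ in \cref{eq:doeblin-coefficient}, so $\mathbb{P}(F)\leq\tau(K)$.

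Your explicit handling of measurability is also more careful than the paper, which writes $\mathbb{P}(\forall x,x',\,Y_x=Y_{x'})$ as if it were always an event. One caveat should be stated explicitly. Your inner-probability convention works only if couplings may live on an arbitrary probability space, as in your construction with $B$ and in \cref{eq:maximal-coupling}.

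The paper's definition instead places the $Y_x$ on the canonical product space $\bigotimes_{x}(\mathcal{Y},\mathcal{F}_\mathcal{Y})$. There, when $\mathcal{X}$ is uncountable, every set in the product $\sigma$-algebra is determined by countably many coordinates. Since $|\mathcal{Y}|\geq 2$, no nonempty measurable set is contained in $E$, so the inner probability of $E$ would be $0$ for every coupling. The statement would then fail whenever $\tau(K)>0$. The proposition should therefore be read in Thorisson's coupling-event sense on general probability spaces, which is exactly what your argument establishes.
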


\cref{proposition:maximal-coupling-characterization-of-doeblin-coefficients} is proved in \cref{subsection:proofs-of-doeblin-characterizations} for completeness. We remark that for any Markov kernel $K$, the supremum in \cref{eq:doeblin-coefficient-max-coupling} is achieved by the construction in \cref{eq:maximal-coupling}, which we refer to as the ``maximal coupling'' for the remainder of this paper. Next, we present our main results, beginning with a new variational characterization of Doeblin coefficients.

\subsection{Variational Characterization of Doeblin Coefficients} \label{subsection:variational-characterization-of-doeblin-coefficients}

Our first main result is a new variational characterization of Doeblin coefficients, expressed as an infimum over arbitrary partitions of the underlying space. In this sense, our result is comparable to the Gel'fand-Yaglom-Peres variational characterization of KL divergence \cite{gelfand1956} or the extensions to $f$-divergences in \cite{Gilardoni2009}, \cite[Theorem 7.6]{PolyanskiyWu2025}.

\begin{theorem}[Variational Characterization of Doeblin Coefficient] \label{theorem:variational-characterization-of-doeblin-coefficient}
    Let $K\colon \mathcal{X} \times \mathcal{F}_\mathcal{Y} \rightarrow [0, 1]$ be an absolutely continuous Markov kernel defined over Polish spaces $(\mathcal{X}, \mathcal{F}_\mathcal{X})$ and $(\mathcal{Y}, \mathcal{F}_\mathcal{Y})$. Then,
    \begin{equation}
        \tau(K) = \inf_{n \in \mathbb{N}} \inf_{x_1, \dots, x_n \in \mathcal{X}} \inf_{\substack{\text{$n$-partition of $\mathcal{Y}$} \\ A_1, \dots, A_n}} \sum_{i=1}^n K(A_i \mid x_i) ,
    \end{equation}
    where the innermost infimum is taken with respect to all measurable $n$-partitions $A_1, \tdots, A_n \in \mathcal{F}_\mathcal{Y}$, i.e., $A_i \cap A_j = \varnothing$ for all $i \neq j$ and $\bigcup_{i=1}^n A_i = \mathcal{Y}$.
\end{theorem}

\cref{theorem:variational-characterization-of-doeblin-coefficient} is proved in \cref{subsection:proofs-of-doeblin-characterizations}. The proof uses the characterization of Doeblin coefficients in \cref{eq:doeblin-coefficient-gcc} and the Radon-Nikodym theorem to express $\tau(K)$ in terms of the density of the greatest common component of $K$. To relate this quantity to the individual densities $\frac{\idiff K}{\diff \mu}(\lwildcard \mid x)$ for $x \in \mathcal{X}$, we utilize the notion of \emph{lattice infimum} as a measurable analogue of pointwise infimum, so that the lattice infimum of uncountably many measurable functions $\frac{\idiff K}{\idiff \mu}(\lwildcard \mid x)$ (indexed by $x \in \mathcal{X}$) remains measurable. Approximating this uncountable lattice infimum with a countable sequence gives rise to the infima over $n \in \mathbb{N}$ and $x_1, \tdots, x_n \in \mathcal{X}$ in \cref{theorem:variational-characterization-of-doeblin-coefficient}, and we complete the proof by leveraging the following proposition to convert the integral of the pointwise minimum of finitely many densities into an infimum over partitions.

\begin{proposition}[Integral Characterization of Infimum] \label{proposition:integral-characterization-of-infimum}
    Let $K\colon \mathcal{X} \times \mathcal{F}_\mathcal{Y} \rightarrow [0, 1]$ be an absolutely continuous Markov kernel defined over Polish spaces $(\mathcal{X}, \mathcal{F}_\mathcal{X})$ and $(\mathcal{Y}, \mathcal{F}_\mathcal{Y})$. For any fixed $x_1, \tdots, x_n \in \mathcal{X}$ and any fixed constants $\gamma_1, \tdots, \gamma_n \geq 0$, we have
    \begin{equation}
        \inf_{\substack{\text{$n$-partition of $\mathcal{Y}$} \\ A_1, \dots, A_n}} \sum_{i=1}^n \gamma_i K(A_i \mid x_i) = \int_{\mathcal{Y}} \prn{\min_{i \in [n]} \gamma_i \frac{\diff K}{\diff \mu} (\lwildcard \mid x_i)} \diff \mu  .
    \end{equation}
\end{proposition}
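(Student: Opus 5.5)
Writing $f_i \triangleq \gamma_i \frac{\idiff K}{\idiff \mu}(\lwildcard \mid x_i)$, each $f_i$ is a nonnegative measurable function on $\mathcal{Y}$; joint measurability of the Radon-Nikodym kernel gives measurability in $y$ for each fixed $x_i$. Moreover $\gamma_i K(A \mid x_i) = \int_A f_i \diff \mu$ for every $A \in \mathcal{F}_\mathcal{Y}$. So it suffices to show
\begin{equation}
    \inf_{A_1, \dots, A_n} \sum_{i=1}^n \int_{A_i} f_i \diff \mu = \int_{\mathcal{Y}} \min_{i \in [n]} f_i \diff \mu ,
\end{equation}
where $g \triangleq \min_{i} f_i$ is measurable as a finite minimum of measurable functions. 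I prove the two inequalities separately.

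For the lower bound, fix any measurable $n$-partition $A_1, \tdots, A_n$. On $A_i$ we have $f_i \geq g$ pointwise. Since the $A_i$ are disjoint and cover $\mathcal{Y}$, additivity of the integral of the nonnegative function $g$ gives
\begin{equation}
    \sum_{i=1}^n \int_{A_i} f_i \diff \mu \geq \sum_{i=1}^n \int_{A_i} g \diff \mu = \int_{\mathcal{Y}} g \diff \mu .
\end{equation}
Taking the infimum over partitions yields ``$\geq$''.

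For the upper bound, I construct a partition that attains the value. Set
\begin{equation}
    A_i \triangleq \Bigl\{ y : f_i(y) = g(y), \ f_j(y) > g(y) \ \forall j < i \Bigr\} ,
\end{equation}
which assigns each $y$ to the smallest index achieving the minimum. Each $A_i$ is a finite intersection of measurable sets and hence measurable. The sets are pairwise disjoint by construction, and they cover $\mathcal{Y}$ because some index always attains a finite minimum. On $A_i$ we have $f_i = g$, so $\sum_i \int_{A_i} f_i \diff \mu = \int_\mathcal{Y} g \diff \mu$. Hence the infimum is attained and equals the integral.

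The only delicate point is bookkeeping rather than any real obstacle. First, the partition must be measurable, which the tie-breaking construction handles. Second, values of $+\infty$ and the case $\gamma_i = 0$ need care. All integrals here are of nonnegative functions, and $\int f_i \diff \mu = \gamma_i \leq \max_i \gamma_i < \infty$, so everything is finite. When $\gamma_i = 0$ we have $f_i \equiv 0$; this causes no issue and correctly makes both sides equal to $0$. No continuity or Polish-space structure is needed beyond the existence of the jointly measurable density kernel.
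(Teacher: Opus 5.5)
Your proof is correct. The ``$\leq$'' direction is the same as the paper's: the paper also takes the argmin partition with ties broken toward the smallest index, on which $f_i = \min_j f_j$. You additionally check measurability of these sets explicitly, which the paper leaves implicit.

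The ``$\geq$'' direction takes a different route.

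\textbf{Your route.} You compare pointwise ($f_i \geq g$ on $A_i$) and use additivity of $A \mapsto \int_A g \diff\mu$.

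\textbf{The paper's route.} It goes through greatest common components:
\begin{itemize}
\item it bounds $\gamma_i K(A_i \mid x_i) \geq \bigwedge_{j} \gamma_j K(A_i \mid x_j)$;
\item it sums over the partition, using that the greatest common component is itself a measure;
\item it identifies the total with $\sup\{\nu(\mathcal{Y}) : \gamma_j K(\cdot \mid x_j) \geq \nu \ \forall j\}$;
\item it lower-bounds that supremum by the measure with density $\min_i \gamma_i \frac{\diff K}{\diff \mu}(\cdot \mid x_i)$.
\end{itemize}

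\textbf{Comparison.} Your version is shorter and self-contained: it does not need the cited fact that the supremum defining $\bigwedge$ yields a countably additive measure. The paper's detour buys a byproduct. Combined with the upper bound, it sandwiches $\bigwedge_j \gamma_j K(\mathcal{Y} \mid x_j)$ between the two sides, so the total mass of this finite greatest common component equals both. This is the finite-index analogue of the identity $\tau(K) = \bigwedge_{x} K(\mathcal{Y} \mid x)$ on which the proof of \cref{theorem:variational-characterization-of-doeblin-coefficient} builds.
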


\cref{proposition:integral-characterization-of-infimum} is proved in \cref{subsection:proofs-of-doeblin-characterizations} and is also used in the proofs of subsequent results. We remark that \cref{proposition:integral-characterization-of-infimum} may be interpreted as a special case of \cref{theorem:variational-characterization-of-doeblin-coefficient} where $\mathcal{X}$ is finite. To see this, observe that whereas the infima over $n \in \mathbb{N}$ and $x_1, \tdots, x_n \in \mathcal{X}$ in \cref{theorem:variational-characterization-of-doeblin-coefficient} define the finite subset of $\mathcal{X}$ included in the sum (and thus the infinite subset of $\mathcal{X}$ omitted from the sum), they are unnecessary when $\mathcal{X}$ is finite, as any $x_i \in \mathcal{X}$ may be ``omitted'' from the sum by simply taking the respective $A_i$ to be $\emptyset$. We note that by imposing stronger regularity conditions on $K$, we may prove \cref{theorem:variational-characterization-of-doeblin-coefficient} without the use of lattice infima; we refer interested readers to \cref{appendix:variational-characterization-under-equicontinuity} for an analytically simpler argument under equicontinuity assumptions on $\frac{\idiff K}{\idiff \mu}$.

Throughout the rest of the paper, we find it helpful to define a notion of \emph{constrained Doeblin coefficient}\footnote{The constrained Doeblin coefficient $\tau_\mathcal{S}(K)$ is essentially the (unconstrained) Doeblin coefficient $\tau(K')$ of the kernel $K'$ obtained by restricting the input space of $K$ to $\mathcal{S}$. Hence, \cref{theorem:variational-characterization-of-doeblin-coefficient} also holds for constrained Doeblin coefficients by taking the middle infimum over $x_1, \tdots, x_n \in \mathcal{S}$ instead of $\mathcal{X}$.} for a Markov kernel $K\colon \mathcal{X} \times \mathcal{F}_\mathcal{Y} \rightarrow [0, 1]$ and a measurable set $\mathcal{S} \in \mathcal{F}_\mathcal{X}$ as   \begin{align}
    \tau_\mathcal{S}(K) &\triangleq \sup \bigbrc{\alpha \in \mathbb{R}: \exists \pi \in \mathscr{P}, \, \forall x \in \mathcal{S}, \, K(\lwildcard \mid x) \geq \alpha \pi} , \\\rho_\mathcal{S}(K) &\triangleq 1 - \tau_\mathcal{S}(K) .
\end{align}
We remark that $\rho_\mathcal{S}(K)$ is monotonically non-decreasing in $\mathcal{S}$, i.e., for any two sets $\mathcal{R}, \mathcal{S} \in \mathcal{F}_\mathcal{X}$ such that $\mathcal{R} \subseteq \mathcal{S}$, we have $\rho_\mathcal{R}(K) \leq \rho_\mathcal{S}(K)$.

\subsection{Doeblin Curves} \label{subsection:doeblin-curves}

As our centerpiece contribution, we introduce the notion of a Doeblin curve to capture contraction properties of Markov kernels whose Doeblin coefficient is $0$. For any Polish space $(\mathcal{U}, \mathcal{F}_\mathcal{U})$, define the composition of two kernels $W\colon \mathcal{U} \times \mathcal{F}_\mathcal{X} \rightarrow \mathbb{R}_+$ and $K\colon \mathcal{X} \times \mathcal{F}_\mathcal{Y} \rightarrow \mathbb{R}_+$ as the kernel $WK\colon \mathcal{U} \times \mathcal{F}_\mathcal{Y} \rightarrow \mathbb{R}_+$ given by
\begin{equation}
    \forall u \in \mathcal{U}, \, \forall A \in \mathcal{F}_\mathcal{Y}, \quad WK(A \mid u) \triangleq\! \int_\mathcal{X}\!K(A \mid x) \, W(\diff x \mid u) . \label{eq:kernel-composition}
\end{equation}
Now, we define the \emph{Doeblin curve} of a Markov kernel $K\colon \mathcal{X} \times \mathcal{F}_\mathcal{Y} \rightarrow [0, 1]$ as the function $\mathrm{F}_K(\wildcard ; \mathcal{G})\colon [0, 1] \rightarrow [0, 1]$ given by
\begin{align}   %
    \mathrm{F}_K(t; \mathcal{G}) \triangleq \sup \brc{\rho(WK): \rho(W) \leq t, \, W \in \mathcal{G}}, \label{eq:doeblin-curve}
\end{align}
where the supremum is taken over all Markov kernels $W\colon \mathcal{U} \times \mathcal{F}_\mathcal{X} \rightarrow [0, 1]$ from all Polish spaces $(\mathcal{U}, \mathcal{F}_\mathcal{U})$, such that $W$ belongs to some non-empty constraint set $\mathcal{G}$. We remark that Doeblin curves generalize the notion of Dobrushin curves in \cite{PolyanskiyWu2016}, just as Doeblin coefficients generalize contraction coefficients for TV distance. When $\mathcal{G}$ only contains Markov kernels $W: \mathcal{U} \times \mathcal{F}_\mathcal{X} \rightarrow [0, 1]$ where $|\mathcal{U}| = 2$, the Doeblin curve $\mathrm{F}_K(\wildcard ; \mathcal{G})$ reduces to the Dobrushin curve of $K$, $F_{\mathsf{TV}}(t)\triangleq \sup\{\inorm{K(\lwildcard \mid P)- K(\lwildcard \mid Q)}_{\mathsf{TV}}: \inorm{P-Q}_{\mathsf{TV}} \leq t, (P, Q) \in \mathcal{G}\}$ \cite[Equation (32)]{PolyanskiyWu2016} (note that a Markov kernel $W\colon \mathcal{U} \times \mathcal{F}_\mathcal{X} \rightarrow [0, 1]$ is an $U$-indexed family of probability distributions with cardinality $|\mathcal{U}|$, thus a pair of probability distributions $(P, Q)$ each supported on $\mathcal{X}$ is equivalent to a binary-source Markov kernel $W: \{0,1\}\times \mathcal{F}_{\mathcal{X}}\to [0,1]$ with $\rho(W)=\inorm{P-Q}_{\mathsf{TV}}$).

We define the joint range of the input and output complementary Doeblin coefficients as   $
    \mathfrak{F}(K; \mathcal{G}) \triangleq \cl \ibrc{(\rho(W), \rho(WK)):\allowbreak W \in \mathcal{G}} \subseteq [0, 1]^2
$,
where $\cl$ denotes closure. Specifically, the joint range $\mathfrak{F}(K; \mathcal{G})$ is contained within the set $\ibrc{(t, y) \in [0, 1]^2: y \leq \mathrm{F}_K(t; \mathcal{G})}$. For any kernel $K$, the Doeblin curve $\mathrm{F}_K(t; \mathcal{G})$ is monotonically non-decreasing in $t$ for any fixed constraint set $\mathcal{G}$. Also, the Doeblin curve $\mathrm{F}_K(t; \mathcal{G})$ and joint range $\mathfrak{F}(K; \mathcal{G})$ are monotonically non-decreasing in $\mathcal{G}$, i.e., for any two sets of kernels $\mathcal{G}, \mathcal{H}$ such that $\mathcal{G} \subseteq \mathcal{H}$, we have $\mathrm{F}_K(\wildcard; \mathcal{G}) \leq \mathrm{F}_K(\wildcard ; \mathcal{H})$ and $\mathfrak{F}(K; \mathcal{G}) \subseteq \mathfrak{F}(K; \mathcal{H})$.

By submultiplicativity of complementary Doeblin coefficients \cite[Section 4]{ChestnutContinous}, \cite[Theorem 1]{MakurSingh2023b}, we have $\rho(WK) \leq \rho(W) \rho(K)$. Hence, the Doeblin curve and joint range satisfy the outer bounds   
\begin{align}
    \mathrm{F}_K(t; \mathcal{G}) &\leq \rho(K) \, t , \\
    \mathfrak{F}(K; \mathcal{G}) &\subseteq \ibrc{(t, y) \in [0, 1]^2: y \leq \rho(K) \, t} , \label{eq:doeblin-curve-submultiplicativity-bound}
\end{align}
and since $\rho(K) \leq 1$, we have   \begin{equation}
    \mathrm{F}_K(t; \mathcal{G}) \leq t  , \quad
    \mathfrak{F}(K; \mathcal{G}) \subseteq \ibrc{(t, y) \in [0, 1]^2: y \leq t} . \label{eq:doeblin-curve-trivial-bound}
\end{equation}
\cref{figure:markov-matrix-joint-range} presents the numerically simulated joint range for a discrete memoryless channel $\mathbf{K} \in \mathbb{R}_\mathsf{sto}^{5 \times 5}$, comprising 1 million instances of $\mathbf{W}$ sampled uniformly from the set of $5 \times 5$ row stochastic matrices. The blue dashed line depicts the Doeblin curve of $\mathbf{K}$ obtained analytically from \cref{proposition:properties-of-doeblin-curves}, Part 2. The blue dots represent numerically sampled points $(\rho(\mathbf{W}), \rho(\mathbf{W} \mathbf{K})) \in \mathfrak{F}(\mathbf{K}; \mathbb{R}_\mathsf{sto}^{5 \times 5})$ in the joint range of $\mathbf{K}$. The figure shows that when $\mathbf{W}$ is uniformly sampled, such as in settings where channel inputs do not inherently tend towards degenerate regions of the probability simplex, the corresponding points in the joint range indicate noticeably more information contraction than the worst-case bound given by the Doeblin curve would suggest.

\begin{figure}
    \centering           \includegraphics[trim=1.75cm 0.25cm 2cm 0.25cm, clip, width=0.9\columnwidth]{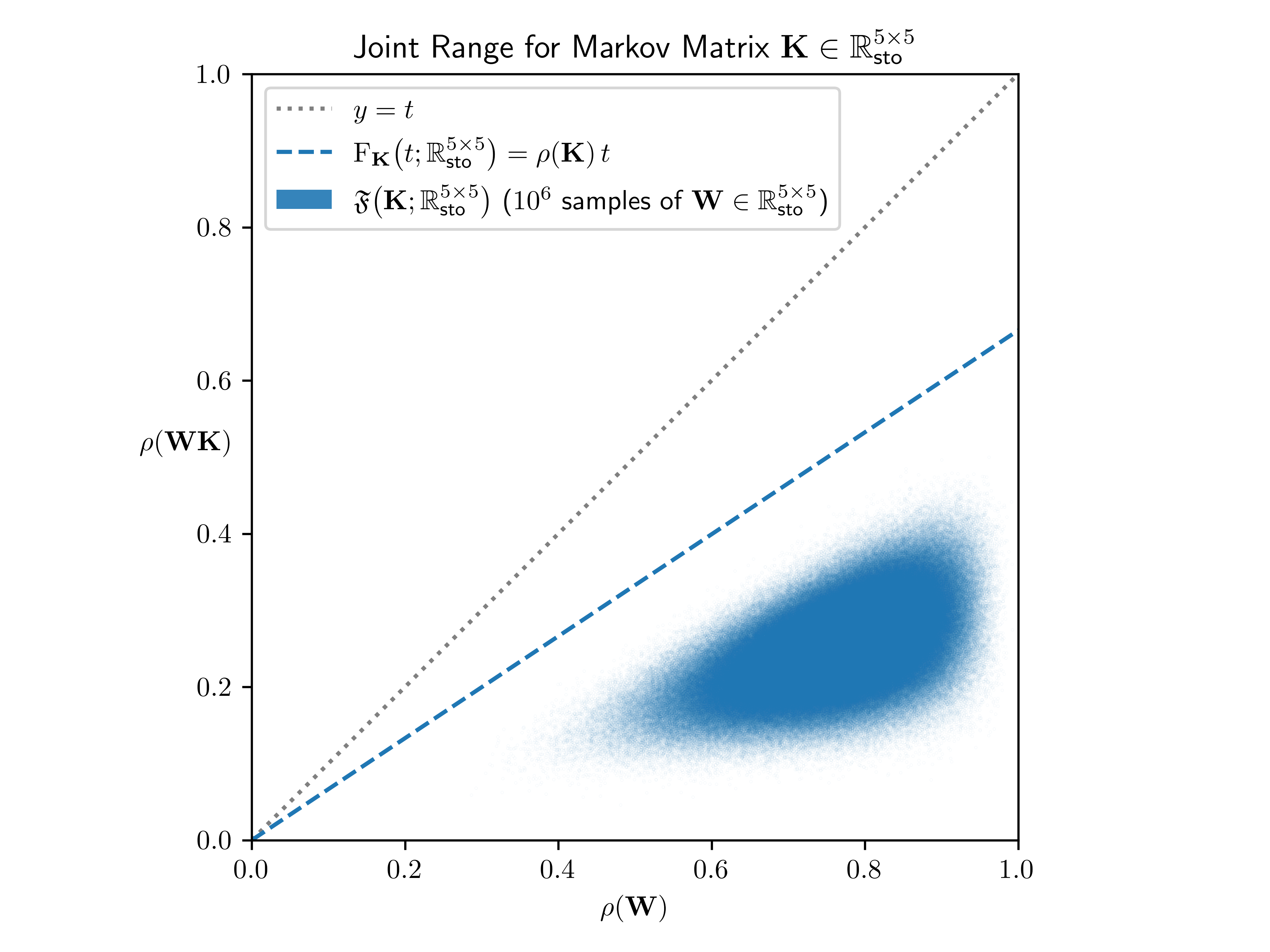}
    \caption{Randomly sampled joint range $\mathfrak{F}(\mathbf{K}; \mathbb{R}_\mathsf{sto}^{5 \times 5})$ for a $5 \times 5$ Markov matrix $\mathbf{K}$, shaded in blue.}
    \label{figure:markov-matrix-joint-range}
\end{figure}

We enumerate several additional properties of Doeblin curves in the following proposition.\footnote{This proposition supersedes \cite[Proposition 3]{ConferenceVersion}.} For brevity, we call a Markov kernel $W\colon \mathcal{U} \times \mathcal{F}_\mathcal{X} \rightarrow [0, 1]$ a \emph{constant kernel} if there exists a fixed probability measure $\pi\colon \mathcal{F}_\mathcal{X} \rightarrow [0, 1]$ such that $W(\lwildcard \mid u) = \pi(\wildcard)$ for all $u \in \mathcal{U}$.

\begin{proposition}[Properties of Doeblin Curves] \label{proposition:properties-of-doeblin-curves}
    The Doeblin curve defined in \cref{eq:doeblin-curve} satisfies the following properties:
    \begin{enumerate}
        \item \emph{(Data processing property)} Given Markov kernels $K_1$ and $K_2$ and constraint sets $\mathcal{G}_1$ and $\mathcal{G}_2$, the Doeblin curve of the composite channel represented by the diagram   \begin{math}
            \mathcal{X}_1 \overset{K_1}{\longrightarrow} \mathcal{Y}_1 = \mathcal{X}_2 \overset{K_2}{\longrightarrow} \mathcal{Y}_2 \,
        \end{math},
        with constraint set $\mathcal{G}$ consisting of all kernels $W$ such that $W \in \mathcal{G}_1$ and $WK_1 \in \mathcal{G}_2$, satisfies   \begin{math}
            \mathrm{F}_{K_1 K_2}(t; \mathcal{G}) \leq \mathrm{F}_{K_2} \mprn{\mathrm{F}_{K_1} \mprn{t; \mathcal{G}_1}; \mathcal{G}_2}
        \end{math}.
        \item \emph{(Sharpness)} For any Markov kernel $K\colon \mathcal{X} \times \mathcal{F}_\mathcal{Y} \rightarrow [0, 1]$ and any convex constraint set $\mathcal{G}$ containing the identity kernel $\Xi\colon \mathcal{X} \times \mathcal{F}_\mathcal{X} \rightarrow [0, 1]$ from $\mathcal{X}$ to $\mathcal{X}$, i.e.,   %
          \begin{math}
            \Xi(\lwildcard \mid u) = \updelta_u(\wildcard)
        \end{math} for all $u \in \mathcal{X}$,
        and a constant kernel $\bar{W}\colon \mathcal{X} \times \mathcal{F}_\mathcal{X} \rightarrow [0, 1]$, i.e.,   \begin{math}
            \bar{W}(\lwildcard \mid u) = \pi(\wildcard)
        \end{math} for all $u \in \mathcal{X}$
        for some fixed probability measure $\pi\colon \mathcal{F}_\mathcal{X} \rightarrow [0, 1]$, the Doeblin curve $\mathrm{F}_K(t; \mathcal{G})$ achieves the bound in \cref{eq:doeblin-curve-submultiplicativity-bound} with equality, i.e.,   \begin{math}
            \mathrm{F}_K(t; \mathcal{G}) = \rho(K) \, t
        \end{math}.
        \item \emph{(Super-homogeneity)} For any Markov kernel $K\colon \mathcal{X} \times \mathcal{F}_\mathcal{Y} \rightarrow [0, 1]$ and any convex constraint set $\mathcal{G}$ containing a constant kernel $\bar{W}\colon \mathcal{U} \times \mathcal{F}_\mathcal{X} \rightarrow [0, 1]$, the Doeblin curve $\mathrm{F}_K(t; \mathcal{G})$ satisfies   \begin{math}
            \mathrm{F}_K(\lambda t; \mathcal{G}) \geq \lambda \mathrm{F}_K(t; \mathcal{G})
        \end{math} for all $\lambda \in [0, 1]$ and $t \in [0, 1]$,
        or equivalently, the function $t \mapsto \mathrm{F}_K(t; \mathcal{G}) / t$ is non-increasing.   %
        \item \emph{(Lipschitz continuity)} For any Markov kernel $K\colon \mathcal{X} \times \mathcal{F}_\mathcal{Y} \rightarrow [0, 1]$ and any convex constraint set $\mathcal{G}$ containing a constant kernel $\bar{W}\colon \mathcal{U} \times \mathcal{F}_\mathcal{X} \rightarrow [0, 1]$, the Doeblin curve $\mathrm{F}_K(t; \mathcal{G})$ is $1$-Lipschitz continuous in $t$.
    \end{enumerate}
\end{proposition}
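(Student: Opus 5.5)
The four parts are handled separately. Each is routed through the definition \cref{eq:doeblin-curve}, the submultiplicativity outer bound \cref{eq:doeblin-curve-submultiplicativity-bound}, and one key fact: mixing with a constant kernel scales $\rho$ linearly.

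\textbf{Part 1.} This is a direct chaining argument. Fix $W \in \mathcal{G}$ with $\rho(W) \le t$. Since $W \in \mathcal{G}_1$, the definition gives $\rho(WK_1) \le \mathrm{F}_{K_1}(t;\mathcal{G}_1)$. Since $WK_1 \in \mathcal{G}_2$, and using associativity of kernel composition \cref{eq:kernel-composition} to write $\rho(W K_1 K_2) = \rho((WK_1)K_2)$, we get $\rho((WK_1)K_2) \le \mathrm{F}_{K_2}(\rho(WK_1);\mathcal{G}_2)$. By monotonicity of the Doeblin curve in $t$, this is at most $\mathrm{F}_{K_2}(\mathrm{F}_{K_1}(t;\mathcal{G}_1);\mathcal{G}_2)$. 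Taking the supremum over $W$ finishes the part.

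\textbf{Key lemma for Parts 2--4.} Let $\bar W$ be a constant kernel with law $\pi$, let $W$ be any kernel on the same input space $\mathcal{U}$, and let $\lambda\in[0,1]$. I will show that $\rho(\lambda W + (1-\lambda)\bar W) = \lambda \rho(W)$, and likewise after composing with $K$: $\rho(\lambda WK + (1-\lambda)\bar W K) = \lambda\rho(WK)$, since $\bar W K$ is again constant.
\begin{itemize}
\item For "$\ge$", suppose $\lambda W(\cdot\mid u) + (1-\lambda)\pi \ge \nu$ for all $u$. Then $\lambda W(\cdot\mid u) \ge \nu - (1-\lambda)\pi$ as signed measures. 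It follows that $\lambda W(\cdot\mid u) \ge (\nu-(1-\lambda)\pi)^+$, and the total mass of the right-hand side is at least $\nu(\mathcal{Y}) - (1-\lambda)$.
\item For "$\le$", a common component $\nu'$ of $W$ yields the common component $\lambda\nu' + (1-\lambda)\pi$ of the mixture.
\end{itemize}
Together these give $\tau(\text{mix}) = 1-\lambda+\lambda\tau(W)$, which is the claim. The lemma is stated for kernels on a common input space $\mathcal{U}$. If the given constant kernel lives on a different $\mathcal{U}$, I will need to reindex it, taking care that convexity of $\mathcal{G}$ applies. This bookkeeping is the main obstacle; I would assume that $\mathcal{G}$ is closed under this kind of mixing.

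\textbf{Part 2.} By \cref{eq:doeblin-curve-submultiplicativity-bound} it suffices to show $\mathrm{F}_K(t;\mathcal{G}) \ge \rho(K)t$. Take $W_t = t\,\Xi + (1-t)\bar W \in \mathcal{G}$, which is allowed because $\mathcal{G}$ is convex. Since $\Xi K = K$ and $\rho(\Xi) = 1$, the lemma gives $\rho(W_t) = t$ and $\rho(W_tK) = t\rho(K)$. Here $\rho(\Xi)=1$ because $|\mathcal{X}|\ge 2$ and distinct Dirac masses have no common component.

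\textbf{Part 3.} Given $W \in \mathcal{G}$ with $\rho(W)\le t$, consider $\lambda W + (1-\lambda)\bar W \in \mathcal{G}$. By the lemma its input coefficient is $\lambda\rho(W) \le \lambda t$ and its output coefficient is $\lambda\rho(WK)$. Hence $\mathrm{F}_K(\lambda t) \ge \lambda\rho(WK)$, and taking the supremum over $W$ gives $\mathrm{F}_K(\lambda t;\mathcal{G}) \ge \lambda\mathrm{F}_K(t;\mathcal{G})$.

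\textbf{Part 4.} Let $s<t$. Monotonicity gives $\mathrm{F}_K(s)\le \mathrm{F}_K(t)$. For the other direction, Part 3 with $\lambda = s/t$ gives $\mathrm{F}_K(s) \ge (s/t)\mathrm{F}_K(t)$. Therefore
\[\mathrm{F}_K(t)-\mathrm{F}_K(s) \le (1-s/t)\mathrm{F}_K(t) \le (1-s/t)\,t = t-s,\]
where the last inequality uses \cref{eq:doeblin-curve-trivial-bound}. This is exactly $1$-Lipschitz continuity.
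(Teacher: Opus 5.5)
Your proposal is correct and follows the paper's proof essentially step for step. Part 1 is the same chaining through the definition. Parts 2--4 rest on the same fact that mixing with a constant kernel scales both $\rho(W)$ and $\rho(WK)$ by $\lambda$: your key lemma is the $\alpha=0$ case of the paper's lemma on Doeblin coefficients under affine transformation, and you prove it with the same Hahn--Jordan argument as the paper's appendix lemma on greatest common components. You also use the same witnesses, $t\,\Xi+(1-t)\bar W$ for sharpness and $\lambda=s/t$ for Lipschitz continuity.

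The input-space bookkeeping you flag in Part 3 is handled no more carefully in the paper, which simply forms $\lambda W^*(\cdot\mid u)+(1-\lambda)\pi$ on a common $\mathcal{U}$. Your Lipschitz step bounds via $\mathrm{F}_K(t;\mathcal{G})\le t$ rather than the paper's $\mathrm{F}_K(s;\mathcal{G})\le s$, which avoids treating $s=0$ separately.
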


\cref{proposition:properties-of-doeblin-curves} is proved in \cref{subsection:proofs-of-doeblin-curve-properties}, primarily by using the fact that convex combinations of any Markov kernel with a constant kernel linearly interpolate $\rho$. We emphasize that the super-homogeneity property (\cref{proposition:properties-of-doeblin-curves}, Part 3) does \emph{not} necessarily imply that $\mathrm{F}_K$ is concave, and the concavity of $\mathrm{F}_K$ is also unknown in the Dobrushin case \cite[Remark 1]{PolyanskiyWu2016}.   %

Although the joint range $\mathfrak{F}(K; \mathcal{G})$ is contained within the area under the Doeblin curve $\mathrm{F}_K(\wildcard ; \mathcal{G})$ due to the definition of $\mathrm{F}_K$ as a supremum, it is \emph{not} necessarily the entire area under $\mathrm{F}_K$, as shown by the following counterexample.

\begin{proposition}[Joint Range of $2 \times n$ Discrete Channels] \label{proposition:joint-range-of-2-by-n-discrete-channels}
    The joint range of any discrete memoryless channel $\mathbf{K} \in \mathbb{R}_\mathsf{sto}^{2 \times n}$, considering inputs $\mathbf{W} \in \mathbb{R}_\mathsf{sto}^{m \times 2}$ with any number of rows $m \geq 2$, is the line   \begin{math}
        \mathfrak{F} \bigprn{\mathbf{K}; \bigcup_{m \geq 2} \mathbb{R}_\mathsf{sto}^{m \times 2}} = \brc{(t, \rho(\mathbf{K}) \, t): t \in [0, 1]}
    \end{math}.
\end{proposition}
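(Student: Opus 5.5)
We will show that for every $\mathbf{W} \in \mathbb{R}_\mathsf{sto}^{m \times 2}$ we have the exact identity $\rho(\mathbf{W}\mathbf{K}) = \rho(\mathbf{K})\,\rho(\mathbf{W})$, and that $\rho(\mathbf{W})$ sweeps all of $[0,1]$ as $\mathbf{W}$ ranges over the admissible inputs. Together these show that the set $\ibrc{(\rho(\mathbf{W}), \rho(\mathbf{W}\mathbf{K}))}$ is exactly the segment $\ibrc{(t, \rho(\mathbf{K})\,t) : t \in [0,1]}$. This segment is already closed, so taking the closure changes nothing.

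\textbf{Parametrize the input.} Each row of $\mathbf{W}$ is a pmf on two points, so we write $[\mathbf{W}]_{\ang{i}} = (w_i, 1-w_i)$ with $w_i \in [0,1]$. By \cref{eq:doeblin-coefficient-matrix},
\begin{equation}
\tau(\mathbf{W}) = \min_i w_i + \min_i (1-w_i) = 1 - \prn{\max_i w_i - \min_i w_i},
\end{equation}
hence $\rho(\mathbf{W}) = \max_i w_i - \min_i w_i$.

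\textbf{Compute the output coefficient.} Let $\mathbf{k}_1, \mathbf{k}_2$ be the two rows of $\mathbf{K}$. Row $i$ of $\mathbf{W}\mathbf{K}$ is $w_i\mathbf{k}_1 + (1-w_i)\mathbf{k}_2 = \mathbf{k}_2 + w_i(\mathbf{k}_1 - \mathbf{k}_2)$, so each column $j$ is affine in $w_i$. An affine function of $w_i$ attains its minimum over $i$ at either $w_{\min} = \min_i w_i$ or $w_{\max} = \max_i w_i$, depending on the sign of $[\mathbf{k}_1 - \mathbf{k}_2]_j$. Therefore
\begin{equation}
\min_i [\mathbf{W}\mathbf{K}]_{i,j} = \min\brc{[\mathbf{p}]_j, [\mathbf{q}]_j},
\end{equation}
where $\mathbf{p} = w_{\max}\mathbf{k}_1 + (1-w_{\max})\mathbf{k}_2$ and $\mathbf{q} = w_{\min}\mathbf{k}_1 + (1-w_{\min})\mathbf{k}_2$. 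Summing over $j$ gives
\begin{equation}
\tau(\mathbf{W}\mathbf{K}) = \sum_j \min\brc{[\mathbf{p}]_j,[\mathbf{q}]_j} = 1 - \inorm{\mathbf{p}-\mathbf{q}}_{\mathsf{TV}}.
\end{equation}
Since $\mathbf{p} - \mathbf{q} = (w_{\max} - w_{\min})(\mathbf{k}_1 - \mathbf{k}_2)$,
\begin{equation}
\rho(\mathbf{W}\mathbf{K}) = (w_{\max}-w_{\min})\,\inorm{\mathbf{k}_1-\mathbf{k}_2}_{\mathsf{TV}} = \rho(\mathbf{W})\,\rho(\mathbf{K}),
\end{equation}
using $\rho(\mathbf{K}) = 1 - \sum_j \min([\mathbf{k}_1]_j, [\mathbf{k}_2]_j) = \inorm{\mathbf{k}_1-\mathbf{k}_2}_{\mathsf{TV}}$.

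\textbf{Surjectivity.} For any $t \in [0,1]$, take $m = 2$ and the rows $(t, 1-t)$ and $(0,1)$. Then $\rho(\mathbf{W}) = t$. This $\mathbf{W}$ has $m = 2 \geq 2$ rows, so it is admissible.

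\textbf{Main obstacle.} The only nontrivial step is the coordinatewise reduction in the second step: the minimum over all $m$ rows collapses to the two extreme rows, because every row lies on the segment between $\mathbf{k}_1$ and $\mathbf{k}_2$ and each coordinate is affine along that segment. Everything else is routine bookkeeping.
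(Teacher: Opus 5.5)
Your proof is correct and follows essentially the same route as the paper: both rest on the fact that each column of $\mathbf{W}\mathbf{K}$ is affine in $[\mathbf{W}]_{i,1}$. Hence the row-wise minimum sits at $\max_i [\mathbf{W}]_{i,1}$ or $\min_i [\mathbf{W}]_{i,1}$ according to the sign of $[\mathbf{K}]_{1,j}-[\mathbf{K}]_{2,j}$; the paper does this bookkeeping with the index set $\mathcal{S}$, where you instead reduce to the TV distance between two extreme mixtures and use homogeneity. You also check explicitly that every $t \in [0,1]$ is attained and that the segment is closed, which the paper leaves implicit when it says it suffices to show $\rho(\mathbf{W}\mathbf{K}) = \rho(\mathbf{W})\,\rho(\mathbf{K})$.
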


\cref{proposition:joint-range-of-2-by-n-discrete-channels} is proved in \cref{subsection:proofs-of-doeblin-curve-properties}.

\subsection{Power-Constrained Doeblin Curves} \label{subsection:power-constrained-doeblin-curves}

Inspired by earlier developments of nonlinear information contraction \cite{PolyanskiyWu2016}, we note that the amount by which input distributions contract after passing through a channel depends on the power level of the input distributions. Hence, power constraints on the input kernel $W$ admit natural and useful classes of constraint sets $\mathcal{G}$ which we consider in our subsequent analysis.
While \cite{PolyanskiyWu2016} (which only considers the $|\mathcal{U}|=2$ case) uses the \emph{average power} $\itnorm{W}_\mathsf{A} \triangleq \E{\int_\mathcal{X} M(\norm{x}) \, W(\idiff x \mid U)}$ (where the expectation is taken over $U\sim\mathsf{Bernoulli}(1/2)$) to measure the power level of a kernel $W\colon \{0,1\}\times \mathcal{F}_{\mathcal{X}} \to [0, 1]$, this definition implicitly assumes the marginal distribution of $U$ and thus does not generalize to kernels with arbitrary $\mathcal{U}$ (especially when $\mathcal{U}$ is not compact, e.g., when $\mathcal{U}=\mathbb{R}$). Instead, we define two notions of power that extremize over $\mathcal{U}$ to avoid depending on the marginal distribution of $U$.

\begin{definition}[Power Level] \label{definition:power-level}
    Let $(\mathcal{X}, \inorm{\wildcard})$ be a separable Banach space equipped with the Borel $\sigma$-algebra $\mathcal{F}_\mathcal{X}$ induced by the norm topology. Given some convex and strictly increasing power function $M\colon \mathbb{R}_+ \rightarrow \mathbb{R}_+$ with $M(0) = 0$, we define the following notions of power level for a Markov kernel $W\colon \mathcal{U} \times \mathcal{F}_\mathcal{X} \rightarrow [0, 1]$.
    \begin{itemize}
        \item \emph{Uniform average power:}
        \begin{equation}
            \tnorm{W}_\mathsf{UA} \triangleq \sup_{u \in \mathcal{U}} \int_\mathcal{X} M(\norm{x}) \, W(\diff x \mid u) . \label{eq:uniform-average-power}
        \end{equation}
        \item \emph{Average extremal power:} Let $\mathbb{P}$ be the maximal coupling of random variables $\ibrc{X_u}_{u \in \mathcal{U}}$ with $X_u \sim W(\lwildcard \mid u)$ for each $u \in \mathcal{U}$, as defined in \cref{eq:maximal-coupling}. Then,
        \begin{equation}
            \tnorm{W}_\mathsf{AE} \triangleq \E{\sup_{u \in \mathcal{U}} M(\norm{X_u})}  , \label{eq:average-extremal-power}
        \end{equation}
        where the expectation is taken with respect to the maximal coupling $\ibrc{X_u}_{u \in \mathcal{U}} \sim \mathbb{P}$, and we only define this notion of power for kernels $W$ such that $\sup_{u \in \mathcal{U}} M(\inorm{X_u})$ is measurable.
    \end{itemize}
\end{definition}

The uniform average power and average extremal power have several desirable properties. For example, both notions upper bound the average power and satisfy the relation $\itnorm{W}_\mathsf{A} \leq \itnorm{W}_\mathsf{UA} \leq \itnorm{W}_\mathsf{AE}$. In particular, in the setting of \cite{PolyanskiyWu2016} (i.e., when $U\sim \mathsf{Bernoulli}(1/2)$), $\itnorm{W}_{\textsf{A}}\leq \itnorm{W}_{\textsf{UA}}\leq\itnorm{W}_{\textsf{AE}}\leq 2\itnorm{W}_{\textsf{A}}$, i.e., both notions approximate $\itnorm{W}_{\textsf{A}}$ by a factor of $2$. Furthermore, the difference between the two definitions, $\itnorm{W}_{\textsf{AE}}-\itnorm{W}_{\textsf{UA}}$, can be bounded under certain conditions, which is central to proving \cref{proposition:upper-bound-on-uniform-average-doeblin-curve-finite,proposition:upper-bound-on-uniform-average-doeblin-curve-totally-bounded}.
Note that the canonical notion of power (cf. \cite[Section 20.1]{PolyanskiyWu2025}) corresponds to taking $\tnorm{W}_\mathsf{UA}$ with $M(z) = z^2$. We provide examples of non-trivial kernels satisfying the average extremal power constraint in \cref{appendix:markov-kernel-examples}.

We write $\mathrm{F}_K^\mathsf{UA}(t; p)$ and $\mathrm{F}_K^\mathsf{AE}(t; p)$ to denote the Doeblin curves where the set $\mathcal{G}$ in \cref{eq:doeblin-curve} consists of all kernels satisfying the respective power constraints $\itnorm{W}_\mathsf{UA} \leq p$ and $\itnorm{W}_\mathsf{AE} \leq p$ for some $p \in [0, \infty)$. Since Doeblin curves are monotonically non-decreasing in the constraint set, we have $\mathrm{F}_K^\mathsf{UA}(t; p) \geq \mathrm{F}_K^\mathsf{AE}(t; p)$. Furthermore, we note that each of the power-constrained Doeblin curves satisfies the following homogeneity property (akin to Dobrushin curves). %

\begin{proposition}[Homogeneity of Power-Constrained Doeblin Curves] \label{proposition:homogeneity-of-power-constrained-doeblin-curves}
    For any Markov kernel $K\colon \mathcal{X} \times \mathcal{F}_\mathcal{Y} \rightarrow [0, 1]$, the power-constrained Doeblin curves $\mathrm{F}_K^\mathsf{UA}$ and $\mathrm{F}_K^\mathsf{AE}$ satisfy   $\mathrm{F}_K^\mathsf{UA}(\lambda t; \lambda p) = \lambda \mathrm{F}_K^\mathsf{UA}(t; p)$ and $\mathrm{F}_K^\mathsf{AE}(\lambda t; \lambda p) = \lambda \mathrm{F}_K^\mathsf{AE}(t; p)$
    for all $\lambda \in \mathbb{R}_+$ such that $\lambda t \leq 1$.
\end{proposition}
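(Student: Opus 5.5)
The plan is to prove a one-sided inequality $\mathrm{F}_K(\lambda t; \lambda p) \geq \lambda \mathrm{F}_K(t; p)$ for $\lambda \in (0,1]$, and then get the reverse inequality (and the case $\lambda > 1$) by rescaling. The construction is the one underlying \cref{proposition:properties-of-doeblin-curves}: mix an admissible input kernel with a zero-power constant kernel.

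\textbf{Construction and the two linear identities.} Fix $\lambda \in (0,1]$ and an admissible $W\colon \mathcal{U} \times \mathcal{F}_\mathcal{X} \rightarrow [0,1]$ with $\rho(W) \leq t$ and power at most $p$ (in the $\mathsf{UA}$ or $\mathsf{AE}$ sense). Define
\[
W_\lambda(\lwildcard \mid u) \triangleq \lambda W(\lwildcard \mid u) + (1-\lambda)\updelta_0(\wildcard).
\]
The first step is the interpolation identity $\rho(W_\lambda) = \lambda \rho(W)$. By \cref{eq:doeblin-coefficient-gcc} it suffices to show $\ibigwedge_u \bigl(\lambda W(\lwildcard\mid u) + c\bigr) = \lambda \ibigwedge_u W(\lwildcard\mid u) + c$ for the fixed measure $c = (1-\lambda)\updelta_0$. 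The direction "$\geq$" is immediate. For "$\leq$", take any $\nu \leq \lambda W(\lwildcard\mid u) + c$ for all $u$ and split $\nu = (\nu \wedge c) + (\nu - c)^+$. Since $(\nu-c)^+ \leq \lambda W(\lwildcard \mid u)$ for every $u$, we get $(\nu - c)^+ \leq \lambda \ibigwedge_u W(\lwildcard\mid u)$, and $\nu \wedge c \leq c$; this uses the Riesz-space structure of finite measures. The same identity gives $\rho(W_\lambda K) = \lambda \rho(WK)$, because $W_\lambda K = \lambda\, WK + (1-\lambda) K(\lwildcard \mid 0)$ is again a mixture with a constant kernel.

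\textbf{Power levels.} Since $M(\norm{0}) = M(0) = 0$, the uniform average power is linear in the mixture: $\tnorm{W_\lambda}_\mathsf{UA} = \sup_u \lambda \int M(\norm{x})\, W(\idiff x\mid u) = \lambda \tnorm{W}_\mathsf{UA} \leq \lambda p$.

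For the average extremal power I need to identify the maximal coupling of $W_\lambda$, and I expect this to be the main obstacle, since it depends on the exact construction in \cref{eq:maximal-coupling}. From the computation above:
\begin{itemize}
\item the greatest common component of $W_\lambda$ is $\lambda \ibigwedge_u W(\lwildcard\mid u) + (1-\lambda)\updelta_0$;
\item the residuals $W_\lambda(\lwildcard\mid u) - \ibigwedge_{u'} W_\lambda(\lwildcard\mid u')$ equal $\lambda$ times the residuals of $W$.
\end{itemize}
So in the maximal coupling of $W_\lambda$, all $X_u$ equal $0$ with probability $1-\lambda$. With probability $\lambda$, the joint law is exactly the maximal coupling of $W$: a common draw from the normalized greatest common component of $W$ with probability $\lambda\tau(W)$, and the normalized residual construction with probability $\lambda\rho(W)$. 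Hence $\sup_u M(\norm{X_u})$ is measurable whenever it is for $W$, and
\[
\tnorm{W_\lambda}_\mathsf{AE} = (1-\lambda)\cdot 0 + \lambda \tnorm{W}_\mathsf{AE} \leq \lambda p.
\]
Thus $W_\lambda$ is admissible for the pair $(\lambda t, \lambda p)$, with $\rho(W_\lambda K) = \lambda\rho(WK)$. Taking the supremum over $W$ yields $\mathrm{F}_K(\lambda t;\lambda p) \geq \lambda \mathrm{F}_K(t;p)$ for both curves.

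\textbf{Reverse inequality and edge cases.} For $\lambda \in (0,1]$, write $(t,p) = \lambda^{-1}(\lambda t, \lambda p)$ and take any $\lambda' \geq 1$ with $\lambda' t \leq 1$. Applying the established inequality with factor $1/\lambda' \in (0,1]$ to the pair $(\lambda' t, \lambda' p)$ gives $\mathrm{F}_K(t;p) \geq \lambda'^{-1}\mathrm{F}_K(\lambda' t;\lambda' p)$. This is exactly the reverse inequality for factors $\geq 1$, and combining both directions gives equality for every $\lambda > 0$ with $\lambda t \leq 1$. The case $\lambda = 0$ is trivial: $\rho(W) = 0$ forces $W$ to be a constant kernel, so $WK$ is constant and $\mathrm{F}_K(0;0) = 0$.
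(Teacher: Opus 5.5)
Your first direction is correct and matches the paper's argument. Mixing an admissible $W$ with $\updelta_\mathbf{0}$ gives $\mathrm{F}_K(\lambda t;\lambda p)\geq\lambda\,\mathrm{F}_K(t;p)$ for $\lambda\in(0,1]$. Your lattice argument for the greatest common component and your identification of the maximal coupling of $W_\lambda$ are also fine.

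The gap is the ``reverse inequality by rescaling'', which is circular. Plugging the factor $1/\lambda'$ and the pair $(\lambda' t,\lambda' p)$ into the inequality you proved only rewrites that same inequality. The statement $\mathrm{F}_K(\lambda' t;\lambda' p)\leq\lambda'\,\mathrm{F}_K(t;p)$ for $\lambda'\geq 1$ is logically equivalent to $\mathrm{F}_K(\lambda t;\lambda p)\geq\lambda\,\mathrm{F}_K(t;p)$ for $\lambda\leq 1$; it is not the converse. What you still need is $\mathrm{F}_K(\lambda t;\lambda p)\leq\lambda\,\mathrm{F}_K(t;p)$ for $\lambda\in(0,1)$, equivalently the \emph{lower} bound for factors larger than $1$. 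No manipulation of the one-sided inequality can produce it. For instance, $(t,p)\mapsto\min\{t,1/2\}$ satisfies your inequality for every $\lambda\in(0,1]$ but is not homogeneous.

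The missing idea is a construction that scales an admissible kernel \emph{up}. Take $W^*$ with $\rho(W^*)\leq\lambda t$ and power at most $\lambda p$. You cannot simply undo a mixture, because $W^*$ need not contain a $(1-\lambda)\updelta_\mathbf{0}$ component. The paper instead uses $\tau(W^*)\geq 1-\lambda t\geq 1-\lambda$:
\begin{itemize}
\item Pick $\alpha\in[1-\lambda,\tau(W^*)]$ and a probability measure $\pi^*$ with $W^*\geq\alpha\pi^*$; this is possible because the supremum defining $\tau$ is attained.
\item Set $\hat W=\frac{1}{\lambda}(W^*-\alpha\pi^*)+\bigl(1-\frac{1-\alpha}{\lambda}\bigr)\updelta_\mathbf{0}$. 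This is a Markov kernel because $\alpha\geq 1-\lambda$.
\item The greatest-common-component identity gives $\rho(\hat W)=\rho(W^*)/\lambda\leq t$ and $\rho(\hat WK)=\rho(W^*K)/\lambda$. Your lattice argument still works here with a signed shift.
\item For uniform average power: the common mass $\alpha\pi^*$ is replaced by mass at $\mathbf{0}$, where $M(\norm{\mathbf{0}})=0$. Hence $\tnorm{\hat W}_\mathsf{UA}\leq\tnorm{W^*}_\mathsf{UA}/\lambda\leq p$.
\item For average extremal power: the residuals $\hat W(\cdot\mid u)-\bigwedge_{u'}\hat W(\cdot\mid u')$ are $1/\lambda$ times those of $W^*$. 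The common part contributes at most $\frac{1}{\lambda}\int M(\norm{x})\bigwedge_{u}W^*(\mathrm{d}x\mid u)$. So $\tnorm{\hat W}_\mathsf{AE}\leq\tnorm{W^*}_\mathsf{AE}/\lambda\leq p$.
\end{itemize}
Taking suprema yields $\lambda\,\mathrm{F}_K(t;p)\geq\mathrm{F}_K(\lambda t;\lambda p)$ for $\lambda\in(0,1]$. Only then does the case $\lambda>1$ follow, by applying the resulting equality with factor $1/\lambda$ to the pair $(\lambda t,\lambda p)$.
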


\cref{proposition:homogeneity-of-power-constrained-doeblin-curves} is proved in \cref{subsection:proofs-of-doeblin-curve-properties} using similar arguments based on convex combinations with constant kernels as those used in the proof of \cref{proposition:properties-of-doeblin-curves}. In addition, the Doeblin curves of certain classes of kernels such as additive noise channels are invariant under scaling (when the power constraint is scaled accordingly) as the following proposition shows.

\begin{proposition}[Scale Invariance of Power-Constrained Doeblin Curves for Additive Noise Channels]\label{proposition:scale-invariant-doeblin-curves-of-additive-noise-channels}
    Let $\mathcal{X}=\mathcal{Y}=\mathbb{R}^d$. Suppose $K_\sigma\colon\mathcal{X}\times\mathcal{F}_\mathcal{Y}\to[0,1]$ is an additive noise channel parameterized by $\sigma$ such that \begin{equation}K_\sigma(A\mid\mathbf{x}) = \int_A \frac{1}{\sigma^d} f\prn{\frac{\mathbf{y}-\mathbf{x}}{\sigma}} \diff \mathbf{y}\end{equation} where $f\colon\mathbb{R}^d\to\mathbb{R}_+$ is a probability density function. Then, if the power function is $M(z)=z^2$, $\mathrm{F}_{K_{\sigma}}^{\mathsf{UA}}(t; p) = \mathrm{F}_{K_1}^{\mathsf{UA}}(t;p/\sigma^2)$ for all $\sigma>0$ and $p\geq 0$.
\end{proposition}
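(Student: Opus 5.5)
The plan is to exhibit a bijection between admissible input kernels for $K_\sigma$ and for $K_1$ that preserves $\rho$ of both the input and the output, and rescales the uniform average power by exactly $\sigma^2$. Let $S_\sigma\colon \mathbb{R}^d \to \mathbb{R}^d$ be the dilation $\mathbf{x} \mapsto \mathbf{x}/\sigma$; it is a homeomorphism, hence a bimeasurable bijection on the Borel $\sigma$-algebra. Given any Markov kernel $W\colon \mathcal{U} \times \mathcal{F}_\mathcal{X} \to [0,1]$, we define the pushforward kernel $W_\sigma(A \mid u) \triangleq W(S_\sigma^{-1}(A) \mid u) = W(\sigma A \mid u)$. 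The map $W \mapsto W_\sigma$ is invertible, with inverse given by pushing forward along $S_\sigma^{-1}$.

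\textbf{Step 1 (input Doeblin coefficient invariant).} We will show $\rho(W_\sigma) = \rho(W)$. This follows from the definition \cref{eq:doeblin-coefficient}: $W \geq \alpha \pi$ holds if and only if $W_\sigma \geq \alpha (\pi \circ S_\sigma^{-1})$. The reason is that $A \mapsto \sigma A$ is a bijection of $\mathcal{F}_\mathcal{X}$, and $\pi \mapsto \pi \circ S_\sigma^{-1}$ is a bijection of $\mathscr{P}$. Hence the two suprema coincide.

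\textbf{Step 2 (power scales by $\sigma^{-2}$).} By the change of variables $\mathbf{x} = \sigma \mathbf{z}$,
\[
\int \|\mathbf{z}\|^2 \, W_\sigma(\mathrm{d}\mathbf{z} \mid u) = \int \|\mathbf{x}/\sigma\|^2 \, W(\mathrm{d}\mathbf{x} \mid u) = \sigma^{-2} \int \|\mathbf{x}\|^2 \, W(\mathrm{d}\mathbf{x} \mid u).
\]
Taking the supremum over $u$ gives $\tnorm{W_\sigma}_\mathsf{UA} = \tnorm{W}_\mathsf{UA}/\sigma^2$. Thus $W$ satisfies the constraint with level $p$ if and only if $W_\sigma$ satisfies it with level $p/\sigma^2$.

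\textbf{Step 3 (output Doeblin coefficient invariant).} Let $\mathbf{x} = \sigma \mathbf{z}$ and substitute $\mathbf{y} = \sigma \mathbf{w}$ in the density. This gives
\[
K_\sigma(A \mid \sigma \mathbf{z}) = \int_{A/\sigma} f(\mathbf{w} - \mathbf{z}) \, \mathrm{d}\mathbf{w} = K_1(A/\sigma \mid \mathbf{z}),
\]
that is, $K_\sigma(\cdot \mid \sigma\mathbf{z}) = K_1(\cdot \mid \mathbf{z}) \circ S_\sigma$. Integrating against $W$ and using the pushforward change of variables, we obtain $WK_\sigma(A \mid u) = W_\sigma K_1(S_\sigma(A) \mid u)$. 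So $WK_\sigma$ is the pushforward of $W_\sigma K_1$ under the output dilation $\mathbf{y} \mapsto \sigma \mathbf{y}$. Applying the argument of Step 1 on the output space then gives $\rho(WK_\sigma) = \rho(W_\sigma K_1)$.

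\textbf{Conclusion.} Combining the three steps, the set
\[
\{(\rho(W), \rho(WK_\sigma)) : \tnorm{W}_\mathsf{UA} \leq p\}
\]
coincides with the set $\{(\rho(W'), \rho(W'K_1)) : \tnorm{W'}_\mathsf{UA} \leq p/\sigma^2\}$, via the bijection $W' = W_\sigma$. Taking suprema subject to $\rho \leq t$ yields $\mathrm{F}_{K_\sigma}^\mathsf{UA}(t;p) = \mathrm{F}_{K_1}^\mathsf{UA}(t;p/\sigma^2)$.

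The only delicate point is the measure-theoretic bookkeeping in Step 3. We must verify that the composition commutes with the pushforwards; this is Fubini/Tonelli applied to the nonnegative integrand $K_1(S_\sigma(A) \mid \cdot)$. We must also verify that $W_\sigma$ is a genuine Markov kernel, which holds because measurability in $u$ is inherited from $W$.
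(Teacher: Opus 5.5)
Your proposal is correct and follows essentially the same route as the paper: push $W$ forward under a dilation, check that $\rho$ is preserved on the input and output sides, and check that the uniform average power rescales by exactly $\sigma^2$. The paper dilates by $\sigma$ starting from an admissible kernel for $K_1$, while you contract by $1/\sigma$ starting from one for $K_\sigma$; this is the same bijection inverted. The only quibble is that the interchange in Step~3 is the change-of-variables formula for image measures, $\int g\,\mathrm{d}(T_\#\nu) = \int g\circ T\,\mathrm{d}\nu$, rather than Fubini/Tonelli.
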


\cref{proposition:scale-invariant-doeblin-curves-of-additive-noise-channels} is proved in \cref{subsection:proofs-of-doeblin-curve-properties}.

\subsection{Bounds on Power-Constrained Doeblin Curves} \label{subsection:bounds-on-power-constrained-doeblin-curves}

In this subsection, we present upper and lower bounds on the power-constrained Doeblin curves for a general kernel and provide examples of kernels whose Doeblin curves may be computed in closed form. Throughout this analysis, let $(\mathcal{X}, \inorm{\wildcard})$ be a separable Banach space equipped with the Borel $\sigma$-algebra $\mathcal{F}_\mathcal{X}$ induced by the norm topology, let $(\mathcal{Y}, \mathcal{F}_\mathcal{Y})$ be an arbitrary Polish space, and let $K\colon \mathcal{X} \times \mathcal{F}_\mathcal{Y} \rightarrow [0, 1]$ be an absolutely continuous Markov kernel. Let $\mathcal{B}(a, r) \subset \mathcal{X}$ denote the closed ball centered at $a \in \mathcal{X}$ with radius $r$ in the $\|\wildcard\|$-norm. Define functions $\theta\colon \mathcal{X} \times \mathbb{R}_+ \rightarrow [0, 1]$ and $\Theta\colon \mathbb{R}_+ \rightarrow [0, 1]$ as \begin{equation}
    \theta(a, r) \triangleq \rho_{\mathcal{B}(a, r)}(K), \quad \Theta(r) \triangleq \sup_{a \in \mathcal{X}} \theta(a, r) . \label{eq:theta-def}
\end{equation}
We remark that $\Theta$ (and hence the upper concave envelope $\invbreve{\Theta}$, by \cref{lemma:properties-of-upper-concave-envelope}, Part 2) are non-decreasing, because $\rho_\mathcal{S}$ is non-decreasing in $\mathcal{S}$ as previously mentioned. Since $\mathcal{X}$ is a Banach space, there exists an element $\mathbf{0} \in \mathcal{X}$ such that $\inorm{\mathbf{0}} = 0$; we denote this element in bold font to distinguish it from the scalar $0 \in \mathbb{R}$. Let
\begin{equation}
    \gamma \triangleq \sup \brc{\frac{\rad(\mathcal{S})}{\tnorm{\mathcal{S}}_\infty}: \mathcal{S} \subset \mathcal{X}, \, 0 < \tnorm{\mathcal{S}}_\infty < \infty} \label{eq:jung-constant}
\end{equation}
denote the \emph{Jung constant} of $\mathcal{X}$ \cite{Jung1901,Yan2004}, i.e., the tightest multiplicative factor between the Chebyshev radius and diameter of any bounded subset of $\mathcal{X}$.   %
In Euclidean space $\mathcal{X} = \mathbb{R}^d$, we have $\gamma \leq d/(d + 1)$ for a general norm $\inorm{\wildcard}$ \cite[Theorem 6]{bohnenblust1938convex}, and $\gamma = 1/2$ for the $\ell^\infty$-norm \cite[Section 5.3]{PolyanskiyWu2025}.

First, we present upper and lower bounds on the average extremal power-constrained Doeblin curve of $K$ (similar to \cite{PolyanskiyWu2016}).\footnote{The results in \cref{subsection:bounds-on-power-constrained-doeblin-curves} supersede \cite[Theorem 2 and Corollaries 1 and 2]{ConferenceVersion}.}

\begin{theorem}[Bounds on Average Extremal Doeblin Curve] \label{theorem:bounds-on-average-extremal-doeblin-curve}
    The power-constrained Doeblin curve $\mathrm{F}_K^\mathsf{AE}$ satisfies the upper and lower bounds   \begin{math}
        t \, \theta \iprn{\mathbf{0}, M^{-1} \iprn{\ifrac{p}{t}}} \leq \mathrm{F}_K^\mathsf{AE}(t; p) \leq t \, \invbreve{\Theta} \iprn{2 \gamma \, M^{-1} \iprn{\ifrac{p}{t}}}
    \end{math}
    for all $t \in (0, 1]$, where $\theta$ and $\Theta$ are defined in \cref{eq:theta-def}, the Jung constant $\gamma$ is defined in \cref{eq:jung-constant}, and $M$ is the power function.
\end{theorem}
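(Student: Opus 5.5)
For the lower bound I will build an explicit input kernel. Set $r = M^{-1}(p/t)$ and look at the ball $\mathcal{B}(\mathbf{0}, r)$. The constrained Doeblin coefficient $\rho_{\mathcal{B}(\mathbf{0},r)}(K)$ is the complementary Doeblin coefficient of $K$ restricted to that ball. So for any $\epsilon>0$ there is an index set $\mathcal{U} \subseteq \mathcal{B}(\mathbf{0},r)$ (finitely many points suffice, by the restricted version of \cref{theorem:variational-characterization-of-doeblin-coefficient}) such that the kernel $W_0(\cdot\mid u)=\updelta_u$ has $\rho(W_0K)\ge \theta(\mathbf{0},r)-\epsilon$.

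Two facts about $W_0$ are immediate:
\begin{itemize}
\item $\rho(W_0)=1$ whenever $|\mathcal{U}|\ge 2$, since the point masses are distinct;
\item $\tnorm{W_0}_\mathsf{AE}=\E{\sup_u M(\norm{X_u})} \le M(r)=p/t$, because every $X_u$ lies in the ball.
\end{itemize}

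Next mix with the constant kernel at $\mathbf{0}$: let $W=tW_0+(1-t)\updelta_{\mathbf{0}}$. The linear-interpolation fact used in the proof of \cref{proposition:properties-of-doeblin-curves} gives $\rho(W)=t$ and $\rho(WK)=t\rho(W_0K)$. For the power, the maximal coupling of $W$ places the common mass $(1-t)\updelta_{\mathbf{0}}$ at zero power, so $\tnorm{W}_\mathsf{AE}\le t\cdot p/t=p$. Letting $\epsilon\to0$ yields $\mathrm{F}_K^\mathsf{AE}(t;p)\ge t\,\theta(\mathbf{0},M^{-1}(p/t))$. The only technical check is that the maximal coupling from \cref{eq:maximal-coupling} really splits into the common component plus the residual coupling, which is how that construction is defined.

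For the upper bound, take any $W$ with $\rho(W)\le t$ and $\tnorm{W}_\mathsf{AE}\le p$, and let $\{X_u\}$ be its maximal coupling, in which all $X_u$ coincide with probability $\tau(W)\ge 1-t$. Condition on the realization $\omega$ of the coupling. Given $\omega$, let $S(\omega)=\{X_u(\omega)\}_u$ and $R(\omega)=\sup_u\norm{X_u(\omega)}$. Then $\tnorm{S(\omega)}_\infty\le 2R(\omega)$, and by the Jung constant $S(\omega)$ lies in some ball of radius $2\gamma R(\omega)$ (plus $\epsilon$).

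Decompose $WK(\cdot\mid u)=\int K(\cdot\mid X_u(\omega))\,\diff\mathbb{P}(\omega)$. For each $\omega$ the family $\{K(\cdot\mid X_u(\omega))\}_u$ has a common component of mass at least $1-\Theta(2\gamma R(\omega))$, and this is trivially at least $1$ on the coincidence event, where the family has a single member. Integrating these common components over $\omega$ — this needs a measurable selection of the components, and I plan to obtain it from the density formulation of \cref{theorem:variational-characterization-of-doeblin-coefficient} or by passing to countable subfamilies — produces a common component of $\{WK(\cdot\mid u)\}$ of mass at least $1-\E{\Theta(2\gamma R)\mathbbm{1}_{D}}$, where $D$ is the event that the $X_u$ are not all equal. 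Hence
\begin{equation}
\rho(WK)\le \E{\Theta(2\gamma R)\mathbbm{1}_{D}} .
\end{equation}

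To finish, bound $\Theta$ by $\invbreve{\Theta}$ and apply Jensen to the conditional law on $D$, which has probability $q=\rho(W)\le t$:
\begin{equation}
\rho(WK)\le q\,\invbreve{\Theta}\bigl(2\gamma\,\E{R\mid D}\bigr).
\end{equation}
Since $M$ is convex, Jensen gives $M(\E{R\mid D})\le \E{M(R)\mid D}\le p/q$, so $\E{R\mid D}\le M^{-1}(p/q)$ because $M$ is increasing. It remains to show that $q\mapsto q\,\invbreve{\Theta}(2\gamma M^{-1}(p/q))$ is non-decreasing in $q$, so that we may replace $q$ by $t$. I expect this to follow from concavity of $\invbreve{\Theta}$ with $\invbreve{\Theta}\ge0$, which makes $s\mapsto \invbreve{\Theta}(s)/s$ non-increasing, combined with convexity of $M$; alternatively I will use the super-homogeneity of \cref{proposition:properties-of-doeblin-curves}, Part 3.

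The main obstacle is the measure-theoretic step: integrating the pointwise-in-$\omega$ common components into a common component of the mixture, with uncountable $\mathcal{U}$ and the measurability of $R$. The monotonicity-in-$q$ step is the secondary risk.
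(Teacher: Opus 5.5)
Your proposal is correct and is essentially the paper's proof. The lower bound uses the same kernel $t\,\updelta_u+(1-t)\updelta_{\mathbf 0}$; the paper takes $\mathcal U$ to be the whole ball, so the identity-kernel lemma gives $\rho(W_0K)=\theta(\mathbf 0,r)$ exactly and your finite approximation is not needed. The upper bound uses the same maximal-coupling conditioning, ball/Jung bound on $\rho_{\mathcal S}(K)$, Jensen step, and perspective monotonicity in $q$, the last being your ratio argument with value $0$ at the origin. The measurability step you flag is resolved in the paper exactly as you suggest: \cref{theorem:variational-characterization-of-doeblin-coefficient} is applied to $WK$, so only finitely many $u_1,\dots,u_n$ appear at a time, and $\inf_{A}\mathbb{E}[\sum_i K(A_i\mid X_{u_i})]\ge\mathbb{E}[\inf_A\sum_i K(A_i\mid X_{u_i})]$ is precisely your ``integrated common component'' for that finite subfamily.
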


The proof is provided in \cref{subsection:proofs-of-bounds-on-power-constrained-doeblin-curves}, utilizing the variational characterization of Doeblin coefficients from \cref{theorem:variational-characterization-of-doeblin-coefficient} and the maximal coupling characterization from \cref{proposition:maximal-coupling-characterization-of-doeblin-coefficients}. %

If $K$ acts as a convolution operator on $\mathbb{R}^d$ (e.g., additive noise), then $\theta(a, s)$ is independent of $a$, namely $\theta(a, s) = \Theta(s)$ for all $a \in \mathcal{X}$. This immediately leads to the following counterpart to \cite[Corollary 5]{PolyanskiyWu2016}.

\begin{corollary}[Average Extremal Doeblin Curves of Convolution Kernels] \label{corollary:average-extremal-doeblin-curves-of-convolution-kernels}
    For any kernel $K$ which acts as a convolution operator on $\mathcal{X} = \mathcal{Y} = \mathbb{R}^d$, and for which $\Theta$ (defined in \cref{eq:theta-def}) is concave, we have  \begin{math}
        t \, \Theta \iprn{M^{-1} \iprn{\ifrac{p}{t}}} \leq \mathrm{F}_K^\mathsf{AE}(t; p) \leq t \, \Theta \iprn{(\ifracb{2d}{d+1}) \, M^{-1} \iprn{\ifrac{p}{t}}}
    \end{math} for all $t\in(0,1]$, where $M$ is the power function.
    Furthermore, if $d = 1$ or $\inorm{\wildcard}$ is the $\ell^\infty$-norm, then  \begin{math}
        \mathrm{F}_K^\mathsf{AE}(t; p) = t \, \Theta \iprn{M^{-1} \iprn{\ifrac{p}{t}}}
    \end{math}.
\end{corollary}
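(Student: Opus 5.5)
This corollary is a direct specialization of \cref{theorem:bounds-on-average-extremal-doeblin-curve}, so the plan is to substitute the structure of convolution kernels into both sides of that theorem's bounds.

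\emph{Step 1: translation invariance.} If $K(A \mid x) = P(A - x)$ for a fixed noise law $P$, then for any $a \in \mathbb{R}^d$ the restriction of $K$ to $\mathcal{B}(a, r)$ is the restriction to $\mathcal{B}(\mathbf{0}, r)$ followed by the output shift $y \mapsto y + a$. Indeed, if $\nu \leq K(\lwildcard \mid x)$ for all $x \in \mathcal{B}(\mathbf{0}, r)$, then the shifted measure $\nu(\wildcard - a)$ lies below $K(\lwildcard \mid x)$ for all $x \in \mathcal{B}(a, r)$, and conversely. Since total mass is preserved, the supremum in the definition of $\tau_{\mathcal{S}}$ is unchanged, so $\theta(a, r) = \theta(\mathbf{0}, r)$. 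Taking the supremum over $a$ gives $\Theta(r) = \theta(\mathbf{0}, r)$ for all $r$.

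\emph{Step 2: plug into the theorem.} By the concavity assumption, $\Theta$ is its own upper concave envelope: it is a concave upper bound of itself, and every concave upper bound dominates it. Hence $\invbreve{\Theta} = \Theta$. \cref{theorem:bounds-on-average-extremal-doeblin-curve} then reads
\[
t\,\Theta\bigl(M^{-1}(p/t)\bigr) \leq \mathrm{F}_K^\mathsf{AE}(t; p) \leq t\,\Theta\bigl(2\gamma M^{-1}(p/t)\bigr).
\]
By the cited bound $\gamma \leq d/(d+1)$ for any norm on $\mathbb{R}^d$, we have $2\gamma \leq 2d/(d+1)$. Since $\Theta$ is non-decreasing (remarked after \cref{eq:theta-def}), the upper bound can be relaxed to $t\,\Theta\bigl((2d/(d+1))\,M^{-1}(p/t)\bigr)$.

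\emph{Step 3: the equality cases.} If $d = 1$, then $2d/(d+1) = 1$, so the upper and lower bounds coincide. If the norm is $\ell^\infty$, we instead use $\gamma = 1/2$ directly, so $2\gamma = 1$ and again the bounds coincide.

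The only step needing any care is Step 1. One must check that the shift preserves the greatest common component's total mass even when the ball is uncountable. This follows from the definition in \cref{eq:greatest-common-component}, because the shift is a measurable bijection of $\mathcal{Y}$, so it maps the set of common lower bounds for one ball bijectively onto the set for the other, preserving each measure's mass. The remaining steps are routine substitutions.
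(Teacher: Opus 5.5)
Your proposal is correct and is essentially the paper's argument: translation invariance of a convolution kernel gives $\theta(a,r)=\theta(\mathbf{0},r)=\Theta(r)$, concavity makes $\Theta$ equal to its own upper concave envelope, and the Jung-constant facts ($\gamma\le d/(d+1)$ in general, and $2\gamma=1$ when $d=1$ or the norm is $\ell^\infty$) turn the bounds of \cref{theorem:bounds-on-average-extremal-doeblin-curve} into the stated ones. The only implicit ingredient is that $K$ is absolutely continuous, which the theorem requires; this is a standing assumption of that subsection in the paper as well.
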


The equality in the $d = 1$ or $\ell^\infty$-norm case trivially follows, since the upper and lower bounds from \cref{theorem:bounds-on-average-extremal-doeblin-curve} match. The next corollary provides three examples of closed-form Doeblin curves for convolution kernels on $\mathbb{R}$.

\begin{corollary}[Examples of Average Extremal Doeblin Curves] \label{corollary:examples-of-average-extremal-doeblin-curves}
    Consider the Gaussian, Laplace, and $q$-Gaussian \cite[Section 2.3]{UmarovTsallis2008} ($q = 2$) additive noise kernels on $\mathcal{X} = \mathcal{Y} = \mathbb{R}$, given by 
    \begin{align}
        K_1 \iprn{A \,|\, x\,;\: \sigma^2} &= \int_A \frac{1}{\sigma \sqrt{2 \pi}} \exp \prn{-\frac{(y - x)^2}{2 \sigma^2}} \diff y, \\
        K_2 \iprn{A \,|\, x\,;\: b} &= \int_A \frac{1}{2b} \exp \prn{-\frac{\abs{y - x}}{b}} \diff y, \\
        K_3 \iprn{A\,|\,x\,;\: \beta} &= \int_A \frac{\sqrt{\beta}}{\pi} \prn{\frac{1}{1 + \beta (y - x)^2}} \diff y,
    \end{align}
    respectively. Then, under the norm $\inorm{\wildcard} = |\wildcard|$ and power function $M(z) = z^2$, the average extremal power-constrained Doeblin curves for these kernels are
    \begin{align}
        \mathrm{F}_{K_1}^\mathsf{AE} \iprn{t; p, \sigma^2} &= t \prn{1 - 2 \Phi \prn{-\frac{1}{\sigma} \sqrt{\frac{p}{t}}}}, \\
        \mathrm{F}_{K_2}^\mathsf{AE} \iprn{t; p, b} &= t \prn{1 - \exp \prn{-\frac{1}{b} \sqrt{\frac{p}{t}}}}, \\
        \mathrm{F}_{K_3}^\mathsf{AE} \iprn{t; p, \beta} &= \frac{2t}{\pi} \arctan \prn{\sqrt{\frac{\beta p}{t}}},
    \end{align}
    where $\Phi\colon \mathbb{R} \rightarrow (0, 1)$ denotes the standard Gaussian CDF   \begin{math}
        \Phi(y) = \int_{-\infty}^y (\ifrac{1}{\sqrt{2 \pi}}) \exp \iprn{-\ifrac{t^2}{2}} \diff t
    \end{math}. Furthermore, the joint ranges $\mathfrak{F}(K_1; \mathcal{G}_p)$, $\mathfrak{F}(K_2; \mathcal{G}_p)$, and $\mathfrak{F}(K_3; \mathcal{G}_p)$ comprise the entire areas under the corresponding Doeblin curves, where $\mathcal{G}_p$ is the set of all Markov kernels $W$ that satisfy $\itnorm{W}_\mathsf{AE} \leq p$.
\end{corollary}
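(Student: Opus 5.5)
The plan is to apply the equality case of \cref{corollary:average-extremal-doeblin-curves-of-convolution-kernels} with $d = 1$ and $M(z) = z^2$, which gives $\mathrm{F}_K^\mathsf{AE}(t; p) = t\,\Theta(\sqrt{p/t})$ once we show that $\Theta$ is concave for each kernel. So the work is to compute $\Theta(r) = \rho_{\mathcal{B}(0,r)}(K)$ in closed form and then check its concavity.

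Each $K_j$ is a convolution with a symmetric, unimodal density $f$. For $\mathcal{S} = [-r, r]$, the greatest common component of $\{f(\cdot - x)\}_{x \in [-r,r]}$ has density $\inf_{|x| \leq r} f(y - x)$. By unimodality and symmetry this equals $f(|y| + r)$. (It is measurable here, and it agrees with the lattice infimum because the densities are continuous in $x$.) Hence
\begin{equation}
\tau_{\mathcal{S}}(K) = \int_{\mathbb{R}} f(|y| + r)\,\mathrm{d}y = 2\int_r^\infty f = 2\,\mathbb{P}(Z > r),
\end{equation}
where $Z \sim f$, and therefore $\Theta(r) = 1 - 2\,\mathbb{P}(Z > r) = \mathbb{P}(|Z| \leq r)$. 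This evaluates as follows:
\begin{itemize}
\item Gaussian: $1 - 2\Phi(-r/\sigma)$.
\item Laplace: $1 - e^{-r/b}$.
\item Cauchy-type ($q = 2$): $\frac{2}{\pi}\arctan(\sqrt{\beta}\, r)$.
\end{itemize}
Each of these is concave on $\mathbb{R}_+$, since its derivative $2f(r)$ is non-increasing for $r \geq 0$. Substituting $r = \sqrt{p/t}$ gives the three displayed formulas.

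For the joint range, we need every point $(s, y)$ with $0 \leq y \leq \mathrm{F}_K^\mathsf{AE}(s; p)$ to be attained, up to closure. The plan has three steps.
\begin{enumerate}
\item For fixed $s$, the lower-bound construction in the proof of \cref{theorem:bounds-on-average-extremal-doeblin-curve} gives a kernel $W$. Concretely, we expect it to mix the constant kernel $\updelta_{\mathbf{0}}$ with a kernel supported on $\pm\sqrt{p/s}$ so that $\rho(W) = s$ and $\itnorm{W}_\mathsf{AE} = p$. This $W$ attains the curve value.
\item To move down vertically, we vary a parameter continuously while keeping $\rho(W) = s$ and the power at most $p$. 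For example, we shrink the support points to $\pm r'$ with $r'$ running from $\sqrt{p/s}$ down to $0$, which lowers the power. This gives $\rho(WK) = s\,\Theta(r')$, which sweeps continuously over $[0, \mathrm{F}_K^\mathsf{AE}(s;p)]$.
\item We check that the average extremal power stays within the constraint along this path. Under the maximal coupling, $\sup_u |X_u|^2$ equals $r'^2$ on an event of probability $s$ and $0$ otherwise, so the power is $s\,r'^2 \leq p$.
\end{enumerate}
Taking the closure then covers the whole region.

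I expect the main obstacle to be verifying that the lower-bound kernel really has $\rho(W) = s$ and that its $\rho(WK)$ equals $s\,\Theta(r')$ exactly, not just at least this value. One way is to use the two-point kernel $[\updelta_{r'}, \updelta_{-r'}]$ mixed with weight $1-s$ on $\updelta_{\mathbf{0}}$. The Doeblin coefficient of $[K(\cdot\mid r'), K(\cdot\mid -r')]$ is the TV overlap $1 - \Theta(r')$. This holds because, for symmetric unimodal $f$, the infimum over $[-r', r']$ is attained at the endpoints. Linear interpolation with the constant kernel then gives $\rho = s\,\Theta(r')$.
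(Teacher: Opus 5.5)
Your proposal is correct and follows the paper's proof essentially step for step: you compute $\Theta(r)=1-2\,\mathbb{P}(Z>r)$ from the endpoint form $f(|y|+r)$ of the infimum density, verify concavity, invoke the $d=1$ equality case of the convolution-kernel corollary, and fill the region under the curve with the two-point kernel mixed with weight $1-s$ on $\updelta_{0}$, which is exactly the paper's attaining kernel (the theorem's lower-bound construction actually uses the identity kernel on the whole interval $[-\sqrt{p/s},\sqrt{p/s}]$, but for symmetric unimodal noise it gives the same value). The only difference is cosmetic: you sweep the support radius $r'$ directly at each abscissa $s$ and take the closure, whereas the paper sweeps the power at $t=1$ using continuity in $p$ and then rescales with the affine-transformation lemma, which produces the same family of kernels.
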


\cref{corollary:examples-of-average-extremal-doeblin-curves} is proved in \cref{subsection:proofs-of-bounds-on-power-constrained-doeblin-curves}. Note that \cite{PolyanskiyWu2016} presents a similar result on the power-constrained Dobrushin curves of Gaussian additive noise kernels. By following the steps in this proof up to \cref{eq:examples-of-average-extremal-doeblin-curves-concave}, we obtain that $\Theta$ is concave for any convolution kernel on $\mathbb{R}$ whose density function $g(z) = g(x - y) = \frac{\idiff K}{\idiff y}(y \mid x)$ is symmetric about $z = 0$ and non-increasing on $z \in [0, \infty)$. For such convolution kernels, the Doeblin curve $\mathrm{F}_K^\mathsf{AE}$ is concave in $t$, since \cref{corollary:average-extremal-doeblin-curves-of-convolution-kernels} establishes that $\mathrm{F}_K^\mathsf{AE}$ is the \emph{perspective} of the composition of a concave function $\Theta$ and a concave non-decreasing function $M^{-1}$ (cf. \cite[Remark 4]{PolyanskiyWu2016}).

\cref{figure:convolution-kernels-doeblin-curves} presents Doeblin curves for the convolution kernels in \cref{corollary:examples-of-average-extremal-doeblin-curves} with $\sigma^2 = 1$, $b = 1$, and $\beta = 1$. The trivial upper bound \cref{eq:doeblin-curve-trivial-bound} due to submultiplicativity of $\rho$ is depicted by all three curves remaining below the gray dotted line $y = t$. Since the derivatives of all three curves approach $1$ as $t \rightarrow 0$, the complementary Doeblin coefficients for all three kernels are $\rho(K_1) = \rho(K_2) = \rho(K_3) = 1$. Hence, the contraction behavior of these kernels is only captured by their Doeblin curves through the analysis in \cref{theorem:bounds-on-average-extremal-doeblin-curve} and the ensuing corollaries.

\begin{figure}
    \centering   \includegraphics[trim=1.75cm 0.25cm 2cm 0.25cm, clip, width=0.9\columnwidth]{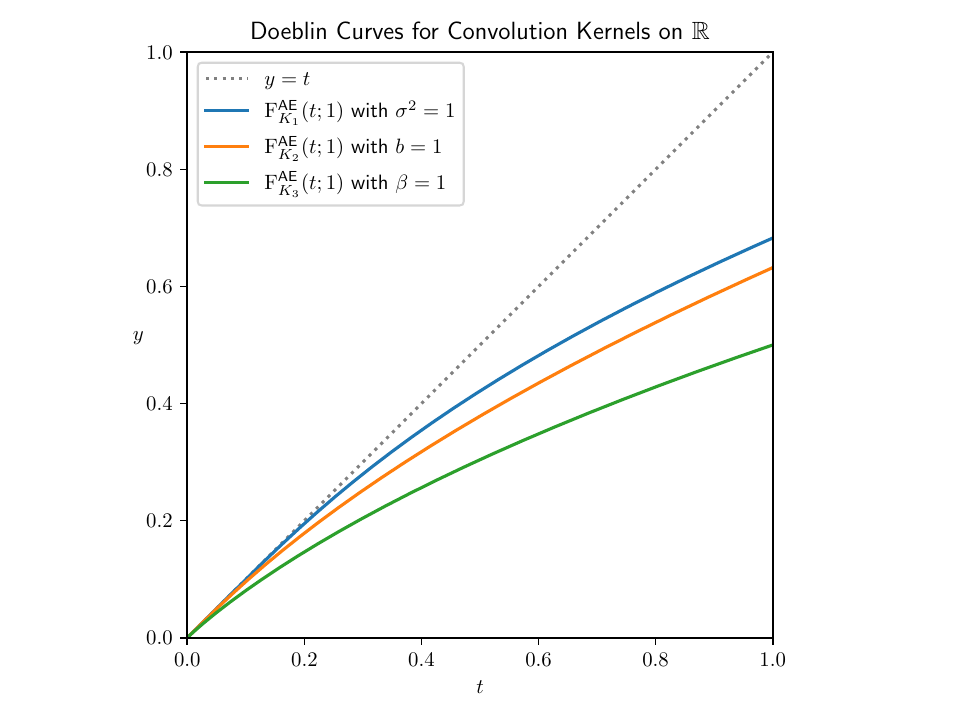}
    \caption{Average extremal power-constrained Doeblin curves for the Gaussian, Laplace, and $q$-Gaussian ($q = 2$) additive noise kernels on $\mathbb{R}$.}
    \label{figure:convolution-kernels-doeblin-curves}
\end{figure}

Lastly, we present bounds on the uniform average power-constrained Doeblin curve of a general kernel $K$. To do so, we impose additional regularity structure by requiring all input kernels $W$ to have a fixed source space $\mathcal{U}$ with finiteness or boundedness assumptions. We also impose sub-Gaussianity assumptions on $W$ as required.

\begin{proposition}[Upper Bound on Uniform Average Doeblin Curve] \label{proposition:upper-bound-on-uniform-average-doeblin-curve-finite}
    Let $\mathcal{U}$ be a finite set. Let $\mathcal{G}$ be the set of all Markov kernels $W\colon \mathcal{U} \times \mathcal{F}_\mathcal{X} \rightarrow [0, 1]$ satisfying the following assumptions:
    \begin{enumerate}[a)]
        \item (Uniform average power constraint) $\itnorm{W}_\mathsf{UA} \leq p$.
        \item (Sub-Gaussianity) Consider the set of random variables $\ibrc{Z_u}_{u \in \mathcal{U}}$ given by $Z_u = M(\inorm{X_u})$, $X_u \sim W(\lwildcard \mid u)$ for each $u \in \mathcal{U}$. Then, there exists $\sigma > 0$ such that for any $u \in \mathcal{U}$,   \begin{math}
            \iE{\exp(\lambda \prn{Z_u - \E{Z_u}})} \leq \exp \iprn{\ifrac{\sigma^2 \lambda^2}{2}}
        \end{math} for all $\lambda \in \mathbb{R}$.
    \end{enumerate}
    Then, the constrained Doeblin curve $\mathrm{F}_K(t; \mathcal{G})$ satisfies the upper bound   \begin{math}
        \mathrm{F}_K(t; \mathcal{G}) \leq t \, \invbreve{\Theta} \iprn{2 \gamma \, M^{-1} \iprn{\ifraca{p + \sigma \sqrt{2 \log_e \abs{\mathcal{U}}}}{t}}}
    \end{math}
    for all $t \in (0, 1]$ where $\Theta$ is defined in \cref{eq:theta-def}, the Jung constant $\gamma$ is defined in \cref{eq:jung-constant}, and $M$ is the power function.
\end{proposition}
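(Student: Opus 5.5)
The plan is to reduce to the average extremal bound of \cref{theorem:bounds-on-average-extremal-doeblin-curve}. The text before the statement hints at this: the gap $\itnorm{W}_\mathsf{AE}-\itnorm{W}_\mathsf{UA}$ can be controlled. Concretely, I will show that every $W \in \mathcal{G}$ satisfies
\begin{equation}
\itnorm{W}_\mathsf{AE} \leq p + \sigma\sqrt{2\log_e\abs{\mathcal{U}}}. \label{eq:proposal-ae-bound}
\end{equation}
Since $\mathcal{U}$ is finite, $\sup_{u} M(\inorm{X_u})$ is a maximum of finitely many measurable functions, so it is measurable and $\itnorm{W}_\mathsf{AE}$ is well defined. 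Given \cref{eq:proposal-ae-bound}, the set $\mathcal{G}$ is contained in the set of kernels with $\itnorm{W}_\mathsf{AE}\le p'$, where $p' = p+\sigma\sqrt{2\log_e|\mathcal{U}|}$. Monotonicity of Doeblin curves in the constraint set then gives $\mathrm{F}_K(t;\mathcal{G}) \le \mathrm{F}_K^\mathsf{AE}(t;p')$. Applying the upper bound of \cref{theorem:bounds-on-average-extremal-doeblin-curve} yields exactly the claimed $t\,\invbreve{\Theta}(2\gamma M^{-1}(p'/t))$.

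To prove \cref{eq:proposal-ae-bound}, let $\{X_u\}$ be the maximal coupling and set $Z_u = M(\inorm{X_u})$. The marginal law of each $Z_u$ under the coupling is still that of $M(\inorm{X})$ with $X\sim W(\cdot\mid u)$. Hence $\E{Z_u}\le \itnorm{W}_\mathsf{UA}\le p$, and each centered $Z_u-\E{Z_u}$ is $\sigma$-sub-Gaussian. This holds regardless of the dependence that the coupling introduces, because sub-Gaussianity is a marginal property. Then
\[
\E{\max_u Z_u} \le \max_u \E{Z_u} + \E{\max_u (Z_u - \E{Z_u})} \le p + \E{\max_u (Z_u-\E{Z_u})}.
\]
For the last term I will use the standard maximal inequality for finitely many sub-Gaussian variables with arbitrary dependence. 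By Jensen's inequality, for $\lambda>0$,
\[
\exp\bigl(\lambda\,\E{\max_u Y_u}\bigr) \le \E{\max_u e^{\lambda Y_u}} \le \sum_u \E{e^{\lambda Y_u}} \le \abs{\mathcal{U}}\, e^{\sigma^2\lambda^2/2},
\]
so $\E{\max_u Y_u} \le \log_e\abs{\mathcal{U}}/\lambda + \sigma^2\lambda/2$. Choosing $\lambda = \sqrt{2\log_e\abs{\mathcal{U}}}/\sigma$ gives the bound $\sigma\sqrt{2\log_e\abs{\mathcal{U}}}$. If $|\mathcal{U}|=1$, the bound is trivially $0$, since then $\max_u Y_u = Y_u$ has mean zero.

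The main obstacle is to make sure that \cref{theorem:bounds-on-average-extremal-doeblin-curve} legitimately applies to kernels with the fixed finite source $\mathcal{U}$. This needs that source to be Polish, and a finite discrete space is. It also needs the AE-constrained curve to be the supremum over all such kernels, which is the monotonicity argument above. Everything else is the routine sub-Gaussian maximal inequality.
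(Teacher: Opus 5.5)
Your proposal is correct and matches the paper's proof. The paper also bounds $\itnorm{W}_\mathsf{AE}$ by $\itnorm{W}_\mathsf{UA}$ plus the expected maximum of the centered $Z_u$, controls that term with the same Jensen-plus-sum sub-Gaussian maximal inequality, and feeds the result into the upper-bound argument of \cref{theorem:bounds-on-average-extremal-doeblin-curve}. The differences are cosmetic: the paper reruns that proof up to step (c) of \cref{eq:upper-bound-shared-argument} rather than using monotonicity in the constraint set, and you additionally handle the case $\abs{\mathcal{U}}=1$, where the paper's choice of $\lambda$ is $0$ and the optimization step degenerates.
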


\begin{proposition}[Upper Bound on Uniform Average Doeblin Curve] \label{proposition:upper-bound-on-uniform-average-doeblin-curve-totally-bounded}
    Let $(\mathcal{U}, d_\mathcal{U})$ be a Polish space endowed with a metric $d_\mathcal{U}\colon \mathcal{U} \times \mathcal{U} \rightarrow \mathbb{R}_+$ such that $(\mathcal{U}, d_\mathcal{U})$ is totally bounded, i.e., the $\epsilon$-covering number   \begin{math}
        N(\epsilon, \mathcal{U}, d_\mathcal{U}) \triangleq \min \{ n \in \mathbb{N}: \exists \ibrc{u^*_1, \tdots, u^*_n} \subseteq \mathcal{U}, \forall u \in \mathcal{U}, \, \exists i \in [n], \, d_\mathcal{U}(u, u^*_i) \leq \epsilon \}
    \end{math}
    is finite for all $\epsilon > 0$. Let $\mathcal{G}$ be the set of all Markov kernels $W\colon \mathcal{U} \times \mathcal{F}_\mathcal{X} \rightarrow [0, 1]$ satisfying the following assumptions:
    \begin{enumerate}[a)]
        \item (Uniform average power constraint) $\itnorm{W}_\mathsf{UA} \leq p$.
        \item (Sub-Gaussian increments) Consider the random process $\ibrc{Z_u}_{u \in \mathcal{U}}$ given by $Z_u = M(\inorm{X_u})$, where $\ibrc{X_u}_{u \in \mathcal{U}} \sim \mathbb{P}$ is the maximal coupling of random variables with $X_u \sim W(\lwildcard \mid u)$ for each $u \in \mathcal{U}$, as defined in \cref{eq:maximal-coupling}. Then, there exists $\sigma > 0$ such that for any $u, v \in \mathcal{U}$,   \begin{math}
            \iE{\exp(\lambda \iprn{(Z_u - Z_v) - \E{Z_u - Z_v}})} \leq \exp \iprn{\ifrac{\sigma^2 d_\mathcal{U}(u, v)^2 \lambda^2}{2}}
        \end{math} for all $\lambda \in \mathbb{R}$.
        \item (Measurability) Under the above definition of $\ibrc{Z_u}_{u \in \mathcal{U}}$, the quantity $\sup_{u \in \mathcal{U}} \ibrc{Z_u - \mathbb{E}[Z_u]}$ is measurable.
    \end{enumerate}
    Then, the constrained Doeblin curve $\mathrm{F}_K(t; \mathcal{G})$ satisfies the upper bound   \begin{equation}
        \mathrm{F}_K(t; \mathcal{G}) \leq t \, \invbreve{\Theta} \Biggprn{\!2 \gamma \, M^{-1} \!\Biggprn{\frac{p}{t} + \frac{32 \sigma}{t} \hspace{-1em}\int\limits_0^{\tnorm{\mathcal{U}}_\infty}\hspace{-1em} \sqrt{\log_e N(\epsilon, \mathcal{U}, d_\mathcal{U})} \diff \epsilon} \!\! }
    \end{equation}
    for all $t \in (0, 1]$, where   \begin{math}
        \itnorm{\mathcal{U}}_\infty \triangleq \sup_{u, v \in \mathcal{U}} d_\mathcal{U}(u, v)
    \end{math}
    denotes the diameter of $(\mathcal{U}, d_\mathcal{U})$, $\Theta$ is defined in \cref{eq:theta-def}, the Jung constant $\gamma$ is defined in \cref{eq:jung-constant}, and $M$ is the power function.
\end{proposition}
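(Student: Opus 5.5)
The plan is to reduce the bound to the argument behind the upper bound in \cref{theorem:bounds-on-average-extremal-doeblin-curve}, which controls $\rho(WK)$ by $t\,\invbreve{\Theta}(2\gamma M^{-1}(\cdot/t))$ in terms of the average extremal power $\tnorm{W}_\mathsf{AE}$. The only new ingredient is to bound $\tnorm{W}_\mathsf{AE}$ by $\tnorm{W}_\mathsf{UA}$ plus a chaining term, using the sub-Gaussian increment assumption. Concretely, I would first check that the proof of \cref{theorem:bounds-on-average-extremal-doeblin-curve} yields, for any kernel $W$ with $\sup_u M(\inorm{X_u})$ measurable under the maximal coupling, the bound
\begin{equation}
    \rho(WK) \leq \rho(W) \, \invbreve{\Theta} \prn{2\gamma \, M^{-1} \prn{\tnorm{W}_\mathsf{AE}/\rho(W)}} .
\end{equation}
Its proof should proceed by conditioning on the event in the maximal coupling (\cref{proposition:maximal-coupling-characterization-of-doeblin-coefficients}) where the $X_u$ do not all coincide. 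That event has probability $\rho(W)$. On it, the set $\ibrc{X_u}$ lies in a ball of radius $\gamma\cdot\mathrm{diam} \leq 2\gamma \sup_u\inorm{X_u}$, which gives a pointwise bound by $\Theta$. Then Jensen's inequality with the concave non-decreasing $\invbreve{\Theta}\circ 2\gamma M^{-1}$ gives the displayed bound. Since the right-hand side is non-decreasing in $\tnorm{W}_\mathsf{AE}$ and $t\mapsto t\,\invbreve{\Theta}(2\gamma M^{-1}(c/t))$ is non-decreasing in $t$ (as a perspective of a non-decreasing concave function), it suffices to show
\begin{equation}
    \tnorm{W}_\mathsf{AE} \leq p + 32\sigma\int_0^{\tnorm{\mathcal{U}}_\infty}\sqrt{\log_e N(\epsilon,\mathcal{U},d_\mathcal{U})}\diff\epsilon
\end{equation}
for every $W\in\mathcal{G}$. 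Using $\rho(W)\le t$ then finishes the argument.

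For that inequality, I would write $\sup_u Z_u \leq \sup_u \E{Z_u} + \sup_u (Z_u-\E{Z_u})$. The first term is $\tnorm{W}_\mathsf{UA} \leq p$, because under the maximal coupling each $X_u$ has marginal $W(\cdot\mid u)$. The second term is measurable by assumption c). It is a centered process $Y_u = Z_u - \E{Z_u}$ whose increments are $\sigma d_\mathcal{U}$-sub-Gaussian by assumption b). By Dudley's entropy integral bound, $\E{\sup_u Y_u}\le 16\int_0^\infty\sqrt{\log N(\epsilon,\mathcal{U},\sigma d_\mathcal{U})}\,d\epsilon$ with some standard constant. Rescaling by $\sigma$ and truncating at the diameter (where $N=1$) gives the integral with constant $32\sigma$. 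Here $\E{\sup_u Y_u} = \E{\sup_u Y_u - Y_{u_0}}$ since $\E{Y_{u_0}}=0$, so the reference-point version of Dudley applies directly.

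The main obstacle I expect is applying Dudley to an uncountable index set. I would prove the chaining bound over finite subsets (nested $2^{-k}$-nets) and then pass to the supremum over $\mathcal{U}$. Passing to $\mathcal{U}$ requires separability or the measurability assumption c). I would handle it by taking the supremum over finite subsets, i.e.\ the lattice-sup interpretation of $\E{\sup}$, and noting that the chosen constant $32$ is loose enough to absorb the usual factor from the dyadic sum.
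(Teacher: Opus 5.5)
Your proposal is correct and matches the paper's proof essentially step for step. The paper likewise reuses the upper-bound argument of \cref{theorem:bounds-on-average-extremal-doeblin-curve} to get $\rho(WK) \leq t\,\invbreve{\Theta}\bigl(2\gamma\, M^{-1}(\tnorm{W}_\mathsf{AE}/t)\bigr)$, bounds $\sup_u Z_u \leq \tnorm{W}_\mathsf{UA} + \sup_u \{Z_u - \E{Z_u}\}$, and controls the centered supremum by subtracting a fixed reference term $Z_{v^*} - \E{Z_{v^*}}$ and applying Dudley's entropy integral. The differences are cosmetic: the paper enlarges to the two-sided supremum $\sup_{u,v}$ so that Wainwright's Theorem 5.22 yields the constant $32\sigma$ directly, while your one-sided version with your stated constant $16$ rescales to $16\sigma \leq 32\sigma$ and is therefore at least as strong.
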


\begin{proposition}[Lower Bound on Uniform Average Doeblin Curve] \label{proposition:lower-bound-on-uniform-average-doeblin-curve}
    The power-constrained Doeblin curve $\mathrm{F}_K^\mathsf{UA}$ satisfies the lower bound   \begin{math}
        \mathrm{F}_K^\mathsf{UA}(t; p) \geq t \, \theta \iprn{\mathbf{0}, M^{-1} \iprn{\ifrac{p}{t}}}
    \end{math}
    for all $t \in (0, 1]$ where $\theta$ is defined in \cref{eq:theta-def} and $M$ is the power function.
\end{proposition}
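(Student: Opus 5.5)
We prove the lower bound by exhibiting, for each $t \in (0,1]$ and $p \geq 0$, a family of admissible input kernels $W$ with $\rho(W) \leq t$, $\itnorm{W}_\mathsf{UA} \leq p$, and $\rho(WK)$ arbitrarily close to $t\,\theta(\mathbf{0}, r)$, where $r \triangleq M^{-1}(p/t)$. The construction mixes a point-mass kernel with a constant kernel at $\mathbf{0}$, which is the same device used in the proofs of \cref{proposition:properties-of-doeblin-curves} and \cref{proposition:homogeneity-of-power-constrained-doeblin-curves}.

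\emph{Construction.} Take $\mathcal{U} = \mathcal{B}(\mathbf{0}, r)$, with the Polish structure inherited from the separable Banach space $\mathcal{X}$ (a closed subset of a Polish space). Define
\begin{equation}
    W(\lwildcard \mid u) \triangleq t\,\updelta_u + (1-t)\,\updelta_{\mathbf{0}} .
\end{equation}
Every row of $W$ dominates $(1-t)\updelta_{\mathbf{0}}$, so $\tau(W) \geq 1-t$ and hence $\rho(W) \leq t$. For the power, since $M(0)=0$,
\begin{equation}
    \int M(\norm{x})\,W(\idiff x \mid u) = t\,M(\norm{u}) \leq t\,M(r) = p
\end{equation}
for every $u$, so $\itnorm{W}_\mathsf{UA} \leq p$. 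Here strict monotonicity of $M$ is used to make sense of $M^{-1}$. If $p/t$ exceeds the range of $M$, interpret $r$ as $\sup\{z : M(z) \leq p/t\}$ (possibly $\infty$); if $r$ is infinite, replace it by finite radii and take a limit using monotonicity of $\theta$ in $r$.

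\emph{Output computation.} The composed kernel is
\begin{equation}
    WK(\lwildcard \mid u) = t\,K(\lwildcard \mid u) + (1-t)\,K(\lwildcard \mid \mathbf{0}),
\end{equation}
and we must show $\rho(WK) \geq t\,\rho_{\mathcal{B}(\mathbf{0},r)}(K)$, i.e., $\tau(WK) \leq 1 - t + t\,\tau_{\mathcal{B}(\mathbf{0},r)}(K)$. We use \cref{theorem:variational-characterization-of-doeblin-coefficient} in its constrained form (see the footnote on constrained Doeblin coefficients). For any $u_1,\dots,u_n \in \mathcal{B}(\mathbf{0},r)$ and any $n$-partition $A_1,\dots,A_n$,
\begin{equation}
    \sum_i WK(A_i \mid u_i) = t \sum_i K(A_i \mid u_i) + (1-t)\sum_i K(A_i \mid \mathbf{0}) = t\sum_i K(A_i\mid u_i) + (1-t),
\end{equation}
since the $A_i$ partition $\mathcal{Y}$. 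Taking infima and applying \cref{theorem:variational-characterization-of-doeblin-coefficient} to $WK$ gives
\begin{equation}
    \tau(WK) \leq t\,\tau_{\mathcal{B}(\mathbf{0},r)}(K) + 1 - t .
\end{equation}
This is exactly what we need, and since $\mathrm{F}_K^\mathsf{UA}$ is a supremum over admissible $W$, the bound follows.

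Absolute continuity of $WK$ is inherited from that of $K$ through the common dominating measure $\mu$. Alternatively, the inequality $\tau(WK) \leq t\tau_{\mathcal{S}}(K) + 1-t$ can be obtained directly from \cref{eq:doeblin-coefficient-gcc}: any common component $\nu$ of the rows of $WK$ satisfies $\nu \leq t K(\lwildcard \mid u) + (1-t)K(\lwildcard \mid \mathbf{0})$ for all $u$, and then $(\nu - (1-t)K(\lwildcard\mid\mathbf{0}))^+/t$ can be compared with the rows of $K$. The variational route is cleaner, so we use it.

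\emph{Main obstacle.} The expected difficulty is purely technical: checking that $\mathcal{U}$ is a legitimate Polish index space and that $u \mapsto W(A \mid u)$ is measurable (it is, since $u \mapsto \updelta_u(A) = \mathbbm{1}_A(u)$ is measurable), and handling the edge cases where $r$ is infinite or $t$ is small. The core inequality itself is a one-line consequence of the partition identity $\sum_i K(A_i \mid \mathbf{0}) = 1$.
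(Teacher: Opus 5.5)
Your proof is correct and uses the same witness as the paper, namely $W^*(\cdot \mid u) = t\,\updelta_u + (1-t)\,\updelta_{\mathbf{0}}$ on $\mathcal{U} = \mathcal{B}(\mathbf{0}, M^{-1}(p/t))$. The only difference is in how the properties of this witness are verified. The paper obtains $\rho(W^* K) = t\,\rho_{\mathcal{U}}(K)$ from its affine-transformation lemma for greatest common components and bounds the power via $\itnorm{W^*}_\mathsf{UA} \leq \itnorm{W^*}_\mathsf{AE} = p$. You instead use the partition identity $\sum_i K(A_i \mid \mathbf{0}) = 1$ together with the variational characterization, and compute the uniform average power directly, which is slightly more self-contained. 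Your infinite-$r$ edge case never arises, since a convex, strictly increasing $M$ with $M(0) = 0$ is a bijection of $\mathbb{R}_+$.
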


\cref{proposition:upper-bound-on-uniform-average-doeblin-curve-finite,proposition:upper-bound-on-uniform-average-doeblin-curve-totally-bounded,proposition:lower-bound-on-uniform-average-doeblin-curve} are proved in \cref{subsection:proofs-of-bounds-on-power-constrained-doeblin-curves} by adapting the arguments from the proof of \cref{theorem:bounds-on-average-extremal-doeblin-curve}. We emphasize that \cref{proposition:lower-bound-on-uniform-average-doeblin-curve} does \emph{not} apply to the constrained Doeblin curves $\mathrm{F}_K(t; \mathcal{G})$ from \cref{proposition:upper-bound-on-uniform-average-doeblin-curve-finite,proposition:upper-bound-on-uniform-average-doeblin-curve-totally-bounded}, which are upper-bounded by $\mathrm{F}_K^\mathsf{UA}$ in general due to the stricter constraints imposed on the input kernels $W$ when defining the former. We provide examples of non-trivial kernels satisfying the preconditions of \cref{proposition:upper-bound-on-uniform-average-doeblin-curve-finite,proposition:upper-bound-on-uniform-average-doeblin-curve-totally-bounded} in \cref{appendix:markov-kernel-examples}.

\section{Main Results on Applications} \label{section:main-results-on-applications}

In this section, we present three illustrative applications of Doeblin curves to other areas of information theory and learning theory. Firstly, we derive generalization error bounds for iterative optimization algorithms operating over unbounded feasible sets. Secondly, we establish lower bounds for reliable computation using circuits of noisy $q$-ary gates. Lastly, we introduce a new definition of $(\epsilon, \delta)$-differential privacy based on our variational characterization of Doeblin coefficients, and provide improved privacy bounds for online iterative optimization. The applications showcase how Doeblin curves can provide nontrivial guarantees even when strong data processing constants such as the Dobrushin coefficient or the mutual information contraction coefficient degenerate to $1$ and do not provide any useful information.
While our results on generalization error can be specialized to use Dobrushin curves (see footnote 6 below), the latter two applications rely on Doeblin curves to trace the contraction between a collection of probability distributions, which cannot be done with Dobrushin curves (which can only analyze the contraction between two distributions).

\subsection{Bounds on Generalization Error} \label{subsection:bounds-on-generalization-error}

In this subsection, we use Doeblin curves to extend existing information-theoretic generalization error bounds for noisy iterative optimization algorithms in \cite{CalmonGeneralizationError} to settings where the feasible set has infinite diameter or the optimization problem is unconstrained altogether (also see \cite{RaginskyEnergyBound}). We utilize the information-theoretic framework introduced in \cite{CalmonGeneralizationError}, which provides generalization error bounds for compact feasible sets by leveraging the Dobrushin contraction coefficient of additive noise channels. Their results indicate that data points used in earlier iterations have a decaying contribution to the generalization error due to the cumulative effect of noise injection, with the rate of decay governed by the Dobrushin coefficient of the underlying noise channel. This analysis yields a non-trivial bound only when the feasible set has finite diameter, since the Dobrushin coefficient trivially equals $1$ otherwise. To address feasible sets with infinite diameter, we utilize Doeblin curves as a finer-grained tool to capture information decay for data points used in earlier iterations even when Dobrushin coefficients fail to do so, thereby yielding non-trivial generalization error bounds despite the absence of coarser coefficient-based contraction.\footnote{In all applications of Doeblin curves, suitable constraints may be placed on the Doeblin curve to obtain tighter bounds by incorporating prior knowledge about the input distributions. For example, when deriving our results on generalization error, we only utilize the Doeblin curve $\mathrm{F}_\Phi^\mathsf{UA}$ of the additive noise mechanism to quantify the contraction between \emph{two} distributions pushed through the mechanism. Hence, we may instead consider the constrained curve which only includes input kernels with $|\mathcal{U}| = 2$, which reduces to the \emph{Dobrushin curve} of the noise mechanism. Nonetheless, we present all results in this paper in terms of the more general \emph{Doeblin curve} $\mathrm{F}_\Phi^\mathsf{UA}$ for conceptual simplicity.}

Formally, following \cite{CalmonGeneralizationError}, we begin by considering a general (possibly non-convex) stochastic optimization problem,   \begin{gather}
    \min_{w \in \mathcal{W}} G_\mu(w), \\
    \text{where}\quad  G_\mu(w) \triangleq \Ewrt{Z \sim \mu}{g(w, Z)} = \int_{\mathcal{Z}} g(w, \wildcard) \diff \mu ,
\end{gather}
where $w \in \mathcal{W} \subseteq \mathbb{R}^d$ are the model parameters constrained to the (potentially unbounded) feasible set $\mathcal{W}$, $\mu$ is the underlying data distribution over the samples $Z$ belonging to the sample space $\mathcal{Z}$, and $g\colon \mathcal{W} \times \mathcal{Z} \rightarrow \mathbb{R}_+$ is the loss function. In practice, the true data distribution $\mu$ is usually unknown, and only a dataset $\mathcal{S} \triangleq ({Z}_1, \tdots, {Z}_n)$ consisting of $n$ independent and identically distributed samples ${Z}_i \sim \mu$ is available. Thus, we instead consider the empirical risk minimization (ERM) problem
\begin{equation}
    \min_{w \in \mathcal{W}} G_{\mathcal{S}}(w) , \quad \text{where} \quad G_{\mathcal{S}}(w) \triangleq \frac{1}{n} \sum_{i=1}^n g({w}, {Z}_i) ,
\end{equation}
with the aim of finding a solution which generalizes well to the original problem of minimizing $G_\mu$. To this end, we consider the following noisy iterative algorithm. The algorithm is initialized with an arbitrary parameter $W_0 \in \mathcal{W}$. Before the optimization process, $T$ disjoint mini-batch index sets $\mathcal{B}_1, \tdots, \mathcal{B}_T$ are chosen deterministically and fixed, where $\mathcal{B}_t \subseteq [n]$ contains the indices of the samples comprising the mini-batch at iteration $t$. Next, the algorithm performs $T$ iterations by updating the parameters $W_t$ according to the rule
\begin{equation}
    W_t \triangleq \proj_\mathcal{W} \biggprn{W_{t-1} - \frac{\eta_t}{\abs{\mathcal{B}_t}} \sum_{i \in \mathcal{B}_t} \nabla g(W_{t-1}, Z_i) + m_t N_t} \label{eq:noisy-iterative-algorithm-iteration}
\end{equation}
for each iteration $t \in [T]$, where the projection operator $\proj_\mathcal{W}$ ensures that the parameters remain within the feasible set $\mathcal{W}$,\footnote{Technically, our results hold with $\proj_\mathcal{W}$ being any function whose range is $\mathcal{W}$, not necessarily the orthogonal projection operator or even a function which is idempotent. Note that some choices for the projection operator might require the feasible set to be closed, convex, or both.} $\eta_t$ is the learning rate, $\nabla g$ represents the partial gradient of $g$ with respect to its first argument (i.e., the model parameters), $N_t \sim \mathsf{Normal}(0, I)$ represents independent additive standard Gaussian noise (where $I$ is the identity matrix of appropriate dimension),\footnote{We focus on Gaussian noise for conceptual simplicity, although our results can be easily extended to other noise distributions such as Laplace.} and $m_t$ controls the noise magnitude. We remark that several algorithms can be expressed in this form, such as Stochastic Gradient Langevin Dynamics (SGLD) and Differentially-Private Stochastic Gradient Descent (DP-SGD).

We assume that the loss function is bounded above by a constant $A > 0$, i.e.,   \begin{equation}
    \forall w \in \mathcal{W}, \, \forall z \in \mathcal{Z}, \quad g(w, z) \leq A  , \label{eq:A-def}
\end{equation}
and that the gradient of the loss is bounded in magnitude, i.e.,
\begin{equation}
    \forall w \in \mathcal{W}, \, \forall z \in \mathcal{Z}, \quad \norm{\nabla g(w, z)}_2 \leq L  . \label{eq:generalization-error-gradient-bound}
\end{equation}
Our objective is to upper-bound the expected generalization error at the final model parameters $W_T$ after $T$ iterations, i.e.,   \begin{math}
    \abs{\mathbb{E}[G_\mu(W_T) - G_{\mathcal{S}}(W_T)]}
\end{math},
where the expectation is taken with respect to the randomness in the dataset $\mathcal{S}$ and the noise $N_1, \tdots, N_T$ throughout the optimization process.

Let $\mathrm{F}_\Phi^\mathsf{UA}$ denote the uniform average power-constrained Doeblin curve of the $\mathsf{Normal}(0, I)$ additive noise channel under the power function $M(z) = z^2$. Our main result is formally stated as follows.

\begin{theorem}[Expected Generalization Error] \label{theorem:expected-generalization-error}
    The expected generalization error of the noisy iterative algorithm described by \cref{eq:noisy-iterative-algorithm-iteration} satisfies  
    \begin{multline}
        \abs{\E{G_\mu(W_T) - G_{\mathcal{S}}(W_T)}}\\ \leq A \sum_{t=1}^T \frac{\abs{\mathcal{B}_t}}{n} \, \invbreve{\mathrm{F}}_\Phi^\mathsf{UA} \mprn{\tcdots \invbreve{\mathrm{F}}_\Phi^\mathsf{UA} \mprn{\frac{\eta_t \sigma_{t-1}}{m_t \abs{\mathcal{B}_t}}; p_{t+1}} \tcdots; p_T} , \label{eq:expected-generalization-error}
    \end{multline}
    where $A$ upper bounds the loss function as assumed in \cref{eq:A-def} and for each $t \in \ibrc{0, \tdots, T - 1}$ we define
    \begin{equation}
        \sigma_t \triangleq \mathop{\mathbb{E}}_{\substack{(W_t, Z) \\ \sim P_{W_t} \otimes \mu}} \biggbkt{\Bignorm{\nabla g(W_t, Z) - \mathop{\mathbb{E}}_{\substack{(W_t, Z) \\ \sim P_{W_t} \otimes \mu}} \mbkt{\nabla g(W_t, Z)}}_2} ,
    \end{equation}
    for each $t \in \ibrc{2, \tdots, T}$ we define 
    \begin{equation}
        p_t \triangleq \frac{1}{m_t^2} \Biggprn{2 \max_{i \in \bigcup\limits_{s=1}^{t-1} \mathcal{B}_s} \sup_{z \in \mathcal{Z}} \E{\norm{W_{t-1}}_2^2 \given Z_i = z} + 2 \eta_t^2 L^2}
    \end{equation}
    where $L$ upper bounds the gradient of the loss as assumed in \cref{eq:generalization-error-gradient-bound}, and the composition of Doeblin curves in \cref{eq:expected-generalization-error} reduces to the identity function (i.e., no Doeblin curves are applied) for the $t = T$ term of the sum.
\end{theorem}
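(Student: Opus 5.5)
The plan follows the framework of \cite{CalmonGeneralizationError}, replacing the Dobrushin coefficient with the Doeblin curve $\mathrm{F}_\Phi^\mathsf{UA}$. First, since $0 \leq g \leq A$, the expected generalization error is bounded via a leave-one-out (or ghost-sample) decomposition. For each sample index $i$, $\abs{\E{G_\mu(W_T) - G_\mathcal{S}(W_T)}}$ is controlled by $\frac{A}{n}\sum_{i=1}^n \E{\inorm{P_{W_T \mid Z_i} - P_{W_T}}_\mathsf{TV}}$. Equivalently, one compares $P_{W_T \mid Z_i = z}$ with $P_{W_T \mid Z_i = z'}$, i.e.\ the complementary Doeblin coefficient $\rho$ of the kernel $z \mapsto P_{W_T \mid Z_i = z}$. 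Grouping indices by mini-batch, for $i \in \mathcal{B}_t$ the sample $Z_i$ enters only at iteration $t$. This is the source of the weight $\ifrac{\abs{\mathcal{B}_t}}{n}$ and of the composition of curves applied at iterations $t+1, \tdots, T$.

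Next, I would bound the TV contribution at the injection step $t$. Conditioned on $W_{t-1}$, the update is $W_{t-1} - \frac{\eta_t}{\abs{\mathcal{B}_t}}\sum_{j \in \mathcal{B}_t}\nabla g(W_{t-1}, Z_j)$ plus $m_t N_t$. Changing the batch shifts the Gaussian mean, and the TV distance between Gaussians with mean difference $\delta$ is at most $\ifrac{\norm{\delta}_2}{(m_t\sqrt{2\pi})}\cdot$const. I would therefore aim for the linear bound $\frac{\eta_t}{m_t\abs{\mathcal{B}_t}}\,\E{\norm{\nabla g(W_{t-1},Z) - \E{\nabla g}}_2}$, which is where $\sigma_{t-1}$ appears. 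The projection is a deterministic post-processing, so it does not increase TV distance.

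Then I would propagate through the later iterations $s = t+1, \tdots, T$. Each step is a deterministic map $w \mapsto w - \frac{\eta_s}{\abs{\mathcal{B}_s}}\sum \nabla g(w, Z_j)$, followed by the additive Gaussian channel with scale $m_s$, followed by projection. The deterministic map and the projection do not increase $\rho$, by data processing. By \cref{proposition:scale-invariant-doeblin-curves-of-additive-noise-channels}, the Gaussian step contracts $\rho$ according to $\mathrm{F}_\Phi^\mathsf{UA}(\wildcard; p)$ with the power rescaled by $m_s^2$. Composing with the data processing property (\cref{proposition:properties-of-doeblin-curves}, Part 1) gives the nested composition. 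I would use the upper concave envelopes $\invbreve{\mathrm{F}}_\Phi^\mathsf{UA}$ so that, by Jensen, expectations over the conditioning variables (the other samples) can be pulled inside each curve, and monotonicity handles the inner bounds.

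The main obstacle is verifying the power constraint at each step. The input kernel to the Gaussian channel at step $s$ is $z \mapsto P_{W_{s-1} - (\eta_s/\abs{\mathcal{B}_s})\sum\nabla g \mid Z_i = z}$. Its uniform average power is $\sup_z \E{\norm{\cdot}_2^2 \mid Z_i = z}$, and by $(a+b)^2 \leq 2a^2+2b^2$ together with \cref{eq:generalization-error-gradient-bound} this is at most $2\sup_z\E{\norm{W_{s-1}}_2^2 \mid Z_i = z} + 2\eta_s^2L^2$. Taking the maximum over the relevant indices and dividing by $m_s^2$ gives exactly $p_s$. Care is needed in the measurability of the conditioning, and in ensuring that Jensen over the other mini-batch samples commutes with the curve. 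This is why concave envelopes, rather than $\mathrm{F}$ itself, appear in the statement.
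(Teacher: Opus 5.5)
There is a genuine gap: the quantity you propagate through the noise steps is not the quantity you bound at the injection step. Your first step gives $\frac{A}{n}\sum_i I_{\mathsf{TV}}(W_T;Z_i)$, where $I_{\mathsf{TV}}(W_T;Z_i)=\mathbb{E}_{z\sim\mu}\lVert P_{W_T\mid Z_i=z}-P_{W_T}\rVert_{\mathsf{TV}}$ is an \emph{average} over $z$. By contrast, $\rho$ of the kernel $z\mapsto P_{W_T\mid Z_i=z}$ is a worst-case quantity over all of $\mathcal{Z}$. It upper-bounds $I_{\mathsf{TV}}(W_T;Z_i)$ but is not equivalent to it.

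Your propagation and power paragraphs track this worst-case $\rho$: the input kernel is indexed by $z$ and its power is $\sup_z\mathbb{E}[\lVert\cdot\rVert_2^2\mid Z_i=z]$. That part does go through, because later mini-batches are independent of $(W_{s-1},Z_i)$, so the gradient step is a fixed kernel. But the chain must then start from $\rho$ of $z\mapsto P_{W_t\mid Z_i=z}$, and this cannot be bounded by $\eta_t\sigma_{t-1}/(m_t\lvert\mathcal{B}_t\rvert)$. Take $t=1$, $\lvert\mathcal{B}_1\rvert=1$, deterministic $W_0$, no projection, and $\mathcal{Z}=\{z_0,z_1\}$ with $\mu(z_1)=\epsilon$ and $\lVert\nabla g(W_0,z_1)-\nabla g(W_0,z_0)\rVert_2=L$. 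Then $\sigma_0=2\epsilon(1-\epsilon)L\to 0$. Yet $\rho$ of this kernel is the TV distance between two Gaussians with covariance $m_1^2 I$ whose means differ by $\eta_1 L$, which is positive and independent of $\epsilon$. Your Gaussian mean-shift estimate controls only the averaged quantity $I_{\mathsf{TV}}(V_t;Z_i)$. Keeping your $\rho$-based chain would force a worst-case injection bound in terms of $L$, i.e.\ a different theorem.

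If instead you propagate $I_{\mathsf{TV}}(\cdot\,;Z_i)$, your first and injection steps are right, but ``the Gaussian step contracts $\rho$'' no longer applies as you set it up. The missing idea, used in the paper, is the following.
\begin{enumerate}
\item For each fixed $z$, apply the Doeblin curve to the two-input kernel $[P_{U_s\mid Z_i=z},P_{U_s}]$. Its uniform average power is at most $m_s^2p_s$: the conditional bound must hold uniformly in $z$ so that one curve parameter serves every $z$ (hence the $\sup_z$ in $p_s$), and the bound for the marginal follows by the tower rule.
\item By \cref{proposition:scale-invariant-doeblin-curves-of-additive-noise-channels} this gives $\lVert P_{V_s\mid Z_i=z}-P_{V_s}\rVert_{\mathsf{TV}}\le\mathrm{F}_\Phi^{\mathsf{UA}}\bigl(\lVert P_{U_s\mid Z_i=z}-P_{U_s}\rVert_{\mathsf{TV}};p_s\bigr)$.
\item Dominate $\mathrm{F}_\Phi^{\mathsf{UA}}$ by $\invbreve{\mathrm{F}}_\Phi^{\mathsf{UA}}$ and apply Jensen over $z\sim P_{Z_i}$ to get $\invbreve{\mathrm{F}}_\Phi^{\mathsf{UA}}\bigl(I_{\mathsf{TV}}(U_s;Z_i);p_s\bigr)$.
\item Close the recursion with $I_{\mathsf{TV}}(W_s;Z_i)\le I_{\mathsf{TV}}(V_s;Z_i)$, $I_{\mathsf{TV}}(U_s;Z_i)\le I_{\mathsf{TV}}(W_{s-1};Z_i)$, and monotonicity of $\invbreve{\mathrm{F}}_\Phi^{\mathsf{UA}}$.
\end{enumerate}
So the concave envelope is needed for Jensen over the value of $Z_i$, not over the other samples, and \cref{proposition:properties-of-doeblin-curves}, Part 1, is not needed.

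At the injection step, state explicitly that $W_{t-1}$ is independent of $Z_i$, and use a coupling in which only the $i$-th gradient term differs. With the bound $\lVert\delta\rVert_2/(2m_t)$ and the triangle inequality through $\mathbb{E}[\nabla g]$, the resulting factor $2$ cancels the $1/2$ and gives exactly $\eta_t\sigma_{t-1}/(m_t\lvert\mathcal{B}_t\rvert)$.
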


\cref{theorem:expected-generalization-error} is proved in \cref{subsection:proofs-for-generalization-error}. The argument utilizes the following lemma %
to express the expected generalization error in terms of the TV-information between the model parameter $W_T$ and each data sample $Z_i$.

\begin{lemma}[Generalization Error and TV-Information {\cite[Lemma 1]{CalmonGeneralizationError}}] \label{lemma:generalization-error-and-t-information}
    Define the TV-information between two random variables $X$ and $Y$ as 
    \begin{equation}
        I_{\mathsf{TV}}(X; Y) \triangleq \norm{P_{X,Y} - P_X \otimes P_Y}_\mathsf{TV} , \label{eq:t-information}
    \end{equation}
    where $P_{X,Y}$ is the joint distribution of $(X, Y)$, and $P_X$ and $P_Y$ are the marginal distributions of $X$ and $Y$, respectively. Then, the expected generalization error satisfies
    \begin{align}
        \abs{\E{G_\mu(W_T) - G_{\mathcal{S}}(W_T)}} \leq \frac{A}{n} \sum_{i=1}^n I_{\mathsf{TV}}(W_T; Z_i) .
    \end{align}
\end{lemma}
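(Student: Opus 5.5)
The plan is to decompose the generalization gap sample by sample. Then, for each sample, compare the expectation of the bounded loss under the joint law $P_{W_T, Z_i}$ with its expectation under the product law $P_{W_T} \otimes P_{Z_i}$, and bound each such difference by $A$ times the TV distance, i.e., $A \, I_{\mathsf{TV}}(W_T; Z_i)$ by \cref{eq:t-information}.

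\textbf{Step 1 (decomposition).} By linearity of expectation, $\E{G_{\mathcal{S}}(W_T)} = \frac{1}{n}\sum_{i=1}^n \E{g(W_T, Z_i)}$, where the $i$th expectation is taken under the joint law $P_{W_T, Z_i}$.

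\textbf{Step 2 (rewrite the population risk).} Since each $Z_i \sim \mu$, we have $P_{Z_i} = \mu$. Hence
\begin{equation}
    \E{G_\mu(W_T)} = \int\!\!\int g(w, z)\, \mu(\diff z)\, P_{W_T}(\diff w) = \Ewrt{P_{W_T} \otimes P_{Z_i}}{g(W, Z)}
\end{equation}
for every $i \in [n]$. Averaging this identity over $i$ gives
\begin{equation}
    \E{G_\mu(W_T) - G_{\mathcal{S}}(W_T)} = \frac{1}{n}\sum_{i=1}^n \prn{\Ewrt{P_{W_T} \otimes P_{Z_i}}{g} - \Ewrt{P_{W_T, Z_i}}{g}} .
\end{equation}
The use of Fubini/Tonelli here is justified because $g$ is measurable, non-negative, and bounded by $A$.

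\textbf{Step 3 (bound each term).} Apply the triangle inequality to the sum. It then suffices to show that, for any two probability measures $P$ and $Q$ on $\mathcal{W} \times \mathcal{Z}$ and any measurable $f$ with $0 \le f \le A$,
\begin{equation}
    \abs{\int f \diff P - \int f \diff Q} \le A \norm{P - Q}_\mathsf{TV} .
\end{equation}
I would prove this using the layer-cake representation $\int f \diff P = \int_0^A P(\{f > s\}) \diff s$, and likewise for $Q$. This gives
\begin{equation}
    \abs{\int f \diff P - \int f \diff Q} \le \int_0^A \abs{P(\{f>s\}) - Q(\{f>s\})} \diff s \le A \sup_{B} \abs{P(B) - Q(B)} ,
\end{equation}
where the supremum is over measurable sets $B$. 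The right-hand side is exactly $A \norm{P-Q}_\mathsf{TV}$ under the paper's definition of TV distance. Taking $P = P_{W_T, Z_i}$ and $Q = P_{W_T} \otimes P_{Z_i}$ bounds the $i$th term by $A \, I_{\mathsf{TV}}(W_T; Z_i)$, which yields the claim.

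The only point needing care is Step 3. The non-negativity $g \ge 0$ (from $g\colon \mathcal{W}\times\mathcal{Z}\to\mathbb{R}_+$) together with $g \le A$ is what gives the factor $A$ rather than $2A$, since the paper's TV distance is the supremum over sets and not the full variation norm. The layer-cake argument handles this cleanly. The remaining steps are routine bookkeeping.
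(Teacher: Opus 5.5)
Your proof is correct. The per-sample decomposition $\mathbb{E}[G_\mu(W_T) - G_{\mathcal{S}}(W_T)] = \frac{1}{n}\sum_{i=1}^n \bigl(\mathbb{E}_{P_{W_T}\otimes P_{Z_i}}[g] - \mathbb{E}_{P_{W_T,Z_i}}[g]\bigr)$, combined with $\bigl|\int f\,\mathrm{d}P - \int f\,\mathrm{d}Q\bigr| \le A\lVert P-Q\rVert_{\mathsf{TV}}$ for $0 \le f \le A$, is the standard argument. As you note, nonnegativity of $g$ is what gives $A$ rather than $2A$ under the sup-over-sets convention.

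The paper does not reprove this lemma and simply imports it from \cite[Lemma 1]{CalmonGeneralizationError}, where the argument is, to my understanding, this same route. Your layer-cake derivation is just one way of establishing the routine bounded-function TV step.
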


Our principal contribution is to bound the TV-information between the output $W_T$ of the learning algorithm and the data samples $Z_i$ by utilizing properties of the Gaussian noise channel's Doeblin curve, as formalized in the following lemma which holds for possibly non-compact $\mathcal{W}$. %

\begin{lemma}[Recursive Bound on TV-Information] \label{lemma:recursive-bound-on-t-information}
    The TV-information between the final model parameter $W_T$ and a data sample $Z_i$ satisfies   \begin{multline}
        I_{\mathsf{TV}}(W_T; Z_i)\\ \leq \invbreve{\mathrm{F}}_\Phi^\mathsf{UA} \Bigprn{\tcdots \invbreve{\mathrm{F}}_\Phi^\mathsf{UA} \Bigprn{\invbreve{\mathrm{F}}_\Phi^\mathsf{UA} \Bigprn{I_{\mathsf{TV}}(W_t; Z_i); p_{t+1}}; p_{t+2}} \tcdots; p_T}  , \label{eq:recursive-bound-on-t-information}
    \end{multline}
    where $t$ is the iteration on which $Z_i$ is used in the update rule (i.e., $t \in \mathcal{B}_i$), for each $s \in \ibrc{t + 1, \tdots, T}$ we define
    \begin{align}
        p_s \triangleq \frac{1}{m_s^2} \Biggprn{2 \max_{i \in \bigcup\limits_{r=1}^{s-1} \mathcal{B}_r} \sup_{z \in \mathcal{Z}} \E{\norm{W_{s-1}}_2^2 \given Z_i = z} + 2 \eta_s^2 L^2}  , \label{eq:generalization-error-recursive-bound-power}
    \end{align}
    and the composition of Doeblin curves in \cref{eq:recursive-bound-on-t-information} reduces to the identity function if $t = T$ (in which case \cref{eq:recursive-bound-on-t-information} trivially holds with equality).
\end{lemma}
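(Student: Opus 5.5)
We argue by induction over the iterations $s = t+1, \tdots, T$, showing at each step that $I_{\mathsf{TV}}(W_s; Z_i) \leq \invbreve{\mathrm{F}}_\Phi^\mathsf{UA}(I_{\mathsf{TV}}(W_{s-1}; Z_i); p_s)$. Chaining these one-step bounds gives \cref{eq:recursive-bound-on-t-information}, because $\invbreve{\mathrm{F}}_\Phi^\mathsf{UA}(\wildcard; p)$ is non-decreasing: $\mathrm{F}_\Phi^\mathsf{UA}$ is non-decreasing in $t$, and so is its upper concave envelope (\cref{lemma:properties-of-upper-concave-envelope}). The case $s = t$ is the identity.

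\textbf{Fixing the data point.} For $s > t$ the sample $Z_i$ is not used in the update \cref{eq:noisy-iterative-algorithm-iteration}, so $Z_i \to W_{s-1} \to W_s$ is a Markov chain. Write $W_s = \proj_\mathcal{W}(V_s)$, where
\begin{equation}
V_s = m_s\bigl(Y_{s-1}/m_s + N_s\bigr), \qquad Y_{s-1} \triangleq W_{s-1} - \frac{\eta_s}{|\mathcal{B}_s|}\sum_{j\in\mathcal{B}_s}\nabla g(W_{s-1}, Z_j).
\end{equation}
The batch $\{Z_j\}_{j \in \mathcal{B}_s}$ is independent of $(Z_i, W_{s-1})$. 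Because $\proj_\mathcal{W}$ and the scaling by $m_s$ are deterministic maps, TV data processing gives $I_{\mathsf{TV}}(W_s; Z_i) \leq I_{\mathsf{TV}}(V_s/m_s; Z_i)$. Likewise, since $Y_{s-1}$ is a randomized function of $W_{s-1}$ that does not depend on $Z_i$, we have $I_{\mathsf{TV}}(Y_{s-1}; Z_i) \leq I_{\mathsf{TV}}(W_{s-1}; Z_i)$.

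\textbf{Applying the Doeblin curve.} Define the input kernel $\tilde W\colon z \mapsto P_{Y_{s-1}/m_s \mid Z_i = z}$, and its mixture with the marginal $\bar W$. Since $I_{\mathsf{TV}}(X; Z) = \mathbb{E}_{Z}\|P_{X\mid Z} - P_X\|_{\mathsf{TV}}$, I use the two-input kernels $z \mapsto [P_{Y\mid Z=z}, P_Y]$ (or $|\mathcal{U}| = 2$ inputs, as in the paper's footnote). Each such pair has complementary Doeblin coefficient equal to $\|P_{Y\mid Z=z} - P_Y\|_{\mathsf{TV}}$. Its uniform average power under $M(z) = z^2$ is at most
\begin{equation}
\frac{1}{m_s^2}\,\sup_z \mathbb{E}\bigl[\|Y_{s-1}\|_2^2 \mid Z_i = z\bigr] \leq \frac{1}{m_s^2}\Bigl(2\sup_z \mathbb{E}\bigl[\|W_{s-1}\|_2^2 \mid Z_i = z\bigr] + 2\eta_s^2 L^2\Bigr) \leq p_s,
\end{equation}
using $\|a+b\|^2 \leq 2\|a\|^2 + 2\|b\|^2$, \cref{eq:generalization-error-gradient-bound}, and the fact that $i \in \bigcup_{r<s}\mathcal{B}_r$. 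The marginal term $P_Y$ has the same bound, being an average of these conditionals. The definition of the Doeblin curve (with $\Phi$ the standard Gaussian channel) then gives, pointwise in $z$,
\begin{equation}
\|P_{V\mid Z=z} - P_V\|_{\mathsf{TV}} \leq \mathrm{F}_\Phi^\mathsf{UA}\bigl(\|P_{Y\mid Z=z} - P_Y\|_{\mathsf{TV}}; p_s\bigr) \leq \invbreve{\mathrm{F}}_\Phi^\mathsf{UA}(\wildcard; p_s).
\end{equation}
Taking expectation over $Z_i$ and applying Jensen's inequality to the concave envelope gives $I_{\mathsf{TV}}(V_s/m_s; Z_i) \leq \invbreve{\mathrm{F}}_\Phi^\mathsf{UA}(I_{\mathsf{TV}}(Y_{s-1}; Z_i); p_s)$, and monotonicity lets us replace $Y_{s-1}$ by $W_{s-1}$.

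\textbf{Main obstacle.} The delicate step is the power bookkeeping in the Doeblin-curve step. The definition of $p_s$ uses a supremum over $z$ of conditional second moments, which is exactly the uniform average power of the conditional kernel; this is why the paper uses the $\mathsf{UA}$ notion. We must also check that the pair $(P_{Y\mid Z=z}, P_Y)$ is an admissible input kernel in the constraint set, i.e.\ that the averaged marginal also satisfies the power constraint. This holds by convexity: the second moment of $P_Y$ is an average of the conditional second moments, each at most $p_s$. A further measure-theoretic point is the measurability of $z \mapsto \|P_{Y\mid Z=z} - P_Y\|_{\mathsf{TV}}$, which follows from the existence of regular conditional distributions on Polish spaces.
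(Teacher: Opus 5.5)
Your proposal is correct and follows essentially the same argument as the paper. Both use data processing through $\proj_\mathcal{W}$, the identity $I_{\mathsf{TV}}(X;Z)=\mathbb{E}_Z\|P_{X\mid Z}-P_X\|_{\mathsf{TV}}$, the power bound via $\|a+b\|_2^2\le 2\|a\|_2^2+2\|b\|_2^2$ for both the conditional and marginal laws, the Doeblin curve applied pointwise in $z$ to the two-point input kernel, and then the concave envelope, Jensen's inequality, and monotonicity to recurse. The only differences are cosmetic: you rescale the inputs by $1/m_s$ directly instead of invoking \cref{proposition:scale-invariant-doeblin-curves-of-additive-noise-channels}, and you induct forward over $s$ rather than peeling off the last iteration.
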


\cref{lemma:recursive-bound-on-t-information} is proved in \cref{subsection:proofs-for-generalization-error}. We remark that bounds on the second moment $\mathbb{E}[\inorm{W_t}_2^2]$ of the iterates $W_t$ have been derived in the literature, e.g., under additive Gaussian noise \cite{RaginskyEnergyBound}. These results can often be directly translated into bounds on $p_t$. In principle, tighter bounds can be obtained by considering the second moment $\mathbb{E}[\inorm{W_t - \tilde{w}}_2^2]$ of the iterates with respect to a reference point $\tilde{w} \in \mathcal{W}$ chosen as the center of the feasible set or, ideally, the optimal solution $w^*$. However, since $w^*$ is unknown in practice, we use the worst-case bounds on the second moment of the iterates instead. 

\subsection{Reliable Computation Using Noisy \texorpdfstring{$q$}{q}-ary Gates} \label{subsection:reliable-computation-using-noisy-q-ary-gates}

In this subsection, we use Doeblin curves to establish information-theoretic bounds on the expressive power of circuits of noisy gates for the computation of $q$-ary functions. This builds on results for Boolean functions in \cite{PolyanskiyWu2016}. We consider $n$-input circuits which compute a single output value using $b$-input gates, where the result of each gate is perturbed by independent noise. Historically, the study of such architectures began with von Neumann's pioneering work on fault-tolerant Boolean circuits \cite{vonNeumann1956,HajekWeller1991,EvansSchulman2003}, and has been largely limited to binary operations. We expand this line of study to multivalued logic systems, as motivated by recent advancements in classical and quantum information processing \cite{TanHo2024}.

Formally, we model a noisy circuit over a $q$-ary alphabet $\mathcal{Q} = \ibrc{\xi_1, \tdots, \xi_q} \subset \mathbb{R}^d$ as a directed acyclic graph with $n$ \emph{input vertices} $X_1, \tdots, X_n \in \mathcal{Q}$ indexed by $i \in [n]$. The inputs are processed by a collection of $m$ \emph{gate vertices} indexed by $j \in [m]$, each of which takes $b_j \leq b$ $\mathbb{R}^d$-valued arguments and computes a $\mathcal{Q}$-valued result $Z_j \in \mathcal{Q}$, which is then corrupted by independent noise from some arbitrary fixed kernel $\Phi$ to produce $Y_j \in \mathbb{R}^d$. Formally, letting $\mathcal{N}_j$ and $\mathcal{M}_j$ denote the \emph{incoming edge sets} of input vertices and gate vertices feeding into gate $j$, respectively, we have   \begin{equation}
    Z_j \triangleq \Gamma_j \mprn{\brc{X_i}_{i \in \mathcal{N}_j}, \brc{Y_{j'}}_{j' \in \mathcal{M}_j}}, \quad
    Y_j \sim \Phi(\lwildcard \mid Z_j)
\end{equation}
for some deterministic gate function $\Gamma_j\colon (\mathbb{R}^d)^{b_j} \rightarrow \mathcal{Q}$, where $b_j = |\mathcal{N}_j| + |\mathcal{M}_j| \leq b$. Without loss of generality, we assume the gates are indexed in topological order, i.e., $\mathcal{M}_j \subseteq [j - 1]$ for each $j \in [m]$. At the end, the circuit produces a single \emph{output vertex} $Y_m$ with no outgoing edges. We assume there exists a directed path from each $X_i$ to $Y_m$, so none of the circuit inputs are discarded. We note that this model can be seen as a $q$-ary generalization of the setup considered in \cite[Section 5.3]{PolyanskiyWu2016}.

One goal of such a circuit is to compute a function $G\colon \mathcal{Q}^n \rightarrow \mathcal{Q}$ with error probability at most $\mathrm{P}_\mathsf{error} > 0$. Namely, there must exist some fixed decoding function $\hat{G}\colon \mathbb{R}^d \rightarrow \mathcal{Q}$ such that for any input vector $(x_1, \tdots, x_n) \in \mathcal{Q}^n$,   \begin{math}
    \iP{\hat{G}(Y_m) = G(x_1, \tdots, x_n)} \geq 1 - \mathrm{P}_\mathsf{error}
\end{math},
where the probability is taken with respect to all the noise in the circuit. In the binary case ($q = |\mathcal{Q}| = 2$), by Le Cam's relation, the error probability of the optimal decoder can be controlled by the TV distance between the marginal distributions of $Y_m$ induced by setting some ``initial'' input vertex to $0$ or $1$, respectively, while fixing all other input values \cite[Eq. 146]{PolyanskiyWu2016}. So, several works study the question of how information measures contract over fault-tolerant circuits, e.g., \cite{PolyanskiyWu2016} recursively upper-bounds TV distance while \cite{evans1999signal} bounds mutual information. Propelled by such analyses, and in light of the interpretation of complementary Doeblin coefficients as a multi-way generalization of TV distance \cite[Theorem 2]{MakurSingh2023b}, we analyze the complementary Doeblin coefficient of the $q$ induced marginal distributions of $Y_m$ as a first step towards understanding information propagation in noisy $q$-ary circuits. (Intuitively, perceiving a circuit as a Markov chain, the analysis of TV distance in \cite{PolyanskiyWu2016} is related to the coupling time of two copies of the chain with two different values at the ``initial'' vertex \cite{LevinPeresWilmer2009}. On the other hand, our analysis of Doeblin coefficients is related to the coupling time of $q$ copies of the chain initiated at all $q$ possible values.) Our main contribution is the following theorem, which relates the Doeblin coefficient to the noise mechanism's Doeblin curve.

\begin{theorem}[Upper Bound on Circuit Output Divergence] \label{theorem:upper-bound-on-circuit-output-divergence}
    Fix $\epsilon > 0$. For all $n$-input circuits of noisy $b$-input gates with sufficiently large $n$, there exists some input vertex, which we take to be $X_1$ without loss of generality, such that for any fixed values $x_2, \tdots, x_n \in \mathcal{Q}$ of the remaining inputs,   \begin{multline}
        \rho \Bigprn{\bigbkt{P_{Y_m}^{(1)}, \tdots, P_{Y_m}^{(q)}}} \\\leq \max \brc{t \in [0, 1]: \invbreve{\mathrm{F}}_\Phi^\mathsf{UA} \mprn{\min \mbrc{1, bt}; p} = t} + \epsilon ,
    \end{multline}
    where $P_{Y_m}^{(\ell)}$ is the marginal distribution of $Y_m$ induced by the circuit when setting $X_1 = \ell$ and $X_i = x_i$ for all $i > 1$, $\mathrm{F}_\Phi^\mathsf{UA}$ is the uniform average power-constrained Doeblin curve of the noise kernel $\Phi$ for some norm $\inorm{\wildcard}$ on $\mathbb{R}^d$ and power function $M\colon \mathbb{R}_+ \rightarrow \mathbb{R}_+$, and   \begin{math}
        p \triangleq \max_{\xi \in \mathcal{Q}} M \mprn{\norm{\xi}}
    \end{math}.
\end{theorem}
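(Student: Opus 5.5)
We will follow the strategy of \cite[Section 5.3]{PolyanskiyWu2016}, with TV distance replaced by the complementary Doeblin coefficient $\rho$ and Dobrushin curves replaced by Doeblin curves. The argument has three steps: a graph-theoretic step that picks the input $X_1$, a one-gate recursion for $\rho$, and a fixed-point iteration.

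\textbf{Step 1: choosing the input.} Since every gate has fan-in at most $b$ and every input has a directed path to $Y_m$, there are at most $b^{k}$ gates at depth $k$ from the output. Consequently, if $n$ is large, some input $X_1$ has all of its paths to $Y_m$ of length at least $L(n)$, with $L(n) \to \infty$; for instance $L(n) \approx \log_b n$.

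For each gate $j$, we define $\rho_j \triangleq \rho([P^{(1)}_{Y_j},\tdots,P^{(q)}_{Y_j}])$, the complementary Doeblin coefficient of the $q$ marginals of $Y_j$ obtained by setting $X_1=\ell$ for $\ell \in [q]$ with the other inputs fixed. We also set $\rho \triangleq 1$ for the input $X_1$ itself and $\rho \triangleq 0$ for the fixed inputs and for any gate not downstream of $X_1$.

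\textbf{Step 2: one-gate recursion.} Fix gate $j$ and run all $q$ copies of the circuit under a joint coupling built inductively. For each parent, use the maximal coupling of \cref{proposition:maximal-coupling-characterization-of-doeblin-coefficients}, so that parent $k$ agrees across the $q$ copies except on an event of probability $\rho_k$. By a union bound, the $q$ copies of $Z_j$ then coincide except with probability at most $\min\{1,\sum_{k}\rho_k\}\le \min\{1, b\max_k \rho_k\}$. The same coupling characterization therefore gives $\rho([P_{Z_j}^{(\ell)}]_\ell)\le \min\{1,b\,r\}$, where $r$ bounds the parents' coefficients.

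The subtle point is that the parents must be coupled jointly, not just each maximally. This works because each parent is a deterministic function of earlier noise. We therefore couple the underlying noise variables recursively and define the bad event as the union of the events on which the parents disagree across copies.

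Next, $Y_j \sim \Phi(\cdot\mid Z_j)$. The kernel $\ell\mapsto P_{Z_j}^{(\ell)}$ is supported on $\mathcal{Q}$, so its uniform average power is at most $p=\max_\xi M(\|\xi\|)$. Hence $\rho_j\le \mathrm{F}_\Phi^{\mathsf{UA}}(\rho(Z_j);p)\le \invbreve{\mathrm{F}}_\Phi^{\mathsf{UA}}(\min\{1,b r\};p)$, and the second inequality uses the monotonicity of $\mathrm{F}$.

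\textbf{Step 3: iteration along the depth.} Let $g(t)\triangleq\invbreve{\mathrm{F}}_\Phi^{\mathsf{UA}}(\min\{1,bt\};p)$. This $g$ is continuous, non-decreasing, and maps $[0,1]$ into $[0,1]$. Moreover $\invbreve{\mathrm{F}}(s)\le s$, because $s$ is a concave upper bound of $\mathrm{F}$ by \cref{eq:doeblin-curve-trivial-bound}.

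Using the path-length property from Step 1, induct on the distance of gates from $X_1$. A gate whose every path from $X_1$ has length at least $k$ satisfies $\rho_j\le g^{(k)}(1)$. The base case is handled by $\rho\le1$, and gates not downstream of $X_1$ contribute $0$.

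Since $g^{(k)}(1)$ is a non-increasing sequence, it converges. By continuity of $g$, its limit is a fixed point of $g$, and so the limit is at most $t^*\triangleq\max\{t:g(t)=t\}$. Hence for $n$ large enough that $L(n)$ is large, $\rho_m\le t^*+\epsilon$.

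We expect the main obstacle to be Step 1 combined with the joint coupling in Step 2. The recursion must be indexed by path length from $X_1$ rather than by depth from the output, so we must verify that the union-bound coupling argument composes correctly through shared ancestors. The existence of a deep input $X_1$ is the same counting argument as in \cite{PolyanskiyWu2016}, which we will simply adapt.
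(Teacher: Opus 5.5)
Your Steps 1 and 3 match the paper, but Step 2 has a genuine gap, and it breaks the induction. You run the recursion on the true coefficients $\rho_k=\rho([P^{(1)}_{Y_k},\tdots,P^{(q)}_{Y_k}])$ and claim that the $q$ marginals of $Z_j$ have complementary Doeblin coefficient at most $\min\{1,\sum_k\rho_k\}$. The union bound needs a single joint coupling of all parents of gate $j$. Such a coupling generally cannot restrict to a maximal coupling on every parent: coupling the underlying noise recursively only gives parent $k$ some disagreement probability $t_k\ge\rho_k$, not $\rho_k$.

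The claimed inequality is in fact false when parents share ancestors. Take $q=2$, $\mathcal{Q}=\{0,1\}$, $b=2$, and binary symmetric noise with crossover $\delta<1/2$. Let $U$ be a bit produced upstream without using $X_1$, with $|1-2\,\mathbb{P}(U=1)|=\beta$ small (e.g., an XOR tree of noisy constants). Let the parents $Y_{k_1},Y_{k_2}$ be noisy copies of $U$ and $U\oplus X_1$, with independent noise bits $N_{k_1},N_{k_2}$. Then $\rho_{k_1}=0$ and $\rho_{k_2}=\beta(1-2\delta)$. Yet an XOR gate fed by these parents computes $X_1\oplus N_{k_1}\oplus N_{k_2}$, so its noisy output has $\rho_j=(1-2\delta)^3$. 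On the other hand, $g(\max_k\rho_k)\le 2\beta(1-2\delta)$, since $\invbreve{\mathrm{F}}_\Phi^\mathsf{UA}(s;p)\le s$. So for small $\beta$, both $\rho_j\le g(\max_k\rho_k)$ and your inductive claim $\rho_j\le g^{(k)}(1)$ fail.

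The missing idea, which the paper uses, is to run the recursion on disagreement probabilities $t_j\triangleq\mathbb{P}(\lnot(Y_j^{(1)}=\tdots=Y_j^{(q)}))$ under one coupling of $q$ copies of the whole circuit. This coupling is built in topological order: given the already coupled $Z_j^{(1:q)}$, draw $Y_j^{(1:q)}$ from the maximal coupling of $\Phi(\cdot\mid Z_j^{(1)}),\tdots,\Phi(\cdot\mid Z_j^{(q)})$, conditionally independently of everything else.

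This construction keeps each copy's joint law correct, so the union bound over parents is legitimate for the $t_k$'s. \cref{lemma:stepwise-coupling-construction} then gives $t_j\le\invbreve{\mathrm{F}}_\Phi^\mathsf{UA}(\mathbb{P}(\lnot(Z_j^{(1)}=\tdots=Z_j^{(q)}));p)$. The argument applies the Doeblin curve to the Dirac kernel $[\updelta_{Z_j^{(1)}},\tdots,\updelta_{Z_j^{(q)}}]$, whose power is at most $p$ and whose $\rho$ is $0$ or $1$, and then uses Jensen's inequality. This is why the statement needs the concave envelope rather than $\mathrm{F}_\Phi^\mathsf{UA}$ itself.

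Your bound $\rho_j\le\mathrm{F}_\Phi^\mathsf{UA}(\rho(Z_j);p)$ is correct, but it controls only the marginals of $Y_j$, not a coupling that downstream gates can reuse. You return to $\rho$ only at the output, via $\rho([P^{(1)}_{Y_m},\tdots,P^{(q)}_{Y_m}])\le t_m$ from \cref{proposition:maximal-coupling-characterization-of-doeblin-coefficients}.

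With this change, your Step 3 goes through as written: non-descendants of $X_1$ have $t_j=0$. Your induction on the minimal path length from $X_1$ is also a tidy substitute for the paper's backward arg-max path search. Your instinct about joint coupling is well placed too. The paper forms $\pi_{Z_j}$ from the product $\bigotimes_k\pi_{W_k}$, which has the correct marginals only when each gate's parents are independent (e.g., formulas). The global construction above is what makes that step valid for general circuits.
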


\cref{theorem:upper-bound-on-circuit-output-divergence} represents an information-theoretic limit on the quality of any output decoder, since decoding performance improves as output divergence increases. We defer the proof to \cref{subsection:proofs-for-reliable-computation}. The proof constructs a coupling of the induced marginal distributions at each gate output $Y_j$, using the maximal coupling characterization of Doeblin coefficients \cref{eq:maximal-coupling} to refine the couplings from earlier gates. This is formalized in the following lemma. For notational convenience, we denote collections of superscripted variables as $w^{(1:q)} = (w^{(1)}, \tdots, w^{(q)})$.

\begin{lemma}[Stepwise Coupling Construction] \label{lemma:stepwise-coupling-construction}
    Let $\mathcal{U}$ be a finite subset of a separable Banach space. Let $(\mathcal{V}, \mathcal{F}_\mathcal{V})$ be a Polish space and let $\Phi\colon \mathcal{U} \times \mathcal{F}_\mathcal{V} \rightarrow [0, 1]$ be a Markov kernel from $\mathcal{U}$ to $\mathcal{V}$. Let $P_U^{(1)}, \tdots, P_U^{(q)}$ be a collection of probability measures on $\mathcal{U}$, and let $P_V^{(1)}, \tdots, P_V^{(q)}$ be the respective probability measures on $\mathcal{V}$ induced by pushing $P_U^{(\ell)}$ through $\Phi$ for each $\ell \in [q]$, i.e.,
    \begin{equation}
        \forall A \in \mathcal{F}_\mathcal{V}, \quad P_V^{(\ell)}(A) \triangleq \int_\mathcal{U} \Phi(A \mid u) \diff P_U^{(\ell)}(u) . \label{eq:stepwise-coupling-construction-p-v}
    \end{equation}
    Then, given any coupling $\pi_U$ of $P_U^{(1)}, \tdots, P_U^{(q)}$, there exists a coupling $\pi_V$ of $P_V^{(1)}, \tdots, P_V^{(q)}$ satisfying   \begin{multline}
        \Pwrt{V^{(1:q)} \sim \pi_V}{\lnot \mprn{V^{(1)} = \tcdots = V^{(q)}}}\\ \leq \invbreve{\mathrm{F}}_\Phi^\mathsf{UA} \mprn{\Pwrt{U^{(1:q)} \sim \pi_U}{\lnot \mprn{U^{(1)} = \tcdots = U^{(q)}}}; p}  , \label{eq:stepwise-coupling-construction}
    \end{multline}
    where we define   \begin{math}
        p \triangleq \max_{u \in \mathcal{U}} M(\norm{u})
    \end{math}.
\end{lemma}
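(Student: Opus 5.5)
Condition on the event that the coupled inputs agree versus disagree, and apply the Doeblin curve only to the disagreement part. Write $s \triangleq \mathbb{P}_{\pi_U}(\lnot(U^{(1)}=\tcdots=U^{(q)}))$ and let $D$ be the disagreement event. If $s=0$, drive all copies through a single draw $V\sim\Phi(\lwildcard\mid U^{(1)})$; this gives the bound trivially, since $\invbreve{\mathrm{F}}_\Phi^\mathsf{UA}\geq 0$. Otherwise, on $D^\complement$ we set $V^{(1)}=\tcdots=V^{(q)}\sim\Phi(\lwildcard\mid U^{(1)})$, so these copies agree. On $D$ we need a genuine multi-way coupling. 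Since $\mathcal{U}$ is finite, $D$ is a finite union of atoms $u^{(1:q)}\in\mathcal{U}^q$, and the conditional law $\pi_U(\lwildcard\mid D)$ has marginals $Q^{(\ell)}$ on $\mathcal{U}$ for each $\ell\in[q]$.

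The key step is to define the input kernel $W\colon[q]\times\mathcal{F}_\mathcal{U}\to[0,1]$ by $W(\lwildcard\mid\ell)=Q^{(\ell)}$, viewed as a kernel into the Banach space. Its uniform average power satisfies $\tnorm{W}_\mathsf{UA}\leq\max_{u}M(\norm{u})=p$, so $W$ is feasible in the definition of $\mathrm{F}_\Phi^\mathsf{UA}(\wildcard;p)$. Hence
\begin{equation}
\rho(W\Phi)\leq \mathrm{F}_\Phi^\mathsf{UA}(\rho(W);p)\leq \mathrm{F}_\Phi^\mathsf{UA}(1;p).
\end{equation}
By \cref{proposition:maximal-coupling-characterization-of-doeblin-coefficients}, applied to the finite family $\{Q^{(\ell)}\Phi\}_{\ell}$, the maximal coupling of the laws $Q^{(\ell)}\Phi$ attains disagreement probability exactly $\rho(W\Phi)$. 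Its marginals are the conditional laws of $V^{(\ell)}$ given $D$, so using it on $D$ gives a valid overall coupling $\pi_V$, because the marginals mix correctly: $(1-s)\,\pi_U(U^{(\ell)}\in\cdot\mid D^\complement)\Phi+s\,Q^{(\ell)}\Phi=P_V^{(\ell)}$. The resulting disagreement probability is $s\,\rho(W\Phi)\leq s\,\mathrm{F}_\Phi^\mathsf{UA}(1;p)$.

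The final step is to convert $s\,\mathrm{F}_\Phi^\mathsf{UA}(1;p)$ into $\invbreve{\mathrm{F}}_\Phi^\mathsf{UA}(s;p)$. I would use the homogeneity of \cref{proposition:homogeneity-of-power-constrained-doeblin-curves}, which gives $s\,\mathrm{F}_\Phi^\mathsf{UA}(1;p)=\mathrm{F}_\Phi^\mathsf{UA}(s;sp)$, together with monotonicity in the constraint set, which gives $\mathrm{F}_\Phi^\mathsf{UA}(s;sp)\leq\mathrm{F}_\Phi^\mathsf{UA}(s;p)$. Then $\mathrm{F}\leq\invbreve{\mathrm{F}}$ finishes the argument. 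The upper concave envelope is not even needed here, but it is harmless, and I suspect the authors included it so that the recursive application across gates can use concavity via Jensen.

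The main obstacle is measure-theoretic bookkeeping rather than analysis. One must check that the mixture construction is a bona fide coupling of $P_V^{(1)},\tdots,P_V^{(q)}$ on the product space, and that the disagreement event is measurable in $\mathcal{V}^q$; the diagonal is measurable because $\mathcal{V}$ is Polish. A secondary worry is the degenerate case $s=1$ with $\rho(W)=1$, which is handled identically. An alternative route, if the homogeneity step is troublesome, is to condition on each atom of $D$ separately and then average. That route requires concavity of $\invbreve{\mathrm{F}}$ via Jensen, which is presumably exactly why the envelope appears in \cref{eq:stepwise-coupling-construction}.
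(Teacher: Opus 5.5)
Your construction is valid and takes a different route from the paper's, but one step is justified by a proposition that does not apply here; it is easily repaired.

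\textbf{Comparison of routes.} The paper lets $\Psi(\cdot\mid u^{(1:q)})$ be the maximal coupling of $\Phi(\cdot\mid u^{(1)}),\dots,\Phi(\cdot\mid u^{(q)})$ for each atom and sets $\pi_V=\int\Psi(\cdot\mid u^{(1:q)})\,\mathrm{d}\pi_U$. It bounds each atom's disagreement $\rho([\updelta_{U^{(1)}},\dots,\updelta_{U^{(q)}}]\Phi)$ by the Doeblin curve evaluated at the disagreement indicator, then uses Jensen with the concave envelope to move the expectation inside. This is exactly the alternative you sketch at the end. You instead pool all of $D$ into a single input kernel $W$ and a single multi-way maximal coupling.

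This is better than it looks. Write $R$ for the common law of $U^{(\ell)}$ given $D^\complement$. Then $P_U^{(\ell)}=sQ^{(\ell)}+(1-s)R$ for every $\ell$, so \cref{lemma:doeblin-coefficients-under-affine-transformation} gives $\rho([P_V^{(1)},\dots,P_V^{(q)}])=s\,\rho(W\Phi)$. Hence your $\pi_V$ is a maximal coupling of the outputs, and your route yields the bound with $\mathrm{F}_\Phi^{\mathsf{UA}}$ itself in place of $\invbreve{\mathrm{F}}_\Phi^{\mathsf{UA}}$. The paper's construction instead produces a coupling generated from $\pi_U$ through the channel atom by atom, at the price of needing concavity.

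\textbf{The faulty step.} You cannot invoke \cref{proposition:homogeneity-of-power-constrained-doeblin-curves} here. That proposition concerns kernels defined on the whole Banach space, and its proof mixes in $\updelta_{\mathbf{0}}$. In this lemma $\Phi$ is defined only on the finite set $\mathcal{U}$, which need not contain $\mathbf{0}$, so $\updelta_{\mathbf{0}}$ is not an admissible input. Your identity $s\,\mathrm{F}_\Phi^{\mathsf{UA}}(1;p)=\mathrm{F}_\Phi^{\mathsf{UA}}(s;sp)$ can then be false. If every point of $\mathcal{U}$ has the same norm $r>0$, every admissible input kernel has power exactly $M(r)=p$. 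So $\mathrm{F}_\Phi^{\mathsf{UA}}(s;sp)$ is a supremum over the empty set for $s<1$. Meanwhile $s\,\mathrm{F}_\Phi^{\mathsf{UA}}(1;p)\geq s\,\rho(\Phi)$ (use the identity kernel on $\mathcal{U}$), which can be positive.

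\textbf{The repair.} The inequality you need is still true, by super-homogeneity (\cref{proposition:properties-of-doeblin-curves}, Part 3) rather than homogeneity. For any $u_0\in\mathcal{U}$, the kernel $sW+(1-s)\updelta_{u_0}$ has three properties:
\begin{itemize}
\item uniform average power at most $sp+(1-s)M(\norm{u_0})\le p$;
\item complementary Doeblin coefficient $s\,\rho(W)\le s$;
\item by \cref{lemma:doeblin-coefficients-under-affine-transformation}, composition coefficient $\rho((sW+(1-s)\updelta_{u_0})\Phi)=s\,\rho(W\Phi)$.
\end{itemize}
Hence $s\,\rho(W\Phi)\le\mathrm{F}_\Phi^{\mathsf{UA}}(s;p)$ directly.

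Mixing with $R$ instead of $\updelta_{u_0}$ shortens your whole argument to one line. The maximal coupling of $P_V^{(1)},\dots,P_V^{(q)}$ has disagreement $\rho([P_U^{(1)},\dots,P_U^{(q)}]\Phi)\le\mathrm{F}_\Phi^{\mathsf{UA}}(\rho([P_U^{(1)},\dots,P_U^{(q)}]);p)\le\mathrm{F}_\Phi^{\mathsf{UA}}(s;p)$. This uses two facts: the kernel $[P_U^{(1)},\dots,P_U^{(q)}]$ has uniform average power at most $p$, and $\rho([P_U^{(1)},\dots,P_U^{(q)}])\le s$ by \cref{proposition:maximal-coupling-characterization-of-doeblin-coefficients}.
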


\cref{lemma:stepwise-coupling-construction} is proved in \cref{subsection:proofs-for-reliable-computation} by constructing $\pi_V$ as the weighted average (with respect to $U^{(1:q)} \sim \pi_U$) of the maximal coupling \cref{eq:maximal-coupling} of $\Phi(\lwildcard \mid U^{(1)}), \tdots, \Phi(\lwildcard \mid U^{(q)})$. Iterating this construction for each gate leads to repeated application of the Doeblin curve $\mathrm{F}_\Phi^\mathsf{UA}$, which converges to the greatest fixed point of a transformed version of $\mathrm{F}_\Phi^\mathsf{UA}$.

\subsection{Relation to Differential Privacy and Online Algorithms} \label{subsection:relation-to-differential-privacy-and-privacy-amplification}

In this subsection, we utilize our variational characterization of Doeblin coefficients to motivate a new definition of group local differential privacy (LDP). Recall that a mechanism $K\colon \mathcal{X} \times \mathcal{F}_\mathcal{Y} \rightarrow [0, 1]$, which is nothing but a Markov kernel, is $(\epsilon, \delta)$-LDP if for any inputs $x, x^\prime \in \mathcal{X}$ both $K(\lwildcard \mid x)$ and $K(\lwildcard \mid x^\prime)$ exhibit similar distributions \cite{evfimievski2003, kasiviswanathan2011, asoodeh2021local,AsoodehLiaoCalmon2021,ErlingssonPihur2014,TruexLiuChow2020},\footnote{This differs slightly from the standard definition of (central) differential privacy (DP), which considers input \emph{datasets of multiple users} differing by only one user \cite[Definition 2.4]{DworkRoth2014}.} i.e.,
\begin{equation}
    \forall x, x^\prime \in \mathcal{X}; \, \forall A \in \mathcal{F}_\mathcal{Y}, \quad K(A \mid x) - e^\epsilon K(A \mid x^\prime) \leq \delta . \label{eq:differential-privacy}
\end{equation}
Subtracting both sides from $1$, this is equivalent to
\begin{align}
    \forall x, x^\prime \in \mathcal{X}; \, \forall A \in \mathcal{F}_\mathcal{Y}, \quad K(A^\complement \mid x) + e^\epsilon K(A \mid x^\prime) \geq 1 - \delta . \label{eq:differential-privacy-rearranged}
\end{align}
This rearrangement has an interesting interpretation: Consider a binary hypothesis testing scenario that aims to distinguish whether the output $Y$ of a mechanism is derived from the distribution $K(\lwildcard \mid x)$ versus $K(\lwildcard \mid x^\prime)$ for any fixed $x$ and $x^\prime$. Let $A \subseteq \mathcal{Y}$ be any possible choice of the rejection region for the hypothesis $Y \sim K(\lwildcard \mid x^\prime)$. The reformulated version in \cref{eq:differential-privacy-rearranged} essentially highlights the impossibility of achieving a very low weighted sum of type-\Romannum{1} and type-\Romannum{2} errors from the data $Y$ derived from a differentially private mechanism $K$ (see \cite[Theorem 2.1]{kairouz2015composition} for more details).

In addition, observe that the left-hand side of \cref{eq:differential-privacy} is bounded by an $\epsilon$-weighted TV distance (called the $\mathsf{E}_{\gamma}$-divergence) between $K(\lwildcard \mid x)$ and $K(\lwildcard \mid x')$. Indeed, this connection has been noted by, e.g., \cite{asoodeh2021local,PrivacyAmplification}. Considering that $\mathsf{E}_{\gamma}$-divergence, a weighted variant of TV distance (or Dobrushin coefficient), exactly characterizes local differential privacy, it is natural to ask whether a similarly weighted variant of the Doeblin coefficient can admit an analogous interpretation.

Motivated by this observation and our variational characterization of Doeblin coefficients from \cref{theorem:variational-characterization-of-doeblin-coefficient}, we extend the standard definition of LDP to a ``group'' setting.\footnote{Accordingly, this is different from the usual definition of group DP based on distance between the input datasets \cite[Theorem 2.2]{DworkRoth2014}.} Consider any group size $n \geq 2$ and any $\boldsymbol{\epsilon} = (\epsilon_1, \tdots, \epsilon_n) \in \mathbb{R}_+^n$ where we assume without loss of generality that $\epsilon_1 = 0$ and $\epsilon_i \leq \epsilon_j$ for all $i < j$ . A mechanism $K$ is said to be \emph{$(\boldsymbol{\epsilon}, \delta, n)$-LDP} if for any $x_1, \tdots, x_n \in \mathcal{X}$ and any $\mathcal{F}_\mathcal{Y}$-measurable $n$-partition $A_1, \tdots, A_n$ of $\mathcal{Y}$,
\begin{equation}
    \sum_{i=1}^n e^{\epsilon_i} K(A_i \mid x_i) \geq 1 - \delta . \label{eq:group-differential-privacy}
\end{equation}
This definition may be interpreted in the context of an $n$-ary hypothesis testing problem. Given a set of $n$ hypotheses $H_i\colon Y \sim K(\lwildcard \mid x_i)$ indexed by $i \in [n]$, where $Y$ is the observed variable, the generalized definition in \cref{eq:group-differential-privacy} states that the problem of identifying a single false hypothesis has a large weighted sum error for any possible choices of rejection regions. Any reasonable test for this hypothesis testing problem would partition the space $\mathcal{Y}$ into $n$ rejection regions $A_1, \tdots, A_n$, and the test rejects hypothesis $H_i$ if $Y$ takes on a value in $A_i$. Therefore, $K(A_i \mid x_i)$ represents the conditional probability of error given $H_i$ in this scenario. For $n = 2$, the notion of $(\boldsymbol{\epsilon}, \delta, 2)$-LDP reduces to the standard definition in \cref{eq:differential-privacy}. Furthermore, $K$ being $(\boldsymbol{\epsilon}, \delta, n)$-LDP for $n \geq 2$ is a stronger condition, as it implies that $K$ is also $([\epsilon_1,\tdots,\epsilon_m], \delta, m)$-LDP for all $2 \leq m \leq n$. Intuitively, while an $(\epsilon, \delta)$-LDP algorithm makes it difficult for an attacker to distinguish between a pair of inputs, an $(\boldsymbol{\epsilon}, \delta, n)$-LDP algorithm protects against arbitrary $n$-way comparisons between inputs, which is a stronger notion.

Motivated by this formulation, we define weighted analogues of the Doeblin and complementary Doeblin coefficients for a Markov kernel $K\colon \mathcal{X} \times \mathcal{F}_\mathcal{Y} \rightarrow [0, 1]$:
\begin{align}
    \tau_{\boldsymbol{\epsilon}}(K, n) &\triangleq \inf_{x_1, \dots, x_n \in \mathcal{X}} \inf_{\substack{\text{$n$-partition of $\mathcal{Y}$} \\ A_1, \dots, A_n}} \sum_{i=1}^n e^{\epsilon_i} K(A_i \mid x_i),\, \label{eq:tau-epsilon} \\
    \rho_{\boldsymbol{\epsilon}}(K, n) &\triangleq 1 - \tau_{\boldsymbol{\epsilon}}(K, n)  . \label{eq:rho-epsilon}
\end{align}
We remark that $\tau_{\boldsymbol{\epsilon}}(K, n) \in [0, 1]$ (and so $\rho_{\boldsymbol{\epsilon}}(K, n) \in [0, 1]$), because for any $x_1, \tdots, x_n \in \mathcal{X}$, we have   \begin{align}
    \tau_{\boldsymbol{\epsilon}}(K, n) &\stackclap{(a)}{\leq} e^{\epsilon_1} K(\mathcal{Y} \mid x_1) + \sum_{i=2}^n e^{\epsilon_i} K(\emptyset \mid x_i) \\&\stackclap{(b)}{=} K(\mathcal{Y} \mid x_1) \stackclap{(c)}{=} 1,
\end{align}
where (a) holds by upper-bounding the infima in \cref{eq:tau-epsilon} with a specific instance, (b) holds because $\epsilon_1 = 0$, and (c) holds because $K$ is a Markov kernel. Moreover, akin to how \cite{PrivacyAmplification,asoodeh2021local} reformulated local differential privacy in terms of the $\mathsf{E}_\gamma$ contraction coefficient,\footnote{We note that the $\mathsf{E}_\gamma$-divergence is a weighted version of TV distance, which is equivalent (when appropriately scaled) to Bayes statistical information or DeGroot distance \cite{DeGroot1962}.} $\rho_{\boldsymbol{\epsilon}}(K, n)$ captures $(\boldsymbol{\epsilon},\delta,n)$-LDP exactly: $\rho_{\boldsymbol{\epsilon}}(K, n) \leq \delta$ if and only if $K$ is $(\boldsymbol{\epsilon},\delta,n)$-LDP (see proof of \cref{theorem:contraction-of-rho-epsilon} in \cref{subsection:proofs-for-differential-privacy}).

In addition, we note a connection between $\rho_{\boldsymbol{\epsilon}}$ and the concept of min-DeGroot distance from \cite{MakurSingh2023b}, which is construed as a generalization of Bayes statistical information. In this context, a hidden random variable $X \in \mathcal{X}$ (where $\mathcal{X} = \ibrc{x_1, \tdots, x_n}$ has cardinality $|\mathcal{X}| = n$) follows a prior distribution $\boldsymbol{\lambda} \in \mathbb{R}^n$, and a random variable $Y \in \mathcal{Y}$ is observed according to an observation model denoted by $K$. Let $\hat{X} \in \mathcal{X}$ be any (possibly randomized) estimator of $X$ based on $Y$, such that $X \rightarrow Y \rightarrow \hat{X}$ forms a Markov chain. \cite{MakurSingh2023b} defines the min-DeGroot distance $\tau_{\min}(\boldsymbol{\lambda}, K)$ as the reduction in Bayes risk when observing the data $Y$ compared to not observing $Y$. Specifically, 
\begin{align}
    \tau_{\min}(\boldsymbol{\lambda}, K) \triangleq \min_{i \in [n]} \lambda_i - \int_{\mathcal{Y}} \prn{\min_{i \in [n]} \lambda_i \frac{\diff K}{\diff \mu}(\lwildcard \mid x_i)} \diff \mu ,
\end{align}
where $\mu$ is the common dominating measure for $K(\lwildcard \mid x_1),\allowbreak \tdots, K(\lwildcard \mid x_n)$ from \cref{eq:common-dominating-measure}. By setting $\lambda_i = e^{\epsilon_i} / \sum_{i=1}^n e^{\epsilon_i}$ and applying \cref{proposition:integral-characterization-of-infimum}, the min-DeGroot distance can be interpreted as a rescaled version of $\rho_{\boldsymbol{\epsilon}}(K, n)$, namely,
\begin{align}
    \rho_{\boldsymbol{\epsilon}}(K, n) = \prn{\sum_{i=1}^n e^{\epsilon_i}} \tau_{\min}(\boldsymbol{\lambda}, K) . \label{eq:rho-epsilon-min-degroot-connection}
\end{align}
Therefore, the definition of $(\boldsymbol{\epsilon}, \delta, n)$-LDP is essentially a rescaled version of the generalized notion of Bayes statistical information obtained from the set $\{x_1, \tdots, x_n\}$. Our proposed definition of an $(\boldsymbol{\epsilon}, \delta, n)$-LDP mechanism underscores that, given the processed data of the entire group, obtaining any substantial ``information'' (quantified in the Bayes statistical information sense) about any individual member is ``hard.'' 

In the following theorem, we prove contraction properties of $\rho_{\boldsymbol{\epsilon}}(K,n)$. By \cref{eq:rho-epsilon-min-degroot-connection}, these results translate to contraction properties of $\tau_{\min}(\boldsymbol{\lambda}, K)$.

\begin{theorem}[Contraction of $\rho_{\boldsymbol{\epsilon}}(K,n)$] \label{theorem:contraction-of-rho-epsilon}
    Let $K\colon \mathcal{X} \times \mathcal{F}_\mathcal{Y} \rightarrow [0, 1]$ be an absolutely continuous Markov kernel. For any group size $n \geq 2$ and any $\boldsymbol{\epsilon} = (\epsilon_1, \tdots, \epsilon_n) \in \mathbb{R}_+^n$ such that $\epsilon_1 = 0$ and $\epsilon_i \leq \epsilon_j$ for all $i < j$, we have the following properties:
    \begin{enumerate}
        \item \emph{(Submultiplicativity)} For any Markov kernel $W\colon \mathcal{U} \times \mathcal{F}_\mathcal{X} \rightarrow [0, 1]$,
        \begin{align}
            \rho_{\boldsymbol{\epsilon}}(WK, n) \leq \rho_{\boldsymbol{\epsilon}}(W, n) \, \rho_{\boldsymbol{\epsilon}}(K, n)  . \label{eq:contraction-of-rho-epsilon-submultiplicativity}
        \end{align}
        \item \emph{(Contraction behavior)} $K$ is $(\boldsymbol{\epsilon}, \delta, n)$-LDP iff for any Markov kernel $W\colon \mathcal{U} \times \mathcal{F}_\mathcal{X} \rightarrow [0, 1]$,
        \begin{align}
            \rho_{\boldsymbol{\epsilon}}(WK, n) \leq \delta \rho_{\boldsymbol{\epsilon}}(W, n)  .
        \end{align}
        \item \emph{(Doeblin curves)} For any Markov kernel $W\colon \mathcal{U} \times \mathcal{F}_\mathcal{X} \rightarrow [0, 1]$ satisfying $\itnorm{W}_\mathsf{UA} \leq p$,
        \begin{align}
            \rho_{\boldsymbol{\epsilon}}(WK, n) \leq \mathrm{F}_K^\mathsf{UA}(\rho_{\boldsymbol{\epsilon}}(W, n); p)  . \label{eq:differential-privacy-doeblin-curves}
        \end{align}
    \end{enumerate}
\end{theorem}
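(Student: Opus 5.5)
The plan is to prove Part 1 directly from the variational form \cref{eq:tau-epsilon}, by splitting each weighted input distribution into a ``common'' part and a ``residual'' part. Parts 2 and 3 then reuse the same decomposition.

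\textbf{Decomposition.} Fix $u_1,\tdots,u_n\in\mathcal{U}$ and write $W_i \triangleq W(\lwildcard\mid u_i)$. Let $\nu$ be a common dominating measure, e.g.\ \cref{eq:common-dominating-measure}, and let $w_i \triangleq \frac{\idiff W_i}{\idiff\nu}$. Define the measure $\lambda$ with density $\min_{i} e^{\epsilon_i} w_i$, and set $\tau' \triangleq \lambda(\mathcal{X})$. By \cref{proposition:integral-characterization-of-infimum}, $\tau'$ equals the inner infimum over partitions of $\mathcal{X}$ in \cref{eq:tau-epsilon}, so $\tau' \geq \tau_{\boldsymbol{\epsilon}}(W,n)$. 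Since $\epsilon_1=0$, we also have $\tau'\leq 1$. For each $i$, write $e^{\epsilon_i}W_i = \lambda + R_i$ with $R_i\geq 0$ of mass $e^{\epsilon_i}-\tau'$. Whenever that mass is positive, set $Q_i \triangleq R_i/(e^{\epsilon_i}-\tau')$, a probability measure; if the mass is zero the term simply drops.

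\textbf{Part 1.} For any $n$-partition $A_1,\tdots,A_n$ of $\mathcal{Y}$,
\begin{equation}
\sum_{i=1}^n e^{\epsilon_i}\, WK(A_i\mid u_i) = \tau' + \sum_{i=1}^n (e^{\epsilon_i}-\tau')\, Q_iK(A_i),
\end{equation}
because $\sum_i \lambda K(A_i) = \lambda K(\mathcal{Y}) = \tau'$. The elementary inequality $e^{\epsilon_i}-\tau' \geq (1-\tau')e^{\epsilon_i}$ holds since it is equivalent to $\tau'(e^{\epsilon_i}-1)\geq 0$. Applying it, the residual sum is at least
\begin{equation}
(1-\tau')\int \sum_i e^{\epsilon_i} K(A_i\mid x_i)\, (Q_1\otimes\tcdots\otimes Q_n)(\idiff x_{1:n}),
\end{equation}
and the integrand is pointwise at least $\tau_{\boldsymbol{\epsilon}}(K,n)$. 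Hence the left-hand side is at least $\tau'+(1-\tau')\tau_{\boldsymbol{\epsilon}}(K,n)$. Taking complements gives $1$ minus the left-hand side is at most $(1-\tau')\rho_{\boldsymbol{\epsilon}}(K,n) \leq \rho_{\boldsymbol{\epsilon}}(W,n)\,\rho_{\boldsymbol{\epsilon}}(K,n)$. Taking the infimum over $u_i$ and $A_i$ yields \cref{eq:contraction-of-rho-epsilon-submultiplicativity}.

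\textbf{Part 2.} First I will check that $\rho_{\boldsymbol{\epsilon}}(K,n)\leq\delta$ is equivalent to $K$ being $(\boldsymbol{\epsilon},\delta,n)$-LDP; this is immediate from \cref{eq:group-differential-privacy} and \cref{eq:tau-epsilon}.
\begin{itemize}
\item ``Only if'': this follows from Part 1.
\item ``If'': take $W=\Xi$, so $WK=K$, and show $\rho_{\boldsymbol{\epsilon}}(\Xi,n)=1$. Choose distinct $x_1\neq x_2$ (possible since $|\mathcal{X}|\geq 2$) and set $x_i=x_1$ for $i\geq 3$. Take $A_2=\{x_1\}$, $A_1=\mathcal{X}\setminus\{x_1\}$, and all other $A_i=\varnothing$. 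Every weighted term then vanishes, so $\tau_{\boldsymbol{\epsilon}}(\Xi,n)=0$.
\end{itemize}

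\textbf{Part 3 (the main obstacle).} The difficulty is that the residual kernel $[Q_1,\tdots,Q_n]$ need not satisfy the power constraint $p$. From $R_i\leq e^{\epsilon_i}W_i$ one only gets $\itnorm{[Q_1,\tdots,Q_n]}_\mathsf{UA}\leq p\max_i e^{\epsilon_i}/(e^{\epsilon_i}-\tau')$.

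What I would try first is to work with the kernel $V$ on $[n]$ defined by $V_i \triangleq \lambda + (e^{\epsilon_i}-\tau')Q_i$ after normalization, i.e.\ essentially $V_i = W_i$. Then:
\begin{itemize}
\item $V$ has power at most $p$.
\item The common mass of $V$ is at least $\tau'$ divided by the largest weight.
\end{itemize}
I would then apply the unweighted identity for the residual via the maximal coupling, \cref{proposition:maximal-coupling-characterization-of-doeblin-coefficients}, bounding the weighted residual by $(1-\tau')$ times $\rho$ of the normalized residual kernel pushed through $K$. To absorb the rescaling of both the $t$-argument and the power budget, I would use the homogeneity in \cref{proposition:homogeneity-of-power-constrained-doeblin-curves} together with the super-homogeneity and monotonicity in \cref{proposition:properties-of-doeblin-curves}, Part 3. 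The aim is to obtain $1-\sum_i e^{\epsilon_i}WK(A_i\mid u_i)\leq \mathrm{F}_K^\mathsf{UA}(1-\tau';p)$, and then use that $\mathrm{F}_K^\mathsf{UA}$ is non-decreasing and $1-\tau'\leq\rho_{\boldsymbol{\epsilon}}(W,n)$.

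The step I expect to fail, or to need the most care, is matching the mass $1-\tau'$ to a power-feasible kernel with exactly that complementary Doeblin coefficient. If the direct bound does not close, a fallback is to mix the relevant kernel with a constant kernel, as in the proof of \cref{proposition:properties-of-doeblin-curves}, to linearly interpolate $\rho$ while keeping the power at most $p$.
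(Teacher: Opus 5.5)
Your Parts 1 and 2 are sound. Part 1 is the paper's own decomposition (your $\lambda,\tau',Q_i$ are its $\nu,\tau_0,V$) and uses the same inequality $e^{\epsilon_i}-\tau'\ge(1-\tau')e^{\epsilon_i}$. Your product-measure step neatly replaces the paper's ``weighted average $\ge$ infimum'' step and removes its case split on $\tau_0=1$. In Part 2, testing with $W=\Xi$ on $\mathcal{U}=\mathcal{X}$ is simpler than the paper's Dirac kernel on $[n]$ placed at near-minimizers of $\tau_{\boldsymbol{\epsilon}}(K,n)$. The paper's choice only buys that finite-input $W$ already suffice.

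Part 3, however, is not proved, and the route you would try first cannot close. Since $\min_i w_i\le\min_i e^{\epsilon_i}w_i$, the unweighted coefficient satisfies $\rho([W_1,\dots,W_n])\ge1-\tau'$. So feeding $W$ itself into the curve yields at best $\mathrm{F}_K^\mathsf{UA}(\rho([W_1,\dots,W_n]);p)\ge\mathrm{F}_K^\mathsf{UA}(1-\tau';p)$, which is on the wrong side of the target.

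The missing step is to finish the computation you stopped at. Take $\tau'<1$ (the case $\tau'=1$ is trivial). Since $\epsilon_1=0\le\epsilon_i$ and $s\mapsto s/(s-\tau')$ is decreasing on $(\tau',\infty)$, we get $\max_i e^{\epsilon_i}/(e^{\epsilon_i}-\tau')=1/(1-\tau')$. Hence the residual kernel $Q\triangleq[Q_1,\dots,Q_n]$ has uniform average power at most $p/(1-\tau')$. Now use the weaker inequality $e^{\epsilon_i}-\tau'\ge1-\tau'$, together with $\sum_iQ_iK(A_i)\ge\tau(QK)$ (which follows from $QK\ge\tau(QK)\,\pi$ for some probability measure $\pi$). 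This gives
\[
1-\sum_{i=1}^n e^{\epsilon_i}\,WK(A_i\mid u_i)\le(1-\tau')\,\rho(QK)\le(1-\tau')\,\mathrm{F}_K^\mathsf{UA}\Bigl(1;\frac{p}{1-\tau'}\Bigr)=\mathrm{F}_K^\mathsf{UA}(1-\tau';p),
\]
where the equality is \cref{proposition:homogeneity-of-power-constrained-doeblin-curves} with scaling factor $1-\tau'$. Monotonicity of the curve and $1-\tau'\le\rho_{\boldsymbol{\epsilon}}(W,n)$ then finish the argument, exactly as in the paper.

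Your fallback is the same argument unrolled, via $\hat W=(1-\tau')Q+\tau'\updelta_{\mathbf{0}}$. It works only with the zero-power constant kernel $\updelta_{\mathbf{0}}$, since any other constant kernel adds power. It also works only because the identity $\max_i e^{\epsilon_i}/(e^{\epsilon_i}-\tau')=1/(1-\tau')$ makes $\hat W$ power-feasible. Super-homogeneity and the maximal coupling play no role.
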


\cref{theorem:contraction-of-rho-epsilon} is proved in \cref{subsection:proofs-for-differential-privacy}. Parts 1 and 2 imply that $\rho_{\boldsymbol{\epsilon}}$ exhibits contraction-coefficient-like properties akin to $\rho$, with Part 1 acting as a meta-SDPI for $\rho_{\boldsymbol{\epsilon}}$ (cf. \cite[Proposition 3]{MakurZheng2020}) and Part 2 characterizing $\delta$ as the contraction coefficient of this meta-SDPI. In particular, the \emph{data processing inequality} for $\rho_{\boldsymbol{\epsilon}}$, i.e.,
\begin{equation}
    \rho_{\boldsymbol{\epsilon}}(WK, n) \leq \rho_{\boldsymbol{\epsilon}}(W, n) , \label{eq:data-processing-inequality-rho-epsilon}
\end{equation}
is an immediate consequence due to $\rho_{\boldsymbol{\epsilon}}(K, n) \in [0, 1]$.
In addition, considering the connection between $\rho_{\boldsymbol{\epsilon}}$ and the min-DeGroot distance highlighted above, Part 1 can be restated as an SDPI for min-DeGroot distances, i.e., for any Markov kernel $W\colon \mathcal{U} \times \mathcal{F}_\mathcal{X} \rightarrow [0, 1]$ with finite $\mathcal{U}$,   \begin{math}
    \tau_{\min}(\boldsymbol{\lambda}, WK) \leq \rho_{\boldsymbol{\epsilon}}(K, \abs{\mathcal{U}}) \, \tau_{\min}(\boldsymbol{\lambda}, W)
\end{math}.\par
Equation \cref{eq:differential-privacy-doeblin-curves} and the techniques developed above provide tools for tighter privacy guarantees for various mechanisms. Using these results, we now derive differential privacy guarantees for online iterative algorithms using Doeblin curves. While prior works such as \cite{PrivacyAmplification} rely on compactness assumptions, our focus is on extending such analyses to potentially unbounded spaces where it can be difficult to capture privacy using classical contraction-coefficient-based techniques. To this end, we consider and build on the online learning framework presented in \cite[Section IV-B]{PrivacyAmplification}, where the learner sequentially minimizes a sequence of convex objectives $\ibrc{g_t}_{t=1}^n$ over a parameter space $\mathcal{W} \subseteq \mathbb{R}^d$. The protocol proceeds as follows. The learner initializes with a random parameter $W_0 \in \mathcal{W}$, and at each iteration $t \in [T]$, the cost function $g_t$ is revealed. The learner then performs the update
\begin{equation}
    W_t \triangleq \proj_\mathcal{W} \mprn{W_{t-1} - \eta_t \nabla g_t(W_{t-1}) + m_t N_t} , \label{eq:dp-update-rule}
\end{equation}
where the projection operator $\proj_\mathcal{W}$ ensures that the parameters remain within the feasible set $\mathcal{W}$,\footnote{As before, our results hold with $\proj_\mathcal{W}$ being any function whose range is $\mathcal{W}$.} $\eta_t > 0$ is the learning rate, and $m_t$ scales the standard Gaussian noise $N_t \sim \mathsf{Normal}(0, I)$ (where $I$ is the identity matrix of appropriate dimension). By taking $g_t(w) \triangleq \ell(w, Z_t)$ where $\ell\colon \mathcal{W} \times \mathcal{Z} \rightarrow \mathbb{R}_+$ is a loss function, this framework subsumes one-pass ERM, where the learner approximates the solution of   \begin{math}
    \min_{w \in \mathcal{W}} \frac{1}{T} \sum_{t=1}^T \ell(w, Z_t)
\end{math}
with data points $Z_1, \tdots, Z_T \in \mathcal{Z}$ revealed sequentially. The goal is to ensure that the final iterate $W_T$ satisfies a privacy guarantee. We assume the gradient of each objective $g_t$ is uniformly bounded, i.e.,
\begin{equation}
    \forall t \in [T], \, \forall w \in \mathcal{W}, \quad \norm{\nabla g_t(w)}_2 \leq L . \label{eq:dp-amplification-gradient-bound}
\end{equation}

Existing LDP guarantees for online algorithms \cite{PrivacyAmplification} rely on the contraction coefficient of $\mathsf{E}_\gamma$-divergence, which quantifies the decay of information due to additive noise. However, for $\mathcal{W}$ with infinite diameter, the contraction coefficient for $\mathsf{E}_\gamma$-divergence degenerates to $1$. Below, we utilize the contraction bound on $\rho_{\boldsymbol{\epsilon}}$ in terms of Doeblin curves $\mathrm{F}_\Phi^\mathsf{UA}$ from \cref{theorem:contraction-of-rho-epsilon} to quantify the contraction induced by the noise $N_t$.

\begin{theorem}[Differential Privacy for Unconstrained Online Learning] \label{theorem:differential-privacy-for-unconstrained-online-learning}
    Let $W_0^{(i)} \sim P_{W_0}^{(i)}$ for $i \in [n]$ denote $n$ different initializations of the online learning process. Let $W_T^{(i)} \sim P_{W_T}^{(i)}$ be the respective output parameters after $T$ iterations of the update rule \cref{eq:dp-update-rule}. Let $\mathrm{F}_\Phi^\mathsf{UA}$ be the uniform average power-constrained Doeblin curve (with power function $M(z) = z^2$) of the $\mathsf{Normal}(0, I)$ additive noise channel. Then,   \begin{multline}
        \rho_{\boldsymbol{\epsilon}} \mprn{\bkt{P_{W_T}^{(1)}, \tdots, P_{W_T}^{(n)}}, n} \\\leq \mathrm{F}_\Phi^\mathsf{UA} \biggprn{\tcdots \mathrm{F}_\Phi^\mathsf{UA} \biggprn{\rho_{\boldsymbol{\epsilon}} \mprn{\bkt{P_{W_0}^{(1)}, \tdots, P_{W_0}^{(n)}}, n}; p_1} \tcdots; p_T}  , \label{eq:differential-privacy-for-unconstrained-online-learning}
    \end{multline}
    where for each $t \in [T]$ we define
    \begin{equation}
        p_t \triangleq \frac{1}{m_t^2} \prn{2 \max_{i \in [n]} \E{\norm{W_{t-1}^{(i)}}_2^2} + 2 \eta_t^2 L^2} . \label{eq:dp-amplification-power}
    \end{equation}
\end{theorem}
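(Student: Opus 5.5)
The plan is to induct on the iteration $t$, peeling off one application of the update rule \cref{eq:dp-update-rule} at a time and invoking \cref{theorem:contraction-of-rho-epsilon}, Part 3 for the Gaussian noise channel $\Phi$.

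\textbf{Decomposition of one iteration.} The map $W_{t-1} \mapsto W_t$ is a composition of three kernels. The first is the deterministic gradient step $\psi_t(w) \triangleq w - \eta_t \nabla g_t(w)$. The second is the additive noise $m_t N_t$. The third is the deterministic map $\proj_\mathcal{W}$.

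The scaling by $m_t$ is absorbed by a change of variables. Adding $m_t N_t$ to $V$ is the same as applying the standard kernel $\Phi$ to $V/m_t$ and then multiplying by $m_t$. Both rescalings are bijective deterministic maps, so we write the step kernel as $D_t \Phi S_t$. Here $D_t$ is the deterministic kernel $w \mapsto \psi_t(w)/m_t$, and $S_t$ is the deterministic kernel $y \mapsto \proj_\mathcal{W}(m_t y)$.

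Set $\mathbf{W}^{(t)} \triangleq [P_{W_t}^{(1)}, \tdots, P_{W_t}^{(n)}]$, a kernel from $[n]$. Then $\mathbf{W}^{(t)} = \mathbf{W}^{(t-1)} D_t \Phi S_t$.

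\textbf{One-step bound.} By the data processing inequality \cref{eq:data-processing-inequality-rho-epsilon} applied to $S_t$,
\[
\rho_{\boldsymbol{\epsilon}}(\mathbf{W}^{(t)}, n) \leq \rho_{\boldsymbol{\epsilon}}(\mathbf{W}^{(t-1)} D_t \Phi, n).
\]
Part 3 of \cref{theorem:contraction-of-rho-epsilon}, with $W = \mathbf{W}^{(t-1)} D_t$, bounds the right-hand side by $\mathrm{F}_\Phi^\mathsf{UA}(\rho_{\boldsymbol{\epsilon}}(\mathbf{W}^{(t-1)} D_t, n); p)$. This requires $\itnorm{\mathbf{W}^{(t-1)} D_t}_\mathsf{UA} \leq p$.

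For the power, use $M(z) = z^2$ together with $\norm{a+b}_2^2 \leq 2\norm{a}_2^2 + 2\norm{b}_2^2$ and the gradient bound \cref{eq:dp-amplification-gradient-bound}. This gives, for each $i$,
\[
\mathbb{E}\bigl[\norm{\psi_t(W_{t-1}^{(i)})/m_t}_2^2\bigr] \leq \frac{1}{m_t^2}\Bigl(2\,\mathbb{E}\bigl[\norm{W_{t-1}^{(i)}}_2^2\bigr] + 2\eta_t^2 L^2\Bigr).
\]
Taking the maximum over $i \in [n]$ gives exactly $p_t$ from \cref{eq:dp-amplification-power}.

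Applying data processing again to $D_t$ gives $\rho_{\boldsymbol{\epsilon}}(\mathbf{W}^{(t-1)} D_t, n) \leq \rho_{\boldsymbol{\epsilon}}(\mathbf{W}^{(t-1)}, n)$. Since the Doeblin curve is monotone non-decreasing in $t$, we conclude
\[
\rho_{\boldsymbol{\epsilon}}(\mathbf{W}^{(t)}, n) \leq \mathrm{F}_\Phi^\mathsf{UA}\bigl(\rho_{\boldsymbol{\epsilon}}(\mathbf{W}^{(t-1)}, n); p_t\bigr).
\]

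\textbf{Iteration.} Chain the one-step bound from $t = 1$ to $T$. Monotonicity of each $\mathrm{F}_\Phi^\mathsf{UA}(\wildcard; p_t)$ lets the bounds nest, which yields \cref{eq:differential-privacy-for-unconstrained-online-learning}.

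\textbf{Main obstacle.} The delicate point is justifying the scaling step. We must check that $\rho_{\boldsymbol{\epsilon}}$ is invariant, or at least non-increasing, under the deterministic bijections $y \mapsto m_t y$ and $w \mapsto w/m_t$. Non-increase already follows from data processing, so the needed direction holds.

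We must also confirm that the Doeblin curve in Part 3 is indeed the one for $\Phi$ at the rescaled power $p_t$. This is consistent with \cref{proposition:scale-invariant-doeblin-curves-of-additive-noise-channels}, which gives $\mathrm{F}_{K_{m_t}}^\mathsf{UA}(\wildcard; P) = \mathrm{F}_\Phi^\mathsf{UA}(\wildcard; P/m_t^2)$. Equivalently, we can apply Part 3 directly to the channel $K_{m_t}$ with power $m_t^2 p_t$ and then invoke that proposition.
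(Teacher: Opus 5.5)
Your proof is correct and follows essentially the same route as the paper: the data processing inequality for the projection, Part 3 of the contraction theorem for $\rho_{\boldsymbol{\epsilon}}$ applied to the Gaussian noise with the power bounded via $\|a+b\|_2^2 \leq 2\|a\|_2^2 + 2\|b\|_2^2$, the data processing inequality for the gradient step, and then nesting the bounds using monotonicity of the curve. The only difference is cosmetic: you fold the noise scale $m_t$ into deterministic rescaling kernels and apply Part 3 directly to $\Phi$ at power $p_t$, whereas the paper applies Part 3 to the $\mathsf{Normal}(0, m_t^2 I)$ channel at power $m_t^2 p_t$ and then converts to $\Phi$ via the scale-invariance proposition, which is the equivalence you already note.
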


\cref{theorem:differential-privacy-for-unconstrained-online-learning} is proved in \cref{subsection:proofs-for-differential-privacy} using \cref{theorem:contraction-of-rho-epsilon}, Part 3. %
We note that the above result can also be used to derive privacy amplification bounds for online learning algorithms as in \cite{PrivacyAmplification} for more general $\mathcal{W}$. In addition, \cref{theorem:differential-privacy-for-unconstrained-online-learning} can also be perceived as a variation of \cref{lemma:recursive-bound-on-t-information} that bounds an $n$-way divergence instead of TV-information. Finally, it is worth mentioning that one could also replace the nested applications of Doeblin curves in \cref{theorem:differential-privacy-for-unconstrained-online-learning} or \cref{lemma:recursive-bound-on-t-information} with a product of power-constrained Doeblin \emph{coefficients} to obtain a looser bound that remains meaningful for unbounded feasible sets if desired.

\section{Proofs of Main Results on Doeblin Curves} \label{section:proofs-of-main-results-on-doeblin-curves}

In this section, we prove the main results presented in \cref{section:main-results-on-doeblin-curves}, pertaining to fundamental properties of Doeblin coefficients and curves.

\subsection{Proofs of Doeblin Characterizations} \label{subsection:proofs-of-doeblin-characterizations}

In this subsection, we first prove \cref{proposition:maximal-coupling-characterization-of-doeblin-coefficients,proposition:integral-characterization-of-infimum}. Next, we introduce essential preliminaries (such as lattice infimum) required for generalizing the argument to Polish spaces. Finally, we present a detailed step-by-step proof of \cref{theorem:variational-characterization-of-doeblin-coefficient}.

\begin{proof}[Proof of \cref{proposition:maximal-coupling-characterization-of-doeblin-coefficients}]
    We have   \begin{equation}
        \tau(K) \stackclap{(a)}{=} \bigwedge_{x \in \mathcal{X}} K(\mathcal{Y} \mid x) \stackclap{(b)}{=} \!\!\sup_{\mathbb{P}: Y_x \sim K(\cdot \mid x)} \!\! \P{\forall x, x' \in \mathcal{X}, \, Y_x = Y_{x'}} , \label{eq:doeblin-coefficient-max-coupling-proof}
    \end{equation}
    where (a) holds by the greatest common component characterization of Doeblin coefficient \cref{eq:doeblin-coefficient-gcc} and (b) holds by \cite[Theorem 2]{MakurSingh2023b}, \cite[Chapter 3, Theorem 7.3, p. 107]{mainbook}.
\end{proof}

We remark that for any kernel $K$, the supremum in \cref{eq:doeblin-coefficient-max-coupling-proof} is achieved by the \emph{maximal coupling} given by (cf. \cite[Eq. 34]{MakurSingh2023b}, \cite[Chapter 3, Theorem 7.3, p. 107]{mainbook})   %
\begin{equation}
    \forall x \in \mathcal{X}, \quad Y_x \triangleq \begin{cases}
        Y^* , & \text{if $I = 1$} , \\
        \tilde{Y}_x , & \text{if $I = 0$} ,
    \end{cases} \label{eq:maximal-coupling}
\end{equation}
where $\alpha = \ibigwedge_{x \in \mathcal{X}} K(\mathcal{Y} \mid x)$ and the random variables $I$, $Y^*$, and $\ibrc{\tilde{Y}_x}_{x \in \mathcal{X}}$ are sampled independently from the probability measures
\begin{gather}
    I \sim \mathsf{Bernoulli}(\alpha), \quad
    Y^* \sim \frac{\bigwedge_{x \in \mathcal{X}} K(\lwildcard \mid x)}{\alpha}, \\
    \forall x \in \mathcal{X}, \quad \tilde{Y}_x \sim \frac{K(\lwildcard \mid x) - \bigwedge_{x' \in \mathcal{X}} K(\lwildcard \mid x')}{1 - \alpha}.
\end{gather}
(Note that $Y^*$ and $\tilde{Y}_x$ are unused in the case that their respective probability measures are undefined, i.e., $\alpha = 0$ or $\alpha = 1$.)

\begin{proof}[Proof of \cref{proposition:integral-characterization-of-infimum}]
    Fix an absolutely continuous Markov kernel $K\colon \mathcal{X} \times \mathcal{F}_\mathcal{Y} \rightarrow [0, 1]$ with common dominating measure $\mu\colon \mathcal{F}_\mathcal{Y} \rightarrow \mathbb{R}_+$. Fix $x_1, \tdots, x_n \in \mathcal{X}$ and $\gamma_1, \tdots, \gamma_n \geq 0$.
    
    First, we will show that
    \begin{equation}
        \inf_{\substack{\text{$n$-partition of $\mathcal{Y}$} \\ A_1, \dots, A_n}} \sum_{i=1}^n \gamma_i K(A_i \mid x_i) \leq \int_{\mathcal{Y}} \prn{\min_{i \in [n]} \gamma_i \frac{\diff K}{\diff \mu} (\lwildcard \mid x_i)} \diff \mu .
    \end{equation}
    Let $A^*_1, \tdots, A^*_n$ be the specific partition of $\mathcal{Y}$ given by
    \begin{equation}
        A^*_i \triangleq \brc{y \in \mathcal{Y}: \min \mprn{\arg \min_{j \in [n]} \gamma_j \frac{\diff K}{\diff\mu}(y \mid x_j)} = i}
    \end{equation}
    for each $i \in [n]$, where the $\min$ operator is added to break ties. We have   \begin{align}
        &\phantom{{}={}}\inf_{\substack{\text{$n$-partition of $\mathcal{Y}$} \\ A_1, \dots, A_n}} \sum_{i=1}^n \gamma_i K(A_i \mid x_i) \\&\stackclap{(a)}{\leq} \sum_{i=1}^n \gamma_i K(A^*_i \mid x_i) \\&\stackclap{(b)}{=} \sum_{i=1}^n \int_{A^*_i} \prn{\gamma_i \frac{\diff K}{\diff \mu}(\lwildcard \mid x_i)} \diff \mu \\&\stackclap{(c)}{=} \sum_{i=1}^n \int_{A^*_i} \prn{\min_{j \in [n]} \gamma_j \frac{\diff K}{\diff\mu}(\lwildcard \mid x_j)} \diff  \mu \\&\stackclap{(d)}{=} \int_{\mathcal{Y}} \prn{\min_{i \in [n]} \gamma_i \frac{\diff K}{\diff  \mu} (\lwildcard \mid x_i)} \diff  \mu
    \end{align}
    as desired, where (a) holds by upper-bounding the infimum with a particular instance, (b) holds by definition of Radon-Nikodym derivative, (c) holds by definition of $A^*_i$, and (d) holds because $A^*_1, \tdots, A^*_n$ is a partition of $\mathcal{Y}$.
    
    Next, we will show that
    \begin{equation}
        \inf_{\substack{\text{$n$-partition of $\mathcal{Y}$} \\ A_1, \dots, A_n}} \sum_{i=1}^n \gamma_i K(A_i \mid x_i) \geq \int_{\mathcal{Y}} \prn{\min_{i \in [n]} \gamma_i \frac{\diff K}{\diff  \mu} (\lwildcard \mid x_i)} \diff  \mu  . \label{eq:infimum-to-integral-lower-bound}
    \end{equation}
    For any partition $A_1, \tdots, A_n$ of $\mathcal{Y}$, we have   \begin{align}
        \sum_{i=1}^n \gamma_i K(A_i \mid x_i) &\stackclap{(a)}{\geq} \sum_{i=1}^n \bigwedge_{j \in [n]} \gamma_j K(A_i \mid x_j) \\ &\stackclap{(b)}{=} \bigwedge_{j \in [n]} \gamma_j K(\mathcal{Y} \mid x_j) \\
        &\stackclap{(c)}{=} \sup \brc{\nu(\mathcal{Y}): \forall j \in [n], \, \gamma_j K(\lwildcard \mid x_j) \geq \nu} \\
        &\stackclap{(d)}{\geq} \int_\mathcal{Y} \prn{\min_{i \in [n]} \gamma_i \frac{\diff K}{\diff \mu}(\lwildcard \mid x_i)} \diff \mu  ,
    \end{align}
    where (a) holds because the greatest common component of a collection of measures is a lower bound on each individual measure, (b) holds because $A_1, \tdots, A_n$ is a partition of $\mathcal{Y}$, (c) holds by definition of greatest common component \cref{eq:greatest-common-component}, and (d) holds by lower-bounding the supremum with the specific measure
    \begin{equation}
        \forall A \in \mathcal{F}_\mathcal{Y}, \quad \nu^*(A) \triangleq \int_A \prn{\min_{i \in [n]} \gamma_i \frac{\diff K}{\diff \mu}(\lwildcard \mid x_i)} \diff \mu ,
    \end{equation}
    which satisfies $\gamma_j K(\lwildcard \mid x_j) \geq \nu^*$ for each $j \in [n]$. Lastly, since $A_1, \tdots, A_n$ was arbitrary, taking the infimum over all $n$-partitions of $\mathcal{Y}$ proves \cref{eq:infimum-to-integral-lower-bound} as desired.
\end{proof}

Next, we present some preliminaries before carrying out the proof in the general setting of Polish spaces.

\begin{definition}[Lattice Infimum {\cite[Section 2, p. 253]{HajlaszMaly2002}}] \label{definition:lattice-infimum}
    Let $\mathcal{X}$ be a Polish domain, and let $\mathcal{F}$ be the set of all bounded measurable functions $f\colon \mathcal{X} \rightarrow \mathbb{R}$. Given a measure $\mu\colon \mathcal{X} \rightarrow \mathbb{R}_+$, the lattice infimum of an arbitrary subset of functions $\mathcal{G} \subseteq \mathcal{F}$, denoted $\ilatinf_{g \in \mathcal{G}} g$, is the function in $\mathcal{F}$ such that:
    \begin{enumerate}
        \item For each $h \in \mathcal{G}$,
        \begin{equation}
            \latinf_{g \in \mathcal{G}} g \leq h \quad \text{($\mu$-a.e.)}. \label{eq:lattice-infimum-condition-one}
        \end{equation}
        \item For each $f \in \mathcal{F}$, if
        \begin{equation}
            \forall h \in \mathcal{G}, \quad \Bigprn{f \leq h \quad \text{($\mu$-a.e.)}}  , \label{eq:lattice-infimum-condition-two-antecedent}
        \end{equation}
        then
        \begin{equation}
            f \leq \latinf_{g \in \mathcal{G}} g \quad \text{($\mu$-a.e.)}. \label{eq:lattice-infimum-condition-two-consequent}
        \end{equation}
    \end{enumerate}
\end{definition}

For any $\mathcal{G}$, the lattice infimum is $\mu$-a.e. unique. To see this, consider any two functions $f_1, f_2 \in \mathcal{F}$ satisfying \cref{definition:lattice-infimum}. Since both functions satisfy the first condition \cref{eq:lattice-infimum-condition-one}, both functions satisfy the antecedent in the second condition \cref{eq:lattice-infimum-condition-two-antecedent}, and so by the consequent in the second condition \cref{eq:lattice-infimum-condition-two-consequent}, we have $f_1 \leq f_2$ ($\mu$-a.e.) and $f_2 \leq f_1$ ($\mu$-a.e.) Moreover, if $\mathcal{G}$ is countable, the lattice infimum is simply the pointwise infimum of functions in $\mathcal{G}$, i.e.,   \begin{math}
    \ilatinf_{g \in \mathcal{G}} g(x) = \inf \brc{g(x): g \in \mathcal{G}}
\end{math}.
However, when $\mathcal{G}$ is uncountable, the lattice infimum and pointwise infimum are different in general. We remark that the notion of lattice infimum serves primarily to avoid measurability issues, and the proof of \cref{theorem:variational-characterization-of-doeblin-coefficient} can be carried out without such notions for ``well-behaved'' kernels as shown in \cref{appendix:variational-characterization-under-equicontinuity} for completeness.

The following lemma shows that the density of the greatest common component of a collection of measures is the lattice infimum of the individual measures' densities.

\begin{lemma}[Density of Greatest Common Component] \label{lemma:density-of-greatest-common-component}
    Let $\ibrc{\pi_i}_{i \in \mathcal{I}}$ be a collection of measures on a Polish space $(\mathcal{Y}, \mathcal{F}_\mathcal{Y})$, where $\mathcal{I}$ is an arbitrary index set. Assume each $\pi_i$ is dominated by a common $\sigma$-finite measure $\mu$ (i.e., $\pi_i \ll \mu$ for all $i \in \mathcal{I}$). Then, we have the following properties.
    \begin{enumerate}
        \item The greatest common component $\ibigwedge_{i \in \mathcal{I}} \pi_i$ is dominated by $\mu$, i.e., $\ibigwedge_{i \in \mathcal{I}} \pi_i \ll \mu$.
        \item The lattice infimum $\ilatinf_{i \in \mathcal{I}} \frac{\idiff \pi_i}{\idiff \mu}$ exists.
        \item The density of the greatest common component is   \begin{math}
            \frac{\idiff}{\idiff \mu} (\ibigwedge_{i \in \mathcal{I}} \pi_i) = \ilatinf_{i \in \mathcal{I}} \frac{\idiff \pi_i}{\idiff \mu}
        \end{math} ($\mu$-a.e.).
    \end{enumerate}
\end{lemma}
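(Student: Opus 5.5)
Write $\nu \triangleq \bigwedge_{i \in \mathcal{I}} \pi_i$ and $f_i \triangleq \frac{\mathrm{d}\pi_i}{\mathrm{d}\mu}$. The plan is to prove Part 1 directly, then obtain Parts 2 and 3 together by showing that $g \triangleq \frac{\mathrm{d}\nu}{\mathrm{d}\mu}$ satisfies both conditions of \cref{definition:lattice-infimum}. Since the lattice infimum is $\mu$-a.e.\ unique, this gives existence and identifies $g$ as the lattice infimum.

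\textbf{Part 1.} If $\mu(A) = 0$, then $\pi_i(A) = 0$ for every $i$ because $\pi_i \ll \mu$. By definition \cref{eq:greatest-common-component}, any measure $\nu'$ with $\pi_i \geq \nu'$ for all $i$ satisfies $\nu'(A) \leq \pi_i(A) = 0$. Taking the supremum gives $\nu(A) = 0$, so $\nu \ll \mu$.

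\textbf{Condition 1.} Because $\nu$ is finite, $\mu$ is $\sigma$-finite, and $\nu \ll \mu$ by Part 1, the Radon--Nikodym theorem yields $g$. We may take $g$ bounded on each $\mu$-finite piece; if boundedness is needed for membership in $\mathcal{F}$, I would normalize $\mu$ to a finite measure, or truncate, since the $f_i$ are only assumed integrable. By \cite[Chapter 3, Theorem 7.1]{mainbook}, $\nu \leq \pi_i$ for each $i$, i.e.\ $\int_A g \,\mathrm{d}\mu \leq \int_A f_i \,\mathrm{d}\mu$ for all $A$. Applying this to $A = \{g > f_i\}$ forces $g \leq f_i$ $\mu$-a.e. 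This is condition \cref{eq:lattice-infimum-condition-one}.

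\textbf{Condition 2.} Take a measurable $f$ with $f \leq f_i$ $\mu$-a.e.\ for every $i$. The set $\{f < 0\}$ is irrelevant, so we may replace $f$ by $f^+ = \max\{f, 0\}$, which still satisfies $f^+ \leq f_i$ a.e.\ since $f_i \geq 0$. Define $\nu_f(A) \triangleq \int_A f^+ \,\mathrm{d}\mu$. Then $\nu_f \leq \pi_i$ for every $i$, so $\nu_f$ is a common lower bound, and by \cref{eq:greatest-common-component} we get $\nu_f(A) \leq \nu(A)$ for all $A$. The same set argument as before, applied on $\{f^+ > g\}$ intersected with $\mu$-finite sets, gives $f \leq f^+ \leq g$ $\mu$-a.e. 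This is \cref{eq:lattice-infimum-condition-two-consequent}.

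Hence $g$ satisfies the definition, proving Part 2 (existence) and Part 3 (the identification $\frac{\mathrm{d}\nu}{\mathrm{d}\mu} = \ilatinf_{i} f_i$ $\mu$-a.e.).

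\textbf{Main obstacle.} The only delicate point is checking that $\nu$ is a genuine measure lying below every $\pi_i$, which is taken from \cite{mainbook}. Everything after that is routine. The remaining care is technical: the sets $\{g > f_i\}$ are measurable because each $f_i$ is measurable individually, so no uncountable operations are involved, and $\sigma$-finiteness lets us localize to sets where the densities are integrable.
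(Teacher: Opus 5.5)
Your proof is correct, but it runs in the opposite direction from the paper's.

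\textbf{The paper's route.} The paper takes Part 2 as an external input, citing \cite[Lemma 2.6]{HajlaszMaly2002} for the existence of the lattice infimum of the uncountable family $\{\mathrm{d}\pi_i/\mathrm{d}\mu\}_{i \in \mathcal{I}}$. It then defines $\nu^*(A) \triangleq \int_A (\text{lattice infimum})\,\mathrm{d}\mu$. By a two-sided comparison, it shows that $\nu^*(A)$ equals the supremum in \cref{eq:greatest-common-component} for every $A$. As a by-product, this shows the greatest common component is a measure without appealing to \cite{mainbook}.

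\textbf{Your route.} You take as input that $\bigwedge_i \pi_i$ is a measure \cite[Chapter 3, Theorem 7.1]{mainbook}, which the paper already assumes when it defines the greatest common component. You pass to its Radon--Nikodym density $g$ and verify the two conditions of \cref{definition:lattice-infimum} directly. This proves Part 2 instead of citing it. The only sets you use, $\{g > f_i\}$ and $\{f^+ > g\}$, each involve one or two measurable functions, so you never construct an infimum over uncountably many functions.

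\textbf{What each buys.} The two existence results are interchangeable through the density correspondence, and each proof spends one to obtain the other. Yours makes the lemma self-contained given a fact the paper already relies on. The paper's gives an independent proof that the greatest common component is a measure.

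\textbf{Minor corrections.} Your aside that $g$ can be taken bounded on each $\mu$-finite piece is false in general. Take every $\pi_i$ to have density $\tfrac12 y^{-1/2}$ with respect to Lebesgue measure on $(0,1)$; then $g$ is essentially unbounded. Making $\mu$ finite does not change this. The aside is also unnecessary:
\begin{itemize}
\item If the $f_i$ are bounded, as \cref{definition:lattice-infimum} requires, your Condition 1 gives $0 \le g \le f_{i_0}$ $\mu$-a.e.\ for any fixed $i_0$. So $g$ agrees $\mu$-a.e.\ with a bounded function.
\item If the $f_i$ are not bounded, the definition has to be read for general measurable functions anyway. The paper's own proof leaves this point implicit as well.
\end{itemize}
Likewise, intersecting with $\mu$-finite sets in Condition 2 is not needed, since $\int f^+\,\mathrm{d}\mu \le \pi_{i_0}(\mathcal{Y}) < \infty$.
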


\begin{proof}
    Fix measures $\ibrc{\pi_i}_{i \in \mathcal{I}}$ on $(\mathcal{Y}, \mathcal{F}_\mathcal{Y})$ satisfying $\pi_i \ll \mu$ for each $i \in \mathcal{I}$.
    
    \textbf{Part 1:} For any set $A \in \mathcal{F}_\mathcal{Y}$ with $\mu(A) = 0$, we have
    \begin{equation}
        \bigwedge_{i \in \mathcal{I}} \pi_i(A) \stackclap{(a)}{\leq} \pi_{i^*}(A) \stackclap{(b)}{=} 0
    \end{equation}
    as desired, where (a) holds for any $i^* \in \mathcal{I}$ because the greatest common component of a collection of measures is a lower bound on each measure, and (b) holds because $\pi_{i^*} \ll \mu$.

    \textbf{Part 2:} We refer readers to the analogous argument in \cite[Lemma 2.6]{HajlaszMaly2002}.

    \textbf{Part 3:} For notational convenience, define a measure $\nu^*\colon \mathcal{F}_\mathcal{Y} \rightarrow \mathbb{R}_+$ as
    \begin{equation}
        \forall A \in \mathcal{F}_\mathcal{Y}, \quad \nu^*(A) \triangleq \int_A \prn{\latinf_{i \in \mathcal{I}} \frac{\diff \pi_i}{\diff \mu}} \diff \mu . \label{eq:density-of-greatest-common-component-measure}
    \end{equation}
    First, observe that for any $i \in \mathcal{I}$ and $A \in \mathcal{F}_\mathcal{Y}$, we have
    \begin{align}
        \nu^*(A) \stackclap{(a)}{\leq} \int_A \frac{\diff \pi_i}{\diff \mu} \diff \mu \stackclap{(b)}{=} \pi_i(A) ,
    \end{align}
    where (a) holds by \cref{definition:lattice-infimum}, Part 1 and (b) holds by definition of Radon-Nikodym derivative. Since $\nu^* \leq \pi_i$ for each $i \in \mathcal{I}$, it follows that for each $A \in \mathcal{F}_\mathcal{Y}$,
    \begin{equation}
        \nu^*(A) \leq \sup \brc{\nu(A): \forall i \in \mathcal{I}, \, \nu \leq \pi_i} . \label{eq:density-of-greatest-common-component-upper-bound}
    \end{equation}

    Next, observe that any measure $\nu$ satisfying $\nu \leq \pi_i$ for each $i \in \mathcal{I}$ is itself dominated by $\mu$ (i.e., $\nu \ll \mu$), because
    \begin{equation}
        \nu(A) \leq \pi_i(A) \stackclap{(a)}{=} 0
    \end{equation}
    for any set $A \in \mathcal{F}_\mathcal{Y}$ with $\mu(A) = 0$, where (a) holds because $\pi_i \ll \mu$. Hence, by the Radon-Nikodym theorem, $\frac{\idiff \nu}{\idiff \mu}$ is well-defined. Moreover, we have
    \begin{equation}
        \frac{\diff \nu}{\diff \mu} \leq \frac{\diff \pi_i}{\diff \mu} \quad \text{($\mu$-a.e.)} \label{eq:density-of-greatest-common-component-step-one}
    \end{equation}
    for each $i \in \mathcal{I}$. To see this, suppose the contrary: Assume that for some $i^* \in \mathcal{I}$, we have $\frac{\diff \nu}{\diff \mu} > \frac{\diff \pi_{i^*}}{\diff \mu}$ on some set $A^* \in \mathcal{F}_\mathcal{Y}$ with $\mu(A^*) > 0$. Then,
    \begin{equation}
        \nu(A^*) = \int_{A^*} \frac{\diff \nu}{\diff \mu} \diff \mu > \int_{A^*} \frac{\diff \pi_{i^*}}{\diff \mu} \diff \mu = \pi_{i^*}(A^*) ,
    \end{equation}
    which contradicts the fact that $\nu \leq \pi_{i^*}$. Following from \cref{eq:density-of-greatest-common-component-step-one}, we have
    \begin{equation}
        \frac{\diff \nu}{\diff \mu} \leq \latinf_{i \in \mathcal{I}} \frac{\diff \pi_i}{\diff \mu} \quad \text{($\mu$-a.e.)}
    \end{equation}
    by \cref{definition:lattice-infimum}, Part 2, and therefore $\nu \leq \nu^*$ by \cref{eq:density-of-greatest-common-component-measure}. Since $\nu$ was arbitrary, it follows that for each $A \in \mathcal{F}_\mathcal{Y}$,
    \begin{align}
        \nu^*(A) \geq \sup \brc{\nu(A): \forall i \in \mathcal{I}, \, \nu \leq \pi_i} . \label{eq:density-of-greatest-common-component-lower-bound}
    \end{align}

    Proceeding onwards, for each $A \in \mathcal{F}_\mathcal{Y}$, we have   \begin{equation}
        \int_A\! \prn{\latinf_{i \in \mathcal{I}} \frac{\diff \pi_i}{\diff \mu}\!}\! \diff \mu \stackclap{(a)}{=} \sup \brc{\nu(A): \forall i \in \mathcal{I}, \, \nu \leq \pi_i} \stackclap{(b)}{=} \!\bigwedge_{i \in \mathcal{I}} \pi_i(A) ,
    \end{equation}
    where (a) holds by combining \cref{eq:density-of-greatest-common-component-measure,eq:density-of-greatest-common-component-lower-bound,eq:density-of-greatest-common-component-upper-bound} and (b) holds by \cref{eq:greatest-common-component}. Hence, by definition of Radon-Nikodym derivative, $\frac{\idiff}{\idiff \mu} (\ibigwedge_{i \in \mathcal{I}} \pi_i) = \ilatinf_{i \in \mathcal{I}} \frac{\idiff \pi_i}{\idiff \mu}$ as desired.
\end{proof}

Finally, we prove \cref{theorem:variational-characterization-of-doeblin-coefficient}.

\begin{proof}[Proof of \cref{theorem:variational-characterization-of-doeblin-coefficient}]
    Fix an absolutely continuous Markov kernel $K\colon \mathcal{X} \times \mathcal{F}_\mathcal{Y} \rightarrow [0, 1]$ with common dominating measure $\mu\colon \mathcal{F}_\mathcal{Y} \rightarrow \mathbb{R}_+$. We have   \begin{align}
        \tau(K) &\stackclap{(a)}{=} \bigwedge_{x \in \mathcal{X}} K(\mathcal{Y} \mid x) \\&\stackclap{(b)}{=} \int_{\mathcal{Y}} \prn{\frac{\diff}{\diff \mu} \bigwedge_{x \in \mathcal{X}} K(\lwildcard \mid x)} \diff \mu\\  &\stackclap{(c)}{=} \int_{\mathcal{Y}} \prn{\latinf_{x \in \mathcal{X}} \frac{\diff K}{\diff \mu}(\lwildcard \mid x)} \diff \mu  , \label{eq:variational-characterization-step-one}
    \end{align}
    where (a) holds by the greatest common component characterization of Doeblin coefficient \cref{eq:doeblin-coefficient-gcc}, (b) holds by definition of Radon-Nikodym derivative, and (c) holds by \cref{lemma:density-of-greatest-common-component}, Part 3.
    
    Next, we compute the lattice infimum $\ilatinf_{x \in \mathcal{X}} \frac{\idiff K}{\idiff \mu}(\lwildcard \mid x)$. For notational convenience, let
    \begin{align}
        \circled{1} \triangleq \inf_{n \in \mathbb{N}} \inf_{x_1, \dots, x_n \in \mathcal{X}} \int_{\mathcal{Y}} \prn{\min_{i \in [n]} \frac{\diff K}{\diff \mu}(\lwildcard \mid x_i)} \diff \mu . \label{eq:variational-characterization-step-two}
    \end{align}
    For each $k \in \mathbb{N}$, let $s_k \triangleq \ibrc{\hat{x}_{k,1}, \tdots, \hat{x}_{k,n_k}} \subseteq \mathcal{X}$ be a finite sequence such that
    \begin{align}
        \int_{\mathcal{Y}} \prn{\min_{i \in [n_k]} \frac{\diff K}{\diff \mu}(\lwildcard \mid \hat{x}_{k,i})} \diff \mu \leq \circled{1} + \frac{1}{k} ,
    \end{align}
    where the existence of such a sequence is guaranteed by the definition of $\circled{1}$ as an infimum in \cref{eq:variational-characterization-step-two}. Construct an infinite sequence $\ibrc{\tilde{x}_1, \tilde{x}_2, \tdots} \subseteq \mathcal{X}$ by concatenating the finite sequences $s_k$ in ascending order of $k$. Define a sequence of functions $g_n\colon \mathcal{Y} \rightarrow \mathbb{R}$ as
    \begin{equation}
        \forall n \in \mathbb{N}, \quad g_n(y) \triangleq \min_{i \in [n]} \frac{\diff K}{\diff \mu}(y \mid \tilde{x}_i) . \label{eq:variational-characterization-gn}
    \end{equation}
    By construction, it holds that
    \begin{equation}
        \lim_{n \rightarrow \infty} \int_{\mathcal{Y}} g_n \, d \mu = \circled{1} . \label{eq:variational-characterization-gn-integral}
    \end{equation}
    Moreover, each $g_n$ is non-negative by the non-negativity of the Radon-Nikodym derivative, and $\ibrc{g_n}_{n \in \mathbb{N}}$ is a non-increasing sequence since each successive $g_n$ is the minimum over a larger set of $i$. Hence, $\ibrc{g_n}_{n \in \mathbb{N}}$ converges $\mu$-a.e. to a measurable function $g\colon \mathcal{Y} \rightarrow \mathbb{R}$, i.e.,
    \begin{equation}
        \lim_{n \rightarrow \infty} g_n = g \quad \text{($\mu$-a.e.)}. \label{eq:variational-characterization-g}
    \end{equation}
    The limiting function $g$ satisfies
    \begin{equation}
        \int_{\mathcal{Y}} g \diff \mu \stackclap{(a)}{=} \lim_{n \rightarrow \infty} \int_{\mathcal{Y}} g_n \diff \mu \stackclap{(b)}{=} \circled{1}  , \label{eq:variational-characterization-g-integral}
    \end{equation}
    where (a) holds by the dominated convergence theorem because $g_n \leq g_1$ for all $n \in \mathbb{N}$ and   \begin{math}
        \int_{\mathcal{Y}} g_1 \diff \mu = K(\mathcal{Y} \mid \tilde{x}_1) = 1 < \infty
    \end{math},
    and (b) holds by \cref{eq:variational-characterization-gn-integral}.
    
    Observe that for every $x \in \mathcal{X}$, we have
    \begin{equation}
        g \leq \frac{\diff K}{\diff \mu}(\lwildcard \mid x) \quad \text{($\mu$-a.e.).}
    \end{equation}
    To see this, suppose the contrary: assume that $g > \frac{\idiff K}{\idiff \mu}(\lwildcard \mid x^*)$ for some $x^* \in \mathcal{X}$, on some set $A^* \in \mathcal{F}_\mathcal{Y}$ with positive measure $\mu(A^*) > 0$. Then, for any arbitrary $\epsilon > 0$, it follows that   \begin{align}
        \circled{1} &\stackclap{(a)}{=} \int_{A^*} g \diff \mu + \int_{A^{*\complement}} g \diff \mu \\&\stackclap{(b)}{>} \left.\underbracket{\int_{A^*} \frac{\diff K}{\diff \mu}(\lwildcard \mid x^*) \diff \mu + \int_{A^{*\complement}} g \diff \mu}\right._{\mathrlap{\circled{2}}}\\&\stackclap{(c)}{=} \int_{A^*} \frac{\diff K}{\diff \mu}(\lwildcard \mid x^*) \diff \mu + \int_{A^{*\complement}} \!\prn{\lim_{n \rightarrow \infty} \min_{i \in [n]} \frac{\diff K}{\diff \mu}(\lwildcard \mid \tilde{x}_i)}\! \diff \mu \\
        &\stackclap{(d)}{\geq} \int_{\mathcal{Y}} \prn{\lim_{n \rightarrow \infty} \min_{i \in [n]} \frac{\diff K}{\diff \mu}(\lwildcard \mid x^*_i)} \diff \mu \\&\stackclap{(e)}{=} \lim_{n \rightarrow \infty} \int_{\mathcal{Y}} \prn{\min_{i \in [n]} \frac{\diff K}{\diff \mu}(\lwildcard \mid x^*_i)} \diff \mu \\&\stackclap{(f)}{\geq} \int_{\mathcal{Y}} \prn{\min_{i \in [n^*]} \frac{\diff K}{\diff \mu}(\lwildcard \mid x^*_i)} \diff \mu - \epsilon \\&\stackclap{(g)}{\geq} \circled{1} - \epsilon
    \end{align}
    and we obtain the contradiction $\circled{1} > \circled{2} \geq \circled{1}$, where (a) holds by \cref{eq:variational-characterization-g-integral} and because $\mathcal{Y} = A^* \cup A^{*\complement}$, (b) holds by the supposition, (c) holds by \cref{eq:variational-characterization-gn,eq:variational-characterization-g}, (d) holds for the sequence $x^*_i$ given by $x^*_1 = x^*$ and $x^*_i = \tilde{x}_{i-1}$ for all $i \geq 2$, (e) holds by the dominated convergence theorem, (f) holds for some $n^*$ (possibly depending on $\epsilon$) by definition of limit, and (g) holds by lower-bounding the value for the specific sequence $\ibrc{x^*_1, \tdots, x^*_{n^*}}$ with the infimum over all finite sequences.

    Similarly, observe that any function $h\colon \mathcal{Y} \rightarrow \mathbb{R}$ where $h \leq \frac{\idiff K}{\idiff \mu}(\lwildcard \mid x)$ $\mu$-a.e. for every $x \in \mathcal{X}$ satisfies $h \leq g$ $\mu$-a.e. To see this, suppose the contrary: assume that $h \leq \frac{\idiff K}{\idiff \mu}(\lwildcard \mid x)$ $\mu$-a.e. for every $x \in \mathcal{X}$, and $h > g$ on some set $A^* \in \mathcal{F}_\mathcal{Y}$ with positive measure $\mu(A^*) > 0$. Then, for any arbitrary $\epsilon > 0$, it follows that   \begin{align}
        \circled{1} &\stackclap{(a)}{=} \int_{A^*} g \diff \mu + \int_{A^{*\complement}} g \diff \mu \\&\stackclap{(b)}{<} \left. \underbracket{\int_{A^*} h \diff \mu + \int_{A^{*\complement}} g \diff \mu} \right._{\mathrlap{\circled{3}}} \\ &\stackclap{(c)}{\leq} \int_{A^*} \prn{\min_{i \in [n^*]} \frac{\diff K}{\diff \mu}(\lwildcard \mid \tilde{x}_i)} \diff \mu + \int_{A^{*\complement}} g \diff \mu \\
        &\stackclap{(d)}{\leq} \lim_{n \rightarrow \infty} \int_{A^*} \prn{\min_{i \in [n]} \frac{\diff K}{\diff \mu}(\lwildcard \mid \tilde{x}_i)} \diff \mu + \epsilon + \int_{A^{*\complement}} g \diff \mu \\&\stackclap{(e)}{=} \int_{A^*} \prn{\lim_{n \rightarrow \infty} \min_{i \in [n]} \frac{\diff K}{\diff \mu}(\lwildcard \mid \tilde{x}_i)} \diff \mu + \epsilon + \int_{A^{*\complement}} g \diff \mu \\&\stackclap{(f)}{=} \int_{A^*} g \diff \mu + \epsilon + \int_{A^{*\complement}} g \diff \mu \\&\stackclap{(g)}{=} \circled{1} + \epsilon
    \end{align}
    and we obtain the contradiction $\circled{1} < \circled{3} \leq \circled{1}$, where (a) holds by \cref{eq:variational-characterization-g-integral} and because $\mathcal{Y} = A^* \cup A^{*\complement}$, (b) and (c) hold by the supposition, (c) holds for \emph{any} finite $n^* \in \mathbb{N}$ because the finite union of measure-zero sets has measure zero, there always exists some $n^*$ (possibly depending on $\epsilon$) to make (d) hold by definition of limit, (e) holds by the dominated convergence theorem, (f) holds by \cref{eq:variational-characterization-gn,eq:variational-characterization-g}, and (g) holds by \cref{eq:variational-characterization-g-integral} and because $\mathcal{Y} = A^* \cup A^{*\complement}$.

    Hence, by definition of lattice infimum (\cref{definition:lattice-infimum}), $g = \ilatinf_{x \in \mathcal{X}} \frac{\diff K}{\diff \mu}(\lwildcard \mid x)$. Proceeding from \cref{eq:variational-characterization-step-one}, we have   \begin{align}
        \tau(K) &= \int_{\mathcal{Y}} g \diff \mu \\&\stackrel{(a)}{=} \circled{1} \\&\stackrel{(b)}{=} \inf_{n \in \mathbb{N}} \inf_{x_1, \dots, x_n \in \mathcal{X}} \inf_{\substack{\text{$n$-partition of $\mathcal{Y}$} \\ A_1, \dots, A_n}} \sum_{i=1}^n K(A_i \mid x_i)
    \end{align}
    as desired, where (a) holds by \cref{eq:variational-characterization-g-integral} and (b) holds by \cref{eq:variational-characterization-step-two,proposition:integral-characterization-of-infimum}.
\end{proof}

\subsection{Proofs of Doeblin Curve Properties} \label{subsection:proofs-of-doeblin-curve-properties}

In this subsection, we first prove \cref{proposition:properties-of-doeblin-curves}. We begin by establishing two technical lemmata used in the proofs of \cref{proposition:properties-of-doeblin-curves} and subsequent results.

\begin{lemma}[Doeblin Coefficients Under Affine Transformation] \label{lemma:doeblin-coefficients-under-affine-transformation}
    Let $W\colon \mathcal{U} \times \mathcal{F}_\mathcal{X} \rightarrow [0, 1]$ and $K\colon \mathcal{X} \times \mathcal{F}_\mathcal{Y} \rightarrow [0, 1]$ be Markov kernels. For any probability measures $\pi, \mu\colon \mathcal{F}_\mathcal{X} \rightarrow [0, 1]$ and scalars $\lambda, \alpha \geq 0$ such that $W - \alpha \pi \geq 0$ and $1 - \lambda \bar{\alpha} \geq 0$ (where we define $\bar{\alpha} \triangleq 1 - \alpha$ for convenience), consider the Markov kernel $\hat{W}\colon \mathcal{U} \times \mathcal{F}_\mathcal{X} \rightarrow [0, 1]$ given by   \begin{math}
        \hat{W} = \lambda \prn{W - \alpha \pi} + \prn{1 - \lambda \bar{\alpha}} \mu
    \end{math}.
    Then, the complementary Doeblin coefficients of $\hat{W}$ and $\hat{W} K$ are   $\rho(\hat{W}) = \lambda \, \rho(W)$ and $\rho(\hat{W} K) = \lambda \, \rho(WK)$.
\end{lemma}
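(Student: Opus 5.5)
The plan is to work directly with the greatest common component characterization \cref{eq:doeblin-coefficient-gcc}, $\tau(V) = \ibigwedge_{u} V(\mathcal{X}\mid u)$, and to show that the map $V \mapsto \ibigwedge_u V(\lwildcard \mid u)$ behaves affinely under adding a fixed measure and scaling. Set $V \triangleq W - \alpha\pi$. This is a nonnegative kernel of total mass $\bar{\alpha}$ at every $u$, so $\hat{W}$ is a Markov kernel. The first step is two elementary identities for greatest common components of a family $\{\nu_u\}$ of measures:
\begin{equation}
    \bigwedge_u (\nu_u + \eta) = \Bigprn{\bigwedge_u \nu_u} + \eta, \qquad \bigwedge_u \lambda \nu_u = \lambda \bigwedge_u \nu_u ,
\end{equation}
valid for any fixed finite measure $\eta$ and any $\lambda \geq 0$. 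The second identity is immediate from \cref{eq:greatest-common-component} by rescaling candidate measures $\nu$ (the case $\lambda = 0$ is trivial). For the first, the inclusion ``$\geq$'' is clear because $\ibigwedge_u \nu_u + \eta$ is a common lower bound. For ``$\leq$'', given $\nu \leq \nu_u + \eta$ for all $u$, I will show that the measure $(\nu - \eta)^+$ (the positive part of the signed measure, via its Hahn--Jordan decomposition) satisfies $(\nu-\eta)^+ \leq \nu_u$ for every $u$. Then $\nu \leq (\nu - \eta)^+ + \eta \leq \ibigwedge_u\nu_u + \eta$. This is the main technical point. To verify $(\nu-\eta)^+(A) = (\nu-\eta)(A\cap P) \leq \nu_u(A\cap P) \leq \nu_u(A)$ for a Hahn positive set $P$, I only use $\nu \leq \nu_u+\eta$ setwise. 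So the step reduces to a routine check.

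Applying these identities gives $\ibigwedge_u \hat{W}(\lwildcard\mid u) = \lambda \ibigwedge_u V(\lwildcard\mid u) + (1-\lambda\bar\alpha)\mu$. Likewise, $W = V + \alpha\pi$ yields $\ibigwedge_u V(\lwildcard\mid u) = \ibigwedge_u W(\lwildcard\mid u) - \alpha\pi$, where subtracting is legitimate because the first identity with $\eta = \alpha\pi$ reads $\ibigwedge W = \ibigwedge V + \alpha\pi$. Evaluating total masses, $\tau(\hat{W}) = \lambda(\tau(W) - \alpha) + 1 - \lambda\bar\alpha = 1 - \lambda(1-\tau(W))$, i.e., $\rho(\hat W) = \lambda\rho(W)$.

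For the composition, linearity of kernel composition \cref{eq:kernel-composition} in the first kernel gives
\begin{equation}
    \hat{W}K = \lambda(WK - \alpha\,\pi K) + (1-\lambda\bar\alpha)\,\mu K ,
\end{equation}
where $\pi K$ and $\mu K$ are probability measures on $\mathcal{Y}$ and $WK - \alpha\pi K = VK \geq 0$. This is exactly the same affine form with $(W,\pi,\mu)$ replaced by $(WK, \pi K, \mu K)$. Hence the first part of the argument applies verbatim on $\mathcal{Y}$ and yields $\rho(\hat W K) = \lambda\rho(WK)$. No absolute continuity is needed, since everything is phrased at the level of measures rather than densities.
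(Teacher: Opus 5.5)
Your proposal is correct and follows essentially the same route as the paper. The paper also computes $\tau(\hat W)$ via the greatest common component characterization \cref{eq:doeblin-coefficient-gcc} and an affine identity for greatest common components (\cref{lemma:greatest-common-component-under-affine-transformation}, proved with the same Hahn--Jordan positive-part argument you sketch), and it handles $\hat W K$ by linearity of composition with $(\pi K,\mu K)$ in place of $(\pi,\mu)$. The only cosmetic difference is that the paper's identity allows a signed shift directly, whereas you restrict to nonnegative shifts and recover the subtraction of $\alpha\pi$ by applying the identity to $W = V + \alpha\pi$; this works equally well.
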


\begin{proof}
    Fix Markov kernels $W$ and $K$, probability measures $\pi$ and $\mu$, and scalars $\lambda$ and $\alpha$ satisfying the conditions in \cref{lemma:doeblin-coefficients-under-affine-transformation}. The complementary Doeblin coefficient of $\hat{W}$ is   \begin{align}
        &\phantom{{}={}}\rho(\hat{W})
        \\&\stackclap{(a)}{=} 1 - \bigwedge_{u \in \mathcal{U}} \hat{W}(\mathcal{X} \mid u) \\&\stackclap{(b)}{=} 1 - \prn{\lambda \prn{\bigwedge_{u \in \mathcal{U}} W(\mathcal{X} \mid u) - \alpha \pi(\mathcal{X})} + \prn{1 - \lambda \bar{\alpha}} \mu(\mathcal{X})} \\
        &\stackclap{(c)}{=} 1 - \bigprn{\lambda \prn{\tau(W) - \alpha \pi(\mathcal{X})} + \prn{1 - \lambda \bar{\alpha}} \mu(\mathcal{X})} \\&\stackclap{(d)}{=} 1 - \bigprn{\lambda \prn{\tau(W) - \alpha} + \prn{1 - \lambda \bar{\alpha}}} \\&= \lambda \, \rho(W)
    \end{align}
    as desired, where (a) and (c) hold by the greatest common component characterization of Doeblin coefficient \cref{eq:doeblin-coefficient-gcc}, (b) holds by \cref{lemma:greatest-common-component-under-affine-transformation}, and (d) holds because $\pi$ and $\mu$ are probability measures. The composition of $\hat{W}$ with $K$ is   \begin{multline}
        \forall u \in \mathcal{U}, \\\hat{W} K(\lwildcard \mid u) = \lambda \bigprn{WK(\lwildcard \mid u) - \alpha \pi K(\wildcard)} + \prn{1 - \lambda \bar{\alpha}} \mu K(\wildcard) , \label{eq:doeblin-coefficients-under-affine-transformation-wk}
    \end{multline}
    so the complementary Doeblin coefficient of $\hat{W} K$ is   \begin{align}
        \rho(\hat{W} K)
        &\stackclap{(a)}{=} 1 - \bigwedge_{u \in \mathcal{U}} \hat{W} K(\mathcal{Y} \mid u) \\&\stackclap{(b)}{=} 1 - \Biggl( \lambda \prn{\bigwedge_{u \in \mathcal{U}} WK(\mathcal{Y} \mid u) - \alpha \pi K(\mathcal{Y})} \\&\phantom{\stackclap{(b)}{=} 1 - \Biggl(}+ \prn{1 - \lambda \bar{\alpha}} \mu K(\mathcal{Y}) \Biggr) \\
        &\stackclap{(c)}{=} 1 - \bigprn{\lambda \prn{\tau(WK) - \alpha \pi K(\mathcal{Y})} + \prn{1 - \lambda \bar{\alpha}} \mu K(\mathcal{Y})} \\&\stackclap{(d)}{=} 1 - \bigprn{\lambda \prn{\tau(WK) - \alpha} + \prn{1 - \lambda \bar{\alpha}}} \\&= \lambda \, \rho(WK)
    \end{align}
    as desired, where (a) and (c) hold by the greatest common component characterization of Doeblin coefficient \cref{eq:doeblin-coefficient-gcc}, (b) holds by \cref{lemma:greatest-common-component-under-affine-transformation,eq:doeblin-coefficients-under-affine-transformation-wk}, and (d) holds because $\pi K$ and $\mu K$ are probability measures.
\end{proof}

\begin{lemma}[Properties of Identity Kernel] \label{lemma:properties-of-identity-kernel}
    Let $(\mathcal{U}, \mathcal{F}_\mathcal{U})$ and $(\mathcal{X}, \mathcal{F}_\mathcal{X})$ be Polish spaces with $\mathcal{U} \subseteq \mathcal{X}$. Let $K\colon \mathcal{X} \times \mathcal{F}_\mathcal{Y} \rightarrow [0, 1]$ be a Markov kernel. The identity kernel from $\mathcal{U}$ to $\mathcal{X}$, i.e., $\Xi\colon \mathcal{U} \times \mathcal{F}_\mathcal{X} \rightarrow [0, 1]$ where   \begin{math}
        \Xi(A \mid u) = \updelta_u(A)
    \end{math} for all $ u \in \mathcal{U}$ and $A \in \mathcal{F}_\mathcal{X}$,
    satisfies $\rho(\Xi) = 1$ and $\rho(\Xi K) = \rho_\mathcal{U}(K)$.
\end{lemma}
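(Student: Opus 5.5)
Both claims reduce to unwinding the definitions. The first is a statement about a family of point masses, and the second follows from the identity $\Xi K(\lwildcard \mid u) = K(\lwildcard \mid u)$.

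\emph{Claim $\rho(\Xi)=1$.} We show $\tau(\Xi)=0$. The natural tool is the greatest common component characterization \cref{eq:doeblin-coefficient-gcc}. Let $\nu$ be any measure with $\nu \leq \updelta_u$ for every $u \in \mathcal{U}$. Since all spaces are assumed to have cardinality at least $2$, we may pick distinct $u_1, u_2 \in \mathcal{U}$. Singletons are Borel in a Polish space. We have $\nu(\mathcal{X}\setminus\{u_1\}) \leq \updelta_{u_1}(\mathcal{X}\setminus\{u_1\}) = 0$ and $\nu(\{u_1\}) \leq \updelta_{u_2}(\{u_1\}) = 0$, so $\nu(\mathcal{X})=0$. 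Taking the supremum over such $\nu$ in \cref{eq:greatest-common-component} gives $\bigwedge_{u} \Xi(\mathcal{X}\mid u) = 0$. Hence $\tau(\Xi)=0$ and $\rho(\Xi)=1$.

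Alternatively, one can argue directly from \cref{eq:doeblin-coefficient}. If $\Xi \geq \alpha\pi$ with $\alpha>0$ and $\pi \in \mathscr{P}$, then $\pi(\mathcal{X}\setminus\{u\})=0$ for each $u$. Applying this at two distinct points forces $\pi(\mathcal{X})=0$, a contradiction.

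\emph{Claim $\rho(\Xi K)=\rho_\mathcal{U}(K)$.} By the composition formula \cref{eq:kernel-composition}, for every $u\in\mathcal{U}$ and $A\in\mathcal{F}_\mathcal{Y}$,
\begin{equation}
\Xi K(A\mid u)=\int_\mathcal{X} K(A\mid x)\,\updelta_u(\diff x)=K(A\mid u).
\end{equation}
Thus $\Xi K$ is exactly the restriction of $K$ to inputs in $\mathcal{U}$. Plugging this into \cref{eq:doeblin-coefficient}, the condition $\Xi K \geq \alpha\pi$ becomes $K(\lwildcard\mid u)\geq \alpha\pi$ for all $u\in\mathcal{U}$. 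This is precisely the defining condition of $\tau_\mathcal{U}(K)$, so the two suprema coincide: $\tau(\Xi K)=\tau_\mathcal{U}(K)$. Subtracting from $1$ gives the claim.

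The only delicate point is measure-theoretic. We need $\mathcal{U}$ to be a measurable subset of $\mathcal{X}$, so that the constrained coefficient is defined as stated, and we need $\Xi$ to be a genuine Markov kernel. The latter requires $u\mapsto\updelta_u(A)=\mathbbm{1}\{u\in A\}$ to be measurable on $\mathcal{U}$, which holds because $\mathcal{U}$ carries the trace Borel $\sigma$-algebra. I would record these as standing assumptions rather than treat them as obstacles. Beyond these, the proof is a direct unwinding of definitions.
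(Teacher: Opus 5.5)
Your proof is correct and is essentially the paper's argument. The paper likewise gets $\rho(\Xi)=1$ from $\bigwedge_{u}\Xi(\cdot\mid u)=0$ via the greatest common component characterization, and gets $\rho(\Xi K)=\rho_\mathcal{U}(K)$ from $\Xi K(\cdot\mid u)=K(\cdot\mid u)$; it phrases this second step through the greatest common component rather than, as you do, directly through the supremum definition. Your explicit two-point argument also shows that the standing assumption $|\mathcal{U}|\ge 2$ is genuinely needed, since a singleton $\mathcal{U}$ would give $\rho(\Xi)=0$, a point the paper leaves to ``inspection.''
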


\begin{proof}
    By inspection, the greatest common component of $\Xi$ is   \begin{math}
        \ibigwedge_{u \in \mathcal{U}} \Xi(\lwildcard \mid u) = 0
    \end{math},
    and so the complementary Doeblin coefficient of $\Xi$, by \cref{eq:doeblin-coefficient-gcc}, is   \begin{math}
        \rho(\Xi) = 1 - \ibigwedge_{u \in \mathcal{U}} \Xi(\mathcal{X} \mid u) = 1
    \end{math}
    as desired. By inspection, the composition of $\Xi$ with $K$ is   \begin{math}
        \Xi K(A \mid u) = K(A \mid u)
    \end{math} for all $u \in \mathcal{U}$ and $A \in \mathcal{F}_\mathcal{Y}$,
    and so the complementary Doeblin coefficient of $\Xi K$ is   \begin{math}
        \rho(\Xi K) = 1 - \ibigwedge_{u \in \mathcal{U}} K(\mathcal{Y} \mid u) = \rho_\mathcal{U}(K)
    \end{math}
    as desired.
\end{proof}

Now, we are ready to prove \cref{proposition:properties-of-doeblin-curves}.

\begin{proof}[Proof of \cref{proposition:properties-of-doeblin-curves}] ~\newline
    \indent \textbf{Part 1:} Fix Markov kernels $K_1$ and $K_2$ and constraint sets $\mathcal{G}_1$ and $\mathcal{G}_2$. We have   \begin{align}
        \mathrm{F}_{K_1 K_2}(t; \mathcal{G}) &\stackclap{(a)}{=} \sup \brc{\rho(WK_1 K_2)\!: \rho(W) \leq t, \, W \in \mathcal{G}} \\&\stackclap{(b)}{\leq} \sup \brc{\rho(VK_2): \rho(V) \leq \mathrm{F}_{K_1}(t; \mathcal{G}_1), V \in \mathcal{G}_2} \\&\stackclap{(c)}{=} \mathrm{F}_{K_2}(\mathrm{F}_{K_1}(t; \mathcal{G}_1); \mathcal{G}_2)
    \end{align}
    as desired, where (a) and (c) hold by definition of Doeblin curve \cref{eq:doeblin-curve}, and (b) holds because any kernel $W \in \mathcal{G}$ with $\rho(W) \leq t$ satisfies   \begin{math}
        \rho(WK_1) \leq \sup \brc{\rho(WK_1): \rho(W) \leq t, \, W \in \mathcal{G}_1} = \mathrm{F}_{K_1}(t; \mathcal{G}_1)
    \end{math}
    and $WK_1 \in \mathcal{G}_2$.

    \textbf{Part 2:} Fix a Markov kernel $K$ and a convex constraint set $\mathcal{G}$ containing the identity kernel $\Xi(\lwildcard \mid u) = \updelta_u(\wildcard)$ and a constant kernel $\bar{W}(\lwildcard \mid u) = \pi(\wildcard)$. For $t = 0$, we trivially have   \begin{math}
        \mathrm{F}_K(t; \mathcal{G}) = 0 = \rho(K) \, t
    \end{math},
    so fix $t \in (0, 1]$ for the remainder of this part. Let $\hat{W}\colon \mathcal{X} \times \mathcal{F}_\mathcal{X} \rightarrow [0, 1]$ be the Markov kernel given by   \begin{math}
        \hat{W}(\lwildcard \mid u) \triangleq t \, \Xi(\lwildcard \mid u) + \prn{1 - t} \bar{W}(\lwildcard \mid u) = t \, \Xi(\lwildcard \mid u) + \iprn{1 - t} \pi(\wildcard)
    \end{math} for all $u \in \mathcal{X}$.
    The complementary Doeblin coefficients of $\hat{W}$ and $\hat{W} K$ are
    \begin{align}
        \rho(\hat{W}) &\stackclap{(a)}{=} t \, \rho(\Xi) \stackclap{(b)}{=} t , \\
        \rho(\hat{W} K) &\stackclap{(a)}{=} t \, \rho(\Xi K) \stackclap{(b)}{=} t \, \rho_\mathcal{X}(K) = t \, \rho(K)  ,
    \end{align}
    where in each line (a) holds by \cref{lemma:doeblin-coefficients-under-affine-transformation} and (b) holds by \cref{lemma:properties-of-identity-kernel}. By convexity of $\mathcal{G}$, we have $\hat{W} \in \mathcal{G}$. Hence,   \begin{align}
        \mathrm{F}_K(t; \mathcal{G}) &\stackclap{(a)}{=} \sup \brc{\rho(WK): \rho(W) \leq t, \, W \in \mathcal{G}} \\&\geq \rho(\hat{W} K) = t \, \rho(K) ,
    \end{align}
    where (a) holds by definition of Doeblin curve \cref{eq:doeblin-curve}. Combining this lower bound with \cref{eq:doeblin-curve-submultiplicativity-bound} completes the proof.

    \textbf{Part 3:} Fix a Markov kernel $K$, a convex constraint set $\mathcal{G}$ containing a constant kernel $\bar{W}(\lwildcard \mid u) = \pi(\wildcard)$, and scalars $\lambda \in [0, 1]$ and $t \in [0, 1]$. Fix an arbitrary $\epsilon > 0$. By definition of Doeblin curve as a supremum \cref{eq:doeblin-curve}, there exists a Markov kernel $W^* \in \mathcal{G}$ such that $\rho(W^*) \leq t$ and $\rho(W^* K) \geq \mathrm{F}_K(t; \mathcal{G}) - \epsilon$. Define another Markov kernel $\hat{W}$ as   \begin{math}
        \hat{W}(\lwildcard \mid u) \triangleq \lambda W^*(\lwildcard \mid u) + \prn{1 - \lambda} \bar{W}(\lwildcard \mid u) = \lambda W^*(\lwildcard \mid u) + \prn{1 - \lambda} \pi(\wildcard)
    \end{math} for all $u \in \mathcal{U}$.
    By \cref{lemma:doeblin-coefficients-under-affine-transformation}, we have   $\rho(\hat{W}) = \lambda \, \rho(W^*) \leq \lambda t$ and $\rho(\hat{W} K) = \lambda \, \rho(W^* K) \geq \lambda \prn{\mathrm{F}_K(t; \mathcal{G}) - \epsilon}$.
    By convexity of $\mathcal{G}$, we have $\hat{W} \in \mathcal{G}$. Hence,   \begin{align}
        \mathrm{F}_K(\lambda t; \mathcal{G}) &\stackclap{(a)}{=} \sup \brc{\rho(WK): \rho(W) \leq \lambda t, \, W \in \mathcal{G}} \\&\geq \rho(\hat{W} K) \\&\geq \lambda \prn{\mathrm{F}_K(t; \mathcal{G}) - \epsilon} , \label{eq:properties-of-doeblin-curves-super-homogeneity}
    \end{align}
    where (a) holds by definition of Doeblin curve \cref{eq:doeblin-curve}. Since $\epsilon > 0$ was arbitrary, we have $\mathrm{F}_K(\lambda t; \mathcal{G}) \geq \lambda \mathrm{F}_K(t; \mathcal{G})$ as desired.
    
    Finally, to show the equivalent characterization that $t \mapsto \mathrm{F}_K(t; \mathcal{G}) / t$ is non-increasing, fix $s, t \in (0, 1]$ such that $s < t$ and observe that
    \begin{equation}
        \frac{\mathrm{F}_K(s; \mathcal{G})}{s} \stackclap{(a)}{\geq} \prn{\frac{s}{t}} \frac{\mathrm{F}_K(t; \mathcal{G})}{s} = \frac{\mathrm{F}_K(t; \mathcal{G})}{t} ,
    \end{equation}
    where (a) holds by applying \cref{eq:properties-of-doeblin-curves-super-homogeneity} with $\lambda \triangleq s/t < 1$.
    
    \textbf{Part 4:} By definition of Lipschitz continuity, we want to show that   \begin{math}
        \abs{\mathrm{F}_K(s; \mathcal{G}) - \mathrm{F}_K(t; \mathcal{G})} \leq \abs{s - t}
    \end{math} for all $s, t \in [0, 1]$.
    Fix $s, t \in [0, 1]$ such that $s < t$. If $s = 0$, the result trivially follows from \cref{eq:doeblin-curve-trivial-bound}. Otherwise, we have   \begin{align}
        \abs{\mathrm{F}_K(s; \mathcal{G}) - \mathrm{F}_K(t; \mathcal{G})} &\stackclap{(a)}{=} \frac{t}{s} \prn{\frac{s}{t}} \mathrm{F}_K(t; \mathcal{G}) - \mathrm{F}_K(s; \mathcal{G}) \\&\stackclap{(b)}{\leq} \frac{t}{s} \, \mathrm{F}_K(s; \mathcal{G}) - \mathrm{F}_K(s; \mathcal{G}) \\&= (t - s) \, \frac{\mathrm{F}_K(s; \mathcal{G})}{s} \\&\stackclap{(c)}{\leq} t - s \\&\stackclap{(d)}{=} \abs{s - t}
    \end{align}
    as desired, where (a) holds because $\mathrm{F}_K(s; \mathcal{G}) \leq \mathrm{F}_K(t; \mathcal{G})$ by monotonicity of Doeblin curves, (b) holds by applying Part 3 with $\lambda \triangleq s/t < 1$, (c) holds because $\mathrm{F}_K(s) \leq s$ by \cref{eq:doeblin-curve-trivial-bound}, and (d) holds because $s < t$.
\end{proof}

Next, we prove \cref{proposition:joint-range-of-2-by-n-discrete-channels}.

\begin{proof}[Proof of \cref{proposition:joint-range-of-2-by-n-discrete-channels}]
    Fix row stochastic matrices $\mathbf{K} \in \mathbb{R}_\mathsf{sto}^{2 \times n}$ and $\mathbf{W} \in \mathbb{R}_\mathsf{sto}^{m \times 2}$. By definition of joint range, it suffices to show that $\rho(\mathbf{W} \mathbf{K}) = \rho(\mathbf{W}) \rho(\mathbf{K})$. We have   \begin{align}
        &\rho(\mathbf{W} \mathbf{K}) \stackclap{(a)}{=} 1 - \sum_{j=1}^n \min_{i \in [m]} \bkt{\mathbf{W} \mathbf{K}}_{i,j} \\&\stackclap{(b)}{=} 1 - \sum_{j=1}^n \min_{i \in [m]} \mbrc{\bkt{\mathbf{W}}_{i,1} \bkt{\mathbf{K}}_{1,j} + \prn{1 - \bkt{\mathbf{W}}_{i,1}} \bkt{\mathbf{K}}_{2,j}} \\
        &= 1 - \sum_{j=1}^n \min_{i \in [m]} \mbrc{\bkt{\mathbf{W}}_{i,1} \prn{\bkt{\mathbf{K}}_{1,j} - \bkt{\mathbf{K}}_{2,j}}} - \sum_{j=1}^n \bkt{\mathbf{K}}_{2,j} \\&\stackclap{(c)}{=} \sum_{j=1}^n \max_{i \in [m]} \mbrc{\bkt{\mathbf{W}}_{i,1} \prn{\bkt{\mathbf{K}}_{2,j} - \bkt{\mathbf{K}}_{1,j}}}  , \label{eq:joint-range-step-one}
    \end{align}
    where (a) holds by definition of complementary Doeblin coefficient \cref{eq:doeblin-coefficient-matrix}, (b) holds because $\mathbf{W}$ is row stochastic, and (c) holds because $\mathbf{K}$ is row stochastic. Next, let $\mathcal{S} = \ibrc{j \in [n]: [\mathbf{K}]_{2,j} \geq [\mathbf{K}]_{1,j}}$ akin to \cite[Proposition 4.2]{LevinPeresWilmer2009}. Proceeding from \cref{eq:joint-range-step-one}, we have   \begin{align}
        \rho(\mathbf{W} \mathbf{K}) =& \sum_{j \in \mathcal{S}} \prn{\bkt{\mathbf{K}}_{2,j} - \bkt{\mathbf{K}}_{1,j}} \max_{i \in [m]} \bkt{\mathbf{W}}_{i,1} \\&+ \sum_{j \in \mathcal{S}^\complement} \prn{\bkt{\mathbf{K}}_{2,j} - \bkt{\mathbf{K}}_{1,j}} \min_{i \in [m]} \bkt{\mathbf{W}}_{i,1} . \label{eq:joint-range-step-two}
    \end{align}
    Since $\mathbf{K}$ is row stochastic, it holds that
    \begin{equation}
        \sum_{j=1}^n \bkt{\mathbf{K}}_{1,j} \stackclap{(a)}{=} \sum_{j=1}^n \bkt{\mathbf{K}}_{2,j} = 1 .
    \end{equation}
    Rearranging (a), we obtain
    \begin{equation}
        \sum_{j \in \mathcal{S}} \prn{\bkt{\mathbf{K}}_{2,j} - \bkt{\mathbf{K}}_{1,j}} = \sum_{j \in \mathcal{S}^\complement} \prn{\bkt{\mathbf{K}}_{1,j} - \bkt{\mathbf{K}}_{2,j}} . \label{eq:joint-range-rearrangement}
    \end{equation}
    Combining \cref{eq:joint-range-step-two,eq:joint-range-rearrangement},   \begin{multline}
        \rho(\mathbf{W} \mathbf{K}) \\= \underbracket{\Biggprn{\smash{\sum_{j \in \mathcal{S}} \prn{\bkt{\mathbf{K}}_{2,j} - \bkt{\mathbf{K}}_{1,j}}}\!\!}}_{\circled{1}}\underbracket{\Biggprn{\max_{i \in [m]} \bkt{\mathbf{W}}_{i,1} - \min_{i \in [m]} \bkt{\mathbf{W}}_{i,1}}}_{\circled{2}} \!\!.
    \end{multline}
    Next, we evaluate $\circled{1}$. We have   \begin{align}
        \circled{1} &\stackclap{(a)}{=} \frac{1}{2} \Biggprn{\sum_{j \in \mathcal{S}} \prn{\bkt{\mathbf{K}}_{2,j} - \bkt{\mathbf{K}}_{1,j}} + \sum_{j \in \mathcal{S}^\complement} \prn{\bkt{\mathbf{K}}_{1,j} - \bkt{\mathbf{K}}_{2,j}}} \\&\stackclap{(b)}{=} \frac{1}{2} \sum_{j=1}^n \mprn{\max_{\ell \in [2]} \bkt{\mathbf{K}}_{\ell,j} - \min_{\ell \in [2]} \bkt{\mathbf{K}}_{\ell,j}} \\
        &= \frac{1}{2} \sum_{j=1}^n \Biggprn{\sum_{\ell=1}^2 \bkt{\mathbf{K}}_{\ell,j} - 2 \min_{\ell \in [2]} \bkt{\mathbf{K}}_{\ell,j}} \\&\stackclap{(c)}{=} 1 - \sum_{j=1}^n \min_{\ell \in [2]} \bkt{\mathbf{K}}_{\ell,j} \stackclap{(d)}{=} \rho(\mathbf{K}) ,
    \end{align}
    where (a) holds by \cref{eq:joint-range-rearrangement}, (b) holds by definition of $\mathcal{S}$, (c) holds because $\mathbf{K}$ is row stochastic, and (d) holds by definition of complementary Doeblin coefficient \cref{eq:doeblin-coefficient-matrix}. Next, we evaluate $\circled{2}$. We have   \begin{align}
        \circled{2} &\stackclap{(a)}{=} \prn{1 - \min_{i \in [m]} \bkt{\mathbf{W}}_{i,2}} - \min_{i \in [m]} \bkt{\mathbf{W}}_{i,1} \\&= 1 - \sum_{\ell=1}^2 \min_{i \in [m]} \bkt{\mathbf{W}}_{i,\ell} \stackclap{(b)}{=} \rho(\mathbf{W}) ,
    \end{align}
    where (a) holds because $\mathbf{W}$ is row stochastic and (b) holds by definition of complementary Doeblin coefficient \cref{eq:doeblin-coefficient-matrix}. Hence, $\rho(\mathbf{W} \mathbf{K}) = \rho(\mathbf{W}) \rho(\mathbf{K})$ as desired.
\end{proof}

Next, we prove \cref{proposition:homogeneity-of-power-constrained-doeblin-curves}. We begin by establishing an additional technical lemma used in the proof of \cref{proposition:homogeneity-of-power-constrained-doeblin-curves}.

\begin{lemma}[Power Levels Under Affine Transformation] \label{lemma:power-levels-under-affine-transformation}
    Let $W\colon \mathcal{U} \times \mathcal{F}_\mathcal{X} \rightarrow [0, 1]$ be a Markov kernel. Let $\mathbf{0} \in \mathcal{X}$ denote the element in the Banach space $\mathcal{X}$ such that $\inorm{\mathbf{0}} = 0$. For any probability measure $\pi\colon \mathcal{F}_\mathcal{X} \rightarrow [0, 1]$ and scalars $\lambda, \alpha \geq 0$ such that $W - \alpha \pi \geq 0$ and $1 - \lambda \bar{\alpha} \geq 0$ (where we define $\bar{\alpha} \triangleq 1 - \alpha$ for convenience), the Markov kernel
    \begin{equation}
        \hat{W} = \lambda \prn{W - \alpha \pi} + \prn{1 - \lambda \bar{\alpha}} \updelta_\mathbf{0}
    \end{equation}
    has power levels   \begin{align}
        \itnorm{\hat{W}}_\mathsf{UA} \leq \lambda \tnorm{W}_\mathsf{UA}  , \label{eq:power-levels-under-affine-transformation-ua} \\
        \itnorm{\hat{W}}_\mathsf{AE} \leq \lambda \tnorm{W}_\mathsf{AE}  , \label{eq:power-levels-under-affine-transformation-ae}
    \end{align}
    where \cref{eq:power-levels-under-affine-transformation-ae} holds if average extremal power is well-defined for $W$. Furthermore, if $\alpha = 0$, the bounds in \cref{eq:power-levels-under-affine-transformation-ua,eq:power-levels-under-affine-transformation-ae} hold with equality.
\end{lemma}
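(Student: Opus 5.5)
The plan is to handle the two power notions separately. For each, we exploit that $\hat{W}(\lwildcard \mid u)$ is a nonnegative combination of $W(\lwildcard \mid u)$, the fixed measure $\pi$ (with negative weight $-\lambda\alpha$), and the point mass $\updelta_\mathbf{0}$. The point mass contributes nothing because $M(\inorm{\mathbf{0}}) = M(0) = 0$.

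\textbf{Uniform average power.} Fix $u \in \mathcal{U}$. By linearity of the integral in the measure,
\begin{equation}
\int_\mathcal{X} M(\inorm{x}) \, \hat{W}(\diff x \mid u) = \lambda \int_\mathcal{X} M(\inorm{x}) \, W(\diff x \mid u) - \lambda\alpha \int_\mathcal{X} M(\inorm{x}) \, \pi(\diff x) + (1-\lambda\bar{\alpha}) M(0).
\end{equation}
The last term vanishes. The middle term is nonpositive since $M \geq 0$ and $\lambda, \alpha \geq 0$. To justify splitting the integral when $\int M \diff \pi$ might be infinite, note that $W(\lwildcard \mid u) - \alpha\pi$ is a nonnegative measure. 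We therefore integrate $M$ against $\lambda(W(\lwildcard \mid u) - \alpha\pi)$ directly and bound it by $\lambda \int M \, W(\diff x \mid u)$, since it is dominated by $\lambda W(\lwildcard \mid u)$. Taking $\sup_u$ yields \cref{eq:power-levels-under-affine-transformation-ua}. When $\alpha = 0$ the middle term is absent and every step is an equality.

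\textbf{Average extremal power.} This is the main obstacle, because $\itnorm{\hat{W}}_\mathsf{AE}$ is defined through the maximal coupling of $\hat{W}$, which must be related to that of $W$. First we identify the greatest common component. By \cref{lemma:greatest-common-component-under-affine-transformation},
\begin{equation}
\bigwedge_u \hat{W}(\lwildcard \mid u) = \lambda\Bigl(\bigwedge_u W(\lwildcard \mid u) - \alpha\pi\Bigr) + (1-\lambda\bar{\alpha})\updelta_\mathbf{0}.
\end{equation}
Writing $\nu \triangleq \bigwedge_u W(\lwildcard \mid u)$, the residual measures $\hat{W}(\lwildcard \mid u) - \bigwedge_{u'} \hat{W}(\lwildcard \mid u')$ equal $\lambda\bigl(W(\lwildcard \mid u) - \nu\bigr)$. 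That is, $\lambda$ times the residuals of $W$.

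Next we describe the maximal coupling \cref{eq:maximal-coupling} for $\hat{W}$. With probability $\lambda(1 - \tau(W))$, all $\hat{X}_u$ equal independent draws $\tilde{X}_u$ from the normalized residuals of $W$; these have the same law as the non-common branch in $W$'s maximal coupling. Otherwise, all $\hat{X}_u$ equal a single common draw from the normalized common component. Hence
\begin{equation}
\E{\sup_u M(\inorm{\hat{X}_u})} = \lambda \, \E{\sup_u M(\inorm{\tilde{X}_u})}(1-\tau(W)) + \int M(\inorm{x}) \, \bigl(\lambda(\nu - \alpha\pi) + (1-\lambda\bar{\alpha})\updelta_\mathbf{0}\bigr)(\diff x).
\end{equation}
In the second term, the $\updelta_\mathbf{0}$ part vanishes and the remainder is at most $\lambda\int M \diff\nu$. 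Therefore
\begin{equation}
\itnorm{\hat{W}}_\mathsf{AE} \leq \lambda\Bigl[(1-\tau(W))\,\E{\sup_u M(\inorm{\tilde{X}_u})} + \int M \diff\nu\Bigr] = \lambda\,\itnorm{W}_\mathsf{AE},
\end{equation}
where the final identity is exactly the same decomposition applied to $W$'s own maximal coupling. Equality holds when $\alpha = 0$.

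Two points need care. First, the expectation of the supremum over the residual branch depends only on the joint law of $\{\tilde{X}_u\}$, which is literally identical for $W$ and $\hat{W}$ (independent draws from the same normalized residuals). This is also what transfers the measurability of $\sup_u M(\inorm{\hat{X}_u})$ from $W$ to $\hat{W}$. Second, the degenerate cases must be handled: $\tau(W) = 1$, $\lambda(1-\tau(W)) = 0$, and the case where the common mass of $\hat{W}$ is zero. In each, the corresponding branch simply disappears and the inequality still holds.
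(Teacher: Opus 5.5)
Your proposal is correct and follows the paper's proof essentially step for step. The uniform average power uses linearity in the measure together with $M(0)=0$. The average extremal power splits the maximal coupling into a common branch and a residual branch. The common branch is bounded via \cref{lemma:greatest-common-component-under-affine-transformation}. The residual branch has a normalized product law identical for $W$ and $\hat{W}$, while its probability scales from $\rho(W)$ to $\lambda\rho(W)$. Your explicit handling of the degenerate cases and of measurability transfer is slightly more careful than the paper, which leaves these implicit.
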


\begin{proof}
    Fix a Markov kernel $W$, a probability measure $\pi$, and scalars $\lambda$ and $\alpha$ satisfying the conditions in \cref{lemma:power-levels-under-affine-transformation}.
    
    Firstly, the uniform average power of $\hat{W}$ is   \begin{align}
        \itnorm{\hat{W}}_\mathsf{UA} &\stackclap{(a)}{=} \lambda \sup_{u \in \mathcal{U}} \int_\mathcal{X} M(\norm{x}) \,\Bigprn{W(\diff x \mid u) - \alpha \pi(\diff x)} \\&\phantom{{}={}}+ \prn{1 - \lambda \bar{\alpha}} \int_\mathcal{X} M(\norm{x}) \, \updelta_\mathbf{0}(\diff x) \\
        &\stackclap{(b)}{=} \lambda \sup_{u \in \mathcal{U}} \int_\mathcal{X} M(\norm{x}) \,\Bigprn{W(\diff x \mid u) - \alpha \pi(\diff x)} \\&\leq \lambda \sup_{u \in \mathcal{U}} \int_\mathcal{X} M(\norm{x}) \, W(\diff x \mid u) \stackclap{(c)}{=} \lambda \tnorm{W}_\mathsf{UA}\label{eq:power-levels-under-affine-transformation-bound-ua} 
    \end{align}
    as desired, where (a) holds by definition of uniform average power \cref{eq:uniform-average-power} and linearity of Lebesgue integration with respect to the measure, (b) holds because $M(\inorm{\mathbf{0}}) = M(0) = 0$, and (c) holds by definition of uniform average power \cref{eq:uniform-average-power}.

    Secondly, to compute the average extremal power of $W$, we consider the maximal coupling of random variables $\ibrc{X_u}_{u \in \mathcal{U}}$ with $X_u \sim W(\lwildcard \mid u)$ for each $u \in \mathcal{U}$, defined in \cref{eq:maximal-coupling} as
    \begin{equation}
        \forall u \in \mathcal{U}, \quad X_u \triangleq \begin{cases}
            X^*  , & \text{if $I = 1$}  , \\
            \tilde{X}_u  , & \text{if $I = 0$}  ,
        \end{cases} \label{eq:power-levels-under-affine-transformation-maximal-coupling-w}
    \end{equation}
    where the random variables $I$, $X^*$, and $\ibrc{\tilde{X}_u}_{u \in \mathcal{U}}$ are sampled independently from the probability measures
    \begin{align}
        I &\sim \mathsf{Bernoulli}(\tau(W))  , \label{eq:power-levels-under-affine-transformation-maximal-coupling-w-i} \\
        X^* &\sim \frac{\bigwedge_{u \in \mathcal{U}} W(\lwildcard \mid u)}{\tau(W)}  , \label{eq:power-levels-under-affine-transformation-maximal-coupling-w-x-star} \\
        \forall u \in \mathcal{U}, \quad \tilde{X}_u &\sim \frac{W(\lwildcard \mid u) - \bigwedge_{u' \in \mathcal{U}} W(\lwildcard \mid u')}{\rho(W)}  . \label{eq:power-levels-under-affine-transformation-maximal-coupling-w-x-tilde}
    \end{align}
    For notational convenience, let $\nu$ be the product measure
    \begin{equation}
        \nu \triangleq \bigotimes_{u \in \mathcal{U}} \frac{W(\lwildcard \mid u) - \bigwedge_{u' \in \mathcal{U}} W(\lwildcard \mid u')}{\rho(W)} .
    \end{equation}
    The average extremal power of $W$ is   \begin{align}
        \tnorm{W}_\mathsf{AE}\! &\stackclap{(a)}{=} \P{I = 1} \, \E{\sup_{u \in \mathcal{U}} M(\norm{X_u}) \given I = 1}\\
        &\phantom{{}={}}+ \P{I = 0} \, \E{\sup_{u \in \mathcal{U}} M(\norm{X_u}) \given I = 0} \\& \stackclap{(b)}{=} \tau(W) \, \E{M(\norm{X^*})} + \rho(W) \, \E{\sup_{u \in \mathcal{U}} M \bigprn{\inorm{\tilde{X}_u}}} \\
        &\stackclap{(c)}{=} \tau(W) \!\!\int_\mathcal{X} M(\norm{x^*}) \, \frac{\bigwedge_{u \in \mathcal{U}} W(\diff x^* \mid u)}{\tau(W)} \\
        &\phantom{{}={}}+ \rho(W)\!\! \int_{\mathcal{X}^\mathcal{U}} \!\prn{\sup_{u \in \mathcal{U}} M(\norm{\tilde{x}_u})\!} \diff \nu \mprn{\brc{\tilde{x}_u}_{u \in \mathcal{U}}} \\&= \int_\mathcal{X} M(\norm{x^*}) \bigwedge_{u \in \mathcal{U}} W(\diff x^* \mid u)  \\
        &\phantom{{}={}}+ \rho(W)\!\! \int_{\mathcal{X}^\mathcal{U}} \!\prn{\sup_{u \in \mathcal{U}} M(\norm{\tilde{x}_u})\!} \diff \nu \mprn{\brc{\tilde{x}_u}_{u \in \mathcal{U}}},\label{eq:power-levels-under-affine-transformation-w-ae}
    \end{align}
    where all expectations are taken with respect to the maximal coupling \cref{eq:power-levels-under-affine-transformation-maximal-coupling-w}, (a) holds by definition of average extremal power \cref{eq:average-extremal-power} and the law of total expectation, (b) holds by \cref{eq:power-levels-under-affine-transformation-maximal-coupling-w,eq:power-levels-under-affine-transformation-maximal-coupling-w-i}, and (c) holds by \cref{eq:power-levels-under-affine-transformation-maximal-coupling-w-x-star,eq:power-levels-under-affine-transformation-maximal-coupling-w-x-tilde}.
    
    Similarly, to compute the average extremal power of $\hat{W}$, we consider the maximal coupling of random variables $\ibrc{X_u}_{u \in \mathcal{U}}$ with $X_u \sim \hat{W}(\lwildcard \mid u)$ for each $u \in \mathcal{U}$, defined in \cref{eq:maximal-coupling} as
    \begin{align}
        \forall u \in \mathcal{U}, \quad X_u \triangleq \begin{cases}
            X^*  , & \text{if $I = 1$}  , \\
            \tilde{X}_u  , & \text{if $I = 0$}  ,
        \end{cases} \label{eq:power-levels-under-affine-transformation-maximal-coupling-w-hat}
    \end{align}
    where the random variables $I$, $X^*$, and $\ibrc{\tilde{X}_u}_{u \in \mathcal{U}}$ are sampled independently from the probability measures
    \begin{align}
        I &\sim \mathsf{Bernoulli} \mprn{\tau(\hat{W})}  , \label{eq:power-levels-under-affine-transformation-maximal-coupling-w-hat-i} \\
        X^* &\sim \frac{\bigwedge_{u \in \mathcal{U}} \hat{W}(\lwildcard \mid u)}{\tau(\hat{W})}  , \label{eq:power-levels-under-affine-transformation-maximal-coupling-w-hat-x-star} \\
        \forall u \in \mathcal{U}, \quad \tilde{X}_u &\sim \frac{\hat{W}(\lwildcard \mid u) - \bigwedge_{u' \in \mathcal{U}} \hat{W}(\lwildcard \mid u')}{\rho(\hat{W})}  . \label{eq:power-levels-under-affine-transformation-maximal-coupling-w-hat-x-tilde}
    \end{align}
    By \cref{lemma:doeblin-coefficients-under-affine-transformation,lemma:greatest-common-component-under-affine-transformation} respectively, we have
    \begin{align}
        \rho(\hat{W}) &= \lambda \, \rho(W) , \label{eq:power-levels-under-affine-transformation-rho-w-hat} \\
        \bigwedge_{u \in \mathcal{U}} \hat{W}(\lwildcard \mid u) &= \lambda \Bigprn{\bigwedge_{u \in \mathcal{U}}\! W(\lwildcard \mid u) - \alpha \pi(\wildcard)} + \prn{1 - \lambda \bar{\alpha}} \updelta_\mathbf{0}(\wildcard) , \label{eq:power-levels-under-affine-transformation-gcc-w-hat}
    \end{align}
    and so,   \begin{align}
        &\phantom{{}={}}\bigotimes_{u \in \mathcal{U}} \frac{\hat{W}(\lwildcard \mid u) - \bigwedge_{u' \in \mathcal{U}} \hat{W}(\lwildcard \mid u')}{\rho(\hat{W})} \\&= \bigotimes_{u \in \mathcal{U}} \frac{\lambda \prn{W(\lwildcard \mid u) - \bigwedge_{u' \in \mathcal{U}} W(\lwildcard \mid u')}}{\lambda \, \rho(W)} = \nu .
    \end{align}
    Thus, the average extremal power of $\hat{W}$ is   \begin{align}
        \itnorm{\hat{W}}_\mathsf{AE}
        &\stackclap{(a)}{=} \P{I = 1} \, \E{\sup_{u \in \mathcal{U}} M(\norm{X_u}) \given I = 1} \\
        &\phantom{{}={}}+ \P{I = 0} \, \E{\sup_{u \in \mathcal{U}} M(\norm{X_u}) \given I = 0} \\
        &\stackclap{(b)}{=} \tau(\hat{W}) \, \E{M(\norm{X^*})} + \rho(\hat{W}) \, \E{\sup_{u \in \mathcal{U}} M \mprn{\inorm{\tilde{X}_u}}} \\
        &\stackclap{(c)}{=} \underbracket{\tau(\hat{W}) \int_\mathcal{X} M(\norm{x^*}) \, \frac{\bigwedge_{u \in \mathcal{U}} \hat{W}(\diff x^* \mid u)}{\tau(\hat{W})}} _{\circled{1}}\\&\phantom{{}={}}+
        \underbracket{\rho(\hat{W}) \int_{\mathcal{X}^\mathcal{U}} \prn{\sup_{u \in \mathcal{U}} M(\norm{\tilde{x}_u})} \diff \nu \mprn{\brc{\tilde{x}_u}_{u \in \mathcal{U}}}}_{\circled{2}}, \label{eq:power-levels-under-affine-transformation-w-hat-ae}
    \end{align}
    where all expectations are taken with respect to the maximal coupling \cref{eq:power-levels-under-affine-transformation-maximal-coupling-w-hat}, (a) holds by definition of average extremal power \cref{eq:average-extremal-power} and the law of total expectation, (b) holds by \cref{eq:power-levels-under-affine-transformation-maximal-coupling-w-hat,eq:power-levels-under-affine-transformation-maximal-coupling-w-hat-i}, and (c) holds by \cref{eq:power-levels-under-affine-transformation-maximal-coupling-w-hat-x-star,eq:power-levels-under-affine-transformation-maximal-coupling-w-hat-x-tilde}. Next, we evaluate $\circled{1}$. We have   \begin{align}
        \circled{1} &\stackclap{(a)}{=} \lambda \int_\mathcal{X} M(\norm{x^*}) \,\Bigprn{\bigwedge_{u \in \mathcal{U}} W(\diff x^* \mid u) - \alpha \pi(\diff x^*)}\\
        &\phantom{{}={}}+ \prn{1 - \lambda \bar{\alpha}} \int_\mathcal{X} M(\norm{x^*}) \, \updelta_\mathbf{0}(\diff x^*) \\
        &\stackclap{(b)}{=} \lambda \int_\mathcal{X} M(\norm{x^*}) \,\Bigprn{\bigwedge_{u \in \mathcal{U}} W(\diff x^* \mid u) - \alpha \pi(\diff x^*)} \\
        &\leq \lambda \int_\mathcal{X} M(\norm{x^*}) \bigwedge_{u \in \mathcal{U}} W(\diff x^* \mid u) , \label{eq:power-levels-under-affine-transformation-bound-ae}
    \end{align}
    where (a) holds by \cref{eq:power-levels-under-affine-transformation-gcc-w-hat} and linearity of Lebesgue integration with respect to the measure, and (b) holds because $M(\inorm{\mathbf{0}}) = M(0) = 0$. Next, we evaluate $\circled{2}$. By \cref{eq:power-levels-under-affine-transformation-rho-w-hat}, we have
    \begin{align}
        \circled{2} = \lambda \, \rho(W) \int_{\mathcal{X}^\mathcal{U}} \prn{\sup_{u \in \mathcal{U}} M(\norm{\tilde{x}_u})} \diff \nu \mprn{\brc{\tilde{x}_u}_{u \in \mathcal{U}}} . \label{eq:power-levels-under-affine-transformation-term-two}
    \end{align}
    Combining \cref{eq:power-levels-under-affine-transformation-w-ae,eq:power-levels-under-affine-transformation-w-hat-ae,eq:power-levels-under-affine-transformation-bound-ae,eq:power-levels-under-affine-transformation-term-two} yields $\tnorm{\smash{\hat{W}}}_\mathsf{AE} \leq \lambda \tnorm{W}_\mathsf{AE}$ as desired.
    
    Lastly, if $\alpha = 0$, the bounds in \cref{eq:power-levels-under-affine-transformation-bound-ua,eq:power-levels-under-affine-transformation-bound-ae} hold with equality, as desired.
\end{proof}

Now, we are ready to prove \cref{proposition:homogeneity-of-power-constrained-doeblin-curves}.

\begin{proof}[Proof of \cref{proposition:homogeneity-of-power-constrained-doeblin-curves}]
    Fix a Markov kernel $K\colon \mathcal{X} \times \mathcal{F}_\mathcal{Y} \rightarrow [0, 1]$. Fix scalars $t \in [0, 1]$, $p \in [0, \infty)$, and $\lambda > 0$ such that $\lambda t \leq 1$.\footnote{The proposition trivially holds for $\lambda = 0$, so we assume $\lambda > 0$ throughout this proof.} Throughout this proof, let $\mathrm{F}_K$ and $\itnorm{\wildcard}$ denote either $\mathrm{F}_K^\mathsf{UA}$ and $\itnorm{\wildcard}_\mathsf{UA}$, or $\mathrm{F}_K^\mathsf{AE}$ and $\itnorm{\wildcard}_\mathsf{AE}$, in general. We will consider two cases.
    
    \textit{Case 1:} $\lambda \leq 1$. We follow the argument from \cite[Proposition 2]{PolyanskiyWu2016}. First, we will show that
    \begin{equation}
        \mathrm{F}_K(\lambda t; \lambda p) \geq \lambda \mathrm{F}_K(t; p) . \label{eq:homogeneity-step-one}
    \end{equation}
    Fix any arbitrary $\epsilon > 0$. By definition of Doeblin curve, there exists a Markov kernel $W^*\colon \mathcal{U} \times \mathcal{F}_\mathcal{X} \rightarrow [0, 1]$ such that   $\rho(W^*) \leq t$, $\itnorm{W^*} \leq p$, and $\rho(W^* K) \geq \mathrm{F}_K(t; p) - \epsilon$.
    Since $\mathcal{X}$ is a Banach space, there exists an element $\mathbf{0} \in \mathcal{X}$ such that $\inorm{\mathbf{0}} = 0$; we denote this element in bold font to distinguish it from the scalar $0 \in \mathbb{R}$. Define another Markov kernel $\hat{W}\colon \mathcal{U} \times \mathcal{F}_\mathcal{X} \rightarrow [0, 1]$ as   $\hat{W}(\lwildcard \mid u) = \lambda W^*(\lwildcard \mid u) + \prn{1 - \lambda} \updelta_\mathbf{0}(\wildcard)$ for all $u \in \mathcal{U}$.
    By \cref{lemma:doeblin-coefficients-under-affine-transformation,lemma:power-levels-under-affine-transformation}, we have   \begin{align}
        \rho(\hat{W}) &= \lambda \, \rho(W^*) \leq \lambda t  , \label{eq:homogeneity-step-one-w-hat-rho} \\
        \rho(\hat{W} K) &= \lambda \, \rho(W^* K) \geq \lambda \prn{\mathrm{F}_K(t; p) - \epsilon} , \label{eq:homogeneity-step-one-w-hat-k-rho} \\
        \itnorm{\hat{W}} &= \lambda \tnorm{W^*} \leq \lambda p  . \label{eq:homogeneity-step-one-w-hat-power}
    \end{align}
    Hence,   \begin{align}
        \mathrm{F}_K(\lambda t; \lambda p) &\stackclap{(a)}{=} \sup \brc{\rho(WK): \rho(W) \leq \lambda t, \, \tnorm{W} \leq \lambda p} \\&\stackclap{(b)}{\geq} \rho(\hat{W} K) \stackclap{(c)}{\geq} \lambda \prn{\mathrm{F}_K(t; p) - \epsilon} ,
    \end{align}
    where (a) holds by definition of Doeblin curve, (b) holds by \cref{eq:homogeneity-step-one-w-hat-rho,eq:homogeneity-step-one-w-hat-power}, and (c) holds by \cref{eq:homogeneity-step-one-w-hat-k-rho}. Since $\epsilon > 0$ was arbitrary, this shows \cref{eq:homogeneity-step-one} as desired.
    
    Next, we will show that
    \begin{equation}
        \lambda \mathrm{F}_K(t; p) \geq \mathrm{F}_K(\lambda t; \lambda p) . \label{eq:homogeneity-step-two}
    \end{equation}
    Fix any arbitrary $\epsilon > 0$. By definition of Doeblin curve, there exists a Markov kernel $W^*\colon \mathcal{U} \times \mathcal{F}_\mathcal{X} \rightarrow [0, 1]$ such that   $\rho(W^*) \leq \lambda t$, $\tnorm{W^*} \leq \lambda p$, and $\rho(W^* K) \geq \mathrm{F}_K(\lambda t; \lambda p) - \epsilon$.
    Let $\alpha$ be any scalar such that $1 - \lambda \leq \alpha \leq \tau(W^*)$.\footnote{Such an $\alpha$ exists because $t \leq 1$ and $\lambda > 0$, and so $1 - \lambda \leq 1 - \lambda t \leq 1 - \rho(W^*) = \tau(W^*)$.} For notational convenience, define $\bar{\alpha} \triangleq 1 - \alpha$. By definition of Doeblin coefficient, there exists a probability measure $\pi^*$ such that $W^* \geq \alpha \pi^*$.\footnote{This is true even for $\alpha = \tau(W^*)$ because the supremum in \cref{eq:doeblin-coefficient} is always achieved, as per the discussion following \cref{eq:doeblin-coefficient}.} Define another Markov kernel $\hat{W}\colon \mathcal{U} \times \mathcal{F}_\mathcal{X} \rightarrow [0, 1]$ as   \begin{math}
        \hat{W}(\lwildcard \mid u) = \frac{1}{\lambda}\iprn{{W^*(\lwildcard \mid u)} - \alpha \pi^*(\wildcard)} + \iprn{1 - \frac{\bar{\alpha}}{\lambda}} \updelta_\mathbf{0}(\wildcard)
    \end{math} for all $u \in \mathcal{U}$.
    This is a valid Markov kernel because $W^* \geq \alpha \pi^*$, non-negative linear combinations of measures are measures,\footnote{We have $1 - \bar{\alpha} / \lambda \geq 0$ because $1 - \lambda \leq \alpha$.} and for any $u \in \mathcal{U}$,   \begin{math}
        \hat{W}(\mathcal{X} \mid u) = \frac{1}{\lambda} \bigprn{W^*(\mathcal{X} \mid u) - \alpha \pi^*(\mathcal{X})} + \bigprn{1 - \frac{\bar{\alpha}}{\lambda}} \updelta_\mathbf{0}(\mathcal{X}) = \frac{1}{\lambda} \iprn{1 - \alpha} + \iprn{1 - \frac{\bar{\alpha}}{\lambda}} = 1
    \end{math}.
    By \cref{lemma:doeblin-coefficients-under-affine-transformation,lemma:power-levels-under-affine-transformation}, we have   \begin{align}
        \rho(\hat{W}) &= \frac{\rho(W^*)}{\lambda} \leq t  , \label{eq:homogeneity-step-two-w-hat-rho} \\
        \rho(\hat{W} K) &= \frac{\rho(W^* K)}{\lambda} \geq \frac{\mathrm{F}_K(\lambda t; \lambda p) - \epsilon}{\lambda}  , \label{eq:homogeneity-step-two-w-hat-k-rho} \\
        \itnorm{\hat{W}} &= \frac{\tnorm{W^*}}{\lambda} \leq p  . \label{eq:homogeneity-step-two-w-hat-power}
    \end{align}
    Hence,   \begin{align}
        \lambda \mathrm{F}_K(t; p) &\stackclap{(a)}{=} \lambda \sup \brc{\rho(WK): \rho(W) \leq t, \, \tnorm{W} \leq p} \\&\stackclap{(b)}{\geq} \lambda \, \rho(\hat{W} K) \stackclap{(c)}{\geq} \mathrm{F}_K(\lambda t; \lambda p) - \epsilon  ,
    \end{align}
    where (a) holds by definition of Doeblin curve, (b) holds by \cref{eq:homogeneity-step-two-w-hat-rho,eq:homogeneity-step-two-w-hat-power}, and (c) holds by \cref{eq:homogeneity-step-two-w-hat-k-rho}. Since $\epsilon > 0$ was arbitrary, this shows \cref{eq:homogeneity-step-two} as desired.

    Finally, combining \cref{eq:homogeneity-step-one,eq:homogeneity-step-two}, we have $\mathrm{F}_K(\lambda t; \lambda p) = \lambda \mathrm{F}_K(t; p)$ as desired.
    
    \textit{Case 2:} $\lambda > 1$. We have   \begin{align}
        \mathrm{F}_K(\lambda t; \lambda p) &= \lambda \prn{\frac{1}{\lambda}} \mathrm{F}_K(\lambda t; \lambda p) \stackclap{(a)}{=} \lambda \mathrm{F}_K \mprn{\!\mprn{\frac{1}{\lambda}} \lambda t; \prn{\frac{1}{\lambda}} \lambda p\!} \\&= \lambda \mathrm{F}_K(t; p)
    \end{align}
    as desired, where (a) holds by applying Case 1 with $1/\lambda < 1$.
\end{proof}

Next, we prove \cref{proposition:scale-invariant-doeblin-curves-of-additive-noise-channels}.

\begin{proof}[Proof of \cref{proposition:scale-invariant-doeblin-curves-of-additive-noise-channels}]
First, we define scalar multiplication of sets as $\lambda A \triangleq \ibrc{\lambda a: a \in A}$ for any $\lambda\in\mathbb{R}$ and set $A$. Note that for all $\lambda>0$, $A \in \mathcal{F}_{\mathbb{R}^d} \iff \lambda A \in \mathcal{F}_{\mathbb{R}^d}$. Next, note that for all $A \in \mathcal{F}_\mathcal{Y}$ and $\mathbf{x} \in \mathcal{X}=\mathbb{R}^d$,
\begin{align}
K_{\sigma} (\sigma A\mid\sigma \mathbf{x})&=\int_{\sigma A} \frac{1}{\sigma^d} f\mprn{\frac{\mathbf{y}-\sigma \mathbf{x}}{\sigma}}\diff\mathbf{y}
\\&= \int_{A} f\mprn{\mathbf{y}'-\mathbf{x}}\diff\mathbf{y}'
=  K_1(A\mid\mathbf{x}). \label{eq:additive-noise-scale-invariance-proof-scaled-input-output}
\end{align}
Given any $W\colon \mathcal{U}\times \mathcal{F}_\mathcal{X}\to[0,1]$, define $W_\sigma$ such that $W_\sigma(A\mid u) \triangleq W(A/\sigma\mid u)$ (i.e., $W_\sigma(\lwildcard\mid u)$ is the pushforward measure of $W(\lwildcard\mid u)$ by $\mathbf{g}(\mathbf{x})=\sigma \mathbf{x}$). Then,
\begin{align}
    \itnorm{W_\sigma}_\mathsf{UA} &= \sup_{u\in\mathcal{U}} \int_\mathcal{X} \norm{\mathbf{x}}^2 W_\sigma(\diff\mathbf{x}\mid u) \\&= \sup_{u\in\mathcal{U}} \int_\mathcal{X} \norm{\mathbf{x}}^2 W(\diff\mathbf{x} / \sigma \mid u) \\&= \sup_{u\in\mathcal{U}} \int_\mathcal{X} \norm{\sigma \mathbf{x}'}^2 W(\diff\mathbf{x}'\mid u) \\&= \sigma^2 \itnorm{W}_\mathsf{UA}.
\end{align}
In addition,
\begin{align}
    \rho(W_\sigma) &= 1-\sup\{\alpha \in \mathbb{R}: \exists \pi \in \mathscr{P}, \forall u \in \mathcal{U}, \forall A \in \mathcal{F}_\mathcal{X},\\&\hspace{12em} W_\sigma(A\mid u) \geq \alpha \pi(A)\} \\
    &=1-\sup\{\alpha \in \mathbb{R}: \exists \pi \in \mathscr{P}, \forall u \in \mathcal{U}, \forall A' \in \mathcal{F}_\mathcal{X}, \\&\hspace{12em} W(A'\mid u) \geq \alpha \pi(\sigma A')\} \\
    &=1-\sup\{\alpha \in \mathbb{R}: \exists \pi' \in \mathscr{P}, \forall u \in \mathcal{U}, \forall A' \in \mathcal{F}_\mathcal{X}, \\&\hspace{12em} W(A'\mid u) \geq \alpha \pi'(A')\} \\&= \rho(W),
\end{align}
and similarly $\rho(K_{\sigma} W_\sigma)=\rho(K_ W)$ since $(K_{\sigma} W_\sigma) (\sigma A\mid u)=(K_1 W)(A\mid u)$ due to \cref{eq:additive-noise-scale-invariance-proof-scaled-input-output} and the definition of $W_\sigma$. Thus, $\mathrm{F}_{K_{\sigma}}^{\mathsf{UA}}(t;\sigma^2 p) = \mathrm{F}_{K_1}^{\mathsf{UA}}(t;p)$ for all $p \geq 0$, completing the proof.
\end{proof}

\subsection{Proofs of Bounds on Power-Constrained Doeblin Curves} \label{subsection:proofs-of-bounds-on-power-constrained-doeblin-curves}

First, we prove \cref{theorem:bounds-on-average-extremal-doeblin-curve}.

\begin{proof}[Proof of \cref{theorem:bounds-on-average-extremal-doeblin-curve}]
    Fix an absolutely continuous Markov kernel $K\colon \mathcal{X} \times \mathcal{F}_\mathcal{Y} \rightarrow [0, 1]$ with common dominating measure $\mu\colon \mathcal{F}_\mathcal{Y} \rightarrow \mathbb{R}_+$.

    \textbf{Upper bound:} Fix $t \in (0, 1]$ and $p \in [0, \infty)$. By definition of Doeblin curve, we want to show that for any Markov kernel $W\colon \mathcal{U} \times \mathcal{F}_\mathcal{X} \rightarrow [0, 1]$ satisfying $\rho(W) \leq t$ and $\itnorm{W}_\mathsf{AE} \leq p$, we have
    \begin{align}
        \rho(WK) \leq t \, \invbreve{\Theta} \mprn{2 \gamma \, M^{-1} \mprn{\frac{p}{t}}}  . \label{eq:bounds-step-zero}
    \end{align}
    Fix a Polish space $(\mathcal{U}, \mathcal{F}_\mathcal{U})$ and such a Markov kernel $W$. We have   %
    \begin{align}
        \rho(WK) &\stackclap{(a)}{=} 1 - \inf_{n \in \mathbb{N}} \inf_{u_1, \dots, u_n \in \mathcal{U}} \inf_{\substack{\text{$n$-partition of $\mathcal{Y}$} \\ A_1, \dots, A_n}} \sum_{i=1}^n WK(A_i \mid u_i) \\&\stackclap{(b)}{=} 1 - \inf_{n \in \mathbb{N}} \inf_{u_1, \dots, u_n \in \mathcal{U}}\\&\hspace{2em}\underbracket{\inf_{\substack{\text{$n$-partition of $\mathcal{Y}$} \\ A_1, \dots, A_n}} \sum_{i=1}^n \int_{\mathcal{X}} \!\!K(A_i \mid x) \, W(\diff x \mid u_i)}_{\mathrlap{\circled{1}}}, \label{eq:bounds-step-one}
    \end{align}
    where (a) holds by \cref{theorem:variational-characterization-of-doeblin-coefficient} and (b) holds by definition of composition of two kernels \cref{eq:kernel-composition}. Next, we lower-bound $\circled{1}$ for any $n \in \mathbb{N}$ and $u_1, \tdots, u_n \in \mathcal{U}$. Let $\mathbb{P}$ be the maximal coupling of random variables $\ibrc{X_u}_{u \in \mathcal{U}}$ with $X_u \sim W(\lwildcard \mid u)$ for each $u \in \mathcal{U}$, as defined in \cref{eq:maximal-coupling}. We have   \begin{align}
        \circled{1} &\stackclap{(a)}{=} \inf_{\substack{\text{$n$-partition of $\mathcal{Y}$} \\ A_1, \dots, A_n}} \sum_{i=1}^n \int_{\mathcal{X}} K(A_i \mid x) \, \mathbb{P}_{X_{u_i}}(\diff x) \\&\stackclap{(b)}{=} \inf_{\substack{\text{$n$-partition of $\mathcal{Y}$} \\ A_1, \dots, A_n}} \sum_{i=1}^n \E{K(A_i \mid X_{u_i})} \\
        &\stackclap{(c)}{=} \inf_{\substack{\text{$n$-partition of $\mathcal{Y}$} \\ A_1, \dots, A_n}} \E{\sum_{i=1}^n K(A_i \mid X_{u_i})} \\&\geq \E{\inf_{\substack{\text{$n$-partition of $\mathcal{Y}$} \\ A_1, \dots, A_n}} \sum_{i=1}^n K(A_i \mid X_{u_i})} \\&\stackclap{(d)}{=} \E{\int_{\mathcal{Y}} \prn{\min_{i \in [n]} \frac{\diff K}{\diff \mu}(\lwildcard \mid X_{u_i})} \diff \mu} , \label{eq:bounds-step-two}
    \end{align}
    where (a) holds by definition of a coupling, the expectations from (b) onwards are taken with respect to $\ibrc{X_u}_{u \in \mathcal{U}} \sim \mathbb{P}$, (c) holds by linearity of expectation, and (d) holds by \cref{proposition:integral-characterization-of-infimum}. Combining \cref{eq:bounds-step-one,eq:bounds-step-two}, we have   %
    \begin{equation}
        \rho(WK) \leq 1 - \underbracket{\inf_{n \in \mathbb{N}} \inf_{u_1, \dots, u_n \in \mathcal{U}} \!\!\E{\int_{\mathcal{Y}} \prn{\min_{i \in [n]} \frac{\diff K}{\diff \mu}(\lwildcard \mid X_{u_i})} \diff \mu}}_{\circled{2}}. \label{eq:bounds-step-three}
    \end{equation}
    
    Next, we lower-bound $\circled{2}$. For each $k \in \mathbb{N}$, let $s_k \triangleq \ibrc{\hat{u}_{k,1}, \tdots, \hat{u}_{k,n_k}} \subseteq \mathcal{U}$ be a finite sequence such that
    \begin{align}
        \E{\int_{\mathcal{Y}} \prn{\min_{i \in [n_k]} \frac{\diff K}{\diff \mu}(\lwildcard \mid X_{\hat{u}_{k,i}})} \diff \mu} \leq \circled{2} + \frac{1}{k} ,
    \end{align}
    where the existence of such a sequence is guaranteed by the definition of $\circled{2}$ as an infimum in \cref{eq:bounds-step-three}. Construct an infinite sequence $\ibrc{\tilde{u}_1, \tilde{u}_2, \tdots} \subseteq \mathcal{U}$ by concatenating the finite sequences $s_k$ in ascending order of $k$. Define a sequence of functions $g_n\colon \mathcal{X}^\mathcal{U} \rightarrow \mathbb{R}$ (with respect to the cylinder $\sigma$-algebra on $\mathcal{X}^\mathcal{U}$) as
    \begin{align}
        \forall n \in \mathbb{N}, \quad g_n \mprn{\brc{x_u}_{u \in \mathcal{U}}} \triangleq \int_{\mathcal{Y}} \prn{\min_{i \in [n]} \frac{\diff K}{\diff \mu}(\lwildcard \mid x_{\tilde{u}_i})} \diff \mu .
    \end{align}
    By construction, it holds that
    \begin{equation}
        \lim_{n \rightarrow \infty} \E{g_n \mprn{\brc{X_u}_{u \in \mathcal{U}}}} = \circled{2} . \label{eq:bounds-step-four}
    \end{equation}
    Each $g_n$ is non-negative, by the non-negativity of the Radon-Nikodym derivative. Hence, we may define $h\colon \mathcal{X}^\mathcal{U} \rightarrow \mathbb{R}_+$ as the infimum $h \triangleq \inf_{n \in \mathbb{N}} g_n$, i.e.,
    \begin{equation}
        h \mprn{\brc{x_u}_{u \in \mathcal{U}}} \triangleq \inf_{n \in \mathbb{N}} \int_{\mathcal{Y}} \prn{\min_{i \in [n]} \frac{\diff K}{\diff \mu}(\lwildcard \mid x_{\tilde{u}_i})} \diff \mu . \label{eq:bounds-h}
    \end{equation}
    By construction, $g_n \geq h$ for all $n \in \mathbb{N}$. Proceeding from \cref{eq:bounds-step-four}, we thus have
    \begin{equation}
        \circled{2} \geq \E{h \mprn{\brc{X_u}_{u \in \mathcal{U}}}}  . \label{eq:bounds-step-five}
    \end{equation}
    Define the event $\mathcal{E} = \ibrc{\forall u, v \in \mathcal{U}, \, X_u = X_v}$, which is measurable since $\mathcal{U}$ is Polish. Combining \cref{eq:bounds-step-three,eq:bounds-step-five}, we have   \begin{align}
        \rho(WK) &\leq 1 - \E{h \mprn{\brc{X_u}_{u \in \mathcal{U}}}} \\
        &\stackclap{(a)}{=} 1 - \P{\mathcal{E}} \, \E{h \mprn{\brc{X_u}_{u \in \mathcal{U}}} \given \mathcal{E}} \\
        &\phantom{{}={}}- \P{\mathcal{E}^\complement} \, \E{h \mprn{\brc{X_u}_{u \in \mathcal{U}}} \given \mathcal{E}^\complement} \\
        &\stackclap{(b)}{=} 1 - \P{\mathcal{E}} - \P{\mathcal{E}^\complement} \, \E{h \mprn{\brc{X_u}_{u \in \mathcal{U}}} \given \mathcal{E}^\complement} \\
        &\stackclap{(c)}{=} \P{\mathcal{E}^\complement} \underbracket{\E{1 - h \mprn{\brc{X_u}_{u \in \mathcal{U}}} \given \mathcal{E}^\complement}}_{\circled{3}} , \label{eq:bounds-step-six}
    \end{align}
    where (a) holds by the law of total expectation; (b) holds because, under the event $\mathcal{E}$, there exists some $x^* \in \mathcal{X}$ such that $X_u = x^*$ for all $u \in \mathcal{U}$ and so
    \begin{equation}
        h \mprn{\brc{X_u}_{u \in \mathcal{U}}} = \int_{\mathcal{Y}} \prn{\frac{\diff K}{\diff \mu}(\lwildcard \mid x^*)} \diff \mu = K(\mathcal{Y} \mid x^*) = 1
    \end{equation}
    given $\mathcal{E}$; and (c) holds by the complement rule of probability and linearity of expectation. If $\mathbb{P}(\mathcal{E}^\complement) = 0$, we have $\rho(WK) = 0$ and the desired upper bound \cref{eq:bounds-step-zero} is trivially satisfied. Hence, assume $\mathbb{P}(\mathcal{E}^\complement) > 0$ for the remainder of this argument.

    Next, we upper-bound $\circled{3}$. Observe that for any bounded (multi)set $\mathcal{S} = \ibrc{x_u}_{u \in \mathcal{U}} \in \mathcal{X}^\mathcal{U}$ (i.e., $\tnorm{\mathcal{S}}_\infty < \infty$) and any $\epsilon > 0$,   \begin{align}
        1 - h(\mathcal{S})\! &\stackclap{(a)}{=} 1 - \inf_{n \in \mathbb{N}} \int_{\mathcal{Y}} \prn{\min_{i \in [n]} \frac{\diff K}{\diff \mu}(\lwildcard \mid x_{\tilde{u}_i})} \diff \mu \\
        &\stackclap{(b)}{\leq} 1 - \inf_{n \in \mathbb{N}} \inf_{u_1, \dots, u_n \in \mathcal{U}} \int_{\mathcal{Y}} \prn{\min_{i \in [n]} \frac{\diff K}{\diff \mu}(\lwildcard \mid x_{u_i})} \diff \mu \\
        &\stackclap{(c)}{=} 1 - \inf_{n \in \mathbb{N}} \inf_{u_1, \dots, u_n \in \mathcal{U}} \inf_{\substack{\text{$n$-partition of $\mathcal{Y}$} \\ A_1, \dots, A_n}} \sum_{i=1}^n K(A_i \mid x_{u_i})\\
        &\stackclap{(d)}{=} \rho_{\mathcal{S}}(K) \\&\stackclap{(e)}{\leq} \rho_{\mathcal{B}(a^*, \rad(\mathcal{S}) + \epsilon)}(K) \\
        &\stackclap{(f)}{\leq} \sup_{a \in \mathcal{X}} \rho_{\mathcal{B}(a, \rad(\mathcal{S}) + \epsilon)}(K) \\&\stackclap{(g)}{\leq} \sup_{a \in \mathcal{X}} \rho_{\mathcal{B}(a, \gamma \tnorm{\mathcal{S}}_\infty\! + \epsilon)}(K)\\&\stackclap{(h)}{=} \Theta(\gamma \tnorm{\mathcal{S}}_\infty + \epsilon) \\&\stackclap{(i)}{\leq} \invbreve{\Theta}(\gamma \tnorm{\mathcal{S}}_\infty + \epsilon) ,
    \end{align}
    where (a) holds by definition of $h$ \cref{eq:bounds-h}; (b) holds by lower-bounding the integrand for the specific sequence $\ibrc{\tilde{u}_1, \tdots, \tilde{u}_n}$ with the infimum over all length-$n$ sequences; (c) holds by \cref{proposition:integral-characterization-of-infimum}; (d) holds by \cref{theorem:variational-characterization-of-doeblin-coefficient} and the definition of complementary Doeblin coefficient; (e) holds for \emph{some} center $a^* \in \mathcal{X}$ which satisfies $\sup_{x \in \mathcal{S}} \norm{x - a^*} \leq \rad(\mathcal{S}) + \epsilon$, because $\mathcal{S}' \mapsto \rho_{\mathcal{S}'}(K)$ is monotonically non-decreasing in its input set $\mathcal{S}'$; (f) holds by upper-bounding the value for the specific center $a^*$ with the supremum over \emph{all} centers $a \in \mathcal{X}$; (g) holds because $\rad(\mathcal{S}) \leq \gamma \itnorm{S}_\infty$ by rearranging \cref{eq:jung-constant};   %
    (h) holds by definition of $\Theta$; and (i) holds because $\invbreve{\Theta}$ is the upper concave envelope (and hence an upper bound) of $\Theta$. Therefore, if $\mathcal{S}$ has strictly positive diameter $\itnorm{\mathcal{S}}_\infty > 0$,   \begin{align}
        1 - h(\mathcal{S}) &\stackclap{(a)}{\leq} \invbreve{\Theta}(\gamma \tnorm{\mathcal{S}}_\infty) \\&\stackclap{(b)}{=} \invbreve{\Theta} \mprn{\gamma \sup_{u, v \in \mathcal{U}} \norm{x_u - x_v}} \\
        &\stackclap{(c)}{\leq} \invbreve{\Theta} \mprn{2 \gamma \sup_{u \in \mathcal{U}} \norm{x_u}} \\&\stackclap{(d)}{=} \invbreve{\Theta} \mprn{2 \gamma \, M^{-1} \circ M \mprn{\sup_{u \in \mathcal{U}} \norm{x_u}}} \\
        &\stackclap{(e)}{=} \invbreve{\Theta} \mprn{2 \gamma \, M^{-1} \mprn{\sup_{u \in \mathcal{U}} M(\norm{x_u})}}  , \label{eq:bounds-step-seven}
    \end{align}
    where (a) holds because $\invbreve{\Theta}$ is concave and thus continuous on its interior which contains $\gamma \tnorm{\mathcal{S}}_\infty > 0$, and because $\epsilon > 0$ was arbitrary; (b) holds by definition of diameter \cref{eq:diameter}; (c) holds by the triangle inequality, and because $\invbreve{\Theta}$ is non-decreasing by \cref{lemma:properties-of-upper-concave-envelope}, Part 2; (d) holds because $M$ is strictly increasing and hence invertible; and (e) holds because $M$ is increasing. For notational convenience, define a function $G\colon \mathbb{R}_+ \rightarrow [0, 1]$ as   \begin{math}
        G(r) \triangleq \invbreve{\Theta} \iprn{2 \gamma \, M^{-1}(r)}
    \end{math}.
    Since $\invbreve{\Theta}$ is non-decreasing and concave, and since $M^{-1}$ is increasing and concave (being the inverse of an increasing convex function), it follows that $G$ is non-decreasing and concave. Combining \cref{eq:bounds-step-six,eq:bounds-step-seven}, we have, akin to the argument of \cite[Theorem 4]{PolyanskiyWu2016},\footnote{It suffices to define $G$ over $\mathbb{R}_+$ instead of $\mathbb{R}_+ \cup \ibrc{\infty}$, because $\tnorm{W}_\mathsf{AE} < \infty$ and so $\sup_{u \in \mathcal{U}} M(\inorm{X_u}) < \infty$ almost surely. Hence, the expectation $\mathbb{E}[G(\sup_{u \in \mathcal{U}} M(\inorm{X_u}))]$ is well-defined even if $G$ is undefined at $\infty$.}   \begin{align}
        \circled{3} &\leq \E{G \mprn{\sup_{u \in \mathcal{U}} M(\norm{X_u})} \given \mathcal{E}^\complement} \\&\stackclap{(a)}{\leq} G \mprn{\E{\sup_{u \in \mathcal{U}} M(\norm{X_u}) \given \mathcal{E}^\complement}} \\&\stackclap{(b)}{\leq} G \mprn{\frac{1}{\P{\mathcal{E}^\complement}} \, \E{\sup_{u \in \mathcal{U}} M(\norm{X_u})}}\\&\stackclap{(c)}{=} G \mprn{\frac{\tnorm{W}_\mathsf{AE}}{\P{\mathcal{E}^\complement}}} , \label{eq:bounds-step-eight}
    \end{align}
    where (a) holds by Jensen's inequality and the concavity of $G$, (b) holds by the law of total expectation because   \begin{align}
        &\phantom{{}={}}\E{\sup_{u \in \mathcal{U}} M(\norm{X_u})} \\&= \P{\mathcal{E}}  \E{\sup_{u \in \mathcal{U}} M(\norm{X_u}) \!\given\! \mathcal{E}} \!+ \P{\mathcal{E}^\complement}  \E{\sup_{u \in \mathcal{U}} M(\norm{X_u}) \!\given\! \mathcal{E}^\complement} \\&\stackclap{(d)}{\geq} \P{\mathcal{E}^\complement}  \E{\sup_{u \in \mathcal{U}} M(\norm{X_u}) \!\given\! \mathcal{E}^\complement} ,
    \end{align}
    the division in (b) is well-defined because we assumed $\mathbb{P}(\mathcal{E}^\complement) > 0$, (c) holds by definition of average extremal power \cref{eq:average-extremal-power} because $\mathbb{P}$ is the maximal coupling \cref{eq:maximal-coupling}, and (d) holds because $M$ is non-negative. Combining \cref{eq:bounds-step-six,eq:bounds-step-eight}, we thus have
    \begin{equation}
        \rho(WK) \leq \P{\mathcal{E}^\complement} \, G \mprn{\frac{\tnorm{W}_\mathsf{AE}}{\P{\mathcal{E}^\complement}}} . \label{eq:bounds-step-ten}
    \end{equation}
    
    Observe that the \emph{perspective function} $H: (0, 1] \mathrel{\times} \mathbb{R}_+ \rightarrow [0, 1]$ given by
    \begin{equation}
        H(s; a) \triangleq s \, G \mprn{\frac{a}{s}}
    \end{equation}
    is non-decreasing in $s$ for any $a$, because for any $s, t \in (0, 1]$, $s < t$ and $a > 0$,\footnote{The $a = 0$ case is obvious by inspection, since $H(s; 0) = s \, G(0)$ and $G(0)$ is non-negative by the range of $G$.} we have   \begin{equation}
        H(s; a) \stackclap{(a)}{=} a \, \frac{G(a/s) - G(0)}{a/s - 0}\stackclap{(b)}{\leq} a \, \frac{G(a/t) - G(0)}{a/t - 0} \stackclap{(c)}{=} H(t; a) ,
    \end{equation}
    where (a) and (c) hold because $G(0) = \invbreve{\Theta}(0) = \Theta(0) = 0$, where the second equality follows from \cref{lemma:properties-of-upper-concave-envelope}, Part 1; and (b) holds because $G$ is concave and so the \emph{difference quotient}   %
      $(G(r) - G(q))/(r - q)$
    is monotonically non-decreasing in $r > q$ for any fixed $q$ by the first inequality in \cite[Exercise 3.1.b]{BoydVandenberghe2009}. (Alternatively, if $G$ is differentiable, we may use the first derivative test: for any $s \in (0, 1]$ and $a \in \mathbb{R}_+$, we have   \begin{align}
        \frac{\partial H}{\partial s} &= G \mprn{\frac{a}{s}} - \prn{\frac{a}{s}} G' \mprn{\frac{a}{s}} \stackclap{(a)}{\geq} G \mprn{\frac{a}{s}} - \prn{\frac{a}{s}} G'(\xi) \\&\stackclap{(b)}{=} G \mprn{\frac{a}{s}} - \prn{G \mprn{\frac{a}{s}} - G(0)} \stackclap{(c)}{=} 0 ,
    \end{align}
    where (a) holds for \emph{any} $\xi \in [0, a/s]$ by concavity of $G$; (b) holds for \emph{some} $\xi \in [0, a/s]$ by the mean value theorem; and (c) holds because $G(0) = \invbreve{\Theta}(0) = \Theta(0) = 0$, where the second equality follows from \cref{lemma:properties-of-upper-concave-envelope}, Part 1.) Next, observe that
    \begin{equation}
        \P{\mathcal{E}^\complement} \stackclap{(a)}{=} \rho(W) \leq t , \label{eq:bounds-step-eleven}
    \end{equation}
    where (a) holds by \cref{proposition:maximal-coupling-characterization-of-doeblin-coefficients} because $\mathbb{P}$ is the maximal coupling \cref{eq:maximal-coupling}. Proceeding from \cref{eq:bounds-step-ten}, we have   \begin{align}
        \rho(WK) &\stackclap{(a)}{=} H \mprn{\P{\mathcal{E}^\complement}; \tnorm{W}_\mathsf{AE}} \stackclap{(b)}{\leq} H \mprn{t; \tnorm{W}_\mathsf{AE}} \\&\stackclap{(c)}{=} t \, \invbreve{\Theta} \mprn{2 \gamma \, M^{-1} \mprn{\frac{\tnorm{W}_\mathsf{AE}}{t}}\!} \label{eq:upper-bound-shared-argument} \stackclap{(d)}{\leq} t \, \invbreve{\Theta} \mprn{2 \gamma \, M^{-1} \mprn{\frac{p}{t}}\!}
    \end{align}
    as desired, where (a) holds by definition of $H$, (b) holds by \cref{eq:bounds-step-eleven} and because $H$ is non-decreasing in $s$, (c) holds by definition of $H$ and $G$, and (d) holds by the power constraint $\tnorm{W}_\mathsf{AE} \leq p$ and because $\invbreve{\Theta}$ and $M^{-1}$ are non-decreasing.

    \textbf{Lower bound:} Fix $t \in (0, 1]$ and $p \in [0, \infty)$. By definition of Doeblin curve, we want to show that there exists a Markov kernel $W^*\colon \mathcal{U} \times \mathcal{F}_\mathcal{X} \rightarrow [0, 1]$ satisfying $\rho(W^*) \leq t$ and $\itnorm{W^*}_\mathsf{AE} \leq p$ such that   \begin{math}
        \rho(W^* K) \geq t \, \theta \iprn{\mathbf{0}, M^{-1} \iprn{\ifrac{p}{t}}}
    \end{math}.
    Consider $\mathcal{U} \triangleq \mathcal{B}(\mathbf{0}, M^{-1}(p/t)) \subset \mathcal{X}$, equipped with the Borel $\sigma$-algebra induced on $\mathcal{U}$ by $\mathcal{X}$. Let $W\colon \mathcal{U} \times \mathcal{F}_\mathcal{X} \rightarrow [0, 1]$ be the identity kernel   \begin{math}
        {W(\lwildcard \mid u) \triangleq \updelta_u(\wildcard)}
    \end{math} for all $u \in \mathcal{U}$,
    and choose $W^*\colon \mathcal{U} \times \mathcal{F}_\mathcal{X} \rightarrow [0, 1]$ to be the Markov kernel   \begin{math}
        W^*(\lwildcard \mid u) \triangleq t \, W(\lwildcard \mid u) + \iprn{1 - t} \updelta_\mathbf{0}(\wildcard)
    \end{math} for all $u \in \mathcal{U}$.

    By inspection, the greatest common component of $W$ is   \begin{math}
        \ibigwedge_{u \in \mathcal{U}} W(\lwildcard \mid u) = 0
    \end{math}.
    Thus, the complementary Doeblin coefficient of $W$, by \cref{eq:doeblin-coefficient-gcc}, is   \begin{math}
        \rho(W) = 1 - \ibigwedge_{u \in \mathcal{U}} W(\mathcal{X} \mid u) = 1
    \end{math},
    and the maximal coupling of random variables $\ibrc{X_u}_{u \in \mathcal{U}}$ with $X_u \sim W(\lwildcard \mid u)$ for each $u \in \mathcal{U}$, as defined in \cref{eq:maximal-coupling}, reduces to
    \begin{equation}
        \forall u \in \mathcal{U}, \quad X_u \sim \updelta_u(\wildcard) , \label{eq:average-extremal-lower-bound-maximal-coupling}
    \end{equation}
    where $\ibrc{X_u}_{u \in \mathcal{U}}$ are independent. Hence, the average extremal power of $W$ is   \begin{align}
        \tnorm{W}_\mathsf{AE} &\stackclap{(a)}{=} \E{\sup_{u \in \mathcal{U}} M(\norm{X_u})} \stackclap{(b)}{=} \sup_{u \in \mathcal{U}} M(\norm{u})\\&\stackclap{(c)}{=} M \circ M^{-1} \mprn{\frac{p}{t}}= \frac{p}{t} ,
    \end{align}
    where (a) holds by definition of average extremal power \cref{eq:average-extremal-power}, (b) holds by \cref{eq:average-extremal-lower-bound-maximal-coupling}, and (c) holds because $\mathcal{U} = \mathcal{B}(\mathbf{0}, M^{-1}(p/t))$ and $M$ is increasing. By inspection, the composition of $W$ with $K$ is   \begin{math}
        WK(\lwildcard \mid u) = K(\lwildcard \mid u)
    \end{math} for all $u \in \mathcal{U}$,
    and so the complementary Doeblin coefficient of $WK$ is
    \begin{align}
        \rho(WK) = \rho_\mathcal{U}(K) \stackclap{(a)}{=} \theta \mprn{\mathbf{0}, M^{-1} \mprn{\frac{p}{t}}} ,
    \end{align}
    where (a) holds by definition of $\theta$ and because $\mathcal{U} = \mathcal{B}(\mathbf{0}, M^{-1}(p/t))$. Thus, by \cref{lemma:doeblin-coefficients-under-affine-transformation,lemma:power-levels-under-affine-transformation}, we have   $\rho(W^*) = t \, \rho(W) = t$,
        $\rho(W^* K) = t \, \rho(WK) = t \, \theta \iprn{\mathbf{0}, M^{-1} \iprn{\ifrac{p}{t}}}$, and 
        $\tnorm{W^*}_\mathsf{AE} = t \tnorm{W}_\mathsf{AE} = p$
    as desired.
\end{proof}

Next, we prove \cref{corollary:examples-of-average-extremal-doeblin-curves}.

\begin{proof}[Proof of \cref{corollary:examples-of-average-extremal-doeblin-curves}]
    Throughout this proof, let $K$ denote one of the kernels $K_1$, $K_2$, or $K_3$ in general. For each kernel, consider the density function $g_K(z) = g_K(y - x) = \frac{\idiff K}{\idiff y}(y \mid x)$ with respect to the Lebesgue measure on $\mathbb{R}$:   
    \begin{align}
        g_{K_1} \mprn{z; \sigma^2} &= \frac{1}{\sigma \sqrt{2 \pi}} \exp \mprn{-\frac{z^2}{2 \sigma^2}},\\ 
        g_{K_2}(z; b) &= \frac{1}{2b} \exp \mprn{-\frac{\abs{z}}{b}},\\
        g_{K_3}(z; \beta) &= \frac{\sqrt{\beta}}{\pi} \mprn{\frac{1}{1 + \beta z^2}},
    \end{align}
    where $g_K$ depends only on the difference $z = y - x$ because $K$ is a convolution kernel. For any $r > 0$, we have   \begin{align}
        \Theta(r) &\stackclap{(a)}{=} \theta(0, r) \stackclap{(b)}{=} \rho_{[-r, r]}(K) \stackclap{(c)}{=} 1 -\!\!\!\!\bigwedge_{x \in [-r, r]} \!\!\!\!K(\mathbb{R} \mid x) \\&\stackclap{(d)}{=} 1 - \int_{\mathbb{R}} \, \biggprn{\latinf_{x \in [-r, r]}\!\! \frac{\diff K}{\diff y}(y \mid x)} \diff y\\&= 1 - \int_{\mathbb{R}} \prn{\min_{x \in [-r, r]} g_K(y - x)} \diff y , \label{eq:examples-of-average-extremal-doeblin-curves-step-one}
    \end{align}
    where (a) holds because $K$ is a convolution kernel, (b) holds by definition of $\theta$, (c) holds by the greatest common component characterization of Doeblin coefficient \cref{eq:doeblin-coefficient-gcc}, and (d) holds by \cref{lemma:density-of-greatest-common-component}. By inspection, $g_K$ is increasing on $(-\infty, 0]$ and decreasing on $[0, \infty)$. Therefore, the minimum of $g_K$ over any interval $[a, b] \subset \mathbb{R}$ is achieved at one of the endpoints, i.e.,
    \begin{align}
        \min_{z \in [a, b]} g_K(z) = \min \brc{g_K(a), g_K(b)} . \label{eq:examples-of-average-extremal-doeblin-curves-endpoints}
    \end{align}
    Proceeding from \cref{eq:examples-of-average-extremal-doeblin-curves-step-one}, we thus have (cf. \cite[Eq. 40]{PolyanskiyWu2016})   \begin{align}
        \Theta(r) &\stackclap{(a)}{=} 1 - \int_{\mathbb{R}} \min \brc{g_K(y - r), g_K(y + r)} \diff y \\
        &\stackclap{(b)}{=} 1 - \int_{-\infty}^0\!\!\!\!\!\min \brc{g_K(y - r), g_K(-|y + r|)} \diff y \\
        &\phantom{{}={}}- \int_0^\infty\!\!\! \min \brc{g_K(|y - r|), g_K(y + r)} \diff y \\
        &\stackclap{(c)}{=} 1 - \int_{-\infty}^0\!\!\!\!\! g_K(y - r) \diff y - \int_0^\infty\!\!\! g_K(y + r) \diff y \\&\stackclap{(d)}{=} 1 - 2 f_K(-r) , \label{eq:examples-of-average-extremal-doeblin-curves-step-two}
    \end{align}
    where (a) holds by \cref{eq:examples-of-average-extremal-doeblin-curves-endpoints}; (b) holds because $g_K$ is symmetric about $z = 0$ by inspection; (c) holds because $y - r \leq -|y + r| \leq 0$ for $y \leq 0$ and $g_K$ is increasing on $z \leq 0$, and $0 \leq |y - r| \leq y + r$ for $y \geq 0$ and $g_K$ is decreasing on $z \geq 0$; $f_K$ in (d) denotes the CDF   \begin{math}
        f_K(y) = \int_{-\infty}^y g_K(z) \idiff z
    \end{math};
    and (d) holds because $g_K$ is symmetric about $z = 0$. Also, $\Theta$ is concave because
    \begin{align}
        \Theta''(r) = -2 g_K'(-r) \stackclap{(a)}{<} 0 \label{eq:examples-of-average-extremal-doeblin-curves-concave}
    \end{align}
    for all $r > 0$, where (a) holds because $g_K$ is increasing on $z < 0$. Hence, by \cref{corollary:average-extremal-doeblin-curves-of-convolution-kernels,eq:examples-of-average-extremal-doeblin-curves-step-two},
    \begin{align}
        \mathrm{F}_K^\mathsf{AE}(t; p) = t \prn{1 - 2 f_K \mprn{-\sqrt{\frac{p}{t}}}\!} . \label{eq:examples-of-average-extremal-doeblin-curves-step-three}
    \end{align}
    It remains to compute the closed-form CDFs for the Gaussian, Laplace, and $q$-Gaussian kernels. For the Gaussian kernel $K_1$,   \begin{equation}
        f_{K_1}(y) = \int_{-\infty}^y \frac{1}{\sigma \sqrt{2 \pi}} \exp \mprn{-\frac{z^2}{2 \sigma^2}}\diff z = \Phi \mprn{\frac{y}{\sigma}}  . \label{eq:examples-of-average-extremal-doeblin-curves-cdf-one}
    \end{equation}
    For the Laplace kernel $K_2$, for all $y \leq 0$,   \begin{equation}
        f_{K_2}(y) = \int_{-\infty}^y \frac{1}{2b} \exp \mprn{-\frac{\abs{z}}{b}} \diff z = \frac{1}{2} \exp \mprn{\frac{y}{b}}  . \label{eq:examples-of-average-extremal-doeblin-curves-cdf-two}
    \end{equation}
    Lastly, for the $q$-Gaussian kernel $K_3$,   \begin{equation}
        f_{K_3}(y) = \int_{-\infty}^y \!\frac{\sqrt{\beta}}{\pi} \mprn{\frac{1}{1 + \beta z^2}} \diff z = \frac{1}{\pi} \arctan \mprn{\sqrt{\beta} y} + \frac{1}{2}  . \label{eq:examples-of-average-extremal-doeblin-curves-cdf-three}
    \end{equation}
    Combining \cref{eq:examples-of-average-extremal-doeblin-curves-step-three,eq:examples-of-average-extremal-doeblin-curves-cdf-one,eq:examples-of-average-extremal-doeblin-curves-cdf-two,eq:examples-of-average-extremal-doeblin-curves-cdf-three} produces the desired Doeblin curves.
    Note that the supremum of \cref{eq:doeblin-curve} is attained by the kernel $W\colon \{0,1\}\times \mathcal{F}_\mathcal{X}\to[0,1]$ with $W(\lwildcard\mid 0)=t \updelta_{-(p/t)^{1/2}}(\wildcard)+(1-t) \updelta_0(\wildcard)$ and $W(\wildcard\mid 1)= t \updelta_{(p/t)^{1/2}}(\wildcard)+(1-t) \updelta_0(\wildcard)$. To show that the joint range $\mathfrak{F}(K; \mathcal{G}_p)$ encompasses the entire area under the Doeblin curve $\mathrm{F}_K^\mathsf{AE}(t; p)$, due to \cref{lemma:doeblin-coefficients-under-affine-transformation}, it suffices to show that the joint range contains the line segment connecting $(1, 0)$ and $(1, \mathrm{F}_K^\mathsf{AE}(1; p))$. This follows from observing that for all $y \in [0, \mathrm{F}_K^\mathsf{AE}(1; p)]$, there must exist some $q \in [0, p]$ such that $\mathrm{F}_K^\mathsf{AE}(1; q)=y$ since $\mathrm{F}_K^\mathsf{AE}(1; p)$ is continuous with respect to $p$ and $\mathrm{F}_K^\mathsf{AE}(1; 0)=0$. This completes the proof.
\end{proof}

Next, we prove \cref{proposition:upper-bound-on-uniform-average-doeblin-curve-finite}.

\begin{proof}[Proof of \cref{proposition:upper-bound-on-uniform-average-doeblin-curve-finite}]
    Fix an absolutely continuous Markov kernel $K\colon \mathcal{X} \times \mathcal{F}_\mathcal{Y} \rightarrow [0, 1]$. By definition of Doeblin curve, we want to show that for any Markov kernel $W \in \mathcal{G}$ (i.e., satisfying the assumptions of \cref{proposition:upper-bound-on-uniform-average-doeblin-curve-finite}) such that $\rho(W) \leq t$, we have
    \begin{equation}
        \rho(WK) \leq t \, \invbreve{\Theta} \mprn{2 \gamma \, M^{-1} \mprn{\frac{p + \sigma \sqrt{2 \log_e \abs{\mathcal{U}}}}{t}}\!} . \label{eq:upper-bound-uniform-average-finite}
    \end{equation}
    By following the proof of the upper bound in \cref{theorem:bounds-on-average-extremal-doeblin-curve} up to \cref{eq:upper-bound-shared-argument}, step (c)\ignorespaces, we have
    \begin{equation}
        \rho(WK) \leq t \, \invbreve{\Theta} \mprn{2 \gamma \, M^{-1} \mprn{\frac{\tnorm{W}_\mathsf{AE}}{t}}\!}  . \label{eq:upper-bound-uniform-average-finite-step-one}
    \end{equation}
    Next, we upper-bound $\tnorm{W}_\mathsf{AE}$. For notational convenience, let $Z_u = M(\inorm{X_u})$ for each $u \in \mathcal{U}$. We have   \begin{align}
        \tnorm{W}_\mathsf{AE} &\stackclap{(a)}{=} \E{\max_{u \in \mathcal{U}} Z_u} \\&\leq \E{\max_{u \in \mathcal{U}} \E{Z_u} + \max_{u \in \mathcal{U}} \brc{Z_u - \E{Z_u}}} \\&= \max_{u \in \mathcal{U}} \E{Z_u} + \E{\max_{u \in \mathcal{U}} \brc{Z_u - \E{Z_u}}} \\&\stackclap{(b)}{\leq} p + \underbracket{\E{\max_{u \in \mathcal{U}} \brc{Z_u - \E{Z_u}}}}_{\circled{1}}, \label{eq:upper-bound-uniform-average-finite-step-two}
    \end{align}
    where (a) holds by definition of average extremal power \cref{eq:average-extremal-power} and we use $\max$ instead of $\sup$ because $\mathcal{U}$ is finite, and (b) holds because any kernel $W \in \mathcal{G}$ satisfies $\tnorm{W}_\mathsf{UA} \leq p$. Next, we upper-bound $\circled{1}$. For any $\lambda > 0$, we have (cf. \cite[Eq. 2]{ChuRaginsky2025})   %
      \begin{align}
        \circled{1} &= \frac{1}{\lambda} \, \E{\log_e \exp \mprn{\max_{u \in \mathcal{U}} \lambda \prn{Z_u - \E{Z_u}}}} \\&\stackclap{(a)}{=} \frac{1}{\lambda} \, \E{\log_e \max_{u \in \mathcal{U}} e^{\lambda \prn{Z_u - \E{Z_u}}}} \\&\stackclap{(b)}{\leq} \frac{1}{\lambda} \log_e \E{\max_{u \in \mathcal{U}} e^{\lambda \prn{Z_u - \E{Z_u}}}} \\
        &\stackclap{(c)}{\leq} \frac{1}{\lambda} \log_e \E{\sum_{u \in \mathcal{U}} e^{\lambda \prn{Z_u - \E{Z_u}}}} \\&\stackclap{(d)}{=} \frac{1}{\lambda} \log_e \sum_{u \in \mathcal{U}} \E{e^{\lambda \prn{Z_u - \E{Z_u}}}} \\&\stackclap{(e)}{\leq} \frac{1}{\lambda} \log_e \sum_{u \in \mathcal{U}} \exp \mprn{\frac{\sigma^2 \lambda^2}{2}} \\&= \frac{\log_e \abs{\mathcal{U}}}{\lambda} + \frac{\sigma^2 \lambda}{2}  ,
    \end{align}
    where (a) holds because $\exp$ is increasing, (b) holds by Jensen's inequality, (c) holds by upper-bounding the maximum of a collection of positive terms with their sum, (d) holds by linearity of expectation, and (e) holds by the sub-Gaussianity of $W$. Choosing the optimal value $\lambda = \sqrt{2 \log_e |\mathcal{U}|} / \sigma$ to balance the summands, we have
    \begin{equation}
        \circled{1} \leq \sigma \sqrt{2 \log_e \abs{\mathcal{U}}} . \label{eq:upper-bound-uniform-average-finite-step-three}
    \end{equation}
    Since $\invbreve{\Theta}$ and $M^{-1}$ are non-decreasing (as established in the proof of \cref{theorem:bounds-on-average-extremal-doeblin-curve}), combining \cref{eq:upper-bound-uniform-average-finite-step-one,eq:upper-bound-uniform-average-finite-step-two,eq:upper-bound-uniform-average-finite-step-three} proves \cref{eq:upper-bound-uniform-average-finite} as desired.
\end{proof}

Next, we prove \cref{proposition:upper-bound-on-uniform-average-doeblin-curve-totally-bounded}.

\begin{proof}[Proof of \cref{proposition:upper-bound-on-uniform-average-doeblin-curve-totally-bounded}]
    Fix an absolutely continuous Markov kernel $K\colon \mathcal{X} \times \mathcal{F}_\mathcal{Y} \rightarrow [0, 1]$. By definition of Doeblin curve, we want to show that for any Markov kernel $W \in \mathcal{G}$ (i.e., satisfying the assumptions of \cref{proposition:upper-bound-on-uniform-average-doeblin-curve-totally-bounded}) such that $\rho(W) \leq t$, we have   \begin{equation}
        \rho(WK) \leq t \, \invbreve{\Theta} \Biggprn{2 \gamma \, M^{-1}\! \Biggprn{\frac{p}{t} + \frac{32 \sigma}{t} \hspace{-1em} \int\limits_0^{\tnorm{\mathcal{U}}_\infty} \hspace{-1em} \sqrt{\log_e N(\epsilon, \mathcal{U}, d_\mathcal{U})} \diff \epsilon} \!\! } . \label{eq:upper-bound-uniform-average-totally-bounded}
    \end{equation}
    By following the proof of the upper bound in \cref{theorem:bounds-on-average-extremal-doeblin-curve} up to \cref{eq:upper-bound-shared-argument}, step (c), we have
    \begin{align}
        \rho(WK) \leq t \, \invbreve{\Theta} \mprn{2 \gamma \, M^{-1} \mprn{\frac{\tnorm{W}_\mathsf{AE}}{t}}\!}  . \label{eq:upper-bound-uniform-average-totally-bounded-step-one}
    \end{align}
    Next, we upper-bound $\itnorm{W}_\mathsf{AE}$. For notational convenience, let $Z_u = M(\norm{X_u})$ for each $u \in \mathcal{U}$. Observe that   \begin{align}
        \sup_{u \in \mathcal{U}} Z_u &\leq \sup_{u \in \mathcal{U}} \E{Z_u} + \sup_{u \in \mathcal{U}} \brc{Z_u - \E{Z_u}} \\&\stackclap{(a)}{=} \tnorm{W}_\mathsf{UA} + \sup_{u \in \mathcal{U}} \brc{Z_u - \E{Z_u}} \\&\stackclap{(b)}{\leq} p + \sup_{u \in \mathcal{U}} \brc{Z_u - \E{Z_u}} , \label{eq:upper-bound-uniform-average-totally-bounded-step-two}
    \end{align}
    where (a) holds by definition of uniform average power \cref{eq:uniform-average-power} and (b) holds because any kernel $W \in \mathcal{G}$ satisfies $\tnorm{W}_\mathsf{UA} \leq p$. Hence,   \begin{align}
        \tnorm{W}_\mathsf{AE} &\stackclap{(a)}{=} \E{\sup_{u \in \mathcal{U}} Z_u} \\&\stackclap{(b)}{\leq} p + \E{\sup_{u \in \mathcal{U}} \brc{Z_u - \E{Z_u}}} \\&\stackclap{(c)}{=} p + \E{\sup_{u \in \mathcal{U}} \brc{Z_u - \E{Z_u}} - \prn{Z_{v^*} - \E{Z_{v^*}}}} \\
        &\leq p + \E{\sup_{u, v \in \mathcal{U}} \brc{\prn{Z_u - \E{Z_u}} - \prn{Z_v - \E{Z_v}}}} \\&\stackclap{(d)}{\leq} p + 32 \sigma \int_0^{\tnorm{\mathcal{U}}_\infty}\hspace{-1em} \sqrt{\log_e N(\epsilon, \mathcal{U}, d_\mathcal{U})} \diff \epsilon , \label{eq:upper-bound-uniform-average-totally-bounded-step-three}
    \end{align}
    where all the expectations are taken with respect to the maximal coupling $\ibrc{X_u}_{u \in \mathcal{U}} \sim \mathbb{P}$ as defined in \cref{eq:maximal-coupling}, (a) holds by definition of average extremal power \cref{eq:average-extremal-power}, (b) holds by \cref{eq:upper-bound-uniform-average-totally-bounded-step-two}, (c) holds for \emph{any fixed} $v^* \in \mathcal{U}$ (i.e., $v^*$ is chosen independently of $\ibrc{Z_u}_{u \in \mathcal{U}}$), and (d) holds by \emph{Dudley's entropy integral bound} \cite[Theorem 5.22]{Wainwright2019}. Since $\invbreve{\Theta}$ and $M^{-1}$ are non-decreasing (as established in the proof of \cref{theorem:bounds-on-average-extremal-doeblin-curve}), combining \cref{eq:upper-bound-uniform-average-totally-bounded-step-one,eq:upper-bound-uniform-average-totally-bounded-step-three} proves \cref{eq:upper-bound-uniform-average-totally-bounded} as desired.
\end{proof}

Lastly, we prove \cref{proposition:lower-bound-on-uniform-average-doeblin-curve}.

\begin{proof}[Proof of \cref{proposition:lower-bound-on-uniform-average-doeblin-curve}]
    Fix an absolutely continuous Markov kernel $K\colon \mathcal{X} \times \mathcal{F}_\mathcal{Y} \rightarrow [0, 1]$. Fix $t \in (0, 1]$. By following the proof of the lower bound in \cref{theorem:bounds-on-average-extremal-doeblin-curve}, we obtain that the Markov kernel $W^*\colon \mathcal{U} \times \mathcal{F}_\mathcal{X} \rightarrow [0, 1]$ from $\mathcal{U} \triangleq \mathcal{B}(\mathbf{0}, M^{-1}(p/t))$ to $\mathcal{X}$ given by   \begin{math}
        W^*(A \mid u) \triangleq \prn{1 - t} \updelta_\mathbf{0}(A) + t \, \updelta_u(A)
    \end{math} for all $u \in \mathcal{U}$ and $A \in \mathcal{F}_\mathcal{X}$
    satisfies $\rho(W^*) = t$, $\itnorm{W^*}_\mathsf{AE} = p$, and $\rho(W^* K) = t \, \theta(\mathbf{0}, M^{-1}(p/t))$. By the discussion immediately following \cref{definition:power-level}, we have $\itnorm{W^*}_\mathsf{UA} \leq \itnorm{W^*}_\mathsf{AE} = p$. Hence,   \begin{math}
        \mathrm{F}_K^\mathsf{UA}(t; p) = \sup \ibrc{\rho(WK): \rho(W) \leq t, \, \tnorm{W}_\mathsf{UA} \leq p} \geq \rho(W^* K) = t \, \theta \iprn{\mathbf{0}, M^{-1} \iprn{\ifrac{p}{t}}}
    \end{math}
    as desired.
\end{proof}

\section{Proofs of Main Results on Applications} \label{section:proofs-of-main-results-on-applications}

In this section, we prove the main results presented in \cref{section:main-results-on-applications}, pertaining to applications of Doeblin curves.

\subsection{Proofs for Generalization Error} \label{subsection:proofs-for-generalization-error}

First, we prove \cref{lemma:recursive-bound-on-t-information}.

\begin{proof}[Proof of \cref{lemma:recursive-bound-on-t-information}]
    For each $t \in [T]$, define the following random variables representing intermediate computations within the update rule:   \begin{gather}
        U_t \triangleq W_{t-1} - \frac{\eta_t}{\abs{\mathcal{B}_t}} \sum_{j \in \mathcal{B}_t} \nabla g(W_{t-1}, Z_j)  , \\
        V_t \triangleq U_t + m_t N_t  , \quad
        W_t \triangleq \proj_\mathcal{W}(V_t)  .
    \end{gather}
    Conditioned on the event that $i \in \mathcal{B}_t$ (i.e., $Z_i$ is used in the $t$th iteration), the following Markov chain holds, because $Z_i$ will not be used in any future iteration (cf. \cite{CalmonGeneralizationError}):   \begin{multline}
        Z_i \rightarrow \underline{U_t \rightarrow V_t \rightarrow W_t} \rightarrow \underline{U_{t+1} \rightarrow V_{t+1} \rightarrow W_{t+1}} \rightarrow \tcdots\\ \rightarrow \underline{U_{T-1}\rightarrow V_{T-1}\rightarrow W_{T-1}} \rightarrow \underline{U_T \rightarrow V_T \rightarrow W_T}  .
    \end{multline}
    It follows that   \begin{align} &\phantom{{}={}}
        I_{\mathsf{TV}}(W_T; Z_i) \stackclap{(a)}{\leq} I_{\mathsf{TV}}(V_T; Z_i) \\&\stackclap{(b)}{=} \norm{P_{V_T, Z_i} - P_{V_T} \otimes P_{Z_i}}_\mathsf{TV} \\&\stackclap{(c)}{=} \Ewrt{z \sim P_{Z_i}}{\norm{P_{V_T|Z_i = z} - P_{V_T}}_\mathsf{TV}} \\
        &\stackclap{(d)}{=} \mathop{\mathbb{E}}_{z \sim P_{Z_i}}\big[\big\lVert P_{U_T|Z_i = z} * \mathsf{Normal} \mprn{0, m_T^2 I} \\
        &\phantom{=\mathop{\mathbb{E}}_{z \sim P_{Z_i}}\big[\big\lVert} - P_{U_T} * \mathsf{Normal} \mprn{0, m_T^2 I}\big\rVert_\mathsf{TV}\big] , \label{eq:recursive-bound-step-one}
    \end{align}
    where (a) holds by the data processing inequality for $f$-information \cite[Theorem 7.16]{PolyanskiyWu2025}, (b) holds by definition of TV-information \cref{eq:t-information}, (c) holds because $P_{V_T, Z_i} = P_{Z_i} P_{V_T|Z_i}$ by the chain rule of probability, and (d) holds by definition of $V_T$. Next, observe that for any $z \in \mathcal{Z}$, the power of the distribution $P_{U_T|Z_i = z}$ is bounded as   \begin{align}
        &\phantom{{}={}}\E{\norm{U_T}_2^2 \given Z_i = z} \\&\stackclap{(a)}{=} \mathbb{E} \biggbkt{\Bignorm{W_{T-1} - \frac{\eta_T}{\abs{\mathcal{B}_T}} \sum_{j \in \mathcal{B}_T} \nabla g(W_{T-1}, Z_j)}_2^2 \:\bigg|\: Z_i = z} \\
        &\stackclap{(b)}{\leq} 2 \E{\norm{W_{T-1}}_2^2 \given Z_i = z} \\&\phantom{{}={}}+ 2 \mathbb{E} \biggbkt{\Bignorm{\frac{\eta_T}{\abs{\mathcal{B}_T}} \sum_{j \in \mathcal{B}_T} \nabla g(W_{T-1}, Z_j)}_2^2 \:\bigg|\: Z_i = z} \\
        &\stackclap{(c)}{\leq} 2 \E{\norm{W_{T-1}}_2^2 \given Z_i = z}\\&\phantom{{}={}} + 2 \mathbb{E} \biggbkt{\Bigprn{\frac{\eta_T}{\abs{\mathcal{B}_T}} \sum_{j \in \mathcal{B}_T} \norm{\nabla g(W_{T-1}, Z_j)}_2}^2 \:\bigg|\: Z_i = z} \\
        &\stackclap{(d)}{\leq} 2 \E{\norm{W_{T-1}}_2^2 \given Z_i = z} + 2 \eta_T^2 L^2\stackclap{(e)}{\leq} m_T^2 p_T , \label{eq:recursive-bound-step-two}
    \end{align}
    where (a) holds by definition of $U_T$, (b) holds by the identity
    \begin{equation}
        \norm{x + y}_2^2 \leq \norm{x + y}_2^2 + \norm{x - y}_2^2 = 2 \norm{x}_2^2 + 2 \norm{y}_2^2
    \end{equation}
    and linearity of expectation, (c) holds by the triangle inequality, (d) holds by the bound on the loss gradient \cref{eq:generalization-error-gradient-bound}, and (e) holds by definition of $p_T$ \cref{eq:generalization-error-recursive-bound-power}. Similarly, the power of the distribution $P_{U_T}$ is bounded as   \begin{align}
        \E{\norm{U_T}_2^2} &\stackclap{(a)}{=} \Ewrt{z \sim P_{Z_i}}{\E{\norm{U_T}_2^2 \given Z_i = z}} \\&\stackclap{(b)}{\leq} \Ewrt{z \sim P_{Z_i}}{m_T^2 p_T} = m_T^2 p_T  ,
    \end{align}
    where (a) holds by the tower rule of expectation and (b) holds by \cref{eq:recursive-bound-step-two}. Continuing from \cref{eq:recursive-bound-step-one}, we have   \begin{align}
        I_{\mathsf{TV}}(W_T; Z_i)&\stackclap{(a)}{\leq} \Ewrt{z \sim P_{Z_i}}{\mathrm{F}_\Phi^\mathsf{UA} \mprn{\norm{P_{U_T|Z_i = z} - P_{U_T}}_\mathsf{TV}; p_T}} \\&\stackclap{(b)}{\leq} \Ewrt{z \sim P_{Z_i}}{\invbreve{\mathrm{F}}_\Phi^\mathsf{UA} \mprn{\norm{P_{U_T|Z_i = z} - P_{U_T}}_\mathsf{TV}; p_T}} \\
        &\stackclap{(c)}{\leq} \invbreve{\mathrm{F}}_\Phi^\mathsf{UA} \biggprn{\Ewrt{z \sim P_{Z_i}}{\norm{P_{U_T|Z_i = z} - P_{U_T}}_\mathsf{TV}}; p_T} \\&\stackclap{(d)}{=} \invbreve{\mathrm{F}}_\Phi^\mathsf{UA} \Bigprn{\norm{P_{U_T, Z_i} - P_{U_T} \otimes P_{Z_i}}_\mathsf{TV}; p_T} \\
        &\stackclap{(e)}{=} \invbreve{\mathrm{F}}_\Phi^\mathsf{UA} \Bigprn{I_{\mathsf{TV}}(U_T; Z_i); p_T} \\&\stackclap{(f)}{\leq} \invbreve{\mathrm{F}}_\Phi^\mathsf{UA} \Bigprn{I_{\mathsf{TV}}(W_{T-1}; Z_i); p_T} , \label{eq:recursive-bound-step-three}
    \end{align}
    where (a) holds by \cref{proposition:scale-invariant-doeblin-curves-of-additive-noise-channels}; (b) holds because $\invbreve{\mathrm{F}}_\Phi^\mathsf{UA}$ is the upper concave envelope, and hence an upper bound, of $\mathrm{F}_\Phi^\mathsf{UA}$; (c) holds by Jensen's inequality; (d) holds by the chain rule of probability; (e) holds by definition of TV-information \cref{eq:t-information}; and (f) holds by the data processing inequality, and because $\invbreve{\mathrm{F}}_\Phi^\mathsf{UA}$ is non-decreasing by \cref{lemma:properties-of-upper-concave-envelope}, Part 2.
    
    Finally, by recursively applying the arguments in \cref{eq:recursive-bound-step-one,eq:recursive-bound-step-three} above, we obtain   \begin{multline}
        I_{\mathsf{TV}}(W_T; Z_i)\\\leq \invbreve{\mathrm{F}}_\Phi^\mathsf{UA} \Bigprn{\tcdots \invbreve{\mathrm{F}}_\Phi^\mathsf{UA} \Bigprn{\invbreve{\mathrm{F}}_\Phi^\mathsf{UA} \Bigprn{I_{\mathsf{TV}}(W_t; Z_i); p_{t+1}}; p_{t+2}} \tcdots; p_T}
    \end{multline}
    as desired, where the direction of the bound holds at each recursive step because each layer of $\invbreve{\mathrm{F}}_\Phi^\mathsf{UA}$ is non-decreasing by \cref{lemma:properties-of-upper-concave-envelope}, Part 2.
\end{proof}

Now, we are ready to prove \cref{theorem:expected-generalization-error}.

\begin{proof}[Proof of \cref{theorem:expected-generalization-error}]
    Using the high-level argument of \cite{CalmonGeneralizationError}, we first bound the expected generalization error in terms of the TV-information between model parameters and data samples. We have   \begin{align}
        &\phantom{{}={}}\abs{\E{G_\mu(W_T) - G_{\mathcal{S}}(W_T)}} \\&\stackclap{(a)}{\leq} \frac{A}{n} \sum_{i=1}^n I_{\mathsf{TV}}(W_T; Z_i)\\&\stackclap{(b)}{=} \frac{A}{n} \sum_{t=1}^T \sum_{i \in \mathcal{B}_t} I_{\mathsf{TV}}(W_T; Z_i) + \frac{A}{n} \sum_{i \notin \cup_{t=1}^T \mathcal{B}_t} I_{\mathsf{TV}}(W_T; Z_i) \\
        &\stackclap{(c)}{=} \frac{A}{n} \sum_{t=1}^T \sum_{i \in \mathcal{B}_t} I_{\mathsf{TV}}(W_T; Z_i)\\&\stackclap{(d)}{\leq} \frac{A}{n} \sum_{t=1}^T \sum_{i \in \mathcal{B}_t} \invbreve{\mathrm{F}}_\Phi^\mathsf{UA} \Bigprn{\cdots \invbreve{\mathrm{F}}_\Phi^\mathsf{UA} \Bigprn{\underbracket{I_{\mathsf{TV}}(W_t; Z_i)}_{\circled{1}}; p_{t+1}} \cdots; p_T}  , \label{eq:generalization-error-step-one}
    \end{align}
    where (a) holds by \cref{lemma:generalization-error-and-t-information}; (b) holds by grouping the data indices $i \in [n]$ by the iteration $t \in [T]$ in which they are used (if any), since the mini-batches $\mathcal{B}_1, \tdots, \mathcal{B}_T$ are disjoint; (c) holds because data samples which are unused during training are statistically independent of the final parameters and thus have zero TV-information; and (d) holds by \cref{lemma:recursive-bound-on-t-information}.

    Next, we upper-bound $\circled{1}$. Define the following random variables representing intermediate steps in the update rule:   \begin{gather}
        U_t \triangleq W_{t-1} - \frac{\eta_t}{\abs{\mathcal{B}_t}} \sum_{j \in \mathcal{B}_t} \nabla g(W_{t-1}, Z_j)  , \\
        V_t \triangleq U_t + m_t N_t  , \quad
        W_t \triangleq \proj_\mathcal{W}(V_t)  ,
    \end{gather}
    which form the Markov chain   \begin{math}
        {Z_i \rightarrow U_t \rightarrow V_t \rightarrow W_t}
    \end{math}.
    Following the argument in \cite[Lemma 9]{CalmonGeneralizationError}, we have   \begin{align}
        \circled{1} &\stackclap{(a)}{\leq} I_{\mathsf{TV}}(V_t; Z_i)\\&\stackclap{(b)}{=} \norm{P_{V_t, Z_i} - P_{V_t} \otimes P_{Z_i}}_\mathsf{TV} \\&\stackclap{(c)}{=} \!\!\Ewrt{z \sim \mu}{\norm{P_{V_t|Z_i = z} - P_{V_t}}_\mathsf{TV}} \\&\stackclap{(d)}{=}\!\! \mathop{\mathbb{E}}_{z \sim \mu} \bigbkt{\underbracket{\norm{P_{U_t + m_t N_t|Z_i = z} - P_{U_t + m_t N_t}}_\mathsf{TV}}_{\circled{2}}} , \label{eq:generalization-error-step-two}
    \end{align}
    where (a) holds by the data processing inequality for $f$-information \cite[Theorem 7.16]{PolyanskiyWu2025}, (b) holds by definition of TV-information \cref{eq:t-information}, (c) holds because $P_{V_t, Z_i} = P_{Z_i} P_{V_t|Z_i}$ by the chain rule of probability and $P_{Z_i} = \mu$ since the data samples are drawn i.i.d., and (d) holds by definition of $V_t$.

    Next, we upper-bound $\circled{2}$ for any fixed $z \in \mathcal{Z}$. For any two distributions $P$ and $Q$, define the \emph{optimal transport cost}    \begin{equation}
        \mathsf{W}(P, Q; m_t) \triangleq  \frac{1}{2 m_t} \inf_{\substack{P_{X,Y}: \\ P_X = P, \, P_Y = Q}} \Ewrt{(X,Y) \sim P_{X,Y}}{\norm{X - Y}_2} , \label{eq:generalization-error-optimal-transport-cost}
    \end{equation}
    where the infimum is taken over all couplings $P_{X,Y}$ of the random variables $X$ and $Y$ with respective marginals $P_X = P$ and $P_Y = Q$. By \cite[Lemma 6]{CalmonGeneralizationError}, it holds that
    \begin{equation}
        \circled{2} \leq \mathsf{W} \mprn{P_{U_t|Z_i = z}, P_{U_t}; m_t}  . \label{eq:generalization-error-step-three}
    \end{equation}
    Define two random variables   \begin{align}
        U^* &\triangleq W_{t-1} - \frac{\eta_t}{\abs{\mathcal{B}_t}} \Biggprn{\sum_{\substack{j \in \mathcal{B}_t: \\ j \neq i}} \nabla g(W_{t-1}, Z_j) + \nabla g(W_{t-1}, z)}  , \\
        U^\dagger &\triangleq W_{t-1} - \frac{\eta_t}{\abs{\mathcal{B}_t}} \sum_{j \in \mathcal{B}_t} \nabla g(W_{t-1}, Z_j) . \label{eq:generalization-error-u-star-dagger}
    \end{align}
    These random variables have marginals $U^* \sim P_{U_t|Z_i = z}$ and $U^\dagger \sim P_{U_t}$, by definition of $U_t$. Hence, continuing from \cref{eq:generalization-error-step-three},   \begin{align}
        \circled{2} &\stackclap{(a)}{\leq} \frac{1}{2 m_t} \, \E{\norm{U^* - U^\dagger}_2} \\&\stackclap{(b)}{=} \frac{\eta_t}{2 m_t \abs{\mathcal{B}_t}} \, \mathbb{E} \Bigbkt{\norm{\nabla g(W_{t-1}, Z_i) - \nabla g(W_{t-1}, z)}_2} , \label{eq:generalization-error-step-four}
    \end{align}
    where (a) holds by upper-bounding the infimum in \cref{eq:generalization-error-optimal-transport-cost} with the specific instance \cref{eq:generalization-error-u-star-dagger}, and (b) holds by definition of $U^*$ and $U^\dagger$.
    
    Next, observe that $W_{t-1}$ and $Z_i$ are statistically independent, because $i \in \mathcal{B}_t$ by \cref{eq:generalization-error-step-one} and thus $Z_i$ is not used in any iteration except $t$. Combining \cref{eq:generalization-error-step-two,eq:generalization-error-step-four}, we have   \begin{align}
        &\phantom{{}={}}\circled{1} \\&\leq \!\!\mathop{\mathbb{E}}_{z \sim \mu}\! \biggl[ \frac{\eta_t}{2 m_t \abs{\mathcal{B}_t}}\!\! \mathop{\mathbb{E}}_{\substack{(W_{t-1}, Z_i) \\ \sim P_{W_{t-1}} \otimes \mu}} \!\!\Bigbkt{\norm{\nabla g(W_{t-1}, Z_i) -\! \nabla g(W_{t-1}, z)}_2} \!\biggr] \\
        &\stackclap{(a)}{=} \frac{\eta_t}{2 m_t \abs{\mathcal{B}_t}} \mathop{\mathbb{E}}_{\substack{(W_{t-1}, Z_i, z) \\ \sim P_{W_{t-1}} \otimes \mu \otimes \mu}} \!\Bigbkt{\norm{\nabla g(W_{t-1}, Z_i) - \nabla g(W_{t-1}, z)}_2} \\
        &\stackclap{(b)}{\leq} \frac{\eta_t}{2 m_t \abs{\mathcal{B}_t}}\! \biggl(\! \mathop{\mathbb{E}}_{\substack{(W_{t-1}, Z_i) \\ \sim P_{W_{t-1}} \otimes \mu}} \!\!\biggbkt{\!\Bignorm{\nabla g(W_{t-1}, Z_i) \!- \hspace{-1.9em} \mathop{\mathbb{E}}_{\substack{(W_{t-1}, Z) \\ \sim P_{W_{t-1}} \otimes \mu}} \hspace{-1.7em}\mbkt{\nabla g(W_{t-1}, Z)}}_2\!} \\
        &\phantom{\leq \frac{\eta_t}{2 m_t \abs{\mathcal{B}_t}}}+ \kern-1em \mathop{\mathbb{E}}_{\substack{(W_{t-1}, z) \\ \sim P_{W_{t-1}} \otimes \mu}} \!\!\biggbkt{\!\Bignorm{\nabla g(W_{t-1}, z) \!- \hspace{-1.9em} \mathop{\mathbb{E}}_{\substack{(W_{t-1}, Z) \\ \sim P_{W_{t-1}} \otimes \mu}}\hspace{-1.7em} \mbkt{\nabla g(W_{t-1}, Z)}}_2\!} \!\biggr) \\
        &\stackclap{(c)}{\leq} \frac{\eta_t \sigma_{t-1}}{m_t \abs{\mathcal{B}_t}}  , \label{eq:generalization-error-step-five}
    \end{align}
    where (a) holds by Tonelli's theorem and linearity of expectation, (b) holds by the triangle inequality and linearity of expectation, and (c) holds by definition of $\sigma_{t-1}$. Combining \cref{eq:generalization-error-step-one,eq:generalization-error-step-five}, we have   \begin{align}
        &\phantom{{}={}}\abs{\E{G_\mu(W_T) - G_{\mathcal{S}}(W_T)}}
        \\&\leq \frac{A}{n} \sum_{t=1}^T \sum_{i \in \mathcal{B}_t} \invbreve{\mathrm{F}}_\Phi^\mathsf{UA} \Bigprn{\tcdots \invbreve{\mathrm{F}}_\Phi^\mathsf{UA} \Bigprn{\frac{\eta_t \sigma_{t-1}}{m_t \abs{\mathcal{B}_t}}; p_{t+1}} \tcdots; p_T\!} \\&= A \sum_{t=1}^T \frac{\abs{\mathcal{B}_t}}{n} \, \invbreve{\mathrm{F}}_\Phi^\mathsf{UA} \Bigprn{\tcdots \invbreve{\mathrm{F}}_\Phi^\mathsf{UA} \Bigprn{\frac{\eta_t \sigma_{t-1}}{m_t \abs{\mathcal{B}_t}}; p_{t+1}} \tcdots; p_T\!}
    \end{align}
    as desired.
\end{proof}

\subsection{Proofs for Reliable Computation} \label{subsection:proofs-for-reliable-computation}

First, we prove \cref{lemma:stepwise-coupling-construction}.

\begin{proof}[Proof of \cref{lemma:stepwise-coupling-construction}]
    Let $\Psi\colon \mathcal{U}^q \times \mathcal{F}_\mathcal{V}^{\otimes q} \rightarrow [0, 1]$ denote the Markov kernel from $\mathcal{U}^q$ to $\mathcal{V}^q$ such that for all $u^{(1:q)} \in \mathcal{U}^q$, $\Psi(\lwildcard \mid u^{(1:q)})$ is the maximal coupling of $\Phi(\lwildcard \mid u^{(1)}), \tdots, \Phi(\lwildcard \mid u^{(q)})$ defined in \cref{eq:maximal-coupling}. Given $\pi_U$, construct the coupling $\pi_V\colon \mathcal{F}_\mathcal{V}^{\otimes q} \rightarrow [0, 1]$ as
    \begin{align}
        \forall B \in \mathcal{F}_\mathcal{V}^{\otimes q}, \quad \pi_V(B) &\triangleq \!\int_{\mathcal{U}^q} \hspace{-0.7em}\Psi \bigprn{B \:\big\vert\: u^{(1:q)}} \diff \pi_U \bigprn{u^{(1:q)}} \label{eq:stepwise-coupling-construction-pi-v-integral} \\
        &= \Ewrt{u^{(1:q)} \sim \pi_U}{\Psi \bigprn{B \:\big\vert\: u^{(1:q)}}}  . \label{eq:stepwise-coupling-construction-pi-v-expectation}
    \end{align}
    This defines a valid coupling of $P_V^{(1)}, \tdots, P_V^{(q)}$, because for any $\ell \in [q]$ and any $A \in \mathcal{F}_\mathcal{V}$,   \begin{align}
        &\phantom{{}={}}\pi_V \mprn{\mathcal{V}^{\ell-1} \times A \times \mathcal{V}^{q-\ell}}\\&\stackclap{(a)}{=} \int_{\mathcal{U}^q} \Psi \mprn{\mathcal{V}^{\ell-1} \times A \times \mathcal{V}^{q-\ell} \given u^{(1:q)}} \diff \pi_U \mprn{u^{(1:q)}} \\
        &\stackclap{(b)}{=} \int_{\mathcal{U}^q} \Phi \mprn{A \given u^{(\ell)}} \diff \pi_U \mprn{u^{(1:q)}}\\&\stackclap{(c)}{=} \int_\mathcal{U} \Phi \mprn{A \given u^{(\ell)}} \diff P_U^{(\ell)} \mprn{u^{(\ell)}}\\&\stackclap{(d)}{=} P_V^{(\ell)}(A)  ,
    \end{align}
    where (a) holds by the definition of $\pi_V$ \cref{eq:stepwise-coupling-construction-pi-v-integral}, (b) holds because $\Psi(\lwildcard \mid u^{(1:q)})$ is a coupling of $\Phi(\lwildcard \mid u^{(1)}), \tdots, \Phi(\lwildcard \mid u^{(q)})$, (c) holds because $\pi_U$ is a coupling of $P_U^{(1)}, \tdots, P_U^{(q)}$, and (d) holds by \cref{eq:stepwise-coupling-construction-p-v}. Also, $\pi_V$ satisfies   \begin{align}
        &\phantom{{}={}} \Pwrt{V^{(1:q)} \sim \pi_V}{\lnot \mprn{V^{(1)} = \tcdots = V^{(q)}}} \\
        &\stackclap{(a)}{=} \Ewrt{U^{(1:q)} \sim \pi_U}{\Pwrt{V^{(1:q)} \sim \Psi(\cdot \mid U^{(1:q)})}{\lnot \mprn{V^{(1)} = \tcdots = V^{(q)}}}} \\&\stackclap{(b)}{=} \Ewrt{U^{(1:q)} \sim \pi_U}{\rho \Bigprn{\bigbkt{\Phi(\cdot \mid U^{(1)}), \tdots, \Phi(\cdot \mid U^{(q)})}}} \\
        &= \Ewrt{U^{(1:q)} \sim \pi_U}{\rho \Bigprn{\bigbkt{\updelta_{U^{(1)}}, \tdots, \updelta_{U^{(q)}}} \Phi}} \\&\stackclap{(c)}{\leq} \Ewrt{U^{(1:q)} \sim \pi_U}{\mathrm{F}_\Phi^\mathsf{UA} \mprn{\rho \Bigprn{\bigbkt{\updelta_{U^{(1)}}, \tdots, \updelta_{U^{(q)}}}}; p}} \\
        &= \Ewrt{U^{(1:q)} \sim \pi_U}{\mathrm{F}_\Phi^\mathsf{UA} \mprn{\Iv{\lnot \mprn{U^{(1)} = \tcdots = U^{(q)}}}; p}}\\&\stackclap{(d)}{\leq} \Ewrt{U^{(1:q)} \sim \pi_U}{\invbreve{\mathrm{F}}_\Phi^\mathsf{UA} \mprn{\Iv{\lnot \mprn{U^{(1)} = \tcdots = U^{(q)}}}; p}} \\
        &\stackclap{(e)}{\leq} \invbreve{\mathrm{F}}_\Phi^\mathsf{UA} \mprn{\Ewrt{U^{(1:q)} \sim \pi_U}{\Iv{\lnot \mprn{U^{(1)} = \tcdots = U^{(q)}}}}; p}\\&\stackclap{(f)}{=} \invbreve{\mathrm{F}}_\Phi^\mathsf{UA} \mprn{\Pwrt{U^{(1:q)} \sim \pi_U}{\lnot \mprn{U^{(1)} = \tcdots = U^{(q)}}}; p}
    \end{align}
    as desired, where (a) holds by \cref{eq:stepwise-coupling-construction-pi-v-expectation}, (b) holds by the maximal coupling characterization of Doeblin coefficients (\cref{proposition:maximal-coupling-characterization-of-doeblin-coefficients}), (c) holds by definition of Doeblin curve \cref{eq:doeblin-curve} because the kernel of Dirac measures has uniform average power   \begin{multline}
        \tnorm{\bigbkt{\updelta_{U^{(1)}}, \tdots, \updelta_{U^{(q)}}}}_\mathsf{UA} = \max_{\ell \in [q]} \int_\mathcal{U} M(\norm{u}) \diff \updelta_{U^{(\ell)}}(u)
        \\= \max_{\ell \in [q]} M \mprn{\inorm{U^{(\ell)}}}
        \leq \max_{u \in \mathcal{U}} M(\norm{u})
        = p ,
    \end{multline}
    (d) holds because $\invbreve{\mathrm{F}}_\Phi^\mathsf{UA}$ is the upper concave envelope (and hence an upper bound) of $\mathrm{F}_\Phi^\mathsf{UA}$, (e) holds by Jensen's inequality, and (f) holds because the probability of an event is the expectation of its indicator.
\end{proof}

Now, we are ready to prove \cref{theorem:upper-bound-on-circuit-output-divergence}.

\begin{proof}[Proof of \cref{theorem:upper-bound-on-circuit-output-divergence}]
    Define a function $f\colon [0, 1] \rightarrow [0, 1]$ as $f(t) \triangleq \invbreve{\mathrm{F}}_\Phi^\mathsf{UA} \mprn{\min \mbrc{1, bt}; p}$.   %
    As a prelude to our main argument, we establish the following useful result concerning fixed point convergence of $f$. Observe that $[0, 1]$ is a compact and totally ordered set so that any decreasing sequence $\ibrc{t_s}_{s \in \mathbb{N}} \subset [0, 1]$ has $\inf_{s \in \mathbb{N}} t_s \in [0, 1]$.
    Furthermore, $\invbreve{\mathrm{F}}_\Phi^\mathsf{UA}(\lwildcard \, ; p)$ is non-decreasing and continuous on $[0, 1]$ by \cref{lemma:properties-of-upper-concave-envelope}, Parts 2 and 3. Hence, $f$ is non-decreasing and continuous on $[0, 1]$ (being the composition of two non-decreasing and continuous functions), and so $f(\inf_{s \in \mathbb{N}} t_s) = \inf_{s \in \mathbb{N}} f(t_s)$.   %
    Also, $1 \geq f(1)$ by the range of $f$. Hence, applying Kleene's fixed point theorem \cite{Baranga1991} on $([0, 1], \geq)$,   \begin{equation}
        \lim_{s \rightarrow \infty} f^{(s)}(1) \stackclap{(a)}{=} \inf_{s \in \mathbb{N}}{ f^{(s)}(1)} = \max \brc{t \in [0, 1]: f(t) = t} ,
    \end{equation}
    where $f^{(s)}$ denotes the $s$-fold composition of $f$, and (a) holds because $f^{(s)}(1)$ is monotonically non-increasing in $s$.   %

    With this groundwork established, we commence our main argument. Fix $\epsilon > 0$, and fix $n$ sufficiently large such that for all $s \geq \lceil \log_b(n) \rceil$,
    \begin{equation}
        f^{(s)}(1) \leq \max \brc{t \in [0, 1]: f(t) = t} + \epsilon  . \label{eq:upper-bound-on-circuit-output-divergence-n}
    \end{equation}
    Consider any $n$-input circuit of noisy gates where each gate has at most $b$ inputs. Define the \emph{height} of an input vertex $X_i$ as the length of the shortest directed path from $X_i$ to the output vertex $Y_m$, where the existence of such a path is guaranteed for each $X_i$ by the formal model defined in \cref{subsection:reliable-computation-using-noisy-q-ary-gates}. Since the in-degree of each gate vertex is at most $b$, there must exist at least one input vertex whose height is $\lceil \log_b(n) \rceil$ or greater.\footnote{To see this, consider all circuits where the in-degree of each gate is at most $b$ and the height of each input vertex is at most $h_{\max} \triangleq \lceil \log_b(n) - 1 \rceil$, and let $C$ be such a circuit with the greatest number of input vertices. We may assume that no vertex in $C$ has multiple outgoing edges, since we may equivalently consider the tree generated by running breadth-first traversal starting from $Y_m$ and following edges in reverse direction, which preserves the heights of all input vertices. Furthermore, each gate vertex in $C$ must have exactly $b$ incoming edges; otherwise, additional input vertices may be added to the gate and hence to $C$. Lastly, each vertex in $C$ with height less than $h_{\max}$ must be a gate, since if such a vertex was an input, replacing it with a gate taking $b$ new input vertices would add $b - 1$ input vertices to $C$ overall. Hence, $C$ is a perfect $b$-ary anti-arborescence where all input vertices have height $h_{\max}$, and so $C$ has $b^{h_{\max}} < b^{\log_b(n)} = n$ inputs.} Without loss of generality, let $X_1$ be such an input vertex.
    
    Fix any values $x_2, \tdots, x_n \in \mathcal{Q}$ for the remaining inputs. We define some notation used throughout the remainder of our proof. To refer to input and gate vertices in a unified manner, let $W_k \triangleq X_{-k}$ for all $k \in \ibrc{-1, \tdots, -n}$, let $W_k \triangleq Y_k$ for all $k \in [m]$, and let $\mathcal{O}_j \triangleq \ibrc{-i: i \in \mathcal{N}_j} \cup \mathcal{M}_j$ for all $j \in [m]$. Given a subset of indices $\mathcal{S} \subseteq \ibrc{-1, \tdots, -n} \cup [m]$, let $w_\mathcal{S} \triangleq \ibrc{w_k}_{k \in \mathcal{S}}$ refer to the corresponding collection of subscripted variables. For any $W$, let $P_W^{(\ell)}$ be the marginal distribution of $W$ induced by the circuit when setting $X_1 = \ell$ and $X_i = x_i$ for all $i > 1$. Lastly, let $a \land b \triangleq \min \ibrc{a, b}$ for any scalars $a, b \in \mathbb{R}$.

    Recall from \cref{subsection:reliable-computation-using-noisy-q-ary-gates} that we index the gate vertices in topological order, such that $\mathcal{M}_j \subseteq [j - 1]$ for each $j \in [m]$. We follow the strategy in \cite[Section 5.3]{PolyanskiyWu2016}. Consider the following algorithm for constructing a coupling of the marginal distributions at each vertex in the circuit:
    \begin{enumerate}
        \item For each $i \in [n]$, let $\pi_{W_{-i}}$ be the maximal coupling of $P_{X_i}^{(1)}, \tdots, P_{X_i}^{(q)}$.
        \item For each $j \in [m]$:
        \begin{enumerate}
            \item Let $\pi_{Z_j}$ be the pushforward measure of $\bigotimes_{k \in \mathcal{O}_j} \pi_{W_k}$ through the function $\Gamma_j^{\otimes q}\colon (\mathbb{R}^d)^{b_j \times q} \rightarrow \mathcal{Q}^q$ which accepts $q$ copies of input variables and independently applies $\Gamma_j$ on each copy to produce $q$ outputs, i.e.,   \begin{math}
                Z_j^{(1:q)} = \Gamma_j^{\otimes q} \iprn{W_{\mathcal{O}_j}^{(1:q)}}
            \end{math},
            where for each $\ell \in [q]$,
            \begin{equation}
                Z_j^{(\ell)} = \Gamma_j \bigprn{W_{\mathcal{O}_j}^{(\ell)}} . \label{eq:upper-bound-on-circuit-output-divergence-gamma}
            \end{equation}
            Namely, $\pi_{Z_j}$ is the coupling of $P_{Z_j}^{(1)}, \tdots, P_{Z_j}^{(q)}$ induced by independently sampling $W_k^{(1:q)} \sim \pi_{W_k}$ for each $k \in \mathcal{O}_j$ and then computing \cref{eq:upper-bound-on-circuit-output-divergence-gamma} for each $\ell \in [q]$.
            \item Construct a coupling $\pi_{W_j}$ of $P_{Y_j}^{(1)}, \tdots, P_{Y_j}^{(q)}$ by applying \cref{lemma:stepwise-coupling-construction} with input space $\mathcal{U} \triangleq \mathcal{Q}$, output space $\mathcal{V} \triangleq \mathbb{R}^d$, input coupling $\pi_U \triangleq \pi_{Z_j}$, output coupling $\pi_V \triangleq \pi_{W_j}$, and $\Phi$ being the circuit noise mechanism.
        \end{enumerate}
    \end{enumerate}
    
    We analyze the coupling $\pi_{W_m}$ constructed by the algorithm above to upper-bound $\rho \bigprn{[P_{Y_m}^{(1)}, \tdots, P_{Y_m}^{(q)}]}$ using the maximal coupling characterization of Doeblin coefficients (\cref{proposition:maximal-coupling-characterization-of-doeblin-coefficients}). For notational convenience, for each $k \in \ibrc{-1, \tdots, -n} \cup [m]$, let
    \begin{equation}
        t_{W_k} \triangleq \Pwrt{W_k^{(1:q)} \sim \pi_{W_k}}{\lnot \mprn{W_k^{(1)} = \tcdots = W_k^{(q)}}}  . \label{eq:upper-bound-on-circuit-output-divergence-t-wk}
    \end{equation}
    The value of $t_{W_{-i}}$ for each input $i \in [n]$ is
    \begin{align}
        t_{W_{-1}} &= 1  , \label{eq:upper-bound-on-circuit-output-divergence-t-x1} \\
        \forall i > 1, \quad t_{W_{-i}} &= 0  , \label{eq:upper-bound-on-circuit-output-divergence-t-xi}
    \end{align}
    where \cref{eq:upper-bound-on-circuit-output-divergence-t-x1} holds because $\pi_{W_{-1}}$ is a coupling of distinct Dirac measures $P_{X_1}^{(\ell)} = \updelta_{\xi_\ell}$ for $\ell \in [q]$, and \cref{eq:upper-bound-on-circuit-output-divergence-t-xi} holds because $\pi_{W_{-i}}$ for $i > 1$ is a coupling of identical Dirac measures $P_{X_i}^{(1)} = \tcdots = P_{X_i}^{(q)} = \updelta_{x_i}$. For each gate $j \in [m]$, $t_{W_j}$ satisfies the recurrence (cf. \cite[Eq. 149]{PolyanskiyWu2016})   \begin{align}
        &\phantom{{}={}}t_{W_j} \\&\stackclap{(a)}{\leq} \invbreve{\mathrm{F}}_\Phi^\mathsf{UA} \biggprn{\Pwrt{Z_j^{(1:q)} \sim \pi_{Z_j}}{\lnot \mprn{Z_j^{(1)} = \tcdots = Z_j^{(q)}}}; p} \\&\stackclap{(b)}{\leq} \invbreve{\mathrm{F}}_\Phi^\mathsf{UA} \biggprn{\Pwrt{W_{\mathcal{O}_j}^{(1:q)} \sim \bigotimes\limits_{k \in \mathcal{O}_j} \pi_{W_k}}{\lnot \mprn{W_{\mathcal{O}_j}^{(1)} = \tcdots = W_{\mathcal{O}_j}^{(q)}}}; p} \\
        &\stackclap{(c)}{\leq} \invbreve{\mathrm{F}}_\Phi^\mathsf{UA} \biggprn{1 \land \sum_{k \in \mathcal{O}_j} \Pwrt{W_k^{(1:q)} \sim \pi_{W_k}}{\lnot \mprn{W_k^{(1)} = \tcdots = W_k^{(q)}}}; p \!} \\
        &\stackclap{(d)}{=} \invbreve{\mathrm{F}}_\Phi^\mathsf{UA} \biggprn{1 \land \sum_{k \in \mathcal{O}_j} t_{W_k}; p} \\&\stackclap{(e)}{\leq} \invbreve{\mathrm{F}}_\Phi^\mathsf{UA} \mprn{1 \land b \max_{k \in \mathcal{O}_j} t_{W_k}; p} \\&\stackclap{(f)}{=} f \mprn{\max_{k \in \mathcal{O}_j} t_{W_k}}  , \label{eq:upper-bound-on-circuit-output-divergence-recurrence}
    \end{align}
    where (a) holds by \cref{eq:stepwise-coupling-construction}; (b) holds by \cref{eq:upper-bound-on-circuit-output-divergence-gamma} because the gate operation is a deterministic function of its inputs, and because $\invbreve{\mathrm{F}}_\Phi^\mathsf{UA}$ is non-decreasing by \cref{lemma:properties-of-upper-concave-envelope}, Part 2; (c) holds by the union bound; (d) holds by \cref{eq:upper-bound-on-circuit-output-divergence-t-wk}; (e) holds because $|\mathcal{O}_j| \leq b$ since gates have at most $b$ inputs; and (f) holds by definition of $f$.
    
    For convenience, let $\mathcal{D} \subseteq [m]$ denote the set of gates which are descendants of $X_1$, i.e., there exists a directed path from $X_1$ to any gate $j \in \mathcal{D}$. Next, we will construct a directed path from $X_1$ to $Y_m$ by searching backwards starting at $Y_m$, following the algorithm below (cf. \cite[Section 5.3]{PolyanskiyWu2016}):
    \begin{enumerate}
        \item Let $j \in [m]$ denote the current gate vertex. Start at $j \triangleq m$.
        \item If $X_1$ is an incoming neighbor of gate $j$ (i.e., $1 \in \mathcal{N}_j$), move to $X_1$ and terminate.
        \item Otherwise, find and move to any incoming gate neighbor $j' \in \mathcal{M}_j$ such that   \begin{equation}
            j' \in \Bigprn{\arg \max_{j' \in \mathcal{M}_j} t_{W_{j'}}} \cap \mathcal{D}  . \label{eq:upper-bound-on-circuit-output-divergence-next-vertex}
        \end{equation}
        Set $j \triangleq j'$ and return to Step 2.
    \end{enumerate}
    This algorithm is guaranteed to terminate at $X_1$, because the circuit has no directed cycles and the invariant $j \in \mathcal{D}$ holds throughout execution. (We start at the output gate $m \in \mathcal{D}$ in Step 1, and move to some gate $j' \in \mathcal{D}$ on each iteration of Step 3.) To establish well-definedness, it remains to show that the set in \cref{eq:upper-bound-on-circuit-output-divergence-next-vertex} is always non-empty if the algorithm did not terminate in Step 2 (i.e., $1 \notin \mathcal{N}_j$). Since $j \in \mathcal{D}$ and $1 \notin \mathcal{N}_j$, we must have $\mathcal{M}_j \neq \emptyset$. By \cref{eq:upper-bound-on-circuit-output-divergence-recurrence,eq:upper-bound-on-circuit-output-divergence-t-xi} and the fact that $f(0) = \invbreve{\mathrm{F}}_\Phi^\mathsf{UA}(0; p) = \mathrm{F}_\Phi^\mathsf{UA}(0; p) = 0$ (where the second equality holds by \cref{lemma:properties-of-upper-concave-envelope}, Part 1), it follows that any gate $j' \notin \mathcal{D}$ has $t_{W_{j'}} = 0$. Hence, $(\arg \max_{j' \in \mathcal{M}_j} t_{W_{j'}}) \not\subseteq \mathcal{D}^\complement$, and so the set in \cref{eq:upper-bound-on-circuit-output-divergence-next-vertex} is non-empty, as desired.

    Let $j_1 < \tcdots < j_s = m$ be the sequence of gates from $X_1$ to $Y_m$ constructed above, where $s$ is the path length. We have $\max_{k \in \mathcal{O}_{j_1}} t_{W_k} = 1$, because $1 \in \mathcal{N}_{j_1}$ and $t_{W_{-1}} = 1$ by \cref{eq:upper-bound-on-circuit-output-divergence-t-x1}. For any $r \in \ibrc{2, \tdots, s}$, we have
    \begin{equation}
        \max_{k \in \mathcal{O}_{j_r}} t_{W_k} \stackclap{(a)}{=} \max_{j' \in \mathcal{M}_{j_r}} t_{W_{j'}} \stackclap{(b)}{=} t_{W_{j_{r-1}}} ,
    \end{equation}
    where (a) holds because $1 \notin \mathcal{N}_{j_r}$ and so $t_{W_{-i}} = 0$ for all $i \in \mathcal{N}_{j_r}$ by \cref{eq:upper-bound-on-circuit-output-divergence-t-xi}, and (b) holds by Step 3 of the algorithm \cref{eq:upper-bound-on-circuit-output-divergence-next-vertex}. Therefore,   \begin{align}
        \rho \bigprn{\bigbkt{P_{Y_m}^{(1)}, \tdots, P_{Y_m}^{(q)}}} &\stackclap{(a)}{=} 1 - \!\!\!\!\!\sup_{\mathbb{P}: Y_m^{(\ell)} \sim P_{Y_m}^{(\ell)}}\!\!\!\!\! \mathbb{P}\Bigprn{Y_m^{(1)} \!=\! \tcdots \!=\! Y_m^{(q)}} \\&\stackclap{(b)}{\leq} 1 -\!\!\!\!\!\!\! \mathop{\mathbb{P}}_{Y_m^{(1:q)} \sim \pi_{W_m}}\!\!\!\!\!\!\Bigprn{Y_m^{(1)} \!=\! \tcdots \!=\! Y_m^{(q)}} \\&\stackclap{(c)}{=} t_{W_m} \\&\stackclap{(d)}{\leq} f \bigprn{t_{W_{j_{s-1}}}}\\&\leq \tcdots\stackclap{(e)}{\leq} f^{(s)} \mprn{1} ,
    \end{align}
    where (a) holds by \cref{proposition:maximal-coupling-characterization-of-doeblin-coefficients}, where the supremum is taken over all couplings of $P_{Y_m}^{(1)}, \tdots, P_{Y_m}^{(q)}$; (b) holds by lower-bounding the supremum with a specific coupling $\pi_{W_m}$; (c) holds by definition of $t_{W_k}$ \cref{eq:upper-bound-on-circuit-output-divergence-t-wk}; and (d) and (e) hold by repeatedly applying the recurrence \cref{eq:upper-bound-on-circuit-output-divergence-recurrence} along with the above characterizations of $\max_{k \in \mathcal{O}_{j_r}}$. Since the height of $X_1$ is $\lceil \log_b(n) \rceil$ or greater, $s \geq \lceil \log_b(n) \rceil$. By \cref{eq:upper-bound-on-circuit-output-divergence-n}, we have   \begin{math}
        \rho \bigprn{[P_{Y_m}^{(1)}, \tdots, P_{Y_m}^{(q)}]} \leq \max \ibrc{t \in [0, 1]: f(t) = t} + \epsilon
    \end{math}
    as desired.
\end{proof}

We remark on an alternative way to guarantee that starting from $Y_m$ and repeatedly moving to the gate maximizing $t_{W_{j'}}$ leads to a path terminating at $X_1$. For each $j \notin \mathcal{D}$, let $N_j$ represent the randomness in the noise mechanism which produces $Y_j$ from $Z_j$ \cite[Lemma 4.22]{Kallenberg2021}. Then,   \begin{align}
    &\phantom{{}={}}\rho \bigprn{\bigbkt{P_{Y_m}^{(1)}, \dots, P_{Y_m}^{(q)}}} \\&\stackclap{(a)}{=} \rho \Biggprn{\Ewrt{\varphi_{\mathcal{D}^\complement} \stackrel{\mathsf{iid}}{\sim} \Phi}{\bigbkt{P_{Y_m|N_{\mathcal{D}^\complement} = \varphi_{\mathcal{D}^\complement}}^{(1)}, \tdots, P_{Y_m|N_{\mathcal{D}^\complement} = \varphi_{\mathcal{D}^\complement}}^{(q)}}}} \\&\stackclap{(b)}{\leq} \Ewrt{\varphi_{\mathcal{D}^\complement} \stackrel{\mathsf{iid}}{\sim} \Phi}{\rho \Bigprn{\bigbkt{P_{Y_m|N_{\mathcal{D}^\complement} = \varphi_{\mathcal{D}^\complement}}^{(1)}, \tdots, P_{Y_m|N_{\mathcal{D}^\complement} = \varphi_{\mathcal{D}^\complement}}^{(q)}}}}
\end{align}
where (a) holds by the law of total probability, and (b) holds by convexity of complementary Doeblin coefficients \cite[Theorem 1, Part 3]{MakurSingh2023b} and Jensen's inequality. The marginal distributions in (b) correspond to analyzing the circuit where all gate vertices which are not descendants of $X_1$ are deterministic (after conditioning on $N_{\mathcal{D}^\complement}$), and so may be removed from the circuit before proceeding with the coupling constructions. Since all directed paths in the circuit now start with $X_1$, the search algorithm is guaranteed to build a path back to $X_1$.

\subsection{Proofs for Differential Privacy} \label{subsection:proofs-for-differential-privacy}

In this section, we prove \cref{theorem:contraction-of-rho-epsilon,theorem:differential-privacy-for-unconstrained-online-learning}.

\begin{proof}[Proof of \cref{theorem:contraction-of-rho-epsilon}]
    Fix an absolutely continuous Markov kernel $K\colon \mathcal{X} \times \mathcal{F}_\mathcal{Y} \rightarrow [0, 1]$ with common dominating measure $\mu\colon \mathcal{F}_\mathcal{Y} \rightarrow \mathbb{R}_+$. Fix $n \geq 2$ and $\boldsymbol{\epsilon} = (\epsilon_1, \tdots, \epsilon_n) \in \mathbb{R}_+^n$ with $\epsilon_1 = 0$ and $\epsilon_i \leq \epsilon_j$ for all $i < j$.
    
    \textbf{Part 1:} Fix a Markov kernel $W\colon \mathcal{U} \times \mathcal{F}_\mathcal{X} \rightarrow [0, 1]$ from a Polish space $(\mathcal{U}, \mathcal{F}_\mathcal{U})$. We have   \begin{align}
        &\phantom{{}={}}\rho_{\boldsymbol{\epsilon}}(WK, n)\\&\stackclap{(a)}{=} 1 - \inf_{u_1, \dots, u_n \in \mathcal{U}} \inf_{\substack{\text{$n$-partition of $\mathcal{Y}$} \\ A_1, \tdots, A_n}} \sum_{i=1}^n e^{\epsilon_i} WK(A_i \mid u_i) \\&\stackclap{(b)}{=} 1 - \inf_{u_1, \tdots, u_n \in \mathcal{U}} \underbracket{\int_{\mathcal{Y}} \prn{\min_{i \in [n]} e^{\epsilon_i} \frac{\diff WK}{\diff \mu}(\lwildcard \mid u_i)} \diff \mu}_{\circled{1}} , \label{eq:differential-privacy-part-one-step-one}
    \end{align}
    where (a) holds by definition of $\rho_{\boldsymbol{\epsilon}}$ \cref{eq:rho-epsilon} and (b) holds by \cref{proposition:integral-characterization-of-infimum}.
    
    Next, we lower-bound $\circled{1}$. Let $\nu\colon\mathcal{F}_\mathcal{X} \rightarrow \mathbb{R}_+$ given by $\nu(\wildcard) \triangleq \ibigwedge_{i \in [n]} e^{\epsilon_i} W(\lwildcard \mid u_i)$ be the weighted greatest common component of $W$ restricted to $\ibrc{u_1, \tdots, u_n}$. Let $\tau_0 \triangleq \nu(\mathcal{X})$ (for convenience, we elide the dependence of $\nu$ and $\tau_0$ on $u_1, \tdots, u_n$ when denoting them). Notice that
    \begin{equation}
        \tau_0 \stackclap{(a)}{\leq} \min_{i \in [n]} e^{\epsilon_i} W(\mathcal{X} \mid u_i) \stackclap{(b)}{=} \min_{i \in [n]} e^{\epsilon_i} \stackclap{(c)}{=} 1  , \label{eq:differential-privacy-part-one-tau-zero-bound}
    \end{equation}
    where (a) holds by definition of greatest common component \cref{eq:greatest-common-component}, (b) holds because $W$ is a Markov kernel, and (c) holds because $\epsilon_1 = 0$ and $\epsilon_i \geq \epsilon_1$ for all $i > 1$. We consider two cases based on the value of $\tau_0$.
    
    \textit{Case 1:} $\tau_0 = 1$. Proceeding from \cref{eq:differential-privacy-part-one-step-one}, we have   \begin{equation}
        \circled{1} \stackclap{(a)}{=} \!\int_{\mathcal{Y}} \Biggprn{\frac{\diff}{\diff \mu} \!\bigwedge_{i \in [n]}\! e^{\epsilon_i} WK(\lwildcard \mid u_i)\!}\! \diff \mu\stackclap{(b)}{=} \!\bigwedge_{i \in [n]}\! e^{\epsilon_i} WK(\mathcal{Y} \mid u_i)  , \label{eq:differential-privacy-part-one-case-one}
    \end{equation}
    where (a) holds by \cref{lemma:density-of-greatest-common-component} and because the lattice infimum of a finite family is the minimum, and (b) holds by the Radon-Nikodym theorem. For notational convenience, let $w_i\colon \mathcal{F}_\mathcal{X} \rightarrow [0, 1]$ be the probability measure $w_i(\wildcard) \triangleq W(\lwildcard \mid u_i)$. Observe that for any $i \in [n]$,   \begin{equation}
        e^{\epsilon_i} WK(\lwildcard \mid u_i) = e^{\epsilon_i} w_i K(\wildcard) = \prn{e^{\epsilon_i} w_i - \nu} K(\wildcard) + \nu K(\wildcard) ,
    \end{equation}
    and so by \cref{lemma:greatest-common-component-under-affine-transformation},
    \begin{equation}
        \bigwedge_{i \in [n]} e^{\epsilon_i} WK(\lwildcard \mid u_i) = \bigwedge_{i \in [n]} \prn{e^{\epsilon_i} w_i - \nu} K(\wildcard) + \nu K(\wildcard) . \label{eq:differential-privacy-part-one-step-two}
    \end{equation}
    Furthermore, it holds that $\nu = w_1$. To see this, first observe that $\nu \leq w_1$, because
    \begin{align}
        \forall A \in \mathcal{F}_\mathcal{X}, \quad \nu(A) \stackclap{(a)}{\leq} e^{\epsilon_1} w_1(A) \stackclap{(b)}{=} w_1(A) ,
    \end{align}
    where (a) holds by definition of greatest common component and (b) holds because $\epsilon_1 = 0$. Next, suppose for the sake of contradiction that $\nu(A) \neq w_1(A)$ for some $A \in \mathcal{F}_\mathcal{X}$. Then,   \begin{align}
        1 &\stackclap{(a)}{=} \nu(\mathcal{X})\stackclap{(b)}{=} \nu(A) + \nu(A^\complement) \stackclap{(c)}{<} w_1(A) + \nu(A^\complement) \\&\stackclap{(d)}{\leq} w_1(A) + w_1(A^\complement) \stackclap{(e)}{=} w_1(\mathcal{X}) \stackclap{(f)}{=} 1
    \end{align}
    and we obtain the contradiction $1 < 1$, where (a) holds by the assumption that $\tau_0 = 1$, (b) holds by additivity of measures, (c) holds by combining the supposition $\nu(A) \neq w_1(A)$ with the fact that $\nu \leq w_1$, (d) holds because $\nu \leq w_1$, (e) holds by additivity of measures, and (f) holds because $w_1$ is a probability measure. Finally, combining \cref{eq:differential-privacy-part-one-step-two} with the fact that $\nu = w_1$, we have
    \begin{align}
        \bigwedge_{i \in [n]} e^{\epsilon_i} WK(\lwildcard \mid u_i) = \nu K(\wildcard) , \label{eq:differential-privacy-part-one-gcc}
    \end{align}
    and combining \cref{eq:differential-privacy-part-one-case-one,eq:differential-privacy-part-one-gcc}, we have   \begin{equation}
        \circled{1} = \nu K(\mathcal{Y}) \stackclap{(a)}{=} \prn{1 - \tau_0} \tau_{\boldsymbol{\epsilon}}(K, n) + \nu K(\mathcal{Y})  , \label{eq:differential-privacy-part-one-step-three}
    \end{equation}
    where (a) holds by the assumption that $\tau_0 = 1$.
    
    \textit{Case 2:} $\tau_0 < 1$. Define a Markov kernel $V\colon \ibrc{u_1, \tdots, u_n} \times \mathcal{F}_\mathcal{X} \rightarrow [0, 1]$ given by
    \begin{align}
        \forall i \in [n], \quad V(\lwildcard \mid u_i) \triangleq \frac{e^{\epsilon_i} W(\lwildcard \mid u_i) - \nu(\wildcard)}{e^{\epsilon_i} - \tau_0}  . \label{eq:dp-part-one-v}
    \end{align}
    This is a valid Markov kernel because for any ${i \in [n]}$,   \begin{math}
        V(\mathcal{X} \mid u_i) = \ifracab{e^{\epsilon_i} W(\mathcal{X} \mid u_i) - \nu(\mathcal{X})}{e^{\epsilon_i} - \tau_0} = (e^{\epsilon_i} - \tau_0)/\allowbreak(e^{\epsilon_i} - \tau_0) = 1
    \end{math}.
    By rearranging \cref{eq:dp-part-one-v}, $W$ may be written in terms of $V$ and $\nu$ as   \begin{math}
        e^{\epsilon_i} W(\lwildcard \mid u_i) = \iprn{e^{\epsilon_i} - \tau_0} V(\lwildcard \mid u_i) + \nu(\wildcard)
    \end{math},
    and therefore the composition $WK$ may be written as
    \begin{equation}
        e^{\epsilon_i} WK(\lwildcard \mid u_i) = \iprn{e^{\epsilon_i} - \tau_0} VK(\lwildcard \mid u_i) + \nu K(\wildcard)  . \label{eq:differential-privacy-wk}
    \end{equation}
    Proceeding from \cref{eq:differential-privacy-part-one-step-one}, we have   \begin{align}
        \circled{1} &\stackclap{(a)}{=} \int_{\mathcal{Y}} \min_{i \in [n]} \brc{\prn{e^{\epsilon_i} - \tau_0} \frac{\diff VK}{\diff \mu}(\lwildcard \mid u_i) + \frac{\diff \nu K}{\diff \mu}} \diff \mu \\&\stackclap{(b)}{=}\underbracket{\int_{\mathcal{Y}} \prn{\min_{i \in [n]} \prn{e^{\epsilon_i} - \tau_0} \frac{\diff VK}{\diff\mu}(\lwildcard \mid u_i)} \diff \mu}_{\circled{2}} + \int_{\mathcal{Y}} \frac{\diff \nu K}{\diff \mu} \diff \mu , \label{eq:differential-privacy-part-one-step-four}
    \end{align}
    where (a) holds by differentiating both sides of \cref{eq:differential-privacy-wk} and (b) holds by linearity of integration. Next, we lower-bound $\circled{2}$. We have   \begin{align}
        \circled{2} &\stackclap{(a)}{=} \inf_{\substack{\text{$n$-partition of $\mathcal{Y}$} \\ A_1, \dots, A_n}} \sum_{i=1}^n \prn{e^{\epsilon_i} - \tau_0} VK(A_i \mid u_i)\\&\stackclap{(b)}{=} \inf_{\substack{\text{$n$-partition of $\mathcal{Y}$} \\ A_1, \dots, A_n}} \sum_{i=1}^n \prn{e^{\epsilon_i} - \tau_0} \int_{\mathcal{X}} K(A_i \mid x)  V(\diff x \mid u_i) \\
        &\stackclap{(c)}{\geq} \inf_{\substack{\text{$n$-partition of $\mathcal{Y}$} \\ A_1, \dots, A_n}} \sum_{i=1}^n \prn{e^{\epsilon_i} - \tau_0} \inf_{x_i \in \mathcal{X}} K(A_i \mid x_i)\\&\stackclap{(d)}{=} \inf_{x_1, \dots, x_n \in \mathcal{X}} \inf_{\substack{\text{$n$-partition of $\mathcal{Y}$} \\ A_1, \dots, A_n}} \sum_{i=1}^n \prn{e^{\epsilon_i} - \tau_0} K(A_i \mid x_i) \\
        &\stackclap{(e)}{=} \prn{1 - \tau_0} \inf_{x_1, \dots, x_n \in \mathcal{X}} \inf_{\substack{\text{$n$-partition of $\mathcal{Y}$} \\ A_1, \dots, A_n}} \sum_{i=1}^n \frac{e^{\epsilon_i} - \tau_0}{1 - \tau_0} \, K(A_i \mid x_i)\\&\stackclap{(f)}{\geq} \prn{1 - \tau_0} \inf_{x_1, \dots, x_n \in \mathcal{X}} \inf_{\substack{\text{$n$-partition of $\mathcal{Y}$} \\ A_1, \dots, A_n}} \sum_{i=1}^n e^{\epsilon_i} K(A_i \mid x_i) \\
        &\stackclap{(g)}{=} \prn{1 - \tau_0} \tau_{\boldsymbol{\epsilon}}(K, n)  , \label{eq:differential-privacy-part-one-step-five}
    \end{align}
    where (a) holds by \cref{proposition:integral-characterization-of-infimum}, (b) holds by definition of kernel composition \cref{eq:kernel-composition}, (c) holds by replacing a weighted average with an infimum, (d) holds by interchanging the order of infimums, (e) holds because $1 - \tau_0 > 0$, (f) holds because $\epsilon_i \geq 0$ and so   \begin{math}
        \ifracab{e^{\epsilon_i} - \tau_0}{1 - \tau_0} \geq \ifracab{e^{\epsilon_i} - e^{\epsilon_i} \tau_0}{1 - \tau_0} = e^{\epsilon_i}
    \end{math}
    for each $i \in [n]$, and (g) holds by definition of $\tau_{\boldsymbol{\epsilon}}$ \cref{eq:tau-epsilon}. Combining \cref{eq:differential-privacy-part-one-step-four,eq:differential-privacy-part-one-step-five}, we have   \begin{align}
        \circled{1} &\geq \prn{1 - \tau_0} \tau_{\boldsymbol{\epsilon}}(K, n) + \int_{\mathcal{Y}} \frac{\diff \nu K}{\diff \mu} \diff \mu\\&\stackclap{(a)}{=} \prn{1 - \tau_0} \tau_{\boldsymbol{\epsilon}}(K, n) + \nu K(\mathcal{Y})  , \label{eq:differential-privacy-part-one-step-six}
    \end{align}
    where (a) holds by the Radon-Nikodym theorem.

    \textit{Proceeding from both cases:} Now, by \cref{eq:differential-privacy-part-one-step-three,eq:differential-privacy-part-one-step-six}, we have the same lower bound for $\circled{1}$ in both cases. For notational convenience, let $\hat{\nu}\colon \mathcal{F}_\mathcal{X} \rightarrow \mathbb{R}_+$ be the measure given by
    \begin{equation}
        \hat{\nu}(\cdot) \triangleq \begin{cases}
            \frac{\nu(\cdot)}{\tau_0}  , & \text{if $\tau_0 > 0$}  , \\
            0  , & \text{if $\tau_0 = 0$}  .
        \end{cases}
    \end{equation}
    Proceeding onwards,   \begin{equation}
        \circled{1} \geq \prn{1 - \tau_0} \tau_{\boldsymbol{\epsilon}}(K, n) + \tau_0 \, \hat{\nu} K(\mathcal{Y}) \stackclap{(a)}{=} \prn{1 - \tau_0} \tau_{\boldsymbol{\epsilon}}(K, n) + \tau_0  , \label{eq:differential-privacy-part-one-step-seven}
    \end{equation}
    where (a) holds in the case $\tau_0 > 0$ because $\hat{\nu}$ is a probability measure on $\mathcal{F}_\mathcal{X}$ in this case and so $\hat{\nu} K$ is a probability measure on $\mathcal{F}_\mathcal{Y}$. Combining \cref{eq:differential-privacy-part-one-step-one,eq:differential-privacy-part-one-step-seven}, we have   \begin{align}
        \rho_{\boldsymbol{\epsilon}}(WK, n) &\leq 1 - \inf_{u_1, \dots, u_n \in \mathcal{U}} \brc{\prn{1 - \tau_0} \tau_{\boldsymbol{\epsilon}}(K, n) + \tau_0} \\&= \biggprn{1 - \underbracket{\inf_{u_1, \dots, u_n \in \mathcal{U}} \tau_0}_{\circled{3}}} \prn{1 - \tau_{\boldsymbol{\epsilon}}(K, n)} . \label{eq:differential-privacy-part-one-step-eight}
    \end{align}
    Next, we evaluate $\circled{3}$. Given any $u_1, \tdots, u_n \in \mathcal{U}$, let $\xi\colon \mathcal{F}_\mathcal{X} \rightarrow \mathbb{R}_+$ be the measure $\xi(\wildcard) \triangleq \sum_{i=1}^n W(\lwildcard \mid u_i)$, where we elide the dependence on $u_1, \tdots, u_n$ in the notation $\xi$ for convenience. We have   \begin{align}
        \circled{3} &\stackclap{(a)}{=} \inf_{u_1, \dots, u_n \in \mathcal{U}} \bigwedge_{i \in [n]} e^{\epsilon_i} W(\mathcal{X} \mid u_i) \\&\stackclap{(b)}{=} \inf_{u_1, \dots, u_n \in \mathcal{U}} \int_{\mathcal{X}} \Biggprn{\frac{\diff}{\diff \xi} \bigwedge_{i \in [n]} e^{\epsilon_i} W(\lwildcard \mid u_i)} \diff \xi \\
        &\stackclap{(c)}{=} \inf_{u_1, \dots, u_n \in \mathcal{U}} \int_{\mathcal{X}} \prn{\min_{i \in [n]} e^{\epsilon_i} \frac{\diff W}{\diff \xi}(\lwildcard \mid u_i)} \diff \xi \\&\stackclap{(d)}{=} \inf_{u_1, \dots, u_n \in \mathcal{U}} \inf_{\substack{\text{$n$-partition of $\mathcal{X}$} \\ A_1, \dots, A_n}} \sum_{i=1}^n e^{\epsilon_i} W(A_i \mid u_i) \\&\stackclap{(e)}{=} \tau_{\boldsymbol{\epsilon}}(W, n) , \label{eq:differential-privacy-part-one-step-nine}
    \end{align}
    where (a) holds by definition of $\tau_0$, (b) holds by the Radon-Nikodym theorem because $\ibigwedge_{i \in [n]} e^{\epsilon_i} W(\lwildcard \mid u_i) \ll \xi$, (c) holds by \cref{lemma:density-of-greatest-common-component} and because the lattice infimum of a finite family is the minimum, (d) holds by \cref{proposition:integral-characterization-of-infimum}, and (e) holds by definition of $\tau_{\boldsymbol{\epsilon}}$ \cref{eq:tau-epsilon}. Combining \cref{eq:differential-privacy-part-one-step-eight,eq:differential-privacy-part-one-step-nine}, we have   \begin{align}
        \rho_{\boldsymbol{\epsilon}}(WK, n) &\leq \prn{1 - \tau_{\boldsymbol{\epsilon}}(W, n)} \prn{1 - \tau_{\boldsymbol{\epsilon}}(K, n)}\\&\stackclap{(a)}{=} \rho_{\boldsymbol{\epsilon}}(W, n) \, \rho_{\boldsymbol{\epsilon}}(K, n)
    \end{align}
    as desired, where (a) holds by definition of $\rho_{\boldsymbol{\epsilon}}$ \cref{eq:rho-epsilon}.

    \textbf{Part 2:} First, observe that   \begin{align}
        &\phantom{{}\iff{}}\text{$K$ is $(\boldsymbol{\epsilon}, \delta, n)$-LDP}
        \\&\stackclap{(a)}{\iff}\!\!\! \inf_{x_1, \dots, x_n \in \mathcal{X}} \inf_{\substack{\text{$n$-partition of $\mathcal{Y}$} \\ A_1, \dots, A_n}} \sum_{i=1}^n e^{\epsilon_i} K(A_i \mid x_i) \geq 1 - \delta \\&\stackclap{(b)}{\iff}\! \tau_{\boldsymbol{\epsilon}}(K, n) \geq 1 - \delta \!\stackclap{(c)}{\iff}\! \rho_{\boldsymbol{\epsilon}}(K, n) \leq \delta  , \label{eq:differential-privacy-part-two-step-one}
    \end{align}
    where (a) holds by definition of $(\boldsymbol{\epsilon}, \delta, n)$-LDP \cref{eq:group-differential-privacy}, (b) holds by definition of $\tau_{\boldsymbol{\epsilon}}$ \cref{eq:tau-epsilon}, and (c) holds by subtracting both sides of the inequality from $1$. We note that \cref{eq:differential-privacy-part-two-step-one} generalizes \cite[Theorem 1]{asoodeh2021local}.
    
    With \cref{eq:differential-privacy-part-two-step-one} in mind, observe also that
    \begin{align}
        \rho_{\boldsymbol{\epsilon}}(K, n) \leq \delta \implies \forall W, \quad \rho_{\boldsymbol{\epsilon}}(WK, n) \leq \delta \rho_{\boldsymbol{\epsilon}}(W, n)  , \label{eq:differential-privacy-part-two-step-two}
    \end{align}
    because for any Markov kernel $W\colon \mathcal{U} \times \mathcal{F}_\mathcal{X} \rightarrow [0, 1]$ from any Polish space $(\mathcal{U}, \mathcal{F}_\mathcal{U})$, we have   \begin{equation}
        \rho_{\boldsymbol{\epsilon}}(WK, n) \stackclap{(a)}{\leq} \rho_{\boldsymbol{\epsilon}}(W, n) \, \rho_{\boldsymbol{\epsilon}}(K, n) \stackclap{(b)}{\leq} \delta \rho_{\boldsymbol{\epsilon}}(W, n)  ,
    \end{equation}
    where (a) follows by Part 1 and (b) follows by the antecedent in \cref{eq:differential-privacy-part-two-step-two}. Next, we will show that the converse
    \begin{align}
        \bigprn{\forall W, \quad \rho_{\boldsymbol{\epsilon}}(WK, n) \leq \delta \rho_{\boldsymbol{\epsilon}}(W, n)} \implies \rho_{\boldsymbol{\epsilon}}(K, n) \leq \delta \label{eq:differential-privacy-part-two-step-three}
    \end{align}
    also holds. Note that if $\tau_{\boldsymbol{\epsilon}}(K, n) = 1$, the consequent in \cref{eq:differential-privacy-part-two-step-three} is trivially satisfied, because   \begin{math}
        \rho_{\boldsymbol{\epsilon}}(K, n) = 1 - \tau_{\boldsymbol{\epsilon}}(K, n) = 0 \leq \delta
    \end{math}.
    Hence, we assume $\tau_{\boldsymbol{\epsilon}}(K, n) < 1$ for the remainder of this argument. By definition of $\tau_{\boldsymbol{\epsilon}}$ as an infimum in \cref{eq:tau-epsilon}, for any arbitrary $0 < \gamma < 1 - \tau_{\boldsymbol{\epsilon}}(K, n)$, there exists $x^*_1, \tdots, x^*_n \in \mathcal{X}$ (possibly depending on $\gamma$) such that
    \begin{equation}
        \inf_{\substack{\text{$n$-partition of $\mathcal{Y}$} \\ A_1, \dots, A_n}} \sum_{i=1}^n e^{\epsilon_i} K(A_i \mid x^*_i) \leq \tau_{\boldsymbol{\epsilon}}(K, n) + \gamma  . \label{eq:differential-privacy-part-two-step-four}
    \end{equation}
    Furthermore, there must be at least two distinct values within $(x^*_1, \tdots, x^*_n)$, because if we suppose the contrary that $x^*_1 = \tcdots = x^*_n = x^*$ for some $x^* \in \mathcal{X}$, we obtain the contradiction   \begin{multline}
        \hspace{-1em} \inf_{\substack{\text{$n$-partition of $\mathcal{Y}$} \\ A_1, \dots, A_n}} \sum_{i=1}^n e^{\epsilon_i} K(A_i \mid x^*_i) =\hspace{-1em} \inf_{\substack{\text{$n$-partition of $\mathcal{Y}$} \\ A_1, \dots, A_n}} \sum_{i=1}^n e^{\epsilon_i} K(A_i \mid x^*) \\\stackclap{(a)}{=} e^{\epsilon_1} K(\mathcal{Y} \mid x^*) \stackclap{(b)}{=} 1 > \tau_{\boldsymbol{\epsilon}}(K, n) + \gamma  ,
    \end{multline}
    where (a) holds because $\epsilon_i \geq \epsilon_1$ for all $i > 1$, and (b) holds because $\epsilon_1 = 0$ and $K$ is a Markov kernel. Let $W^*\colon [n] \times \mathcal{F}_\mathcal{X} \rightarrow [0, 1]$ be a Markov kernel such that $W^*(\lwildcard \mid i) \triangleq \updelta_{x^*_i}(\wildcard)$ for all $i \in [n]$. Then,   \begin{align}
        &\phantom{{}={}}\rho_{\boldsymbol{\epsilon}}(K, n)
        \\&\stackclap{(a)}{=} 1 - \tau_{\boldsymbol{\epsilon}}(K, n) \\&\stackclap{(b)}{\leq} 1 - \inf_{\substack{\text{$n$-partition of $\mathcal{Y}$} \\ A_1, \dots, A_n}} \sum_{i=1}^n e^{\epsilon_i} K(A_i \mid x^*_i) + \gamma \\&\stackclap{(c)}{=} 1 - \inf_{\substack{\text{$n$-partition of $\mathcal{Y}$} \\ A_1, \dots, A_n}} \sum_{i=1}^n e^{\epsilon_i} W^* K(A_i \mid i) + \gamma \\
        &\stackclap{(d)}{\leq} 1 - \inf_{u_1, \dots, u_n \in [n]} \inf_{\substack{\text{$n$-partition of $\mathcal{Y}$} \\ A_1, \dots, A_n}} \sum_{i=1}^n e^{\epsilon_i} W^* K(A_i \mid u_i) + \gamma \\&\stackclap{(e)}{=} \rho_{\boldsymbol{\epsilon}}(W^* K, n) + \gamma \stackclap{(f)}{\leq} \delta \underbracket{\rho_{\boldsymbol{\epsilon}}(W^*, n)}_{\circled{1}} \!\mathrel{+} \gamma  , \label{eq:differential-privacy-part-two-step-five}
    \end{align}
    where (a) holds by definition of $\rho_{\boldsymbol{\epsilon}}$ \cref{eq:rho-epsilon}, (b) holds by \cref{eq:differential-privacy-part-two-step-four}, (c) holds by definition of $W^*$ and because each $\ibrc{x^*_i}$ is measurable (since $\mathcal{X}$ is Polish), (d) holds by lower-bounding the value of a particular instance with an infimum, (e) holds by definition of $\rho_{\boldsymbol{\epsilon}}$ \cref{eq:rho-epsilon}, and (f) holds by the antecedent in \cref{eq:differential-privacy-part-two-step-three}. Next, we evaluate $\circled{1}$. Let $s, t \in [n]$ be indices such that $x^*_s \neq x^*_t$. We have   \begin{align}
        \circled{1} &\stackclap{(a)}{=} 1 - \inf_{u_1, \dots, u_n \in \mathcal{U}} \inf_{\substack{\text{$n$-partition of $\mathcal{X}$} \\ A_1, \dots, A_n}} \sum_{i=1}^n e^{\epsilon_i} W^*(A_i \mid u_i)\\&\stackclap{(b)}{=} 1 - \Bigprn{e^{\epsilon_1} W^* \mprn{\mathcal{X} - \brc{x^*_s} \mid s} + e^{\epsilon_2} W^* \mprn{\brc{x^*_s} \mid t}} \\
        &\stackclap{(c)}{=} 1 - \Bigprn{e^{\epsilon_1} \updelta_{x^*_s} \mprn{\mathcal{X} - \brc{x^*_s}} + e^{\epsilon_2} \updelta_{x^*_t} \mprn{\brc{x^*_s}}}\stackclap{(d)}{=} 1 , \label{eq:differential-privacy-part-two-step-six}
    \end{align}
    where (a) holds by definition of $\rho_{\boldsymbol{\epsilon}}$ \cref{eq:rho-epsilon}; the events $\mathcal{X} - \ibrc{x^*_s}$ and $\ibrc{x^*_s}$ in (b) are measurable because $\mathcal{X}$ is Polish, and so $\ibrc{x}$ is measurable for any $x \in \mathcal{X}$; (b) follows by replacing the infima with a specific instance, and equality holds because $\sum_{i=1}^n e^{\epsilon_i} W^*(A_i \mid u_i) \geq 0$ for all $u_1, \tdots, u_n$ and $A_1, \tdots, A_n$; (c) holds by definition of $W^*$; and (d) holds by definition of Dirac measure. Since $\gamma > 0$ was arbitrary, combining \cref{eq:differential-privacy-part-two-step-five,eq:differential-privacy-part-two-step-six} shows that $\rho_{\boldsymbol{\epsilon}}(K, n) \leq \delta$, thus proving \cref{eq:differential-privacy-part-two-step-three}.

    Finally, combining \cref{eq:differential-privacy-part-two-step-one,eq:differential-privacy-part-two-step-two,eq:differential-privacy-part-two-step-three}, we have
    \begin{equation}
        \text{$K$ is $(\boldsymbol{\epsilon}, \delta, n)$-LDP} \iff \forall W, \quad \rho_{\boldsymbol{\epsilon}}(WK, n) \leq \delta \rho_{\boldsymbol{\epsilon}}(W, n)
    \end{equation}
    as desired.

    \textbf{Part 3:} Fix a Markov kernel $W\colon \mathcal{U} \times \mathcal{F}_\mathcal{X} \rightarrow [0, 1]$. Following the argument from \cref{eq:differential-privacy-part-one-step-one}, we have   \begin{equation}
        \rho_{\boldsymbol{\epsilon}}(WK, n) \stackclap{(a)}{=} 1 - \inf_{u_1, \dots, u_n \in \mathcal{U}} \underbracket{\int_{\mathcal{Y}} \prn{\min_{i \in [n]} e^{\epsilon_i} \frac{\diff WK}{\diff \mu}(\lwildcard \mid u_i)} \diff \mu}_{\circled{1}}\! . \label{eq:dp-part-three-step-one}
    \end{equation}
    Note that if $\inf_{u_1, \dots, u_n \in \mathcal{U}} \circled{1} = 1$, the result \cref{eq:differential-privacy-doeblin-curves} trivially follows because
    \begin{equation}
        \rho_{\boldsymbol{\epsilon}}(WK, n) = 0 \stackclap{(a)}{\leq} \mathrm{F}_K^\mathsf{UA}(\rho_{\boldsymbol{\epsilon}}(W, n); p)  ,
    \end{equation}
    where (a) holds by the range of $\mathrm{F}_K^\mathsf{UA}$. Hence, assume $\inf_{u_1, \dots, u_n \in \mathcal{U}} \circled{1} < 1$ for the remainder of this argument. Next, we lower-bound $\circled{1}$. Let $\nu\colon \mathcal{F}_\mathcal{X} \rightarrow \mathbb{R}_+$ given by $\nu(\wildcard) \triangleq \ibigwedge_{i \in [n]} e^{\epsilon_i} W(\lwildcard \mid u_i)$ be the weighted greatest common component of $W$ restricted to $\ibrc{u_1, \tdots, u_n}$. Let $\tau_0 \triangleq \nu(\mathcal{X})$ ~(as before, we elide the dependence of $\nu$ and $\tau_0$ on $u_1, \tdots, u_n$ when denoting them).  Following the argument from \cref{eq:differential-privacy-part-one-tau-zero-bound}, we have $\tau_0 \leq 1$. We consider two cases based on the value of $\tau_0$.

    \textit{Case 1:} $\tau_0 = 1$. We have
    \begin{equation}
        \circled{1} \stackclap{(a)}{=} \nu K(\mathcal{Y}) \stackclap{(b)}{=} 1 , \label{eq:dp-part-three-step-two}
    \end{equation}
    where (a) holds by following the arguments from \cref{eq:differential-privacy-part-one-case-one,eq:differential-privacy-part-one-gcc}, and (b) holds because $\nu$ is a probability measure on $\mathcal{F}_\mathcal{X}$ (since $\nu(\mathcal{X}) = \tau_0 = 1$) and so $\nu K$ is a probability measure on $\mathcal{F}_\mathcal{Y}$.

    \textit{Case 2:} $\tau_0 < 1$. Define a Markov kernel $V\colon \ibrc{u_1, \tdots, u_n} \times \mathcal{F}_\mathcal{X} \rightarrow [0, 1]$ given by
    \begin{align}
        \forall i \in [n], \quad V(\lwildcard \mid u_i) \triangleq \frac{e^{\epsilon_i} W(\lwildcard \mid u_i) - \nu(\wildcard)}{e^{\epsilon_i} - \tau_0} . \label{eq:dp-part-three-v}
    \end{align}
    Following the argument from \cref{eq:differential-privacy-part-one-step-four}, we have   \begin{equation}
        \circled{1} = \underbracket{\int_{\mathcal{Y}} \prn{\min_{i \in [n]} \prn{e^{\epsilon_i} - \tau_0} \frac{\diff VK}{\diff \mu}(\lwildcard \mid u_i)} \diff \mu}_{\circled{2}}\! \mathrel{+} \nu K(\mathcal{Y})  . \label{eq:dp-part-three-step-three}
    \end{equation}
    Next, we lower-bound $\circled{2}$. We have   \begin{align}
        \circled{2} &\stackclap{(a)}{\geq} \prn{1 - \tau_0} \int_{\mathcal{Y}} \prn{\min_{i \in [n]} \frac{\diff VK}{\diff \mu}(\lwildcard \mid u_i)} \diff \mu \\&\stackclap{(b)}{=} \prn{1 - \tau_0} \hspace{-0.7em}\inf_{\substack{\text{$n$-partition of $\mathcal{Y}$} \\ A_1, \dots, A_n}} \sum_{i=1}^n VK(A_i \mid u_i) \\&\stackclap{(c)}{\geq} \prn{1 - \tau_0} \tau(VK)  , \label{eq:dp-part-three-step-four}
    \end{align}
    where (a) holds because $\epsilon_i \geq 0$ for all $i \in [n]$, (b) holds by \cref{proposition:integral-characterization-of-infimum}, and (c) holds by \cref{theorem:variational-characterization-of-doeblin-coefficient}. For notational convenience, let $\hat{\nu}\colon \mathcal{F}_\mathcal{X} \rightarrow \mathbb{R}_+$ be the measure given by
    \begin{equation}
        \hat{\nu}(\cdot) \triangleq \begin{cases}
            \frac{\nu(\cdot)}{\tau_0}  , & \text{if $\tau_0 > 0$}  , \\
            0  , & \text{if $\tau_0 = 0$}  .
        \end{cases}
    \end{equation}
    Combining \cref{eq:dp-part-three-step-three,eq:dp-part-three-step-four},   \begin{equation}
        \circled{1} \geq \prn{1 - \tau_0} \tau(VK) + \tau_0 \, \hat{\nu} K(\mathcal{Y}) \stackclap{(a)}{=} \prn{1 - \tau_0} \tau(VK) + \tau_0  , \label{eq:dp-part-three-step-five}
    \end{equation}
    where (a) holds in the case $\tau_0 > 0$ because $\hat{\nu}$ is a probability measure on $\mathcal{F}_\mathcal{X}$ in this case and so $\hat{\nu} K$ is a probability measure on $\mathcal{F}_\mathcal{Y}$.

    \textit{Proceeding from both cases:} Now, by \cref{eq:dp-part-three-step-two,eq:dp-part-three-step-five}, we have lower bounds on $\circled{1}$ for any value of $\tau_0$. Proceeding from \cref{eq:dp-part-three-step-one}, we have   \begin{align}
        \rho_{\boldsymbol{\epsilon}}(WK, n) &\stackclap{(a)}{=} 1 - \!\!\!\inf_{\substack{u_1, \dots, u_n \in \mathcal{U}: \\ \tau_0 < 1}} \circled{1} \\&\stackclap{(b)}{\leq} 1 -\!\!\! \inf_{\substack{u_1, \dots, u_n \in \mathcal{U}: \\ \tau_0 < 1}} \brc{\prn{1 - \tau_0} \tau(VK) + \tau_0} \\&\stackclap{(c)}{\leq} 1 - \inf_{u_1, \dots, u_n \in \mathcal{U}} \brc{\prn{1 - \tau_0} \tau(VK) + \tau_0} \\
        &= \!\!\!\sup_{u_1, \dots, u_n \in \mathcal{U}}\! \brc{\prn{1 - \tau_0} \rho(VK)} \\&\stackclap{(d)}{\leq} \!\!\!\sup_{u_1, \dots, u_n \in \mathcal{U}} \!\brc{\prn{1 - \tau_0} \mathrm{F}_K^\mathsf{UA} \mprn{\rho(V); \tnorm{V}_{\mathsf{UA}}}} \\&\stackclap{(e)}{\leq} \!\!\!\sup_{u_1, \dots, u_n \in \mathcal{U}}\! \brc{\prn{1 - \tau_0} \mathrm{F}_K^\mathsf{UA} \mprn{1; \tnorm{V}_{\mathsf{UA}}}} , \label{eq:dp-part-three-step-six}
    \end{align}
    where (a) holds because the infimum $\inf_{u_1, \tdots, u_n \in \mathcal{U}} \circled{1} < 1$ is not achieved by any $(u_1, \tdots, u_n)$ for which $\tau_0 = 1$, by \cref{eq:dp-part-three-step-two}; (b) holds by \cref{eq:dp-part-three-step-five}; (c) holds because an infimum does not increase when taken over a larger feasible set; (d) holds by definition of Doeblin curve \cref{eq:doeblin-curve}; and (e) holds because Doeblin curves are non-decreasing in their first argument. The uniform average power of $V$ may be bounded as   \begin{align}
        \tnorm{V}_{\mathsf{UA}} &\stackclap{(a)}{=} \max_{i \in [n]} \int_{\mathcal{X}} M(\norm{x}) \, V(\diff x \mid u_i) \\&\stackclap{(b)}{=} \max_{i \in [n]} \int_{\mathcal{X}} M(\norm{x}) \, \frac{e^{\epsilon_i} W(\diff x \mid u_i) - \nu(\diff x)}{e^{\epsilon_i} - \tau_0} \\
        &\leq \max_{i \in [n]} \int_{\mathcal{X}} M(\norm{x}) \, \frac{e^{\epsilon_i}}{e^{\epsilon_i} - \tau_0} \, W(\diff x \mid u_i) \\&\stackclap{(c)}{\leq} p \max_{i \in [n]} \frac{e^{\epsilon_i}}{e^{\epsilon_i} - \tau_0} \\&\stackclap{(d)}{=} \frac{p}{1 - \tau_0}  , \label{eq:dp-part-three-step-seven}
    \end{align}
    where (a) holds by definition of uniform average power \cref{eq:uniform-average-power}; (b) holds by definition of $V$ \cref{eq:dp-part-three-v}; (c) holds because $W$ satisfies $\tnorm{W}_{\mathsf{UA}} \leq p$; and (d) holds because $s \mapsto s / (s - \tau_0)$ is decreasing for $s > \tau_0$, and $e^{\epsilon_i} \geq 1 > \tau_0$ for each $i \in [n]$. Combining \cref{eq:dp-part-three-step-six,eq:dp-part-three-step-seven},   \begin{align}
        \rho_{\boldsymbol{\epsilon}}(WK, n) &\leq \sup_{u_1, \dots, u_n \in \mathcal{U}} \brc{\prn{1 - \tau_0} \mathrm{F}_K^\mathsf{UA} \mprn{1; \frac{p}{1 - \tau_0}}} \\&\stackclap{(a)}{=} \sup_{u_1, \dots, u_n \in \mathcal{U}} \mathrm{F}_K^\mathsf{UA}(1 - \tau_0; p) \\
        &\stackclap{(b)}{=} \mathrm{F}_K^\mathsf{UA} \mprn{\sup_{u_1, \dots, u_n \in \mathcal{U}} \brc{1 - \tau_0}; p} \\&\stackclap{(c)}{=} \mathrm{F}_K^\mathsf{UA}(1 - \tau_{\boldsymbol{\epsilon}}(W, n); p) \\&\stackclap{(d)}{=} \mathrm{F}_K^\mathsf{UA}(\rho_{\boldsymbol{\epsilon}}(W, n); p)
    \end{align}
    as desired, where (a) holds by \cref{proposition:homogeneity-of-power-constrained-doeblin-curves}, (b) holds because Doeblin curves are non-decreasing in their first argument, (c) holds by following the argument in \cref{eq:differential-privacy-part-one-step-nine}, and (d) holds by definition of $\rho_{\boldsymbol{\epsilon}}$ \cref{eq:rho-epsilon}.
\end{proof}

Next, we prove \cref{theorem:differential-privacy-for-unconstrained-online-learning}.

\begin{proof}[Proof of \cref{theorem:differential-privacy-for-unconstrained-online-learning}]
    For each $t \in [T]$ and each initialization $i \in [n]$, define the following random variables representing intermediate computations within the update rule:   \begin{gather}
        U_t^{(i)} \triangleq W_{t-1}^{(i)} - \eta_t \nabla g_t \bigprn{W_{t-1}^{(i)}} , \\
        V_t^{(i)} \triangleq U_t^{(i)} + m_t N_t , \quad
        W_t^{(i)} \triangleq \proj_\mathcal{W} \bigprn{V_t^{(i)}}.\end{gather}
    Clearly, the random variables form the Markov chain (similar to the proof of \cref{lemma:recursive-bound-on-t-information} and \cite{CalmonGeneralizationError,PrivacyAmplification})   \begin{equation}
        W_0^{(i)} \rightarrow \underline{U_1^{(i)} \rightarrow V_1^{(i)} \rightarrow W_1^{(i)}}
        \rightarrow \tcdots
        \rightarrow \underline{U_T^{(i)} \rightarrow V_T^{(i)} \rightarrow W_T^{(i)}}
    \end{equation}
    for each $i \in [n]$. We have   \begin{align}
        &\phantom{{}={}}\rho_{\boldsymbol{\epsilon}} \bigprn{\bigbkt{P_{W_T}^{(1)}, \tdots, P_{W_T}^{(n)}}\!, n}\stackclap{(a)}{\leq} \rho_{\boldsymbol{\epsilon}} \bigprn{\bigbkt{P_{V_T}^{(1)}, \tdots, P_{V_T}^{(n)}}\!, n} \\&\stackclap{(b)}{=} \rho_{\boldsymbol{\epsilon}} \bigl( \bigl[ P_{U_T}^{(1)} * \mathsf{Normal} (0, m_T^2 I), \tdots, \\&\hspace{7em}P_{U_T}^{(n)} * \mathsf{Normal} (0, m_T^2 I) \bigr], n \bigr)  , \label{eq:dp-amplification-step-one}
    \end{align}
    where (a) holds by the data processing inequality for $\rho_{\boldsymbol{\epsilon}}$ \cref{eq:data-processing-inequality-rho-epsilon}, and (b) holds by definition of $V_T$. The power of the input distributions is bounded as   \begin{align}
        &\phantom{{}={}}\tnorm{\bigbkt{P_{U_T}^{(1)}, \tdots, P_{U_T}^{(n)}}}_{\mathsf{UA}}
        \\&\stackclap{(a)}{=} \max_{i \in [n]} \mathbb{E} \bkt{\inorm{U_T^{(i)}}_2^2}\\&\stackclap{(b)}{=} \max_{i \in [n]} \mathbb{E} \bkt{\norm{W_{T-1}^{(i)} - \eta_T \nabla g_T \bigprn{W_{T-1}^{(i)}}}_2^2} \\
        &\stackclap{(c)}{\leq} \max_{i \in [n]} \brc{2 \E{\inorm{W_{T-1}^{(i)}}_2^2} + 2 \E{\norm{\eta_T \nabla g_T \bigprn{W_{T-1}^{(i)}}}_2^2}}\\&\stackclap{(d)}{\leq} 2 \max_{i \in [n]} \E{\inorm{W_{T-1}^{(i)}}_2^2} + 2 \eta_T^2 L^2\\&\stackclap{(e)}{=} m_T^2 p_T  ,
    \end{align}
    where (a) holds by definition of uniform average power \cref{eq:uniform-average-power}, (b) holds by definition of $U_T$, (c) holds by the identity   \begin{math}
        \inorm{x + y}_2^2 \leq \inorm{x + y}_2^2 + \inorm{x - y}_2^2 = 2 \inorm{x}_2^2 + 2 \inorm{y}_2^2
    \end{math}
    and linearity of expectation, (d) holds by the bound on the objective gradients \cref{eq:dp-amplification-gradient-bound}, and (e) holds by definition of $p_T$ \cref{eq:dp-amplification-power}. Continuing from \cref{eq:dp-amplification-step-one}, we have   \begin{align}
        &\rho_{\boldsymbol{\epsilon}} \bigprn{\bigbkt{P_{W_T}^{(1)}, \tdots, P_{W_T}^{(n)}}, n}\stackclap{(a)}{\leq} \mathrm{F}_\Phi^\mathsf{UA} \bigprn{\rho_{\boldsymbol{\epsilon}} \bigprn{\bigbkt{P_{U_T}^{(1)}, \tdots, P_{U_T}^{(n)}}, n}; p_T}\\&\quad\stackclap{(b)}{\leq} \mathrm{F}_\Phi^\mathsf{UA} \bigprn{\rho_{\boldsymbol{\epsilon}} \bigprn{\bigbkt{P_{W_{T-1}}^{(1)}, \tdots, P_{W_{T-1}}^{(n)}}, n}; p_T} , \label{eq:dp-amplification-step-two}
    \end{align}
    where (a) holds by \cref{theorem:contraction-of-rho-epsilon,proposition:scale-invariant-doeblin-curves-of-additive-noise-channels}, and (b) holds by the data processing inequality for $\rho_{\boldsymbol{\epsilon}}$ \cref{eq:data-processing-inequality-rho-epsilon}. Finally, by recursively applying the arguments in \cref{eq:dp-amplification-step-one,eq:dp-amplification-step-two} above, we obtain   \begin{multline}
        \rho_{\boldsymbol{\epsilon}} \bigprn{\bigbkt{P_{W_T}^{(1)}, \tdots, P_{W_T}^{(n)}}, n}\\\leq \mathrm{F}_\Phi^\mathsf{UA} \bigprn{\tcdots \mathrm{F}_\Phi^\mathsf{UA} \bigprn{\rho_{\boldsymbol{\epsilon}} \bigprn{\bigbkt{P_{W_0}^{(1)}, \tdots, P_{W_0}^{(n)}}, n}; p_1} \tcdots; p_T}
    \end{multline}
    as desired.
\end{proof}

\section{Conclusion}

In closing, we review our main contributions and propose some directions for follow-up work. In this paper, we formulated the notion of a Doeblin curve to quantify information contraction of Markov kernels on collections of arbitrarily many input distributions with specific divergence and power levels, building upon existing literature on Doeblin coefficients and nonlinear information contraction. After introducing a new variational characterization of Doeblin coefficients, we established several properties of Doeblin curves and derived bounds on Doeblin curves under canonical power constraints and regularity conditions. With this more nuanced measure of information contraction in place, we presented three theoretical applications in noisy iterative optimization, reliable computation, and differential privacy, leveraging Doeblin curves to generalize results in these areas to multi-way or unbounded settings where Doeblin coefficients fail to capture information contraction.

We suggest two potential extensions of our present work. Firstly, our current understanding of Doeblin curves may be further enriched by establishing precise conditions for concavity and by studying examples of closed-form uniform average Doeblin curves. Secondly, future research may explore and broaden our proposed definition of group local differential privacy \cref{eq:group-differential-privacy}, since the standard definition of differential privacy is sometimes considered very stringent for many applications \cite{feldman2021,levy2021learning}.

\appendices
\crefalias{section}{appendix}   %

\section{Technical Lemmata} \label{appendix:technical-lemmata}

In this appendix, we state and prove two miscellaneous results used throughout our paper.

\begin{lemma}[Greatest Common Component Under Affine Transformation] \label{lemma:greatest-common-component-under-affine-transformation}
    For any kernel $W\colon \mathcal{U} \times \mathcal{F}_\mathcal{X} \rightarrow \mathbb{R}_+$ and signed measure $\pi\colon \mathcal{F}_\mathcal{X} \rightarrow \mathbb{R}$, the greatest common component of the kernel $\hat{W}\colon \mathcal{U} \times \mathcal{F}_\mathcal{X} \rightarrow \mathbb{R}_+$ given by
    \begin{equation}
        \forall u \in \mathcal{U}, \quad \hat{W}(\lwildcard \mid u) \triangleq \alpha W(\lwildcard \mid u) + \pi(\wildcard) , \label{eq:greatest-common-component-under-affine-transformation-w-hat}
    \end{equation}
    where $\alpha \geq 0$ and $\pi$ are such that $\hat{W} \geq 0$, is
    \begin{equation}
        \bigwedge_{u \in \mathcal{U}} \hat{W}(\lwildcard \mid u) = \alpha \bigwedge_{u \in \mathcal{U}} W(\lwildcard \mid u) + \pi(\wildcard) . \label{eq:greatest-common-component-under-affine-transformation-gcc}
    \end{equation}
\end{lemma}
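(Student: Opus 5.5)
The plan is to show the two inequalities between measures directly from the definition \cref{eq:greatest-common-component}. The tool is a bijection between the candidate lower bounds for $W$ and those for $\hat{W}$. Write $\nu_W \triangleq \bigwedge_{u} W(\lwildcard \mid u)$ and $\nu_{\hat W} \triangleq \bigwedge_u \hat W(\lwildcard \mid u)$. The case $\alpha = 0$ is trivial: $\hat W(\lwildcard\mid u) = \pi$ for all $u$, so $\pi \geq 0$ is a measure and clearly equals its own greatest common component. So I assume $\alpha > 0$.

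\textbf{Step 1 ($\geq$).} For any measure $\nu$ with $\nu \leq W(\lwildcard \mid u)$ for all $u$, the set function $\alpha\nu + \pi$ satisfies $\alpha\nu + \pi \leq \hat W(\lwildcard \mid u)$ for all $u$. I must also check that $\alpha\nu + \pi$ is a genuine (non-negative) measure. Taking $\nu = \nu_W$, which is itself a valid candidate by \cite[Chapter 3, Theorem 7.1]{mainbook}, requires $\alpha\nu_W + \pi \geq 0$. This is the main obstacle, since knowing $\alpha W(\lwildcard\mid u) + \pi \geq 0$ for each $u$ does not immediately give nonnegativity after passing to the common component.

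To handle it, I would take the Hahn decomposition $\mathcal{X} = P \cup N$ of $\pi$. On $P$ the claim is obvious. On $N$ I would use the known representation of the greatest common component as an infimum over finite measurable partitions, $\nu_W(A) = \inf \sum_k \min_{u} W(A_k \mid u)$, which is essentially the content behind \cref{proposition:integral-characterization-of-infimum} and \cref{theorem:variational-characterization-of-doeblin-coefficient}. Then for $A \subseteq N$,
\begin{equation}
\alpha \nu_W(A) + \pi(A) = \inf \sum_k \min_u \bigl(\alpha W(A_k\mid u) + \pi(A_k)\bigr) \geq 0 .
\end{equation}
The key fact here is that $\pi$ is additive over the partition and does not depend on $u$, so it can be moved inside the minimum. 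Alternatively, the same partition formula applied to $\hat W$ proves the whole identity at once:
\begin{equation}
\nu_{\hat W}(A) = \inf_{\text{partitions}} \sum_k \min_u \bigl(\alpha W(A_k\mid u)+\pi(A_k)\bigr) = \alpha\,\nu_W(A) + \pi(A).
\end{equation}
I would present this route as the main argument, after justifying the partition formula for a general $\mathcal{U}$. That justification is via the standard construction of $\bigwedge$ in \cite{mainbook}, taking infima over finite partitions and finite subsets of $\mathcal{U}$.

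\textbf{Step 2 ($\leq$).} For the reverse direction, take any measure $\hat\nu \leq \hat W(\lwildcard \mid u)$ for all $u$. Set $\nu \triangleq (\hat\nu - \pi)/\alpha$. This is a signed measure with $\nu \leq W(\lwildcard\mid u)$ for all $u$. To conclude $\nu \leq \nu_W$, I would use the partition formula once more: for every finite partition, $\nu(A) = \sum_k \nu(A_k) \leq \sum_k \min_u W(A_k\mid u)$. Taking the infimum gives $\nu(A) \leq \nu_W(A)$, hence $\hat\nu \leq \alpha\nu_W + \pi$. Combining with the supremum in \cref{eq:greatest-common-component} yields \cref{eq:greatest-common-component-under-affine-transformation-gcc}.
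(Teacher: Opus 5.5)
Your proposal is correct, but it argues differently from the paper. The paper never uses a partition representation. It works straight from the supremum definition \cref{eq:greatest-common-component}: first it observes that $\hat W(\cdot\mid u)\geq \alpha\bigwedge_{u'}W(\cdot\mid u')+\pi$ for every $u$. For the reverse inequality it takes any measure $\mu\le\hat W$, notes $W\ge(\mu-\pi)/\alpha$, and uses a Hahn--Jordan decomposition of $\mu-\pi$ to upgrade this signed minorant to the genuine measure $(\mu-\pi)^+/\alpha\le W$. That measure is admissible in the supremum and therefore lies below $\bigwedge_u W(\cdot\mid u)$.

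You instead use the representation $\bigl(\bigwedge_{u}W(\cdot\mid u)\bigr)(A)=\inf\sum_{k=1}^n W(A_k\mid u_k)$, where the infimum runs over $n\in\mathbb{N}$, measurable partitions $A_1,\dots,A_n$ of $A$, and $u_1,\dots,u_n\in\mathcal{U}$. Since $\pi$ is finitely additive and independent of $u$, it passes through both infima, so the identity follows in one line for every $A$. This buys more than brevity:
\begin{itemize}
\item Nonnegativity of $\alpha\bigwedge_u W(\cdot\mid u)+\pi$ comes for free. The paper never states it; it follows from the paper's second step applied to $\mu=0$.
\item The formula bounds every finitely additive minorant, so signed lower bounds need no Hahn decomposition at all. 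The Hahn detour in your Step~1 and all of Step~2 are therefore redundant once you take the ``whole identity at once'' route.
\end{itemize}

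The price is that the partition formula must be proved in the lemma's full generality. It cannot be read off \cref{proposition:integral-characterization-of-infimum} or \cref{theorem:variational-characterization-of-doeblin-coefficient}. Both require a common $\sigma$-finite dominating measure, and none exists for, e.g., $\{\updelta_u\}_{u\in\mathbb{R}}$. They also only concern total mass on the whole output space, not the set function on every $A$. The direct proof is short. Call the partition infimum $\lambda$. Then:
\begin{itemize}
\item $\lambda\le W(\cdot\mid u)$ for each $u$, by the trivial partition.
\item $\lambda$ is at least any finitely additive minorant, by additivity over the pieces.
\item $\lambda$ is finitely additive, by refining or concatenating partitions.
\item $\lambda$ is countably additive, because $\lambda\bigl(\bigsqcup_{n>N}A_n\bigr)\le W\bigl(\bigsqcup_{n>N}A_n\mid u_0\bigr)\to 0$, using that $W(\cdot\mid u_0)$ is a finite measure.
\end{itemize}
Finally, write $\inf_u$ rather than $\min_u$, since the minimum need not be attained when $\mathcal{U}$ is infinite.
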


\begin{proof}
    The lemma trivially holds for $\alpha = 0$, so assume $\alpha > 0$ henceforth. For notational convenience, let $\mu^*$ denote the measure given by the right-hand side of \cref{eq:greatest-common-component-under-affine-transformation-gcc}. By definition of greatest common component as a supremum in \cref{eq:greatest-common-component}, we seek to verify two properties of $\mu^*$.
    
    Firstly, $\mu^*$ satisfies the condition of the supremum $\hat{W} \geq \mu^*$, because for any $u \in \mathcal{U}$,   \begin{equation}
        \hat{W}(\lwildcard \mid u) \!\stackclap{(a)}{=}\! \alpha W(\lwildcard \mid u) + \pi(\wildcard) \!\stackclap{(b)}{\geq}\! \alpha \!\bigwedge_{u \in \mathcal{U}}\! W(\lwildcard \mid u) + \pi(\wildcard) \!\stackclap{(c)}{=}\! \mu^*(\wildcard) ,
    \end{equation}
    where (a) holds by definition of $\hat{W}$ \cref{eq:greatest-common-component-under-affine-transformation-w-hat}, (b) holds because the greatest common component of a kernel is a lower bound on the kernel, and (c) holds by definition of $\mu^*$ as the right-hand side of \cref{eq:greatest-common-component-under-affine-transformation-gcc}.
    
    Secondly, we will prove that for any measure $\mu$ satisfying $\hat{W} \geq \mu$, we have $\mu^* \geq \mu$. Fix a measure $\mu$ such that $\hat{W} \geq \mu$. Observe that
    \begin{equation}
        \forall u \in \mathcal{U}, \quad W(\lwildcard \mid u) \stackrel{(a)}{=} \frac{\hat{W}(\lwildcard \mid u) - \pi(\wildcard)}{\alpha} \geq \frac{\mu(\wildcard) - \pi(\wildcard)}{\alpha} , \label{eq:greatest-common-component-under-affine-transformation-step-one}
    \end{equation}
    where (a) holds by rearranging \cref{eq:greatest-common-component-under-affine-transformation-w-hat}. Consider a Hahn-Jordan decomposition of $\mu - \pi$, i.e.,   \begin{equation}
        \forall A \in \mathcal{F}_\mathcal{X}, \quad \begin{aligned}
            \prn{\mu - \pi}^+(A) &\triangleq \prn{\mu - \pi}(A \cap P) , \\
        \prn{\mu - \pi}^-(A) &\triangleq -\prn{\mu - \pi}(A \cap N) ,
        \end{aligned} \label{eq:greatest-common-component-under-affine-transformation-positive-part} 
    \end{equation}
    where $P \subseteq \mathcal{X}$ and $N = P^\complement$ are the supports (modulo null sets) of the positive and negative parts of $\mu - \pi$, respectively. Then,   \begin{align}
        W(A \mid u) &\stackclap{(a)}{=} W(A \cap P \mid u) + W(A \cap N \mid u) \\&\stackclap{(b)}{\geq} \frac{\prn{\mu - \pi}(A \cap P)}{\alpha} + 0 \\&\stackclap{(c)}{=} \frac{\prn{\mu - \pi}^+(A)}{\alpha} \label{eq:greatest-common-component-under-affine-transformation-step-two}
    \end{align} for all $u\in\mathcal{U}$ and $A \in \mathcal{F}_\mathcal{X}$,
    where (a) holds by additivity of measures because $N = P^\complement$, (b) holds because $W \geq (\mu - \pi) / \alpha$ (by \cref{eq:greatest-common-component-under-affine-transformation-step-one}) and $W \geq 0$, and (c) holds by \cref{eq:greatest-common-component-under-affine-transformation-positive-part}. Now, suppose for the sake of contradiction that there exists $A \in \mathcal{F}_\mathcal{X}$ such that $\mu^*(A) < \mu(A)$. Then,   \begin{multline}
        \bigwedge_{u \in \mathcal{U}} W(A \mid u) \stackclap{(a)}{=} \sup \brc{\nu(A): W \geq \nu} \stackclap{(b)}{\geq} \frac{\prn{\mu - \pi}^+(A)}{\alpha} \\\geq \frac{\mu(A) - \pi(A)}{\alpha} > \frac{\mu^*(A) - \pi(A)}{\alpha} \stackclap{(c)}{=} \bigwedge_{u \in \mathcal{U}} W(A \mid u)
    \end{multline}
    and we obtain the contradiction ${\ibigwedge_{u \in \mathcal{U}} W(A \mid u)} > \ibigwedge_{u \in \mathcal{U}} W(A \mid u)$, where (a) holds by definition of greatest common component \cref{eq:greatest-common-component}, (b) holds because $W \geq (\mu - \pi)^+ / \alpha$ by \cref{eq:greatest-common-component-under-affine-transformation-step-two} and so we may lower-bound the supremum by this specific instance, and (c) holds because $\mu^*$ is the right-hand side of \cref{eq:greatest-common-component-under-affine-transformation-gcc}.
\end{proof}

\begin{lemma}[Properties of Upper Concave Envelope] \label{lemma:properties-of-upper-concave-envelope}
    Let $f\colon \mathcal{I} \rightarrow [0, 1]$ be a function defined on a (possibly infinite) interval $\mathcal{I} \subseteq \mathbb{R}_+$. The upper concave envelope $\invbreve{f}\colon \mathcal{I} \rightarrow [0, 1]$ of $f$ satisfies the following properties:   %
    \begin{enumerate}
        \item For all boundary points $t$ of $\mathcal{I}$, we have $\invbreve{f}(t) = f(t)$.
        \item If $f$ is non-decreasing, then $\invbreve{f}$ is non-decreasing.
        \item If $f$ is non-decreasing, $\mathcal{I} = [0, 1]$, and $f(t) \leq t$ for all $t \in \mathcal{I}$, then $\invbreve{f}$ is continuous on $\mathcal{I}$.
    \end{enumerate}
\end{lemma}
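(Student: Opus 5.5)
The plan is to use the standard description of the upper concave envelope as a supremum over finite convex combinations,
\begin{equation}
    \invbreve{f}(t) = \sup \brc{\sum_{k} \lambda_k f(t_k): t_k \in \mathcal{I}, \, \lambda_k \geq 0, \, \sum_k \lambda_k = 1, \, \sum_k \lambda_k t_k = t},
\end{equation}
which is the content of \cite[Chapter B, Proposition 2.5.1]{HiriartUrrutyLemarechal2001}. Because $f$ takes values in $[0,1]$, the constant $1$ is a concave majorant, so $\invbreve{f}$ is finite and maps into $[0,1]$. Once this formula is in hand, each of the three parts reduces to a short argument.

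\textbf{Part 1.} Let $t$ be a boundary point of $\mathcal{I}$ that lies in $\mathcal{I}$, for example its left endpoint. Any convex combination $\sum_k \lambda_k t_k = t$ with all $t_k \in \mathcal{I}$ has every $t_k \geq t$, so every $t_k$ carrying positive weight must equal $t$. The supremum in the formula is therefore exactly $f(t)$, and the same reasoning applies at the right endpoint.

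\textbf{Part 2.} I will avoid the combination formula here. Define $g(t) \triangleq \sup_{s \in \mathcal{I}, \, s \leq t} \invbreve{f}(s)$. This is a supremum of the concave functions $t \mapsto \invbreve{f}(\min\{s,t\})$, which does not immediately give concavity, so I will instead use a different candidate. Take $h(t) \triangleq \inf_{s \in \mathcal{I}, \, s \geq t} \invbreve{f}(s)$. Since $f$ is non-decreasing, $f(t) \leq f(s) \leq \invbreve{f}(s)$ for every $s \geq t$, so $h \geq f$. Also $h \leq \invbreve{f}$, and $h$ is non-decreasing. For concavity of $h$, I will use that a concave function on an interval is either monotone or unimodal. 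If $\invbreve{f}$ were not non-decreasing, it would be strictly decreasing on some tail $[s_0, \infty) \cap \mathcal{I}$, and on that tail $h$ would equal the constant $\inf_{s \geq s_0} \invbreve{f}(s)$. In that case $h$ is $\invbreve{f}$ truncated at a level on a tail, i.e.\ the minimum of $\invbreve{f}$ and a constant, which is concave. Since $h$ is then a concave majorant of $f$ lying below $\invbreve{f}$, we get $h = \invbreve{f}$, so $\invbreve{f}$ is non-decreasing.

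\textbf{Part 3.} A concave function is continuous on the interior $(0,1)$, so only the endpoints need checking. At $t=1$, concavity gives lower semicontinuity along the segment: for $s \in (0,1)$, $\invbreve{f}(s) \geq s\,\invbreve{f}(1) + (1-s)\,\invbreve{f}(0)$. Combined with monotonicity from Part 2, this forces $\invbreve{f}(s) \to \invbreve{f}(1)$ as $s \to 1^-$. At $t=0$, the hypothesis $f(t) \leq t$ says the identity is a concave majorant of $f$, so $0 \leq \invbreve{f}(t) \leq t$. Hence $\invbreve{f}(t) \to 0 = \invbreve{f}(0)$, where $\invbreve{f}(0) = f(0)$ by Part 1 and $f(0) = 0$ because $0 \leq f(0) \leq 0$.

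The main obstacle I expect is making Part 2 rigorous without ad hoc case analysis. The cleanest route is to show directly that $h$ is concave, by proving $h(\lambda a + (1-\lambda)b) \geq \lambda h(a) + (1-\lambda) h(b)$: for $s \geq \lambda a + (1-\lambda) b$, write $s$ as the same convex combination of shifted points $a' \geq a$ and $b' \geq b$, obtained by translating both by $s - (\lambda a + (1-\lambda) b)$. Translating to the right may leave $\mathcal{I}$ when $\mathcal{I}$ is bounded above. In that situation I will clamp at the right endpoint and rely on the unimodality observation above.
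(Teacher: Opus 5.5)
Your proof is correct, and Parts 1 and 2 follow a genuinely different route from the paper's.

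In Part 1 the paper does not use the convex-combination formula. It argues by contradiction: it replaces the value of $\invbreve{f}$ at the boundary point $a$ by $f(a)$, and observes that the modified function is still a concave majorant of $f$, because every nontrivial convex combination of two distinct points of $\mathcal{I}$ is interior. Your version is immediate once the Jensen/Carath\'eodory representation is granted; the paper's is self-contained.

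In Part 2 the paper works with the closed concave envelope $\bar{f}$, the infimum of affine majorants $\alpha t+\beta$. Monotonicity of $f$ lets it restrict to slopes $\alpha\ge 0$, so $\bar{f}$ is visibly non-decreasing. It then transfers this to $\invbreve{f}$ using that $\bar{f}=\invbreve{f}$ on the interior of $\mathcal{I}$, with Part 1 handling the right endpoint.

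Your squeeze $f\le h\le\invbreve{f}$, with $h(t)=\inf_{s\ge t}\invbreve{f}(s)$, avoids closures and endpoint bookkeeping. Its only soft spot is the informal appeal to unimodality to get $h=\min\{\invbreve{f},c^*\}$ away from the tail. The direct concavity check you were worried about in fact needs no clamping. Take $a\le b$ in $\mathcal{I}$, $\lambda\in(0,1)$, and $s\in\mathcal{I}$ with $s\ge\lambda a+(1-\lambda)b$. Set $b'=\max\{b,s\}$ and $a'=(s-(1-\lambda)b')/\lambda$. Then $a'\in[a,s]\subseteq\mathcal{I}$ and $b'\ge b$, so $\invbreve{f}(s)\ge\lambda\invbreve{f}(a')+(1-\lambda)\invbreve{f}(b')\ge\lambda h(a)+(1-\lambda)h(b)$. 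Taking the infimum over such $s$ gives concavity of $h$, and the case $a>b$ is symmetric.

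Part 3 essentially matches the paper. The differences are that you obtain $\invbreve{f}(t)\le t$ directly from the identity being a concave majorant rather than through $\bar{f}$, and $\invbreve{f}\ge 0$ from $f\ge 0$ rather than from Parts 1 and 2.
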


\begin{proof} ~\newline
    \indent \textbf{Part 1:} Consider a boundary point $a \in \mathcal{I}$. Suppose for the sake of contradiction that $\invbreve{f}(a) \neq f(a)$. Since $\invbreve{f} \geq f$ everywhere on $\mathcal{I}$ by definition of upper concave envelope, we must have $\invbreve{f}(a) > f(a)$. Consider the function $g\colon \mathcal{I} \rightarrow [0, 1]$ given by
    \begin{equation}
        \forall t \in \mathcal{I}, \quad g(t) \triangleq \begin{cases}
            f(a)  , & \text{if $t = a$}  , \\
            \invbreve{f}(t)  , & \text{if $t \neq a$}  .
        \end{cases} \label{eq:properties-of-upper-concave-envelope-g}
    \end{equation}
    Clearly, $f \leq g \leq \invbreve{f}$ everywhere on $\mathcal{I}$. Moreover, $g$ is concave, because for all distinct $s, t \in \mathcal{I}$ and all $\theta \in (0, 1)$,   \begin{align}
        g(\theta s + \prn{1 - \theta} t) &\stackclap{(a)}{=} \invbreve{f}(\theta s + \prn{1 - \theta} t) \stackclap{(b)}{\geq} \theta \, \invbreve{f}(s) + \prn{1 - \theta} \invbreve{f}(t) \\&\stackclap{(c)}{\geq} \theta \, g(s) + \prn{1 - \theta} g(t)  ,
    \end{align}
    where (a) holds by the second case in \cref{eq:properties-of-upper-concave-envelope-g} because $\theta s + (1 - \theta) t$ is an interior point of $\mathcal{I}$, (b) holds because $\invbreve{f}$ is concave, and (c) holds because $g \leq \invbreve{f}$ everywhere, as mentioned above. This contradicts the fact that $\invbreve{f}$ is the pointwise infimum of all concave upper bounds of $f$.

    \textbf{Part 2:} Let $\bar{f}: \mathcal{I} \rightarrow \mathbb{R}$ be the closed upper concave envelope (or ``closed convex hull'') of $f$ \cite[Chapter B, Proposition 2.5.2, Definition 2.5.3]{HiriartUrrutyLemarechal2001}, i.e.,
    \begin{multline}
        \forall t \in \mathcal{I}, \enspace \bar{f}(t) \triangleq \inf \{\alpha t + \beta: \alpha \geq 0, \, \beta \in \mathbb{R} \text{ such that } \\ \forall s \in \mathcal{I}, \, \alpha s + \beta \geq f(s)\} , \label{eq:properties-of-upper-concave-envelope-bar-f}
    \end{multline}
    where it suffices to consider only $\alpha \geq 0$ because $f$ is non-decreasing. Fix $s, t \in \mathcal{I}$ such that $s < t$. By definition of $\bar{f}$ as an infimum \cref{eq:properties-of-upper-concave-envelope-bar-f}, there exist sequences of coefficients $\ibrc{\alpha_n}_{n \in \mathbb{N}} \subset \mathbb{R}_+$ and $\ibrc{\beta_n}_{n \in \mathbb{N}} \subset \mathbb{R}$, satisfying $\alpha_n s + \beta_n \geq f(s)$ for all $s \in \mathcal{I}$ and $n \in \mathbb{N}$, such that $\bar{f}(t) = \lim_{n \rightarrow \infty} \ibrc{\alpha_n t + \beta_n}$. It follows that   \begin{align}
        \bar{f}(s) &\stackclap{(a)}{\leq} \inf_{n \in \mathbb{N}} \brc{\alpha_n s + \beta_n} \leq \lim_{n \rightarrow \infty} \brc{\alpha_n s + \beta_n} \\&\stackclap{(b)}{\leq} \lim_{n \rightarrow \infty} \brc{\alpha_n t + \beta_n} = \bar{f}(t)  ,
    \end{align}
    where (a) holds by taking the infimum in \cref{eq:properties-of-upper-concave-envelope-bar-f} over a smaller set, and (b) holds because $s < t$ and $\alpha_n \geq 0$ for all $n \in \mathbb{N}$. This establishes that $\bar{f}$ is non-decreasing on $\mathcal{I}$. Since $\bar{f}$ is the closure of $\invbreve{f}$ \cite[Chapter B, Proposition 2.5.2]{HiriartUrrutyLemarechal2001}, $\bar{f}$ and $\invbreve{f}$ agree on the interior of $\mathcal{I}$ \cite[Chapter B, Proposition 1.2.6]{HiriartUrrutyLemarechal2001} and $\invbreve{f} \leq \bar{f}$ on any boundary points of $\mathcal{I}$. Hence, $\invbreve{f}$ is non-decreasing on $\mathcal{I} - \ibrc{\sup \mathcal{I}}$. If $\mathcal{I}$ is finite and includes its right boundary point $b \in \mathcal{I}$, then $\invbreve{f}(b) \leq \bar{f}(b) \leq f(b)$, where the second inequality holds by upper-bounding the infimum in \cref{eq:properties-of-upper-concave-envelope-bar-f} with the specific majorant $\alpha = 0$ and $\beta = f(b)$. By Part 1, $\invbreve{f}(b) = f(b)$, and so $\invbreve{f}(b) = \bar{f}(b)$. Thus, $\invbreve{f}$ is non-decreasing on all of $\mathcal{I}$, as desired.

    \textbf{Part 3:} By concavity, $\invbreve{f}$ is continuous on the interior of $\mathcal{I}$. To establish the continuity of $\invbreve{f}$ at $0$, observe that
    \begin{equation}
        \forall t \in \mathcal{I}, \enspace \invbreve{f}(t) \stackclap{(a)}{\geq} \invbreve{f}(0) \stackclap{(b)}{=} f(0) \stackclap{(c)}{=} 0  ,
    \end{equation}
    where (a) holds by Part 2, (b) holds by Part 1, and (c) holds by the assumption $f(t) \leq t$ and the range of $f$. Furthermore,
    \begin{equation}
        \forall t \in \mathcal{I}, \enspace \invbreve{f}(t) \leq \bar{f}(t) \stackclap{(a)}{\leq} t  ,
    \end{equation}
    where (a) holds by upper-bounding the infimum in \cref{eq:properties-of-upper-concave-envelope-bar-f} with the specific majorant $\alpha = 1$ and $\beta = 0$ due to the assumption $f(t) \leq t$. Hence, we have $\lim_{t \rightarrow 0^+} \invbreve{f}(t) = 0 = \invbreve{f}(0)$ as desired. To establish the continuity of $\invbreve{f}$ at $1$, observe that $\lim_{t \rightarrow 1^-} \invbreve{f}(t) \leq \invbreve{f}(1)$ because $\invbreve{f}$ is non-decreasing by Part 2, and $\lim_{t \rightarrow 1^-} \invbreve{f}(t) \geq \lim_{t \rightarrow 1^-} \ibrc{t \, \invbreve{f}(1) + \prn{1 - t} \invbreve{f}(0)} = \invbreve{f}(1)$ because $\invbreve{f}$ is concave. Hence, $\lim_{t \rightarrow 1^-} \invbreve{f}(t) = \invbreve{f}(1)$ as desired.
\end{proof}

\section{Variational Characterization Under Equicontinuity} \label{appendix:variational-characterization-under-equicontinuity}

In this appendix, we provide an alternative statement and proof of \cref{theorem:variational-characterization-of-doeblin-coefficient} without the use of lattice infima, under the assumption of equicontinuity of the kernel $K$.

\begin{theorem}[Variational Characterization of Doeblin Coefficient] \label{theorem:variational-characterization-of-doeblin-coefficient-equicontinuity}
    Let $(\mathcal{X}, d_\mathcal{X})$ and $(\mathcal{Y}, \mathcal{F}_\mathcal{Y})$ be Polish spaces, where $\mathcal{X}$ is endowed with the metric $d_\mathcal{X}\colon \mathcal{X} \times \mathcal{X} \rightarrow \mathbb{R}_+$. Let $K\colon \mathcal{X} \times \mathcal{F}_\mathcal{Y} \rightarrow [0, 1]$ be an absolutely continuous Markov kernel with respect to the $\sigma$-finite measure $\mu\colon \mathcal{F}_\mathcal{Y} \rightarrow \mathbb{R}_+$. Assume the family of functions $\ibrc{x \mapsto \frac{\idiff K}{\idiff \mu}(y \mid x): y \in \mathcal{Y}}$ is equicontinuous \cite[Definition 7.22]{Rudin1976}, i.e.,   \begin{multline}
        \forall \epsilon > 0, \, \exists \delta > 0, \, \forall x, x' \in \mathcal{X}, \, \forall y \in \mathcal{Y}, \\ d_\mathcal{X}(x, x') < \delta \implies \abs{\frac{\diff K}{\diff \mu}(y \mid x) - \frac{\diff K}{\diff \mu}(y \mid x')} < \epsilon . \label{eq:equicontinuous-definition}
    \end{multline}
    Then, the Doeblin coefficient of $K$ admits the characterization
    \begin{align}
        \tau(K) = \inf_{n \in \mathbb{N}} \inf_{x_1, \dots, x_n \in \mathcal{X}} \inf_{\substack{\text{$n$-partition of $\mathcal{Y}$} \\ A_1, \dots, A_n}} \sum_{i=1}^n K(A_i \mid x_i) . \label{eq:variational-characterization-of-doeblin-coefficient-equicontinuity}
    \end{align}
\end{theorem}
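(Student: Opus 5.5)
Under equicontinuity the lattice infimum can be replaced by the ordinary pointwise infimum, and the plan rests on that. Write $k(y \mid x) \triangleq \frac{\idiff K}{\idiff \mu}(y \mid x)$ and $\underline{k}(y) \triangleq \inf_{x \in \mathcal{X}} k(y \mid x)$. Since $\mathcal{X}$ is Polish, fix a countable dense set $D = \ibrc{d_1, d_2, \tdots} \subseteq \mathcal{X}$. The first step is to show $\underline{k}(y) = \inf_{j \in \mathbb{N}} k(y \mid d_j)$ for every $y$. Given $\epsilon > 0$, take $\delta$ from \cref{eq:equicontinuous-definition}; for any $x$ pick $d_j$ with $d_\mathcal{X}(x, d_j) < \delta$, so that $k(y \mid d_j) < k(y \mid x) + \epsilon$. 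This holds uniformly in $y$. It follows that $\underline{k}$ is a countable infimum of measurable functions, hence measurable. Moreover $g_n \triangleq \min_{j \le n} k(\lwildcard \mid d_j)$ decreases pointwise to $\underline{k}$.

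Next, show $\tau(K) = \int_\mathcal{Y} \underline{k} \diff \mu$. Set $\nu^*(A) \triangleq \int_A \underline{k} \diff \mu$. Since $\underline{k} \le k(\lwildcard \mid x)$ pointwise, $\nu^* \le K(\lwildcard \mid x)$ for every $x$, and therefore $\nu^*(\mathcal{Y}) \le \ibigwedge_x K(\mathcal{Y} \mid x) = \tau(K)$ by \cref{eq:doeblin-coefficient-gcc}. For the reverse inequality, let $\nu \le K(\lwildcard \mid x)$ for all $x$. Then $\nu \ll \mu$, and the same contradiction argument as in \cref{lemma:density-of-greatest-common-component} gives $\frac{\idiff\nu}{\idiff\mu} \le k(\lwildcard \mid d_j)$ $\mu$-a.e. for each $j$. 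A countable union of null sets is null, so $\frac{\idiff\nu}{\idiff\mu} \le \inf_j k(\lwildcard \mid d_j) = \underline{k}$ $\mu$-a.e. Taking the supremum over such $\nu$ gives $\tau(K) \le \nu^*(\mathcal{Y})$. The density $D$ is exactly what removes the need for lattice infima here.

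Finally, convert to the variational form. By \cref{proposition:integral-characterization-of-infimum}, the right side of \cref{eq:variational-characterization-of-doeblin-coefficient-equicontinuity} equals $\inf_n \inf_{x_1, \tdots, x_n} \int_\mathcal{Y} \min_{i \in [n]} k(\lwildcard \mid x_i) \diff \mu$. Each such integrand is at least $\underline{k}$, so this quantity is at least $\tau(K)$. Conversely, restricting to $x_i = d_i$ gives the upper bound $\inf_n \int_\mathcal{Y} g_n \diff \mu$. Since $0 \le g_n \le g_1$ and $\int_\mathcal{Y} g_1 \diff \mu = 1$, dominated convergence gives $\int_\mathcal{Y} g_n \diff \mu \to \int_\mathcal{Y} \underline{k} \diff \mu = \tau(K)$. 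Hence equality holds.

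The main obstacle is the first step: it needs equicontinuity uniformly in $y$ (which the hypothesis provides) and joint measurability of $k$, which follows from Doob's Radon–Nikodym theorem. Without these, the pointwise infimum over uncountably many $x$ need not be measurable or match the countable infimum, so this step is where I would check the details most carefully.
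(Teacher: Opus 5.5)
Your proof is correct and is essentially the paper's. Both arguments use a countable dense set, the measurable countable infimum $g=\inf_j \frac{\diff K}{\diff \mu}(\lwildcard \mid d_j)$, and equicontinuity to get $g \le \frac{\diff K}{\diff \mu}(\lwildcard \mid x)$ for every $x$, so that $g \diff \mu$ is a common minorant and $\int_{\mathcal{Y}} g \diff \mu \le \tau(K)$; both then use dominated convergence together with \cref{proposition:integral-characterization-of-infimum} to show that the right-hand side of \cref{eq:variational-characterization-of-doeblin-coefficient-equicontinuity} is at most $\int_{\mathcal{Y}} g \diff \mu$.

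The only difference is the easy inequality ($\tau(K)$ is at most the right-hand side). The paper gets it directly from an optimal minorization $K \ge \tau(K)\,\pi^*$, since $\tau(K) = \tau(K)\sum_i \pi^*(A_i) \le \sum_i K(A_i \mid x_i)$, using no densities or regularity; you route it through the identity $\tau(K) = \int_{\mathcal{Y}} \underline{k} \diff \mu$. Also, contrary to your closing remark, neither proof needs uniformity in $y$ or joint measurability: continuity of $x \mapsto \frac{\diff K}{\diff \mu}(y \mid x)$ for each fixed $y$, and measurability of each section $y \mapsto \frac{\diff K}{\diff \mu}(y \mid d_j)$, already suffice.
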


\begin{proof}
    Fix a Markov kernel $K\colon \mathcal{X} \times \mathcal{F}_\mathcal{Y} \rightarrow [0, 1]$ satisfying the assumptions in \cref{theorem:variational-characterization-of-doeblin-coefficient-equicontinuity}. For notational convenience, denote the right-hand side of \cref{eq:variational-characterization-of-doeblin-coefficient-equicontinuity} as
    \begin{equation}
        \circled{1} \triangleq \inf_{n \in \mathbb{N}} \inf_{x_1, \dots, x_n \in \mathcal{X}} \inf_{\substack{\text{$n$-partition of $\mathcal{Y}$} \\ A_1, \dots, A_n}} \sum_{i=1}^n K(A_i \mid x_i)  .
    \end{equation}

    First, we will show that $\tau(K) \leq \circled{1}$. By definition of Doeblin coefficient as a supremum \cref{eq:doeblin-coefficient} which is achieved as per the discussion following \cref{eq:doeblin-coefficient}, there exists some probability measure $\pi^*\colon \mathcal{F}_\mathcal{Y} \rightarrow [0, 1]$ such that $K \geq \tau(K) \, \pi^*$. For any $n \in \mathbb{N}$, any $x_1, \tdots, x_n \in \mathcal{X}$, and any partition $A_1, \tdots, A_n$ of $\mathcal{Y}$, we have   \begin{equation}
        \tau(K) \stackclap{(a)}{=} \tau(K) \, \pi^*(\mathcal{Y}) \stackclap{(b)}{=} \tau(K) \sum_{i=1}^n \pi^*(A_i) \stackclap{(c)}{\leq} \sum_{i=1}^n K(A_i \mid x_i)  ,
    \end{equation}
    where (a) holds because $\pi^*$ is a probability measure, (b) holds because $A_1, \tdots, A_n$ is a partition of $\mathcal{Y}$, and (c) holds because $K \geq \tau(K) \, \pi^*$. Since $n$, $x_1, \tdots, x_n$, and $A_1, \tdots, A_n$ were arbitrary, it follows that $\tau(K) \leq \circled{1}$ as desired.

    Next, we will show that $\circled{1} \leq \tau(K)$. Since $\mathcal{X}$ is Polish and thus separable, there exists a countable dense net $\ibrc{x^*_1, x^*_2, \tdots} \subseteq \mathcal{X}$, i.e.,
    \begin{equation}
        \forall x \in \mathcal{X}, \, \forall \delta > 0, \, \exists i \in \mathbb{N}, \quad d_\mathcal{X}(x, x^*_i) \leq \delta . \label{eq:variational-characterization-equicontinuity-step-one}
    \end{equation}
    For notational convenience, define a sequence of functions $g_n\colon \mathcal{Y} \rightarrow \mathbb{R}_+$ as
    \begin{equation}
        \forall n \in \mathbb{N}, \quad g_n(y) \triangleq \min_{i \in [n]} \frac{\diff K}{\diff \mu}(y \mid x^*_i) .
    \end{equation}
    We have   \begin{align}
        \circled{1} &\stackclap{(a)}{=} \inf_{n \in \mathbb{N}} \inf_{x_1, \dots, x_n \in \mathcal{X}} \int_{\mathcal{Y}} \prn{\min_{i \in [n]} \frac{\diff K}{\diff \mu}(\lwildcard \mid x_i)} \diff \mu \\&\stackclap{(b)}{\leq} \inf_{n \in \mathbb{N}} \int_{\mathcal{Y}} g_n \diff \mu \stackclap{(c)}{=} \lim_{n \rightarrow \infty} \int_{\mathcal{Y}} g_n \diff \mu , \label{eq:variational-characterization-equicontinuity-step-two}
    \end{align}
    where (a) holds by \cref{proposition:integral-characterization-of-infimum}, (b) holds by upper-bounding the inner infimum with the specific instance $\ibrc{x^*_1, \tdots, x^*_n}$, and (c) holds because $\ibrc{g_n}_{n \in \mathbb{N}}$ is a non-increasing sequence, since each successive $g_n$ is the minimum over a larger set of $i$. Next, define a function $g\colon \mathcal{Y} \rightarrow \mathbb{R}_+$ as
    \begin{equation}
        g(y) \triangleq \lim_{n \rightarrow \infty} g_n(y) = \inf_{i \in \mathbb{N}} \frac{\diff K}{\diff \mu}(y \mid x^*_i) . \label{eq:variational-characterization-equicontinuity-step-three}
    \end{equation}
    Observe that $g$ is measurable, because it is the countable infimum of measurable functions.   %
    Furthermore, observe that $g_n \leq g_1$ for all $n \in \mathbb{N}$, $g \leq g_1$, and   \begin{math}
        \int_\mathcal{Y} g_1 \diff \mu = K(\mathcal{Y} \mid x^*_1) = 1 < \infty
    \end{math}.
    Hence, proceeding from \cref{eq:variational-characterization-equicontinuity-step-two} and applying the dominated convergence theorem, we have   \begin{equation}
        \circled{1} \leq \int_{\mathcal{Y}} \prn{\lim_{n \rightarrow \infty} g_n} \diff \mu = \underbracket{\int_{\mathcal{Y}} g \diff \mu}_{=\alpha^*} , \label{eq:variational-characterization-equicontinuity-step-four}
    \end{equation}
    where we define $\alpha^* \in [0, 1]$ above for convenience hereafter. If $\alpha^* = 0$, we trivially have $\circled{1} = 0 \leq \tau(K)$ as desired. Hence, assume $\alpha^* > 0$ for the remainder of this argument. Define a probability measure $\pi^*\colon \mathcal{F}_\mathcal{Y} \rightarrow [0, 1]$ as   $\pi^*(A) \triangleq (\ifrac{1}{\alpha^*}) \int_A g \diff \mu$ for all $A \in \mathcal{F}_\mathcal{Y}$.
    Fix an arbitrary $\epsilon > 0$. Consider $\delta > 0$ (possibly depending on $\epsilon$) such that \cref{eq:equicontinuous-definition} holds. Observe that for any $x \in \mathcal{X}$ and $y \in \mathcal{Y}$, we have
    \begin{equation}
        g(y) \stackclap{(a)}{\leq} \frac{\diff K}{\diff \mu}(y \mid x^*_i) \stackclap{(b)}{\leq} \frac{\diff K}{\diff \mu}(y \mid x) + \epsilon , \label{eq:variational-characterization-equicontinuity-step-five}
    \end{equation}
    where (a) holds for \emph{all} $i \in \mathbb{N}$ by definition of $g$ \cref{eq:variational-characterization-equicontinuity-step-three}, and (b) holds for \emph{some} $i \in \mathbb{N}$ (possibly depending on $x$) such that $d_\mathcal{X}(x, x^*_i) < \delta$, by equicontinuity \cref{eq:equicontinuous-definition}; we remark that the existence of such an $i$ is guaranteed by \cref{eq:variational-characterization-equicontinuity-step-one}. Hence, for any $x \in \mathcal{X}$ and $A \in \mathcal{F}_\mathcal{Y}$,
    \begin{equation}
        \alpha^* \pi^*(A) = \int_A g \diff \mu \stackclap{(a)}{\leq} \int_A \prn{\frac{\diff K}{\diff \mu}(\lwildcard \mid x)} \diff \mu = K(A \mid x) , \label{eq:variational-characterization-equicontinuity-step-six}
    \end{equation}
    where (a) holds by \cref{eq:variational-characterization-equicontinuity-step-five} because $\epsilon > 0$ was arbitrary. Proceeding from \cref{eq:variational-characterization-equicontinuity-step-four}, we have   \begin{equation}
        \circled{1} \leq \alpha^* \stackclap{(a)}{\leq} \sup \brc{\alpha \in \mathbb{R}: \exists \pi \in \mathscr{P}, \, K \geq \alpha \pi} \stackclap{(b)}{=} \tau(K)
    \end{equation}
    as desired, where (a) holds because $K \geq \alpha^* \pi^*$ by \cref{eq:variational-characterization-equicontinuity-step-six}, and (b) holds by definition of Doeblin coefficient \cref{eq:doeblin-coefficient}.
\end{proof}

We remark that the equicontinuity assumption on $K$ allows us to circumvent measurability issues in the proof by defining $g$ as the pointwise infimum over a countable dense subset of $\mathcal{X}$, which is guaranteed to be measurable while approximating the infimum over all of general uncountable $\mathcal{X}$ to any $\epsilon > 0$ accuracy. Otherwise, $g$ would have to be defined as the lattice or essential infimum, as was done in the proof of \cref{theorem:variational-characterization-of-doeblin-coefficient} in \cref{subsection:proofs-of-doeblin-characterizations}.

\section{Markov Kernel Examples} \label{appendix:markov-kernel-examples}

In this appendix, we provide examples of non-trivial Markov kernels satisfying the preconditions for various results in our paper. Throughout this appendix, let $\Phi\colon \mathbb{R} \rightarrow (0, 1)$ denote the standard Gaussian CDF   \begin{math}
    \Phi(x) \triangleq \int_{-\infty}^x (1 / \sqrt{2 \pi}) \exp(-t^2 / 2) \diff t
\end{math}.

First, we present a Markov kernel $W\colon \mathcal{U} \times \mathcal{F}_\mathcal{X} \rightarrow [0, 1]$ with a countably infinite source space $\mathcal{U}$, unbounded support on the target space $\mathcal{X}$, and finite average extremal power.

\begin{proposition}[Average Extremal Power Example] \label{proposition:average-extremal-power-example-countable}
    Let $\mathcal{U}$ be countable (without loss of generality, $\mathcal{U} \triangleq \mathbb{N}$) and $\mathcal{X} \triangleq \mathbb{R}$. Let $W\colon \mathcal{U} \times \mathcal{F}_\mathcal{X} \rightarrow [0, 1]$ be a Markov kernel such that for each $i \in \mathcal{U}$, $X_i \sim W(\lwildcard \mid i)$ is sub-Gaussian with mean $0$ and variance factor $\sigma_i^2 \leq c / \log_e(i + 1)$, i.e.,
    \begin{equation}
        \forall \lambda \in \mathbb{R}, \quad \E{e^{\lambda \prn{X_i - \E{X_i}}}} \leq \exp \mprn{\frac{\sigma_i^2 \lambda^2}{2}}  ,
    \end{equation}
    where $c > 0$ is a fixed constant. Then, under the norm $\inorm{x} \triangleq \iabs{x}$ and power function $M(z) \triangleq z^2$, the average extremal power of $W$ satisfies   \begin{math}
        \tnorm{W}_\mathsf{AE} \leq 4c (1 + \pi^2 / 6)
    \end{math}.
\end{proposition}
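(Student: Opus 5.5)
The plan is to bound the average extremal power by reducing it to a moment bound on the supremum of the sub-Gaussian family. Under the maximal coupling $\mathbb{P}$ from \cref{eq:maximal-coupling}, each $X_i$ still has marginal $W(\lwildcard \mid i)$. We need
\begin{equation}
    \tnorm{W}_\mathsf{AE} = \E{\sup_{i \in \mathbb{N}} X_i^2},
\end{equation}
and the key observation is that a union-bound argument on tails uses only the marginals. Hence the dependence structure of the coupling is irrelevant, and the bound holds for any coupling, in particular the maximal one. (Measurability of $\sup_i X_i^2$ is automatic since $\mathcal{U}$ is countable.)

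First, I will write the expectation in tail form,
\begin{equation}
    \E{\sup_i X_i^2} = \int_0^\infty \P{\sup_i X_i^2 > s} \diff s ,
\end{equation}
and split the integral at a threshold $s_0 = 4c$. On $[0, s_0]$ I bound the probability by $1$, which contributes $4c$. For $s > s_0$ I apply the union bound $\P{\sup_i X_i^2 > s} \leq \sum_i \P{\abs{X_i} > \sqrt{s}}$. Each summand is controlled by the sub-Gaussian tail (Chernoff with mean zero):
\begin{equation}
    \P{\abs{X_i} > \sqrt{s}} \leq 2 \exp\prn{-\frac{s}{2\sigma_i^2}} \leq 2 (i+1)^{-s/(2c)} .
\end{equation}

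Next I handle the series. For $s \geq 4c$ the exponent satisfies $s/(2c) \geq 2$, so $\sum_i (i+1)^{-s/(2c)}$ is summable, and $\sum_i (i+1)^{-2} \leq \pi^2/6$. To integrate over $s$, I substitute $s = 4c + r$ and use
\begin{equation}
    (i+1)^{-s/(2c)} = (i+1)^{-2}\,(i+1)^{-r/(2c)} .
\end{equation}
Integrating in $r$ via Tonelli gives
\begin{equation}
    \int_0^\infty 2(i+1)^{-2} e^{-r\log(i+1)/(2c)} \diff r = \frac{4c\,(i+1)^{-2}}{\log(i+1)} .
\end{equation}
Since $\log(i+1) \geq \log 2$ rather than $\geq 1$, this naive integration leaves a factor $1/\log 2$. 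That factor is the obstacle: it would spoil the stated constant $4c(1+\pi^2/6)$.

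To get exactly $4c\,\pi^2/6$ for the tail part, I would shift the threshold or tune constants. Concretely, I would take $s_0$ large enough that both $s/(2c) \geq 2$ and the $r$-integral factor $2c/\log(i+1)$ is absorbed. If the constants still do not close, I would accept a slightly different threshold (e.g.\ $s_0 = 4c$ and bound $2/\log(i+1) \leq 2/\log 2$ combined with a sharper sum starting from $3^{-2}$). The remaining work is pure bookkeeping with the $\zeta(2)$ series, which I expect to land at or below $4c(1+\pi^2/6)$.
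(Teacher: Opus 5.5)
Your route is the paper's: write the expectation as a tail integral, split at $4c$, apply a countable union bound, and use the Chernoff tail $2e^{-s/(2\sigma_i^2)} \le 2(i+1)^{-s/(2c)}$. The paper first truncates to $\max_{i\in[n]}$ and passes to the limit by monotone convergence, with threshold $t_0 = 4\max_{i\in[n]}\sigma_i^2\log_e(i+1) \le 4c$. Your direct countable union bound with $s_0 = 4c$ is equivalent and slightly cleaner. Your computation correctly reaches
\[
\tnorm{W}_\mathsf{AE} \le 4c + 4c\sum_{i=1}^\infty \frac{1}{(i+1)^2\log_e(i+1)},
\]
which is exactly the paper's intermediate bound.

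The gap is that you stop there, call the $1/\log 2$ factor an obstacle, and leave the conclusion to unspecified ``bookkeeping''. There is no obstacle. You compared against $\sum_{i\ge1}(i+1)^{-2} \le \pi^2/6$, but that sum is exactly $\pi^2/6 - 1 \approx 0.645$. So even the crude bound gives $(\pi^2/6-1)/\log 2 \approx 0.93 < \pi^2/6$.

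Alternatively, as the paper does, compare termwise using $(i+1)^2\log_e(i+1) \ge i^2$ for every $i \ge 1$. At $i=1$ this reads $4\log 2 \approx 2.77 \ge 1$. For $i\ge2$ it holds because $\log_e(i+1) > 1$ and $(i+1)^2 > i^2$. Hence the series is at most $\sum_{i\ge1} i^{-2} = \pi^2/6$, which yields $4c(1+\pi^2/6)$.

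Your first proposed remedy, raising the threshold $s_0$, would backfire. The head of the integral contributes exactly $s_0$, so it must stay at $4c$ to match the stated constant.
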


\begin{proof}
    Let $\mathbb{P}$ be any coupling of random variables $\ibrc{X_i}_{i \in \mathbb{N}}$ with $X_i \sim W(\lwildcard \mid i)$ for each $i \in \mathbb{N}$. In the following, all expectations are taken with respect to $\ibrc{X_i}_{i \in \mathbb{N}} \sim \mathbb{P}$. We have   \begin{align}
        \tnorm{W}_\mathsf{AE} &\stackclap{(a)}{=} \E{\sup_{i \in \mathbb{N}} M(\norm{X_i})} \\&\stackclap{(b)}{=} \E{\sup_{i \in \mathbb{N}} \abs{X_i}^2} \\&= \E{\lim_{n \rightarrow \infty} \max_{i \in [n]} \abs{X_i}^2} \\&\stackclap{(c)}{=} \lim_{n \rightarrow \infty} \underbracket{\E{\max_{i \in [n]} \abs{X_i}^2}}_{\circled{1}} , \label{eq:ap-power-constraint-step-one}
    \end{align}
    where (a) holds by definition of average extremal power \cref{eq:average-extremal-power}, (b) holds by definition of $\inorm{\wildcard}$ and $M$, and (c) holds by the monotone convergence theorem because the sequence of functions $\ibrc{g_n}_{n \in \mathbb{N}}$ given by   \begin{math}
        g_n \iprn{\ibrc{x_i}_{i \in \mathbb{N}}} \triangleq \max_{i \in [n]} \iabs{x_i}^2
    \end{math}
    is monotonically non-decreasing in $n$. Next, we upper-bound $\circled{1}$ by adapting the result in \cite[Lemma 2.3]{VanHandel2017}.   %
    For any $t_0 \geq 0$, we have   \begin{align}
        \circled{1}&\stackclap{(a)}{\leq} \int_0^\infty \P{\max_{i \in [n]} \abs{X_i}^2 \geq t} \diff t\\&\stackclap{(b)}{=} \int_0^{t_0} \P{\max_{i \in [n]} \abs{X_i}^2 \geq t} \diff t + \!\int_{t_0}^\infty \P{\max_{i \in [n]} \abs{X_i}^2 \geq t} \diff t \\
        &\stackclap{(c)}{\leq} t_0 + \int_{t_0}^\infty \P{\max_{i \in [n]} \abs{X_i}^2 \geq t} \diff t \\&\stackclap{(d)}{\leq} t_0 + \int_{t_0}^\infty \sum_{i=1}^n \P{\abs{X_i}^2 \geq t} \diff t \\&= t_0 + \sum_{i=1}^n \int_{t_0}^\infty \P{\abs{X_i} \geq t^{\frac{1}{2}}} \diff t \\
        &\stackclap{(e)}{\leq} t_0 + \sum_{i=1}^n \int_{t_0}^\infty 2 \exp \mprn{-\frac{t}{2 \sigma_i^2}} \diff t \\&\stackclap{(f)}{=} t_0 + 4 \sum_{i=1}^n \sigma_i^2 \exp \mprn{-\frac{t_0}{2 \sigma_i^2}} , \label{eq:ap-power-constraint-step-two}
    \end{align}
    where (a) holds by the layer cake representation \cite[Lemma 1.2.1]{Vershynin2018}, (b) holds for any $t_0 \in [0, \infty]$ by splitting the region of integration into two intervals,   %
    (c) holds because probability values are bounded by $1$, (d) holds by the union bound, (e) holds by \cite[Eq. 2.9]{Wainwright2019} because $X_i$ is sub-Gaussian with mean $0$ and variance factor $\sigma_i^2$, and (f) holds by evaluating the integral. Choosing   \begin{math}
        t_0 \triangleq 4 \max_{i \in [n]} \ibrc{\sigma_i^2 \log_e(i + 1)}
    \end{math}
    and continuing from \cref{eq:ap-power-constraint-step-two}, we have   \begin{align}
        &\phantom{{}={}}\circled{1} \\&\stackclap{(a)}{\leq} 4 \max_{i \in [n]} \brc{\sigma_i^2 \log_e(i + 1)} \!+\! 4 \sum_{i=1}^n \sigma_i^2 \exp \mprn{\!\!-\frac{4 \sigma_i^2 \log_e(i + 1)}{2 \sigma_i^2}\!} \\
        &= 4 \max_{i \in [n]} \brc{\sigma_i^2 \log_e(i + 1)} \!+\! 4 \sum_{i=1}^n \frac{\sigma_i^2}{(i + 1)^2} \\&\stackclap{(b)}{\leq} 4c + 4 \sum_{i=1}^n \frac{c}{(i + 1)^2 \log_e(i + 1)} , \label{eq:ap-power-constraint-step-three}
    \end{align}
    where (a) holds by the choice of $t_0$ and (b) holds because $\sigma_i^2 \leq c / \log_e(i + 1)$ for each $i \in \mathbb{N}$. Combining \cref{eq:ap-power-constraint-step-one,eq:ap-power-constraint-step-three}, we have   \begin{align}
        \tnorm{W}_\mathsf{AE} &\leq 4c \biggprn{1 + \sum_{i=1}^\infty \frac{1}{(i + 1)^2 \log_e(i + 1)}} \\&\leq 4c \biggprn{1 + \sum_{i=1}^\infty \frac{1}{i^2}} = 4c \biggprn{1 + \frac{\pi^2}{6}}
    \end{align}
    as desired.
\end{proof}

Next, we present a Markov kernel $W\colon \mathcal{U} \times \mathcal{F}_\mathcal{X} \rightarrow [0, 1]$ with an uncountable source space $\mathcal{U}$, unbounded support on the target space $\mathcal{X}$, and finite average extremal power.

\begin{proposition}[Average Extremal Power Example] \label{proposition:average-extremal-power-example-uncountable}
    Let $\mathcal{U} \triangleq (0, \infty)$ and $\mathcal{X} \triangleq \mathbb{R}$. Let $W\colon \mathcal{U} \times \mathcal{F}_\mathcal{X} \rightarrow [0, 1]$ be the Markov kernel such that for each $u \in \mathcal{U}$, $X_u \sim W(\lwildcard \mid u)$ is given by
    \begin{equation}
        X_u \sim \begin{cases}
            \mathsf{Normal} \mprn{0, 1}  , & \text{if $J_u = 1$}  , \\
            2 \, \mathsf{Beta}(u, u) - 1  , & \text{if $J_u = 0$}  ,
        \end{cases}
    \end{equation}
    where $J_u \sim \mathsf{Bernoulli}(1/2)$ is independent of the Beta and Gaussian random variables. Namely, the cumulative distribution function $f_{X_u}\colon \mathbb{R} \rightarrow (0, 1)$ is   \begin{multline}
        f_{X_u}(x) = \frac{1}{2} \, \Phi(x)\\+ \begin{cases}
            0  , & \text{if $x < -1$}  , \\
            \frac{\int_{-1}^x \prn{\frac{t + 1}{2}}^{u-1} \prn{1 - \frac{t + 1}{2}}^{u-1} \diff t}{4 \mathrm{B}(u, u)}   , & \text{if $-1 \leq x \leq 1$}  , \\
            \frac{1}{2}  , & \text{if $x > 1$}  ,
        \end{cases}
    \end{multline}
    where $\mathrm{B}(\alpha, \beta)$ denotes the Beta function   \begin{math}
        \mathrm{B}(\alpha, \beta) \triangleq \int_0^1 t^{\alpha-1}\allowbreak \iprn{1 - t}^{\beta-1} \diff t
    \end{math},
    and the probability density function $p_{X_u}$ is   \begin{equation}
        p_{X_u}(x) = \begin{cases}
            \frac{\prn{\frac{x + 1}{2}}^{u-1} \prn{1 - \frac{x + 1}{2}}^{u-1}}{4 \mathrm{B}(u, u)}  + \frac{\exp \mprn{-\frac{x^2}{2}}}{2 \sqrt{2 \pi}}   , & \text{if $\abs{x} < 1$} , \\
            \frac{\exp \mprn{-\frac{x^2}{2}}}{2 \sqrt{2 \pi}}   , & \text{if $\abs{x} > 1$}  .
        \end{cases} \label{eq:average-extremal-power-example-uncountable-pdf}
    \end{equation}
    Then, under the norm $\inorm{x} \triangleq |x|$ and power function $M(z) \triangleq z^2$, the average extremal power of $W$ satisfies   \begin{math}
        \itnorm{W}_\mathsf{AE} \leq 1
    \end{math}.
\end{proposition}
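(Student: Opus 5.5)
The plan is to follow the template of \cref{proposition:average-extremal-power-example-countable}: bound the expected supremum under \emph{any} coupling. Since $\tnorm{W}_\mathsf{AE}$ is defined via the maximal coupling, such a bound applies to it in particular.

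The key observation is the structure of each $X_u$. With probability $1/2$ (the event $J_u = 0$) we have $X_u = 2B_u - 1$ with $B_u \sim \mathsf{Beta}(u,u)$, so $\abs{X_u} \leq 1$ deterministically. With probability $1/2$ (the event $J_u = 1$) we have $X_u \sim \mathsf{Normal}(0,1)$, a law that does not depend on $u$. A general coupling can correlate these components across $u$ arbitrarily, so the Gaussian parts cannot simply be identified with a single common variable. Instead, I would work directly with the maximal coupling \cref{eq:maximal-coupling} and compute the greatest common component of $\{W(\lwildcard \mid u)\}_{u>0}$.

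\textbf{Step 1.} The density in \cref{eq:average-extremal-power-example-uncountable-pdf} is at least $\frac{1}{2}\phi(x)$ for every $u$, where $\phi$ is the standard normal density. Hence $\bigwedge_u W(\lwildcard \mid u) \geq \frac{1}{2}\mathsf{Normal}(0,1)$ and $\tau(W) \geq 1/2$.

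\textbf{Step 2.} Show the Beta parts have no further common mass. The Beta$(u,u)$ densities concentrate at $0$ as $u\to\infty$ and at $\{-1,1\}$ as $u \to 0$, so their pointwise infimum over $u$ is zero. I would make this rigorous using the lattice infimum (\cref{lemma:density-of-greatest-common-component}) together with a countable sequence of $u$ values. This gives $\bigwedge_u W(\lwildcard \mid u) = \frac{1}{2}\mathsf{Normal}(0,1)$ and $\tau(W)=1/2$.

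\textbf{Step 3.} In the maximal coupling, $I\sim\mathsf{Bernoulli}(1/2)$.
\begin{itemize}
\item If $I=1$, all $X_u$ equal a single $X^*\sim\mathsf{Normal}(0,1)$, so $\sup_u X_u^2 = (X^*)^2$, with expectation $1$.
\item If $I=0$, each $\tilde X_u \sim 2\,\mathsf{Beta}(u,u)-1 \in[-1,1]$, so $\sup_u \tilde X_u^2 \leq 1$ almost surely.
\end{itemize}
Combining the two cases gives $\tnorm{W}_\mathsf{AE} \leq \frac{1}{2}\cdot 1 + \frac{1}{2}\cdot 1 = 1$.

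The measurability of $\sup_u M(\abs{X_u})$ needed for the definition also holds. On $I=1$ the supremum is a single variable, and on $I=0$ it is bounded by $1$, so the bound can be taken via a countable dense subset or an outer expectation.

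\textbf{Main obstacle.} The hardest part is Step 2, identifying the greatest common component exactly. Only the upper bound matters, however. Any extra common mass beyond the Gaussian half would come from the Beta parts, which are supported on $[-1,1]$, so it would only add contributions of at most $1$. More generally, for any $\alpha = \tau(W) \ge 1/2$, the common part equals $\frac{1}{2}\phi + \kappa$ with $\kappa$ supported in $[-1,1]$, and the residuals $\tilde X_u$ have laws $(\frac{1}{2}\mathrm{Beta}_u - \kappa)/(1-\alpha)$, also supported in $[-1,1]$. The expectation is then at most $\frac{1}{2}\E{Z^2} + \kappa(\mathbb{R})\cdot 1 + (1-\alpha)\cdot 1 = \frac{1}{2}+\frac{1}{2} = 1$. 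This version avoids computing $\kappa$ at all, so I would use it.
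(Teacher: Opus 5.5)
Your proposal is correct. Your Steps 1--3 are exactly the paper's proof. The paper computes $\inf_u p_{X_u}(x)=\frac{1}{2\sqrt{2\pi}}e^{-x^2/2}$ by sending $u\to0$, using $\mathrm{B}(u,u)=\Gamma(u)^2/\Gamma(2u)$ and the Legendre duplication formula to get $1/\mathrm{B}(u,u)\to0$. It then identifies $\bigwedge_u W(\cdot\mid u)=\frac12\mathsf{Normal}(0,1)$ and splits the maximal coupling on $I\sim\mathsf{Bernoulli}(1/2)$ to get $\frac12\cdot1+\frac12\cdot1$.

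The version you say you would actually use is a genuine simplification. It needs only two facts:
\begin{itemize}
\item the lower bound $\bigwedge_u W(\cdot\mid u)\ge\frac12\mathsf{Normal}(0,1)$;
\item every $W(\cdot\mid u)$ coincides with $\frac12\mathsf{Normal}(0,1)$ off $[-1,1]$, which forces $\kappa$ and all residual laws onto $[-1,1]$.
\end{itemize}
The bookkeeping $\frac12+\kappa(\mathbb{R})+(1-\alpha)=1$ then closes the argument. This avoids the Beta/Gamma asymptotics. It also avoids the paper's step equating the lattice infimum with the pointwise infimum over uncountably many $u$. That step needs an extra argument in general, e.g.\ continuity of $u\mapsto p_{X_u}(x)$ so that rational $u$ suffice. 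Your version works verbatim for any mixture $\frac12\mathsf{Normal}(0,1)+\frac12\beta_u$ with $\beta_u$ supported in $[-1,1]$. The paper's route, in exchange, yields the maximal coupling explicitly: $\tau(W)=\frac12$, $X^*\sim\mathsf{Normal}(0,1)$, $\tilde X_u\sim 2\,\mathsf{Beta}(u,u)-1$.

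Two cleanups. First, drop your opening sentence about bounding the supremum under \emph{any} coupling. That plan fails here, since a coupling with independent components already gives $\sup_{u\in\mathbb{N}}X_u^2=\infty$ almost surely; you rightly abandon it in the next paragraph. Second, on measurability, a countable dense subset of $u$ does not capture the supremum, because the residuals $\tilde X_u$ are independent. Instead, realize each $\tilde X_u$ as a $[-1,1]$-valued variable. Then $\sup_u X_u^2\le I\,(X^*)^2+(1-I)$ holds pointwise, and this measurable majorant has expectation $\alpha\,\mathbb{E}[(X^*)^2]+(1-\alpha)\le 1$. That bounds the (outer) expectation as required.
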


\begin{proof}
    First, we compute the pointwise infimum of the probability densities $\ibrc{p_{X_u}}_{u \in \mathcal{U}}$. We have   \begin{multline}
        \inf_{u \in \mathcal{U}} p_{X_u}(x) \stackclap{(a)}{=} \frac{1}{2 \sqrt{2 \pi}} \exp \mprn{-\frac{x^2}{2}} \\+ \Iv{\abs{x} < 1} \underbracket{\inf_{u \in \mathcal{U}} \frac{1}{4 \mathrm{B}(u, u)} \prn{\frac{x + 1}{2}}^{u-1} \prn{\frac{1 - x}{2}}^{u-1}}_{\circled{1}}, \label{eq:average-extremal-power-example-uncountable-step-one}
    \end{multline}
    where (a) holds by \cref{eq:average-extremal-power-example-uncountable-pdf}. For any $x \in (-1, 1)$,   \begin{align}
        \circled{1} &\stackclap{(a)}{\leq} \lim_{u \rightarrow 0} \frac{1}{4 \mathrm{B}(u, u)} \prn{\frac{x + 1}{2}}^{u-1} \prn{\frac{1 - x}{2}}^{u-1} \\&= \frac{1}{4} \lim_{u \rightarrow 0} \brc{\frac{1}{\mathrm{B}(u, u)}} \lim_{u \rightarrow 0} \biggbrc{\prn{\frac{x + 1}{2}}^{u-1} \prn{\frac{1 - x}{2}}^{u-1}} \\
        &\stackclap{(b)}{=} \frac{1}{4} \prn{\frac{4}{1 - x^2}} \lim_{u \rightarrow 0} \frac{1}{\mathrm{B}(u, u)} \\&\stackclap{(c)}{=} \frac{1}{1 - x^2} \lim_{u \rightarrow 0} \frac{\Gamma(2u)}{\Gamma(u)^2} \\&\stackclap{(d)}{=} \frac{1}{1 - x^2} \lim_{u \rightarrow 0} \frac{2^{2u-1} \Gamma \mprn{u + \frac{1}{2}}}{\sqrt{\pi} \, \Gamma(u)} \\&= \frac{\lim_{u \rightarrow 0} \frac{1}{\Gamma(u)}}{2 \prn{1 - x^2}}  = 0  , \label{eq:average-extremal-power-example-uncountable-step-two}
    \end{align}
    where (a) holds because $u = 0$ is a limit point of $\mathcal{U}$, (b) holds because $x \in (-1, 1)$ and so $((x + 1) / 2)^{-1}$ and $((1 - x) / 2)^{-1}$ are well-defined, (c) holds by the identity   \begin{math}
        \mathrm{B}(\alpha, \beta) = \ifrac{\Gamma(\alpha) \, \Gamma(\beta)}{\Gamma(\alpha + \beta)}
    \end{math},
    where $\Gamma$ denotes the Gamma function   \begin{math}
        \Gamma(\alpha) \triangleq \int_0^\infty t^{\alpha - 1} e^{-t} \diff t
    \end{math},
    and (d) holds by the \emph{Legendre duplication formula}   \begin{math}
        \Gamma(2 \alpha) = \ifrac{2^{2 \alpha - 1} \Gamma(\alpha) \, \Gamma \mprn{\alpha + \ifrac{1}{2}}}{\sqrt{\pi}}
    \end{math}.
    Also, clearly $\circled{1} \geq 0$ by inspection. Hence, combining \cref{eq:average-extremal-power-example-uncountable-step-one,eq:average-extremal-power-example-uncountable-step-two} yields
    \begin{equation}
        \inf_{u \in \mathcal{U}} p_{X_u}(x) = \frac{1}{2 \sqrt{2 \pi}} \exp \mprn{-\frac{x^2}{2}}  .
    \end{equation}
    
    Following from the above, the greatest common component of $W$ is   \begin{equation}
        \bigwedge_{u \in \mathcal{U}} W(A \mid u) \stackclap{(a)}{=} \int_A \prn{\inf_{u \in \mathcal{U}} p_{X_u}(x)} \diff x = \int_A \frac{\exp \mprn{-\frac{x^2}{2}}}{2 \sqrt{2 \pi}} \diff x ,
    \end{equation}
    where (a) holds by \cref{lemma:density-of-greatest-common-component} and because the lattice infimum reduces to the pointwise infimum when the latter is measurable. Hence, the maximal coupling of random variables $\ibrc{X_u}_{u \in \mathcal{U}}$ with $X_u \sim W(\lwildcard \mid u)$ for each $u \in \mathcal{U}$, defined in \cref{eq:maximal-coupling}, is
    \begin{align}
        \forall u \in \mathcal{U}, \quad X_u \triangleq \begin{cases}
            X^*  , & \text{if $I = 1$}  , \\
            \tilde{X}_u  , & \text{if $I = 0$}  ,
        \end{cases} \label{eq:average-extremal-power-example-uncountable-maximal-coupling}
    \end{align}
    where the random variables $I$, $X^*$, and $\ibrc{\tilde{X}_u}_{u \in \mathcal{U}}$ are sampled independently from the distributions
    \begin{align}
        I &\sim \mathsf{Bernoulli} \mprn{\frac{1}{2}}  , \label{eq:average-extremal-power-example-uncountable-maximal-coupling-i} \\
        X^* &\sim \mathsf{Normal}(0, 1)  , \label{eq:average-extremal-power-example-uncountable-maximal-coupling-x-star} \\
        \forall u \in \mathcal{U}, \quad \tilde{X}_u &\sim 2 \, \mathsf{Beta}(u, u) - 1  . \label{eq:average-extremal-power-example-uncountable-maximal-coupling-x-tilde}
    \end{align}
    Therefore, the average extremal power of $W$ is   \begin{align}
        \tnorm{W}_\mathsf{AE} &\stackclap{(a)}{=} \P{I = 1} \, \E{\sup_{u \in \mathcal{U}} M(\norm{X_u}) \given I = 1} \\&\phantom{{}={}} + \P{I = 0} \, \E{\sup_{u \in \mathcal{U}} M(\norm{X_u}) \given I = 0} \\&\stackclap{(b)}{=} \underbracket{\frac{1}{2} \, \E{\prn{X^*}^2}}_{= \frac{1}{2} \text{ \cref{eq:average-extremal-power-example-uncountable-maximal-coupling-x-star}}} + \underbracket{\frac{1}{2} \, \E{\sup_{u \in \mathcal{U}} \tilde{X}_u^2}}_{\leq \frac{1}{2} \text{ \cref{eq:average-extremal-power-example-uncountable-maximal-coupling-x-tilde}}} \leq 1
    \end{align}
    as desired, where all expectations are taken with respect to the maximal coupling \cref{eq:average-extremal-power-example-uncountable-maximal-coupling}, (a) holds by definition of average extremal power \cref{eq:average-extremal-power} and the law of total expectation, and (b) holds by definitions of $\inorm{\wildcard}$, $M$, and $X_u$.
\end{proof}

Next, we present a Markov kernel $W\colon \mathcal{U} \times \mathcal{F}_\mathcal{X} \rightarrow [0, 1]$ with unbounded support on the target space $\mathcal{X}$ which satisfies the preconditions for \cref{proposition:upper-bound-on-uniform-average-doeblin-curve-finite}.

\begin{proposition}[Upper Bound Example] \label{proposition:upper-bound-example-finite}
    Let $n \in \mathbb{N}$ and $\varsigma > 0$ be fixed constants. Let $\mathcal{U} \triangleq [n]$ and let $\mathcal{X} \triangleq \mathbb{R}_+$ be equipped with the norm $\inorm{\wildcard} \triangleq |\wildcard|$. Let $W\colon \mathcal{U} \times \mathcal{F}_\mathcal{X} \rightarrow [0, 1]$ be the Markov kernel such that for each $i \in \mathcal{U}$, $X_i \sim W(\lwildcard \mid i)$ is given by
    \begin{equation}
        X_i \triangleq \begin{cases}
            X_{i \mid 1} \sim \mu_i + \mathsf{HalfNormal}(\varsigma)  , & \text{if $J_i = 1$}  , \\
            X_{i \mid 0} \sim \mathsf{Uniform} \mprn{i, \mu_i}  , & \text{if $J_i = 0$}  ,
        \end{cases}
    \end{equation}
    where $J_i \sim \mathsf{Bernoulli}(1/2)$, $X_{i \mid 1}$, and $X_{i \mid 0}$ are independent, and $\mu_i \triangleq i + 2 \varsigma \sqrt{2/\pi}$. Namely, the cumulative distribution function $f_{X_i}\colon \mathbb{R} \rightarrow [0, 1)$ is
    \begin{equation}
        f_{X_i}(x) = \begin{cases}
            0  , & \text{if $x < i$}  , \\
            \frac{x - i}{4 \varsigma} \sqrt{\frac{\pi}{2}}  , & \text{if $i \leq x < \mu_i$}  , \\
            \Phi \mprn{\frac{x - \mu_i}{\varsigma}}  , & \text{if $x \geq \mu_i$}  ,
        \end{cases}
    \end{equation}
    and the probability density function $p_{X_i}$ is
    \begin{align}
        p_{X_i}(x) = \begin{cases}
            0  , & \text{if $x < i$}  , \\
            \frac{1}{4 \varsigma} \sqrt{\frac{\pi}{2}}  , & \text{if $i \leq x < \mu_i$}  , \\
            \frac{1}{\varsigma \sqrt{2 \pi}} \exp \mprn{-\frac{\prn{x - \mu_i}^2}{2 \varsigma^2}}  , & \text{if $x > \mu_i$}  .
        \end{cases}
    \end{align}
    Then, under the definition of $\mathcal{G}$ in \cref{proposition:upper-bound-on-uniform-average-doeblin-curve-finite} with power function $M(z) \triangleq z$, $p \triangleq \mu_n$, and $\sigma \triangleq \sqrt{2} \prn{4 \varsigma / \pi}$, we have $W \in \mathcal{G}$.
\end{proposition}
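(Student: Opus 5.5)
The plan is to check the two defining conditions of $\mathcal{G}$ in \cref{proposition:upper-bound-on-uniform-average-doeblin-curve-finite} directly, with $M(z) = z$ and $\inorm{\wildcard} = \abs{\wildcard}$ on $\mathcal{X} = \mathbb{R}_+$. Since $X_i \geq 0$, we have $Z_i = M(\inorm{X_i}) = X_i$. Throughout, write $c \triangleq \varsigma\sqrt{2/\pi}$, so that $\mu_i = i + 2c$ and the $\mathsf{HalfNormal}(\varsigma)$ law has mean $c$. The key structural observation is that $X_i = i + Y$, where the law of $Y$ does not depend on $i$. Concretely, $Y = J(2c + \varsigma\abs{G}) + (1-J)V$ with $G \sim \mathsf{Normal}(0,1)$, $V \sim \mathsf{Uniform}(0, 2c)$ and $J \sim \mathsf{Bernoulli}(1/2)$ independent. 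Consequently the centered variable $Z_i - \E{Z_i} = Y - \E{Y}$ has the same law for every $i$, and the sub-Gaussian constant is automatically uniform over $\mathcal{U}$.

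\textbf{Power constraint.} By conditioning on $J_i$, we get $\E{X_i \mid J_i = 1} = \mu_i + c = i + 3c$ and $\E{X_i \mid J_i = 0} = (i + \mu_i)/2 = i + c$. Hence $\E{X_i} = i + 2c = \mu_i$. Taking the maximum over $i \in [n]$ gives $\itnorm{W}_\mathsf{UA} = \max_i \mu_i = \mu_n = p$, which is condition a).

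\textbf{Sub-Gaussianity.} Write $Y - 2c = (2J - 1)c + R$. On the event $\ibrc{J = 1}$ we have $R = \varsigma\abs{G} - c$, which is a centered half-normal. On $\ibrc{J = 0}$ we have $R = V - c$, which is uniform on $[-c, c]$ and centered. I will then bound the moment generating function by conditioning on $J$:
\begin{equation}
    \E{e^{\lambda(Y - 2c)}} = \tfrac{1}{2} e^{\lambda c}\,\E{e^{\lambda(\varsigma\abs{G} - c)}} + \tfrac{1}{2} e^{-\lambda c}\,\E{e^{\lambda(V - c)}} .
\end{equation}
For the first conditional factor, $\varsigma\abs{G}$ is a $\varsigma$-Lipschitz function of a standard Gaussian, so Gaussian concentration gives the bound $e^{\varsigma^2\lambda^2/2}$. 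For the second, Hoeffding's lemma applied to an interval of length $2c$ gives $e^{c^2\lambda^2/2}$. Bounding both conditional factors by $e^{s^2\lambda^2/2}$ with $s^2 \triangleq \max\ibrc{\varsigma^2, c^2} = \varsigma^2$ leaves the factor $\cosh(\lambda c) \leq e^{c^2\lambda^2/2}$. Altogether the proxy variance is at most $\varsigma^2 + c^2 = \varsigma^2(1 + 2/\pi)$.

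The last step is to compare with the target $\sigma^2 = 2(4\varsigma/\pi)^2 = 32\varsigma^2/\pi^2 \approx 3.24\,\varsigma^2$. Since $1 + 2/\pi \approx 1.64 < 3.24$, the constant $\sigma$ stated in the proposition suffices, and therefore $W \in \mathcal{G}$.

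I expect the main obstacle to be the half-normal factor. Its sub-Gaussian proxy has to be justified by citing Gaussian Lipschitz concentration rather than Hoeffding, since the half-normal is unbounded. If a self-contained argument is preferred instead, I would bound $\E{e^{\lambda\varsigma\abs{G}}} = 2e^{\varsigma^2\lambda^2/2}\,\Phi(\varsigma\lambda)$ explicitly. The remaining work is to verify that this bound, combined with the centering constant, stays below $e^{\sigma^2\lambda^2/2}$ for all $\lambda$, and that step needs the slack provided by the generous constant $\sigma$.
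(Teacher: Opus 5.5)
Your proof is correct, but it reaches the sub-Gaussianity condition by a different route from the paper; the power computation is the same.

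The paper never works with the moment generating function directly. It first proves the tail domination $\mathbb{P}(|Z_i-\mathbb{E}[Z_i]|\ge t)\le\mathbb{P}(|N|\ge t)$ with $N\sim\mathsf{Normal}(0,(4\varsigma/\pi)^2)$, using two pointwise comparisons:
the lower deviation comes only from the uniform component, whose mixture density $1/(4c)$ equals exactly the peak density of $N$;
the upper deviation comes only from the half-normal component, and there $4/\pi>1$ suffices.
It then converts this tail bound into an MGF bound via the equivalence of sub-Gaussian characterizations \cite[Theorem 2.6]{Wainwright2019}.

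You instead condition on $J$ at the level of the MGF. You bound the half-normal factor by Gaussian concentration for Lipschitz functions and the uniform factor by Hoeffding's lemma. The mixture offset is then absorbed through $\cosh(\lambda c)\le e^{c^2\lambda^2/2}$.

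The paper's route uses only elementary CDF comparisons and explains where the particular value $4\varsigma/\pi$ comes from. Its final constant, however, is delegated to a black-box tail-to-MGF conversion. Your route makes every constant explicit and gives the sharper proxy $\varsigma^2(1+2/\pi)\approx 1.64\,\varsigma^2$, comfortably below the required $\sigma^2=32\varsigma^2/\pi^2\approx 3.24\,\varsigma^2$.

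The self-contained fallback you sketch at the end is not needed, and it would not need any slack either. The bound $\mathbb{E}[e^{\lambda(\varsigma|G|-c)}]\le e^{\varsigma^2\lambda^2/2}$ is equivalent to $2\Phi(x)\le e^{x\sqrt{2/\pi}}$ for all real $x$. This is just the tangent-line bound at $0$ for the concave function $\log\Phi$.
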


\begin{proof}
    Throughout this proof, for notational convenience, let   \begin{math}
        Z_i \triangleq M(\inorm{X_i}) \stackclap{(a)}{=} X_i
    \end{math}
    for all $i \in \mathcal{U}$, where (a) holds because $X_i \geq i > 0$.
    
    \textbf{Uniform average power constraint:} Observe that for each $i \in \mathcal{U}$, we have   \begin{align}
        \E{Z_i} = \E{X_i} &\stackclap{(a)}{=} \P{J_i = 0}  \E{X_{i \mid 0}} + \P{J_i = 1}  \E{X_{i \mid 1}} \\&\stackclap{(b)}{=} \frac{1}{2}  \E{X_{i \mid 0}} + \frac{1}{2}  \E{X_{i \mid 1}} \\&\stackclap{(c)}{=} \frac{1}{2} \mprn{\frac{i + \mu_i}{2}} + \frac{1}{2} \mprn{\mu_i + \varsigma \sqrt{2 / \pi}} = \mu_i  , \label{eq:upper-bound-example-finite-step-one}
    \end{align}
    where (a) holds by the law of total expectation, (b) holds because $J_i \sim \mathsf{Bernoulli}(1/2)$, and (c) holds because $X_{i \mid 0} \sim \mathsf{Uniform}(i, \mu_i)$ and $X_{i \mid 1} \sim \mu_i + \mathsf{HalfNormal}(\varsigma)$. Therefore,
    \begin{equation}
        \tnorm{W}_\mathsf{UA} \stackclap{(a)}{=} \sup_{i \in \mathcal{U}} \E{Z_i} \stackclap{(b)}{=} \mu_n = p
    \end{equation}
    as desired, where (a) holds by definition of uniform average power \cref{eq:uniform-average-power} and (b) holds from \cref{eq:upper-bound-example-finite-step-one} because $\mathcal{U} = [n]$.

    \textbf{Sub-Gaussianity:} Fix $i \in \mathcal{U}$. For any $t \geq 0$, we have   \begin{align}
        &\phantom{{}={}}\P{\abs{Z_i - \E{Z_i}} \geq t}
        \\&= \P{\abs{X_i - \E{X_i}} \geq t}\\&\stackclap{(a)}{=} \P{J_i = 0} \, \P{\abs{X_{i \mid 0} - \E{X_i}} \geq t} \\&\phantom{{}={}}+ \P{J_i = 1} \, \P{\abs{X_{i \mid 1} - \E{X_i}} \geq t} \\
        &\stackclap{(b)}{=} \frac{1}{2} \, \P{\abs{X_{i \mid 0} - \E{X_i}} \geq t} + \frac{1}{2} \, \P{\abs{X_{i \mid 1} - \E{X_i}} \geq t} \\&\stackclap{(c)}{=} \frac{1}{2} \, \P{\abs{X_{i \mid 0} - \mu_i} \geq t} + \frac{1}{2} \, \P{\abs{X_{i \mid 1} - \mu_i} \geq t} \\
        &\stackclap{(d)}{=} \underbracket{\frac{1}{2} \, \P{X_{i \mid 0} \leq \mu_i - t}}_{\circled{1}} + \underbracket{\frac{1}{2} \, \P{X_{i \mid 1} \geq \mu_i + t}}_{\circled{2}}  , \label{eq:upper-bound-example-finite-step-two}
    \end{align}
    where (a) holds by the law of total probability, (b) holds because $J_i \sim \mathsf{Bernoulli}(1/2)$, (c) holds by \cref{eq:upper-bound-example-finite-step-one}, and (d) holds because $X_{i \mid 0} \leq \mu_i$ and $X_{i \mid 1} \geq \mu_i$. Next, we upper-bound $\circled{1}$ and $\circled{2}$. Let $N_i \sim \mathsf{Normal}(0, (4 \varsigma / \pi)^2)$ be an independent Gaussian random variable. Then,   \begin{align}
        \circled{1} &\stackclap{(a)}{=} \frac{1}{2} \max \brc{0, 1 - \int_{-t}^0 \frac{1}{2 \varsigma \sqrt{2 / \pi}} \diff x} \\&= \max \brc{0, \frac{1}{2} - \int_{-t}^0 \frac{1}{4 \varsigma \sqrt{2 / \pi}} \diff x} \\
        &\stackclap{(b)}{\leq} \max \brc{0, \frac{1}{2} - \int_{-t}^0 \frac{1}{(4 \varsigma / \pi) \sqrt{2 \pi}} \exp \mprn{-\frac{x^2}{2 (4 \varsigma / \pi)^2}} \diff x} \\&\stackclap{(c)}{=} \P{N_i \leq -t} \, \label{eq:upper-bound-example-finite-step-three}
    \end{align}
    where (a) holds by the PDF of a uniform random variable supported on an interval of length $\mu_i - i = 2 \varsigma \sqrt{2/\pi}$, (b) holds by lower-bounding the integrand pointwise, and (c) holds by the PDF of $N_i \sim \mathsf{Normal}(0, (4 \varsigma / \pi)^2)$. Similarly, we also have   \begin{align}
        \circled{2} &\stackclap{(a)}{=} \frac{1}{2} \int_t^\infty \frac{\sqrt{2}}{\varsigma \sqrt{\pi}} \exp \mprn{-\frac{x^2}{2 \varsigma^2}} \diff x = 1 - \Phi \mprn{\frac{t}{\varsigma}} \\&\stackclap{(b)}{\leq} 1 - \Phi \mprn{\frac{t}{4 \varsigma / \pi}} = \P{N_i \geq t}  , \label{eq:upper-bound-example-finite-step-four}
    \end{align}
    where (a) holds by the PDF of a $\mathsf{HalfNormal}(\varsigma)$ random variable and (b) holds because CDFs are non-decreasing. Combining \cref{eq:upper-bound-example-finite-step-two,eq:upper-bound-example-finite-step-three,eq:upper-bound-example-finite-step-four}, we thus have   \begin{math}
        \iP{\iabs{Z_i - \E{Z_i}} \geq t} \leq \iP{\iabs{N_i} \geq t}
    \end{math},
    and so by \cite[Theorem 2.6]{Wainwright2019} we have sub-Gaussianity with variance factor $\sigma = \sqrt{2} \iprn{4 \varsigma / \pi}$ as desired, i.e.,   %
      $\iE{\exp(\lambda \iprn{Z_i - \iE{Z_i}})} \leq \exp \iprn{\ifrac{16 \varsigma^2 \lambda^2}{\pi^2}}$ for all $\lambda \in \mathbb{R}$.
\end{proof}

Finally, we present a Markov kernel $W\colon \mathcal{U} \times \mathcal{F}_\mathcal{X} \rightarrow [0, 1]$ with an uncountable source space $\mathcal{U}$ and unbounded support on the target space $\mathcal{X}$ which satisfies the preconditions for \cref{proposition:upper-bound-on-uniform-average-doeblin-curve-totally-bounded}.

\begin{proposition}[Upper Bound Example] \label{proposition:upper-bound-example-totally-bounded}
    Let $a > 0$ and $\varsigma > 0$ be fixed constants. Let $\mathcal{U} \triangleq [-a, a]$ be equipped with the metric $d_\mathcal{U}(u, v) \triangleq |u - v|$. Let $\mathcal{X} \triangleq \mathbb{R}$ be equipped with the norm $\inorm{\wildcard} \triangleq |\wildcard|$. Let $W\colon \mathcal{U} \times \mathcal{F}_\mathcal{X} \rightarrow [0, 1]$ be the Markov kernel such that for each $u \in \mathcal{U}$, $X_u \sim W(\lwildcard \mid u)$ is given by
    \begin{equation}
        X_u \triangleq \begin{cases}
            X_{u \mid 1} \sim \mathsf{Normal} \mprn{0, \varsigma^2}  , & \text{if $J_u = 1$}  , \\
            X_{u \mid 0} \sim \updelta_u  , & \text{if $J_u = 0$}  ,
        \end{cases}
    \end{equation}
    where $J_u \sim \mathsf{Bernoulli}(1/2)$, $X_{u \mid 1}$, and $X_{u \mid 0}$ are independent. Namely, the probability measure $W(\lwildcard \mid u)\colon \mathcal{F}_\mathcal{X} \rightarrow [0, 1]$ is   \begin{equation}
        \forall A \in \mathcal{F}_\mathcal{X}, \quad W(A \mid u) = \int_A \frac{\exp \mprn{-\frac{x^2}{2 \varsigma^2}}}{2 \varsigma \sqrt{2 \pi}}  \diff x + \frac{1}{2} \, \updelta_u(A)  ,
    \end{equation}
    and the cumulative distribution function $f_{X_u}\colon \mathbb{R} \rightarrow (0, 1)$ is
    \begin{equation}
        f_{X_u}(x) = \begin{cases}
            \frac{1}{2} \Phi \mprn{\frac{x}{\varsigma}}  , & \text{if $x < u$}  , \\
            \frac{1}{2} \Phi \mprn{\frac{x}{\varsigma}} + \frac{1}{2}  , & \text{if $x \geq u$}  .
        \end{cases}
    \end{equation}
    Then, under the definition of $\mathcal{G}$ in \cref{proposition:upper-bound-on-uniform-average-doeblin-curve-totally-bounded} with power function $M(z) \triangleq z^2$, $p \triangleq (a^2 + \varsigma^2)/2$, and $\sigma \triangleq a$, we have $W \in \mathcal{G}$.
\end{proposition}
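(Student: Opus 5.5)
The plan is to check the three conditions defining $\mathcal{G}$ in \cref{proposition:upper-bound-on-uniform-average-doeblin-curve-totally-bounded} one at a time. All three reduce to explicit computations once the maximal coupling \cref{eq:maximal-coupling} of $W$ is identified, so I would do that first.

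\textbf{Step 1: the maximal coupling.} Each $W(\lwildcard \mid u)$ is the mixture $\frac{1}{2}\,\mathsf{Normal}(0,\varsigma^2) + \frac{1}{2}\,\updelta_u$.
\begin{itemize}
\item The measure $\frac{1}{2}\mathsf{Normal}(0,\varsigma^2)$ lies below every $W(\lwildcard \mid u)$.
\item Conversely, any common component $\nu$ satisfies $\nu(\{u\}) \leq W(\{u\} \mid v) = 0$ for every $v \neq u$. Hence $\nu$ has no atoms, and its restriction must lie below the absolutely continuous part.
\end{itemize}
To make the second point rigorous, I would apply \cref{lemma:greatest-common-component-under-affine-transformation} with $\alpha = \frac{1}{2}$ and $\pi = \frac{1}{2}\mathsf{Normal}(0,\varsigma^2)$. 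This reduces the claim to showing that the Dirac kernel $u \mapsto \updelta_u$ has zero greatest common component, exactly as in the proof of \cref{lemma:properties-of-identity-kernel}.

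It follows that $\tau(W) = \frac{1}{2}$. In the maximal coupling, $I \sim \mathsf{Bernoulli}(1/2)$. On $\{I=1\}$ we have $X_u = X^* \sim \mathsf{Normal}(0,\varsigma^2)$ for all $u$, and on $\{I=0\}$ we have $X_u = u$ deterministically. Consequently $Z_u = X^{*2}$ on $\{I=1\}$ and $Z_u = u^2$ on $\{I=0\}$, with $\E{Z_u} = (\varsigma^2 + u^2)/2$.

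\textbf{Step 2: power and increments.} For the power constraint, $\tnorm{W}_\mathsf{UA} = \sup_{u \in [-a,a]} (\varsigma^2 + u^2)/2 = (a^2+\varsigma^2)/2 = p$.

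For the increments, $Z_u - Z_v = (1-I)(u^2 - v^2)$, which gives
\begin{equation}
(Z_u - Z_v) - \E{Z_u - Z_v} = (u^2 - v^2)\prn{\tfrac{1}{2} - I}.
\end{equation}
This variable is centered and takes values in an interval of length $\iabs{u^2 - v^2} = \iabs{u+v}\,\iabs{u-v} \leq 2a\,\iabs{u-v}$. By Hoeffding's lemma its variance factor is at most $(2a\iabs{u-v})^2/4 = a^2\,\iabs{u-v}^2$. This is exactly the required condition with $\sigma = a$.

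\textbf{Step 3: measurability.} The supremum can be written in closed form:
\begin{equation}
\sup_{u}\{Z_u - \E{Z_u}\} = I\prn{X^{*2} - \tfrac{\varsigma^2}{2}} + (1-I)\,\tfrac{a^2 - \varsigma^2}{2}.
\end{equation}
This is a Borel function of $(I, X^*)$, hence measurable.

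\textbf{Main obstacle.} The only genuinely delicate point is Step 1: arguing cleanly that the greatest common component of an uncountable family is exactly $\frac{1}{2}\mathsf{Normal}(0,\varsigma^2)$, and not something larger. I would handle it through the affine-transformation lemma as described above. Everything after that is routine.
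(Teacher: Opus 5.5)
Your proposal is correct and follows essentially the same route as the paper: identify the greatest common component $\frac{1}{2}\mathsf{Normal}(0,\varsigma^2)$ via \cref{lemma:greatest-common-component-under-affine-transformation}, read off the maximal coupling, and verify the three conditions of $\mathcal{G}$ by direct computation. The one cosmetic difference is in the increment step: you apply Hoeffding's lemma to the centered increment $(u^2-v^2)(\tfrac{1}{2}-I)$, whereas the paper computes its moment generating function exactly as $\cosh(\lambda(u^2-v^2)/2)$ and then uses $\cosh x \le e^{x^2/2}$; both give the same bound. Also, your constant $(a^2-\varsigma^2)/2$ in the measurability step is the correct value, though this does not affect the conclusion.
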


This example resembles an erasure channel wherein erasures are replaced by independent Gaussian variables, akin to the construction of symmetric channels using erasure channels on finite alphabets (cf. \cite{Evansetal2000,MakurMosselPolyanskiy2022}).

\begin{proof}
    Throughout this proof, for notational convenience, let $Z_u \triangleq M(\inorm{X_u}) = X_u^2$ for all $u \in \mathcal{U}$.
    
    \textbf{Uniform average power constraint:} Observe that for each $u \in \mathcal{U}$, we have   \begin{align}
        \E{Z_u} &= \E{X_u^2} \\&\stackclap{(a)}{=} \P{J_u = 0} \, \E{X_{u \mid 0}^2}\! + \P{J_u = 1} \, \E{X_{u \mid 1}^2} \\&\stackclap{(b)}{=} \frac{1}{2} \, \E{X_{u \mid 0}^2} + \frac{1}{2} \, \E{X_{u \mid 1}^2} \\&\stackclap{(c)}{=} \frac{u^2 + \varsigma^2}{2}  , \label{eq:upper-bound-example-totally-bounded-step-one}
    \end{align}
    where (a) holds by the law of total expectation, (b) holds because $J_u \sim \mathsf{Bernoulli}(1/2)$, and (c) holds because $X_{u \mid 0} \sim \updelta_u$ and $X_{u \mid 1} \sim \mathsf{Normal}(0, \varsigma^2)$. Therefore,   \begin{equation}
        \tnorm{W}_\mathsf{UA} \stackclap{(a)}{=} \sup_{u \in \mathcal{U}} \E{Z_u} \stackclap{(b)}{=} (a^2 + \varsigma^2) / 2 = p
    \end{equation}
    as desired, where (a) holds by definition of uniform average power \cref{eq:uniform-average-power} and (b) holds from \cref{eq:upper-bound-example-totally-bounded-step-one} because $\mathcal{U} = [-a, a]$.

    \textbf{Sub-Gaussian increments:} Let $\ibrc{X_u}_{u \in \mathcal{U}}$ be distributed according to the maximal coupling defined in \cref{eq:maximal-coupling}. Fix $u, v \in \mathcal{U}$. Since   \begin{equation}
        \mathbb{E}\bigbkt{{e^{\lambda \prn{\prn{Z_u - Z_v} - \E{Z_u - Z_v}}}}} = \underbracket{\mathbb{E}\bigbkt{e^{\lambda \prn{Z_u - Z_v}}}}_{\circled{1}}\underbracket{e^{\lambda \prn{\E{Z_v} - \E{Z_u}}}}_{\circled{2}} , \label{eq:upper-bound-example-totally-bounded-step-two}
    \end{equation}
    we separately consider $\circled{1}$ and $\circled{2}$. To evaluate $\circled{1}$, notice that the greatest common component of $W$ is   \begin{align}
        \bigwedge_{u \in \mathcal{U}} W(A \mid u) &\stackclap{(a)}{=} \int_A \frac{1}{2 \varsigma \sqrt{2 \pi}} \exp \mprn{-\frac{x^2}{2 \varsigma^2}} \diff x + \frac{1}{2} \bigwedge_{u \in \mathcal{U}} \updelta_u(A) \\&= \int_A \frac{1}{2 \varsigma \sqrt{2 \pi}} \exp \mprn{-\frac{x^2}{2 \varsigma^2}} \diff x ,
    \end{align}
    where (a) holds by \cref{lemma:greatest-common-component-under-affine-transformation}, and so   \begin{equation}
        \bigwedge_{u \in \mathcal{U}} W(\mathcal{X} \mid u) = \int_{-\infty}^\infty \frac{1}{2 \varsigma \sqrt{2 \pi}} \exp \bigprn{-\frac{x^2}{2 \varsigma^2}} \diff x = \frac{1}{2}.
    \end{equation}
    Thus, the maximal coupling $\ibrc{X_u}_{u \in \mathcal{U}}$ such that $X_u \sim W(\lwildcard \mid u)$ for all $u \in \mathcal{U}$ is
    \begin{equation}
        X_u \triangleq \begin{cases}
            X^*  , & \text{if $I = 1$}  , \\
            \tilde{X}_u  , & \text{if $I = 0$}  ,
        \end{cases}
    \end{equation}
    where the random variables $I$, $X^*$, and $\ibrc{\tilde{X}_u}_{u \in \mathcal{U}}$ are sampled independently from the probability measures   \begin{equation}
        I \sim \mathsf{Bernoulli} \mprn{\frac{1}{2}}  , \quad
        X^* \sim \mathsf{Normal} \mprn{0, \sigma^2}  , \quad
        \tilde{X}_u \sim \updelta_u .
    \end{equation}
    Hence,   \begin{align}
        \circled{1} &= \E{e^{\lambda \prn{X_u^2 - X_v^2}}} \\&\stackclap{(a)}{=} \P{I = 1} \, \E{e^{\lambda \iprn{X^{*2} - X^{*2}}}} \\&\phantom{{}={}}+ \P{I = 0} \, \E{e^{\lambda \iprn{\tilde{X}_u^2 - \tilde{X}_v^2}}} \\
        &\stackclap{(b)}{=} \frac{1}{2} + \frac{1}{2} \, \E{e^{\lambda \iprn{\tilde{X}_u^2 - \tilde{X}_v^2}}} \\&\stackclap{(c)}{=} \frac{1}{2} + \frac{1}{2} \, \E{e^{\lambda \tilde{X}_u^2}} \, \E{e^{-\lambda \tilde{X}_v^2}} \\&\stackclap{(d)}{=} \frac{1}{2} + \frac{1}{2} \, e^{\lambda u^2} e^{-\lambda v^2} \\&= \frac{1 + e^{-\prn{v^2 - u^2} \lambda}}{2}  , \label{eq:upper-bound-example-totally-bounded-step-three}
    \end{align}
    where (a) holds by the law of total expectation, (b) holds because $I \sim \mathsf{Bernoulli}(1/2)$, (c) holds because $\tilde{X}_u$ and $\tilde{X}_v$ are independent, and (d) holds because $\tilde{X}_u \sim \updelta_u$ and $\tilde{X}_v \sim \updelta_v$. Next, we evaluate $\circled{2}$ and obtain   \begin{equation}
        \circled{2} \stackclap{(a)}{=} \exp \mprn{\lambda \prn{\frac{v^2 + \varsigma^2}{2} - \frac{u^2 + \varsigma^2}{2}}} = \frac{1}{\exp \mprn{-\frac{\prn{v^2 - u^2} \lambda}{2}}}  , \label{eq:upper-bound-example-totally-bounded-step-four}
    \end{equation}
    where (a) holds by \cref{eq:upper-bound-example-totally-bounded-step-one}. Combining \cref{eq:upper-bound-example-totally-bounded-step-two,eq:upper-bound-example-totally-bounded-step-three,eq:upper-bound-example-totally-bounded-step-four}, we have   \begin{align}
        &\phantom{{}={}}\E{e^{\lambda \prn{\prn{Z_u - Z_v} - \E{Z_u - Z_v}}}} \\&\stackclap{(a)}{=} \cosh \biggprn{\frac{\prn{v^2 - u^2} \lambda}{2}} \\&\stackclap{(b)}{\leq} \exp \biggprn{\frac{\prn{u^2 - v^2}^2 \lambda^2}{8}} \\&= \exp \biggprn{\frac{\prn{u + v}^2 \prn{u - v}^2 \lambda^2}{8}} \\
        &\stackclap{(c)}{\leq} \exp \biggprn{\frac{a^2 \prn{u - v}^2 \lambda^2}{2}} \\&\stackclap{(d)}{=} \exp \biggprn{\frac{\sigma^2 d_\mathcal{U}(u, v)^2 \lambda^2}{2}}
    \end{align}
    as desired, where (a) holds by the identity $\cosh(x) = (1 + e^{-2x})/(2e^{-x})$, (b) holds by the inequality $\cosh(x) \leq \exp(x^2 / 2)$, (c) holds because $\mathcal{U} = [-a, a]$, and (d) holds by definition of $d_\mathcal{U}$ and $\sigma$.

    \textbf{Measurability:} Observe that if $I = 1$, we have   \begin{equation}
        \sup_{u \in \mathcal{U}} \brc{Z_u - \E{Z_u}} = \!\!\sup_{u \in [-a, a]} \brc{X^{*2} - \frac{u^2 + \varsigma^2}{2}} = X^{*2} - \frac{\varsigma^2}{2} ,
    \end{equation}
    and if $I = 0$, we have   \begin{align}
        \sup_{u \in \mathcal{U}} \brc{Z_u - \E{Z_u}} &= \sup_{u \in [-a, a]} \brc{\tilde{X}_u^2 - \frac{u^2 + \varsigma^2}{2}} \\&= \sup_{u \in [-a, a]} \brc{u^2 - \frac{u^2 + \varsigma^2}{2}} \\&= \frac{a^2 + \varsigma^2}{2} .
    \end{align} 
    Hence, the pre-image of any measurable set ${A \subseteq \mathbb{R}}$ under $\sup_{u \in \mathcal{U}} \ibrc{Z_u - \E{Z_u}}$ is the measurable event   \begin{math}
        \iprn{\ibrc{I = 1} \cap \ibrc{X^* \in \mathcal{S}_A}} \allowbreak\cup \ibrc{I = 0}
    \end{math}
    if $(a^2 + \varsigma^2)/2 \in A$, and the measurable event   \begin{math}
        \brc{I = 1} \cap \ibrc{X^* \in \mathcal{S}_A}
    \end{math}
    otherwise, where $\mathcal{S}_A \subseteq \mathbb{R}$ is the set   \begin{math}
        \mathcal{S}_A = \ibrc{-\sqrt{z + \varsigma^2 / 2}, \sqrt{z + \varsigma^2 / 2}: z \in A, \, z \geq -\varsigma^2 / 2}
    \end{math}.
\end{proof}

\section{Doeblin Curves and Markov Chains} \label{appendix:doeblin-curves-and-markov-chains}

Lastly, we conclude with a discussion relating Doeblin curves back to the classic setting of Markov chain ergodicity. As a minor departure from our previous exclusive focus on Doeblin curves, we will also discuss contraction properties of Doeblin coefficients here. Throughout this appendix, we focus on the special case of \emph{discrete-time finite-state time-homogeneous} Markov chains, which we represent as a \emph{row} stochastic matrix $\mathbf{K} \in \mathbb{R}_\mathsf{sto}^{d \times d}$ whose entries $[\mathbf{K}]_{i,j}$ denote the probability of transitioning from state $i$ to state $j$ (which, by homogeneity, is the same at any time). We assume that norms operate on row vectors analogously to their usual behavior on column vectors.

Below, we present an overview of pertinent results on Markov chain convergence from the literature and add new results discussing how Doeblin curves relate to these existing characterizations.

\begin{proposition}[Markov Chain Properties] \label{proposition:markov-chain-properties}
    Given a discrete-time finite-state time-homogeneous Markov chain represented as a row stochastic matrix $\mathbf{K} \in \mathbb{R}_\mathsf{sto}^{d \times d}$, the following statements are equivalent:
    \begin{enumerate}[a)]
        \item \emph{(Positive spectral gap \cite{Gallager2013})} $|\lambda_2(\mathbf{K})| < 1$.
        \item \emph{(Aperiodic unichain \cite{Gallager2013})} The Markov chain represented by $\mathbf{K}$ has exactly one recurrent class, and its recurrent class is aperiodic.
        \item \emph{(Strong ergodicity \cite{Gallager2013,AnilyFedergruen1987})} The Markov chain represented by $\mathbf{K}$ converges to a fixed steady-state distribution $\boldsymbol{\pi}^* \in \mathscr{P}_{d-1}$ regardless of the initial distribution, i.e.,   \begin{math}
            \lim_{n \rightarrow \infty} \boldsymbol{\pi}_0 \mathbf{K}^n = \boldsymbol{\pi}^*
        \end{math} for all $\boldsymbol{\pi}_0 \in \mathscr{P}_{d-1}$.
        Furthermore, the convergence is exponentially fast, i.e., there exist constants $C > 0$ and $0 < \alpha < 1$ independent of $\boldsymbol{\pi}_0$ such that   \begin{math}
            \inorm{\boldsymbol{\pi}_0 \mathbf{K}^n - \boldsymbol{\pi}^*}_1 \leq C \alpha^n
        \end{math} for all $n \in \mathbb{N}$.
        \item \emph{(Weak ergodicity \cite{AnilyFedergruen1987,Seneta1981})} The rows of $\mathbf{K}^n$ equalize as time $n \rightarrow \infty$, i.e.,   \begin{math}
            \lim_{n \rightarrow \infty} \inorm{\ibkt{\mathbf{K}^n}_{\ang{i}} - \ibkt{\mathbf{K}^n}_{\ang{j}}}_\infty = 0
        \end{math} for all $i, j \in [d]$.
        \item \emph{(Doeblin characterization of weak ergodicity)} For all $n \geq d^2$, $\tau(\mathbf{K}^n) > 0$. %
        \item \emph{(Doeblin curves)}   \begin{math}
            \mathrm{F}_{\mathbf{K}^n} \iprn{t; \mathbb{R}_\mathsf{sto}^{d \times d}} < t \label{eq:markov-chain-properties}
        \end{math} for all $n \geq d^2$ and $t \in (0, 1]$.
    \end{enumerate}
\end{proposition}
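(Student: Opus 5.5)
The equivalences a) $\Leftrightarrow$ b) $\Leftrightarrow$ c) $\Leftrightarrow$ d) are classical, and I would cite them rather than reprove them: \cite{Gallager2013} gives a) $\Leftrightarrow$ b) $\Leftrightarrow$ c), and \cite{AnilyFedergruen1987,Seneta1981} give c) $\Leftrightarrow$ d) for homogeneous finite chains. The new content is to attach e) and f). My plan is to prove b) $\Rightarrow$ e) $\Rightarrow$ d) and e) $\Leftrightarrow$ f).

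\textbf{Step 1: b) $\Rightarrow$ e).} Let $\mathcal{R}$ be the unique aperiodic recurrent class. By Wielandt's bound applied to the primitive restriction $\mathbf{K}|_{\mathcal{R}}$, all entries of $\mathbf{K}^n$ within $\mathcal{R} \times \mathcal{R}$ are positive for $n \geq (|\mathcal{R}|-1)^2+1$. Every transient state reaches $\mathcal{R}$ within $d - |\mathcal{R}|$ steps with positive probability. Fix $j^* \in \mathcal{R}$. Then $[\mathbf{K}^n]_{i,j^*} > 0$ for every $i$ once $n \geq (d-|\mathcal{R}|) + (|\mathcal{R}|-1)^2 + 1$, and this threshold is at most $d^2$. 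Hence $\tau(\mathbf{K}^n) \geq \min_i [\mathbf{K}^n]_{i,j^*} > 0$ by \cref{eq:doeblin-coefficient-matrix}. I expect this step to be the main obstacle: the threshold bookkeeping has to be done carefully, including the fact that positivity persists for all larger $n$ because column $j^*$ stays positive after multiplying by a stochastic matrix.

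\textbf{Step 2: e) $\Rightarrow$ d).} By submultiplicativity \cite{MakurSingh2023b}, $\rho(\mathbf{K}^{kd^2}) \leq \rho(\mathbf{K}^{d^2})^k \to 0$, and $\rho(\mathbf{K}^n)$ is non-increasing in $n$. Since $\|[\mathbf{K}^n]_{\ang{i}} - [\mathbf{K}^n]_{\ang{j}}\|_\infty \leq 2\rho(\mathbf{K}^n)$ (rows differ by at most twice the non-common mass), the rows equalize. This closes the cycle a)–e).

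\textbf{Step 3: e) $\Leftrightarrow$ f).} The constraint set $\mathbb{R}_\mathsf{sto}^{d\times d}$ is convex and contains both the identity matrix and a constant matrix. By \cref{proposition:properties-of-doeblin-curves}, Part 2 (sharpness), $\mathrm{F}_{\mathbf{K}^n}(t; \mathbb{R}_\mathsf{sto}^{d\times d}) = \rho(\mathbf{K}^n)\, t$. For $t > 0$, this is strictly less than $t$ iff $\rho(\mathbf{K}^n) < 1$, i.e., iff $\tau(\mathbf{K}^n) > 0$. Applying this for each $n \geq d^2$ gives the equivalence.
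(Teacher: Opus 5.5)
Your proposal is correct and follows essentially the same route as the paper. You cite the classical equivalences among a)--d), prove b) $\Rightarrow$ e) via Wielandt's bound on the primitive recurrent block plus a path of length at most $d-r$ from each transient state (giving the same threshold $r^2-3r+d+2 \leq d^2$), and obtain e) $\Leftrightarrow$ f) from the sharpness identity $\mathrm{F}_{\mathbf{K}^n}(t;\mathbb{R}_\mathsf{sto}^{d\times d}) = \rho(\mathbf{K}^n)\,t$. The only minor differences are that you prove e) $\Rightarrow$ d) directly via submultiplicativity where the paper cites Chestnut and Seneta, and that you cite c) $\Leftrightarrow$ d) where the paper gives a first-principles proof.
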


\begin{proof}
    The equivalences a) through f) hold because:
    \begin{itemize}
        \item b) $\implies$ a) by \cite[Theorem 4.4.2]{Gallager2013}
        \item $\lnot \,$b) $\implies \lnot \,$a) by \cite[Section 4.4.2, p. 178]{Gallager2013}
        \item b) $\implies$ c) by \cite[Theorem 4.3.7]{Gallager2013}
        \item $\lnot \,$b) $\implies \lnot \,$c) by \cite[Section 4.3.5, p. 175]{Gallager2013}
        \item c) $\iff$ d) by \cite[p. 867]{AnilyFedergruen1987}
        \item e) $\implies$ d) by \cite[Eq. 6]{Chestnut2010} or \cite[Theorem 4.8]{Seneta1981}
    \end{itemize}
    
    By \cite[Theorem 4.8]{Seneta1981}, d) implies that there exists some $N \in \mathbb{N}$ such that $\tau(\mathbf{K}^N) > 0$; this is a weaker version of e) that lacks an explicit $d$-dependent threshold on $N$. To remedy this, we supply a proof of the threshold below.

    \textbf{Proof of b) $\implies$ e):}
    Let $r \in [d]$ be the number of states in the recurrent class of $\mathbf{K}$, and $d-r$ the number of transient states. In the graph representation of $\mathbf{K}$, for each transient state $i$ (if one exists), there must exist a path from $i$ to any state in the recurrent class of length at most $d-r$. Fix an arbitrary state $k$ in the recurrent class. Since the recurrent class is aperiodic, Wielandt's result for primitive matrices \cite[Corollary 8.5.8]{HornJohnson2013} implies that given an arbitrary integer $N\geq r^2-2r+2$, there must exist a path of length exactly $N$ from each recurrent state to $k$. Thus, for every state $i$ (either transient or recurrent), there must exist a path from $i$ to $k$ of length exactly $r^2-3r+d+2$, which implies that $\tau(\mathbf{K}^n)>0$ for all
    \begin{equation}
        n \geq d^2 \geq r^2-3r+d+2,
    \end{equation}
    which holds for all $d\geq r \geq 1$ since $d^2-d=d(d-1)\geq (r-1)(r-2) = r^2-3r+2$, as desired.
    
    Although c) $\iff$ d) is known in the literature \cite[p. 867]{AnilyFedergruen1987}, this equivalence is seldom directly argued for homogeneous Markov chains, since the distinction between strong and weak forms of ergodicity is traditionally only made in the context of inhomogeneous chains. So, for completeness, we provide a proof of c) $\iff$ d) from first principles below.
    
    \textbf{Proof of c) $\iff$ d):} Fix a time-homogeneous Markov chain $\mathbf{K} \in \mathbb{R}_\mathsf{sto}^{d \times d}$. First, we will prove the forward direction; the argument is straightforward and standard. Assume $\mathbf{K}$ is strongly ergodic, with $\boldsymbol{\pi}^*$ denoting its steady-state distribution. Then, for any $i, j \in [d]$,   \begin{align}
        &\phantom{{}={}}\lim_{n \rightarrow \infty} \norm{\bkt{\mathbf{K}^n}_{\ang{i}} - \bkt{\mathbf{K}^n}_{\ang{j}}}_\infty = \lim_{n \rightarrow \infty} \norm{\mathbf{e}_i \mathbf{K}^n - \mathbf{e}_j \mathbf{K}^n}_\infty \\&\stackclap{(a)}{\leq} \lim_{n \rightarrow \infty} \norm{\mathbf{e}_i \mathbf{K}^n - \boldsymbol{\pi}^*}_\infty + \lim_{n \rightarrow \infty} \norm{\mathbf{e}_j \mathbf{K}^n - \boldsymbol{\pi}^*}_\infty \stackclap{(b)}{=} 0  ,
    \end{align}
    where (a) holds by the triangle inequality and additivity of limits, and (b) holds by strong ergodicity. Hence, $\mathbf{K}$ is weakly ergodic, as desired.

    Next, we will prove the reverse direction. Assume $\mathbf{K}$ is weakly ergodic. Consider the sequence of powers $\ibrc{\mathbf{K}^n}_{n=1}^\infty$. Since the set of all row stochastic matrices $\mathbb{R}_\mathsf{sto}^{d \times d}$ is compact (e.g., under the Frobenius norm), there exists a subsequence $\ibrc{\mathbf{K}^{n_\ell}}_{\ell=1}^\infty$ which converges to a limit $\mathbf{K}^* \in \mathbb{R}_\mathsf{sto}^{d \times d}$, i.e.,
    \begin{equation}
        \lim_{\ell \rightarrow \infty} \frnorm{\mathbf{K}^{n_\ell} - \mathbf{K}^*} = 0 . \label{eq:ergodicity-subsequence-limit}
    \end{equation}
    Furthermore, for any $i, j \in [d]$, we have   \begin{align}
        &\phantom{{}={}} \norm{\bkt{\mathbf{K}^*}_{\ang{i}} - \bkt{\mathbf{K}^*}_{\ang{j}}}_\infty 
        \\&\stackclap{(a)}{\leq} \lim_{\ell \rightarrow \infty} \norm{\bkt{\mathbf{K}^{n_\ell}}_{\ang{i}} - \bkt{\mathbf{K}^*}_{\ang{i}}}_\infty \!\!\! + \lim_{\ell \rightarrow \infty} \norm{\bkt{\mathbf{K}^{n_\ell}}_{\ang{j}} - \bkt{\mathbf{K}^*}_{\ang{j}}}_\infty\\&\phantom{{}={}} + \lim_{\ell \rightarrow \infty} \norm{\bkt{\mathbf{K}^{n_\ell}}_{\ang{i}} - \bkt{\mathbf{K}^{n_\ell}}_{\ang{j}}}_\infty \\
        &\stackclap{(b)}{=} \lim_{\ell \rightarrow \infty} \norm{\bkt{\mathbf{K}^{n_\ell}}_{\ang{i}} - \bkt{\mathbf{K}^{n_\ell}}_{\ang{j}}}_\infty \\&\stackclap{(c)}{=} \lim_{n \rightarrow \infty} \norm{\bkt{\mathbf{K}^n}_{\ang{i}} - \bkt{\mathbf{K}^n}_{\ang{j}}}_\infty \stackclap{(d)}{=} 0  ,
    \end{align}
    where (a) holds by the triangle inequality and additivity of limits, (b) holds by \cref{eq:ergodicity-subsequence-limit}, (c) holds because the limit of any subsequence of a convergent sequence is equal to the limit of the entire sequence, and (d) holds by weak ergodicity. Hence, $\mathbf{K}^*$ has all identical rows, and so $\mathrm{rank}(\mathbf{K}^*) = 1$. Therefore,   $\iabs{\lambda_1(\mathbf{K}^*)} > 0$ and $\lambda_2(\mathbf{K}^*) = \tcdots = \lambda_d(\mathbf{K}^*) = 0$. \par %
    Moreover, by the Perron-Frobenius theorem \cite[Chapter 1]{Seneta1981}, we have $\lambda_1(\mathbf{K}^*) = 1$ and $\lambda_1(\mathbf{K}^n) = 1$ for every $n \in \mathbb{N}$.
    This, along with the continuity of eigenvalues with respect to the entries of a matrix \cite[Chapter IV, Theorem 1.1]{StewartSun1990}, yields   \begin{math}
        \lim_{\ell \rightarrow \infty} \lambda_i \mprn{\mathbf{K}^{n_\ell}} = \lambda_i(\mathbf{K}^*) = 0
    \end{math} for all $i \in \ibrc{2, \tdots, d}$.
    Since $\lambda_i(\mathbf{K}^{n_\ell}) = \lambda_i(\mathbf{K})^{n_\ell}$ for all $i \in [d]$, it follows that   \begin{math}
        \iabs{\lambda_i(\mathbf{K})} < 1
    \end{math} for all $i \in \ibrc{2, \tdots, d}$.
    Hence, consider the Jordan canonical form $\mathbf{K} = \mathbf{X}^{-1} \mathbf{J} \mathbf{X}$ \cite[Chapter 3]{HornJohnson2013}, where the Jordan matrix $\mathbf{J}$ is lower triangular with $[\mathbf{J}]_{1,1} = 1$ and all other diagonal entries strictly less than $1$ in magnitude, and the rows of $\mathbf{X}$ contain the generalized eigenvectors of $\mathbf{K}$, scaled such that $\inorm{[\mathbf{X}]_{\ang{1}}}_1 = 1$. For any initial distribution $\boldsymbol{\pi}_0 = \mathbf{c} \mathbf{X} \in \mathscr{P}_{d-1}$, we have   \begin{align}
        \lim_{n \rightarrow \infty} \boldsymbol{\pi}_0 \mathbf{K}^n &= \lim_{n \rightarrow \infty} \brc{\prn{\mathbf{c} \mathbf{X}} \prn{\mathbf{X}^{-1} \mathbf{J}^n \mathbf{X}}} = \mathbf{c} \prn{\lim_{n \rightarrow \infty} \mathbf{J}^n} \mathbf{X} \\&\stackclap{(a)}{=} \mathbf{c} \mathbf{e}_1^\top \mathbf{e}_1 \mathbf{X} = \bkt{\mathbf{c}}_1 \bkt{\mathbf{X}}_{\ang{1}} ,
    \end{align}
    where (a) holds because the powers of Jordan blocks corresponding to eigenvalues with magnitude less than $1$ vanish entry-wise in the limit. Finally, since $\inorm{\boldsymbol{\pi}_0 \mathbf{K}^n}_1 = 1$ for each $n \in \mathbb{N}$, we have $[\mathbf{c}]_1 = 1$ for any choice of $\boldsymbol{\pi}_0 \in \mathscr{P}_{d-1}$,\footnote{This occurs because $[\mathbf{X}]_{\ang{1}}$ is a point on $\mathscr{P}_{d-1}$, and the remaining rows of $\mathbf{X}$ span the direction space associated with $\mathscr{P}_{d-1}$.} and thus,   \begin{math}
        \lim_{n \rightarrow \infty} \boldsymbol{\pi}_0 \mathbf{K}^n = \bkt{\mathbf{X}}_{\ang{1}}
    \end{math}.
    Hence, $\mathbf{K}$ is strongly ergodic, as desired.

    \textbf{Proof of e) $\iff$ f):} The set of all $d \times d$ row stochastic matrices $\mathbb{R}_\mathsf{sto}^{d \times d}$ is convex, contains the identity kernel (i.e., the $d \times d$ identity matrix), and contains a constant kernel (e.g., the $d \times d$ matrix where each row is $\mathbf{e}_1$). Thus, by \cref{proposition:properties-of-doeblin-curves}, Part 2,
    \begin{align}
        \mathrm{F}_{\mathbf{K}^n} \mprn{t; \mathbb{R}_\mathsf{sto}^{d \times d}} = \rho \mprn{\mathbf{K}^n} \, t  . \label{eq:markov-chain-doeblin-curve-sharpness}
    \end{align}
    Hence, $\mathrm{F}_{\mathbf{K}^n} \iprn{t; \mathbb{R}_\mathsf{sto}^{d \times d}} < t$ holds iff $\rho(\mathbf{K}^n) < 1$, or equivalently $\tau(\mathbf{K}^n) > 0$, as desired.
\end{proof}

\cref{proposition:markov-chain-properties} states that any Markov chain $\mathbf{K} \in \mathbb{R}_\mathsf{sto}^{d \times d}$ has positive spectral gap (i.e., $|\lambda_2(\mathbf{K})| < 1$) iff $\mathrm{F}_{\mathbf{K}^n} \iprn{t; \mathbb{R}_\mathsf{sto}^{d \times d}} < t$ holds for sufficiently large $n$. We note that taking into account whether the inequality holds for \emph{sufficiently high power} $\mathbf{K}^n$, not necessarily $\mathbf{K}$ itself, is crucial to this equivalence, as $|\lambda_2(\mathbf{K})| < 1$ does not imply in general that the inequality holds for $n = 1$. As a counterexample, consider a directed cycle on $d \geq 4$ vertices with self-loops at each vertex:
\begin{equation}
    \forall i, j \in [d], \quad \bkt{\mathbf{K}}_{i,j} =\! \begin{cases}
        \frac{1}{2}  , & \text{if $j = i$}  , \\
        \frac{1}{2}  , & \text{if $j = i + 1$ or $(i, j) = (d, 1)$}  , \\
        0  , & \text{otherwise}  .
    \end{cases}
\end{equation}
Clearly, $\mathbf{K}$ has positive spectral gap because it is a unichain (by virtue of its cycle structure) and aperiodic (by virtue of its self-loops). Now, let $\mathbf{w}$ and $\mathbf{v}$ be point mass distributions at vertices at least distance $2$ apart on the cycle, e.g., $\mathbf{w} = \mathbf{e}_1$ and $\mathbf{v} = \mathbf{e}_3$. Then, by matrix multiplication, the distributions after one step of $\mathbf{K}$ are   \begin{math}
    \mathbf{w} \mathbf{K} = (1/2) \mathbf{e}_1 + (1/2) \mathbf{e}_2
\end{math} and \begin{math}
    \mathbf{v} \mathbf{K} = (1/2) \mathbf{e}_3 + (1/2) \mathbf{e}_4 .
\end{math}
It follows that   \begin{equation}
    \rho(\mathbf{K}) \stackclap{(a)}{=} \rho \bigprn{\bkt{\mathbf{w}, \mathbf{v}}} \, \rho(\mathbf{K}) \stackclap{(b)}{\geq} \rho \bigprn{\bkt{\mathbf{w} \mathbf{K}, \mathbf{v} \mathbf{K}}} \stackclap{(c)}{=} 1  ,
\end{equation}
where (a) holds because $\mathbf{w} = \mathbf{e}_1$ and $\mathbf{v} = \mathbf{e}_3$ have disjoint support and thus $\rho([\mathbf{w}, \mathbf{v}]) = 1$, (b) holds by submultiplicativity of complementary Doeblin coefficients, and (c) holds because $\mathbf{w} \mathbf{K}$ and $\mathbf{v} \mathbf{K}$ have disjoint support. By \cref{eq:markov-chain-doeblin-curve-sharpness}, it follows that $\mathrm{F}_\mathbf{K}(t; \mathbb{R}_\mathsf{sto}^{d \times d}) = t$, even though $\mathbf{K}$ has positive spectral gap.   %

Next, we present the following proposition which exactly characterizes when contraction occurs between two input distributions after one step of the Markov chain $\mathbf{K}$.\footnote{Since we consider two input distributions, the Doeblin coefficient $\rho$ reduces to TV distance.}

\begin{proposition}[Strict Contraction of Doeblin Coefficient] \label{proposition:strict-contraction-of-doeblin-coefficient}
    Let $\mathbf{K} \in \mathbb{R}_\mathsf{sto}^{d \times d}$ be a Markov chain. For any subset of states $\mathcal{S} \subseteq [d]$, let $\mathcal{N}_\mathbf{K}(\mathcal{S})$ denote the set of states reachable from $\mathcal{S}$ in one time step, i.e.,
    \begin{equation}
        \mathcal{N}_\mathbf{K}(\mathcal{S}) = \brc{j \in [d]: \exists i \in [d], \, \bkt{\mathbf{K}}_{i,j} > 0} . \label{eq:strict-contraction-neighborhood}
    \end{equation}
    Then, for any pair of distributions $\mathbf{w}, \mathbf{v} \in \mathscr{P}_{d-1}$, we have   \begin{math}
        \rho \iprn{\bkt{\mathbf{w}, \mathbf{v}} \mathbf{K}} < \rho \iprn{\bkt{\mathbf{w}, \mathbf{v}}}
    \end{math}
    iff $\mathcal{N}_\mathbf{K}(\mathcal{S}_{\mathbf{w} > \mathbf{v}}) \cap \mathcal{N}_\mathbf{K}(\mathcal{S}_{\mathbf{w} < \mathbf{v}})$ is non-empty, where we define   \begin{align}
        \mathcal{S}_{\mathbf{w} > \mathbf{v}} &= \brc{i \in [d]: \bkt{\mathbf{w}}_i > \bkt{\mathbf{v}}_i} , \\
        \mathcal{S}_{\mathbf{w} < \mathbf{v}} &= \brc{i \in [d]: \bkt{\mathbf{w}}_i < \bkt{\mathbf{v}}_i} . \label{eq:strict-contraction-index-sets}
    \end{align}
\end{proposition}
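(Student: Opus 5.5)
The plan is a direct computation. Since $[\mathbf{w}, \mathbf{v}]$ has two rows, \cref{eq:doeblin-coefficient-matrix} gives
\begin{equation}
    \rho([\mathbf{w}, \mathbf{v}]) = 1 - \sum_j \min\{[\mathbf{w}]_j, [\mathbf{v}]_j\} = \tfrac{1}{2}\inorm{\mathbf{w} - \mathbf{v}}_1 ,
\end{equation}
and likewise $\rho([\mathbf{w}, \mathbf{v}]\mathbf{K}) = \tfrac{1}{2}\inorm{\mathbf{w}\mathbf{K} - \mathbf{v}\mathbf{K}}_1$. So the statement reduces to a strict-contraction criterion for TV distance under one step of $\mathbf{K}$. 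Throughout, I read \cref{eq:strict-contraction-neighborhood} with the intended quantifier ``$\exists i \in \mathcal{S}$'' rather than ``$\exists i \in [d]$''.

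First, I decompose $\boldsymbol{\delta} \triangleq \mathbf{w} - \mathbf{v} = \boldsymbol{\delta}^+ - \boldsymbol{\delta}^-$ into its positive and negative parts. Here $\boldsymbol{\delta}^+ \geq 0$ is strictly positive exactly on $\mathcal{S}_{\mathbf{w} > \mathbf{v}}$, and $\boldsymbol{\delta}^- \geq 0$ is strictly positive exactly on $\mathcal{S}_{\mathbf{w} < \mathbf{v}}$. Since both $\mathbf{w}$ and $\mathbf{v}$ sum to $1$, both parts have the same total mass $t \triangleq \inorm{\boldsymbol{\delta}^+}_1 = \inorm{\boldsymbol{\delta}^-}_1 = \rho([\mathbf{w}, \mathbf{v}])$.

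Next, I push the parts through $\mathbf{K}$. The vectors $\mathbf{a} \triangleq \boldsymbol{\delta}^+\mathbf{K}$ and $\mathbf{b} \triangleq \boldsymbol{\delta}^-\mathbf{K}$ are entrywise non-negative, and each has total mass $t$ because $\mathbf{K}$ is row stochastic. Using the identity $|a_j - b_j| = a_j + b_j - 2\min\{a_j, b_j\}$, I get
\begin{equation}
    \rho([\mathbf{w}, \mathbf{v}]\mathbf{K}) = \tfrac{1}{2}\inorm{\mathbf{a} - \mathbf{b}}_1 = t - \sum_{j=1}^d \min\{[\mathbf{a}]_j, [\mathbf{b}]_j\} .
\end{equation}
Hence strict contraction holds iff there is some $j$ with $[\mathbf{a}]_j > 0$ and $[\mathbf{b}]_j > 0$.

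Finally, I translate positivity into reachability. We have $[\mathbf{a}]_j = \sum_i [\boldsymbol{\delta}^+]_i [\mathbf{K}]_{i,j}$, which is a sum of non-negative terms. It is therefore positive iff some $i$ with $[\boldsymbol{\delta}^+]_i > 0$, i.e.\ some $i \in \mathcal{S}_{\mathbf{w} > \mathbf{v}}$, has $[\mathbf{K}]_{i,j} > 0$. This is exactly the condition $j \in \mathcal{N}_\mathbf{K}(\mathcal{S}_{\mathbf{w} > \mathbf{v}})$. By the same argument, $[\mathbf{b}]_j > 0$ iff $j \in \mathcal{N}_\mathbf{K}(\mathcal{S}_{\mathbf{w} < \mathbf{v}})$. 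Combining this with the previous step proves the equivalence.

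The degenerate case $\mathbf{w} = \mathbf{v}$ is consistent with the statement. Both index sets are empty, so their neighborhoods are empty and their intersection is empty. On the other side, both Doeblin quantities equal $0$, so there is no strict contraction.

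There is no real obstacle in this argument. The only point needing care is the quantifier issue in \cref{eq:strict-contraction-neighborhood}, which I handle by the reading stated above.
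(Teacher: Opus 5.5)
Your proof is correct and is essentially the paper's argument. The paper lower-bounds $\sum_j \min\{[\mathbf{w}\mathbf{K}]_j, [\mathbf{v}\mathbf{K}]_j\}$ by $\sum_j \sum_i \min\{[\mathbf{w}]_i, [\mathbf{v}]_i\} [\mathbf{K}]_{i,j}$ and then asserts, without proof, a strictness criterion for $\min\{\sum_i x_i, \sum_i y_i\} \geq \sum_i \min\{x_i, y_i\}$. The slack of that inequality at column $j$ is exactly your $\min\{[\mathbf{a}]_j, [\mathbf{b}]_j\}$, so cancelling the common part $\min\{\mathbf{w}, \mathbf{v}\}$ up front makes the gap explicit and in passing justifies that criterion.

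Your reading of $\mathcal{N}_\mathbf{K}(\mathcal{S})$ with $\exists i \in \mathcal{S}$ is the intended one: the displayed definition has a typo, and the paper's proof implicitly uses the same reading.
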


\begin{proof}
    Fix a Markov chain $\mathbf{K} \in \mathbb{R}_\mathsf{sto}^{d \times d}$ and distributions $\mathbf{w}, \mathbf{v} \in \mathscr{P}_{d-1}$. We have   \begin{align}
        \rho \bigprn{\bkt{\mathbf{w}, \mathbf{v}} \mathbf{K}} &\stackclap{(a)}{=} 1 - \sum_{j=1}^d \min \mbrc{\bkt{\mathbf{w} \mathbf{K}}_j, \bkt{\mathbf{v} \mathbf{K}}_j} \\&\stackclap{(b)}{=} 1 - \sum_{j=1}^d \min \mbrc{\sum_{i=1}^d \bkt{\mathbf{w}}_i \bkt{\mathbf{K}}_{i,j}, \sum_{i=1}^d \bkt{\mathbf{v}}_i \bkt{\mathbf{K}}_{i,j}} \\
        &\stackclap{(c)}{\leq} 1 - \sum_{j=1}^d \sum_{i=1}^d \min \mbrc{\bkt{\mathbf{w}}_i \bkt{\mathbf{K}}_{i,j}, \bkt{\mathbf{v}}_i \bkt{\mathbf{K}}_{i,j}} \\&\stackclap{(d)}{=} 1 - \sum_{i=1}^d \min \mbrc{\bkt{\mathbf{w}}_i, \bkt{\mathbf{v}}_i} \sum_{j=1}^d \bkt{\mathbf{K}}_{i,j} \\
        &\stackclap{(e)}{=} 1 - \sum_{i=1}^d \min \mbrc{\bkt{\mathbf{w}}_i, \bkt{\mathbf{v}}_i} \\&\stackclap{(f)}{=} \rho \bigprn{\bkt{\mathbf{w}, \mathbf{v}}}  , \label{eq:strict-contraction-step-one}
    \end{align}
    where (a) holds by definition of Doeblin coefficient \cref{eq:doeblin-coefficient-matrix}, (b) holds by matrix algebra, (c) holds by the identity
    \begin{equation}
        \min \mbrc{\sum_i x_i, \sum_i y_i} \geq \sum_i \min \mbrc{x_i, y_i}  , \label{eq:pushing-min-into-sum}
    \end{equation}
    (d) holds by factoring $[\mathbf{K}]_{i,j}$ outside the minimum (because $[\mathbf{K}]_{i,j} \geq 0$) and interchanging the order of summation, (e) holds because $\mathbf{K}$ is row stochastic, and (f) holds by definition of Doeblin coefficient \cref{eq:doeblin-coefficient-matrix}. The inequality in \cref{eq:pushing-min-into-sum} is strict iff $x_i > y_i$ and $x_{i'} < y_{i'}$ for some $i$ and $i'$. Hence, the inequality in step (c) of \cref{eq:strict-contraction-step-one} is strict iff there exist $i, i', j \in [d]$ such that{ }$\bkt{\mathbf{w}}_i > \bkt{\mathbf{v}}_i$ and $\bkt{\mathbf{K}}_{i,j} > 0$ (i.e., $j \in \mathcal{N}_\mathbf{K}(\mathcal{S}_{\mathbf{w} > \mathbf{v}})$), and $\bkt{\mathbf{w}}_{i'} < \bkt{\mathbf{v}}_{i'}$ and $\bkt{\mathbf{K}}_{i',j} > 0$ (i.e., $j \in \mathcal{N}_\mathbf{K}(\mathcal{S}_{\mathbf{w} < \mathbf{v}})$).\footnote{This condition for equality can also be verified by using the $\ell^1$-norm characterization of TV distance in (a), applying the triangle inequality to deduce (e), and then using known conditions for equality in the triangle inequality.}
    Clearly, this is equivalent to the condition that $\mathcal{N}_\mathbf{K}(\mathcal{S}_{\mathbf{w} > \mathbf{v}}) \cap \mathcal{N}_\mathbf{K}(\mathcal{S}_{\mathbf{w} < \mathbf{v}}) \neq \emptyset$, as desired.
\end{proof}

\cref{proposition:strict-contraction-of-doeblin-coefficient} exactly characterizes the condition on $\mathbf{K}$, $\mathbf{w}$, $\mathbf{v}$ such that strict contraction occurs in one step. We remark that two natural corollaries of this result give intuitive conditions on $\mathbf{K}$ under which strict contraction occurs for entire classes of $\mathbf{w}$ and $\mathbf{v}$.

\begin{corollary}[Strict Contraction and Gramian] \label{corollary:strict-contraction-and-gramian}
    Let $\mathbf{K} \in \mathbb{R}_\mathsf{sto}^{d \times d}$ be a Markov chain. The joint range of $\mathbf{K}$ satisfies   \begin{math}
        \mathfrak{F} \iprn{\mathbf{K}; \mathbb{R}_\mathsf{sto}^{2 \times d}} \subseteq \ibrc{(t, y) \in \ibkt{0, 1}^2: y < t} \cup \ibrc{(0, 0)}
    \end{math},
    or, equivalently, we have   \begin{math}
        \rho \iprn{\ibkt{\mathbf{w}, \mathbf{v}} \mathbf{K}} < \rho \iprn{\ibkt{\mathbf{w}, \mathbf{v}}}
    \end{math}
    for all pairs of distinct distributions $\mathbf{w}, \mathbf{v} \in \mathscr{P}_{d-1}$, iff the Gramian matrix $\mathbf{K} \mathbf{K}^\top$ is entry-wise strictly positive.
\end{corollary}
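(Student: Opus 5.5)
The plan is to reduce everything to \cref{proposition:strict-contraction-of-doeblin-coefficient}. Note that the neighborhood in \cref{eq:strict-contraction-neighborhood} should be read as $\mathcal{N}_\mathbf{K}(\mathcal{S}) = \{j : \exists i \in \mathcal{S},\, [\mathbf{K}]_{i,j} > 0\}$. With that reading, $[\mathbf{K}\mathbf{K}^\top]_{i,i'} = \sum_j [\mathbf{K}]_{i,j}[\mathbf{K}]_{i',j}$ is positive iff rows $i$ and $i'$ share a support column, that is, iff $\mathcal{N}_\mathbf{K}(\{i\}) \cap \mathcal{N}_\mathbf{K}(\{i'\}) \neq \emptyset$.

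We first dispose of the equivalence of the two formulations. A point $(\rho(\mathbf{W}), \rho(\mathbf{W}\mathbf{K}))$ with $\mathbf{W} = [\mathbf{w}, \mathbf{v}]$ has first coordinate $0$ iff $\mathbf{w} = \mathbf{v}$, and in that case it is $(0,0)$. The joint range is a closure, however, so the set inclusion needs care: strict inequality for every pair does not immediately survive taking limits. We will handle this through homogeneity. The pairs with $\rho([\mathbf{w},\mathbf{v}]) = t$ write as $\mathbf{w} - \mathbf{v} = t(\mathbf{p} - \mathbf{q})$, where $\mathbf{p}, \mathbf{q}$ have disjoint supports. Then $\rho([\mathbf{w},\mathbf{v}]\mathbf{K}) = t\,\tfrac12\|(\mathbf{p}-\mathbf{q})\mathbf{K}\|_1$. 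The ratio $y/t$ is therefore a continuous function of $(\mathbf{p},\mathbf{q})$ on the compact set of disjointly supported pairs, and it is $<1$ everywhere iff it is bounded by some $c<1$. This bound yields $y \le ct$ on the closure, which gives the set inclusion in the strict form. The converse direction of the equivalence is immediate, since the closure contains the actual points.

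For sufficiency, assume $\mathbf{K}\mathbf{K}^\top > 0$ entrywise and take $\mathbf{w} \neq \mathbf{v}$. Then $\mathcal{S}_{\mathbf{w}>\mathbf{v}}$ and $\mathcal{S}_{\mathbf{w}<\mathbf{v}}$ are both nonempty, because the vectors have equal sums. Pick $i$ in the first set and $i'$ in the second. Positivity of $[\mathbf{K}\mathbf{K}^\top]_{i,i'}$ supplies a common $j$, so the neighborhoods intersect, and \cref{proposition:strict-contraction-of-doeblin-coefficient} gives strict contraction.

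For necessity, suppose $[\mathbf{K}\mathbf{K}^\top]_{i,i'} = 0$ for some $i,i'$. The diagonal entries are always positive, since each row is a probability vector, so $i \neq i'$. Take $\mathbf{w} = \mathbf{e}_i$ and $\mathbf{v} = \mathbf{e}_{i'}$. Then $\mathcal{S}_{\mathbf{w}>\mathbf{v}} = \{i\}$ and $\mathcal{S}_{\mathbf{w}<\mathbf{v}} = \{i'\}$, whose neighborhoods are disjoint, so \cref{proposition:strict-contraction-of-doeblin-coefficient} gives equality $\rho = 1$ before and after. The point $(1,1)$ then lies in the joint range, violating the inclusion.

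The main obstacle is the closure issue in the first step; the compactness and homogeneity reduction handles it.
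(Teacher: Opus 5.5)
Your proof is correct, and its core is the same as the paper's. Sufficiency picks $i\in\mathcal{S}_{\mathbf{w}>\mathbf{v}}$ and $i'\in\mathcal{S}_{\mathbf{w}<\mathbf{v}}$ sharing a support column, and necessity uses the test pair $\mathbf{e}_i,\mathbf{e}_{i'}$; both are fed into \cref{proposition:strict-contraction-of-doeblin-coefficient}. The paper proves only the pointwise strict-contraction formulation and simply asserts its equivalence with the joint-range inclusion, so your compactness argument fills a step the paper leaves unstated: the ratio $y/t=\rho([\mathbf{p},\mathbf{q}]\mathbf{K})$ attains a maximum $c<1$ over disjointly supported pairs, which gives $y\le ct$ on the closure. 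Your reading of $\mathcal{N}_\mathbf{K}(\mathcal{S})$ with $i\in\mathcal{S}$ is the intended one.
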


\begin{proof}
    Fix $\mathbf{K} \in \mathbb{R}_\mathsf{sto}^{d \times d}$.
    
    First, we will prove the forward direction. Assume $\mathbf{K} \mathbf{K}^\top$ is entry-wise strictly positive. Fix any distinct distributions $\mathbf{w}, \mathbf{v} \in \mathscr{P}_{d-1}$. Recall the definitions of $\mathcal{S}_{\mathbf{w} > \mathbf{v}}$ and $\mathcal{S}_{\mathbf{w} < \mathbf{v}}$ from \cref{eq:strict-contraction-index-sets}. Since probability distributions sum to $1$, both $\mathcal{S}_{\mathbf{w} > \mathbf{v}}$ and $\mathcal{S}_{\mathbf{w} < \mathbf{v}}$ are non-empty. Consider any $i \in \mathcal{S}_{\mathbf{w} > \mathbf{v}}$ and $i' \in \mathcal{S}_{\mathbf{v} < \mathbf{w}}$. By assumption, we have   \begin{math}
        [\mathbf{K} \mathbf{K}^\top]_{i,i'} = \sum_{j=1}^d [\mathbf{K}]_{i,j} [\mathbf{K}]_{i',j} > 0 ,
    \end{math}
    so there exists $j^* \in [d]$ such that $[\mathbf{K}]_{i,j^*} > 0$ and $[\mathbf{K}]_{i',j^*} > 0$. By \cref{eq:strict-contraction-neighborhood}, it holds that $j^* \in \mathcal{N}_\mathbf{K}(\mathcal{S}_{\mathbf{w} > \mathbf{v}})$ and $j^* \in \mathcal{N}_\mathbf{K}(\mathcal{S}_{\mathbf{w} < \mathbf{v}})$, and so   \begin{math}
        \mathcal{N}_\mathbf{K}(\mathcal{S}_{\mathbf{w} > \mathbf{v}}) \cap \mathcal{N}_\mathbf{K}(\mathcal{S}_{\mathbf{w} < \mathbf{v}}) \neq \emptyset
    \end{math}.
    By \cref{proposition:strict-contraction-of-doeblin-coefficient}, we have $\rho([\mathbf{w}, \mathbf{v}] \, \mathbf{K}) < \rho([\mathbf{w}, \mathbf{v}])$ as desired.

    Next, we will prove the reverse direction. Assume that $\rho([\mathbf{w}, \mathbf{v}] \, \mathbf{K}) < \rho([\mathbf{w}, \mathbf{v}])$ for any distinct distributions $\mathbf{w}, \mathbf{v} \in \mathscr{P}_{d-1}$. Fix unspecified distinct $i, i' \in [d]$. Choosing $\mathbf{w} = \mathbf{e}_i$ and $\mathbf{v} = \mathbf{e}_{i'}$ and applying \cref{proposition:strict-contraction-of-doeblin-coefficient}, we have   \begin{math}
        \mathcal{N}_\mathbf{K}(\brc{i}) \cap \mathcal{N}_\mathbf{K}(\ibrc{i'}) \neq \emptyset
    \end{math}.
    Consider any $j^* \in \mathcal{N}_\mathbf{K}(\ibrc{i}) \cap \mathcal{N}_\mathbf{K}(\ibrc{i'})$. We have   \begin{equation}
        \bkt{\mathbf{K} \mathbf{K}^\top}_{i,i'} = \sum_{j=1}^d \bkt{\mathbf{K}}_{i,j} \bkt{\mathbf{K}}_{i',j} \geq \bkt{\mathbf{K}}_{i,j^*} \bkt{\mathbf{K}}_{i',j^*} \stackclap{(a)}{>} 0 ,
    \end{equation}
    where (a) holds by definition of $\mathcal{N}_\mathbf{K}$ from \cref{eq:strict-contraction-neighborhood}, and so the off-diagonal entries of $\mathbf{K} \mathbf{K}^\top$ are positive. Furthermore, for any $i \in [d]$, we have   \begin{math}
        [\mathbf{K} \mathbf{K}^\top]_{i,i} = \sum_{j=1}^d [\mathbf{K}]_{i,j}^2 > 0 ,
    \end{math}
    where the final inequality
    holds because $\mathbf{K}$ is row stochastic, and so the diagonal entries of $\mathbf{K} \mathbf{K}^\top$ are positive. Hence, $\mathbf{K} \mathbf{K}^\top$ is entry-wise strictly positive, as desired.
\end{proof}

\begin{corollary}[Strict Contraction and Laziness] \label{corollary:strict-contraction-and-laziness}
    Let $\mathbf{K} \in \mathbb{R}_\mathsf{sto}^{d \times d}$ be a lazy Markov chain (i.e., $[\mathbf{K}]_{i,i} > 0$ for all $i \in [d]$) with positive spectral gap (i.e., $|\lambda_2(\mathbf{K})| < 1$). Then, for all pairs of distributions $\mathbf{w}, \mathbf{v} \in \mathscr{P}_{d-1}$ which differ at each entry (i.e., $[\mathbf{w}]_i \neq [\mathbf{v}]_i$ for all $i \in [d]$), we have   \begin{math}
        \rho \iprn{\ibkt{\mathbf{w}, \mathbf{v}} \mathbf{K}} < \rho \iprn{\ibkt{\mathbf{w}, \mathbf{v}}}
    \end{math}.
\end{corollary}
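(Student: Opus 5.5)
The plan is to reduce the claim to \cref{proposition:strict-contraction-of-doeblin-coefficient}. By that proposition, it suffices to exhibit a single state $j \in [d]$ lying in both $\mathcal{N}_\mathbf{K}(\mathcal{S}_{\mathbf{w} > \mathbf{v}})$ and $\mathcal{N}_\mathbf{K}(\mathcal{S}_{\mathbf{w} < \mathbf{v}})$.

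First, since $[\mathbf{w}]_i \neq [\mathbf{v}]_i$ for every $i$, the two sets $\mathcal{S}_{\mathbf{w} > \mathbf{v}}$ and $\mathcal{S}_{\mathbf{w} < \mathbf{v}}$ partition $[d]$. Both are non-empty, because $\mathbf{w}$ and $\mathbf{v}$ each sum to $1$ and differ somewhere. Laziness gives $[\mathbf{K}]_{i,i} > 0$, so every state reaches itself in one step, and hence $\mathcal{S} \subseteq \mathcal{N}_\mathbf{K}(\mathcal{S})$ for any $\mathcal{S}$. (Here I read \cref{eq:strict-contraction-neighborhood} as intended, with $i \in \mathcal{S}$.)

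Next, I would show that the graph of $\mathbf{K}$ has some edge $i \to j$ with $i \in \mathcal{S}_{\mathbf{w} < \mathbf{v}}$ and $j \in \mathcal{S}_{\mathbf{w} > \mathbf{v}}$, or the reverse. Given such an edge, say $i \to j$ with $i \in \mathcal{S}_{\mathbf{w} < \mathbf{v}}$, the state $j$ lies in $\mathcal{N}_\mathbf{K}(\mathcal{S}_{\mathbf{w} < \mathbf{v}})$ via the edge and in $\mathcal{N}_\mathbf{K}(\mathcal{S}_{\mathbf{w} > \mathbf{v}})$ via its self-loop, which finishes the proof.

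The main step is producing this crossing edge from the positive spectral gap. By \cref{proposition:markov-chain-properties} (a)$\iff$(b), $\mathbf{K}$ has exactly one recurrent class $\mathcal{R}$. I argue by contradiction: suppose no edge crosses between the two sets, in either direction. Then each set is closed under transitions. A finite closed set of states must contain a recurrent class, since a chain started in it stays in it forever and visits some state infinitely often. So $\mathcal{S}_{\mathbf{w} > \mathbf{v}}$ and $\mathcal{S}_{\mathbf{w} < \mathbf{v}}$, being disjoint and non-empty, each contain a recurrent class. This gives two distinct recurrent classes, contradicting the unichain property.

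I expect this closed-set argument to be the only nontrivial step; the rest is bookkeeping with \cref{proposition:strict-contraction-of-doeblin-coefficient}. Aperiodicity is not needed here, because laziness already supplies the self-loops.
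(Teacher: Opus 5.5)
Your proposal is correct and follows the same route as the paper: reduce to \cref{proposition:strict-contraction-of-doeblin-coefficient}, get an edge between $\mathcal{S}_{\mathbf{w} > \mathbf{v}}$ and $\mathcal{S}_{\mathbf{w} < \mathbf{v}}$ from the unichain property, and use the self-loop at that edge's head to put it in both neighborhoods. The only difference is that you justify the crossing edge explicitly: two disjoint closed sets would each contain a recurrent class. The paper simply asserts that a unichain's graph is connected in the undirected sense, and your reading of the neighborhood definition with $i \in \mathcal{S}$ is the intended one.
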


\begin{proof}
    Fix a Markov chain $\mathbf{K} \in \mathbb{R}_\mathsf{sto}^{d \times d}$ and distributions $\mathbf{w}, \mathbf{v} \in \mathscr{P}_{d-1}$ satisfying the conditions in \cref{corollary:strict-contraction-and-laziness}. Since $\mathbf{w}$ and $\mathbf{v}$ differ at each entry, we have $\mathcal{S}_{\mathbf{w} > \mathbf{v}}^\complement = \mathcal{S}_{\mathbf{w} < \mathbf{v}}$. Since $\mathbf{K}$ has positive spectral gap, $\mathbf{K}$ has exactly one recurrent class (by \cref{proposition:markov-chain-properties}, Part (b)), and so the directed graph representation of $\mathbf{K}$ is connected (in an undirected sense). Hence, there exists an edge between $\mathcal{S}_{\mathbf{w} > \mathbf{v}}$ and $\mathcal{S}_{\mathbf{w} > \mathbf{v}}^\complement$, i.e., for some $i \in \mathcal{S}_{\mathbf{w} > \mathbf{v}}$ and $i' \in \mathcal{S}_{\mathbf{w} < \mathbf{v}}$, we have $[\mathbf{K}]_{i,i'} > 0$ or $[\mathbf{K}]_{i',i} > 0$. Without loss of generality, assume $[\mathbf{K}]_{i,i'} > 0$. Since $\mathbf{K}$ is lazy, we have $[\mathbf{K}]_{i',i'} > 0$. By \cref{eq:strict-contraction-neighborhood}, it holds that $i' \in \mathcal{N}_\mathbf{K}(\mathcal{S}_{\mathbf{w} > \mathbf{v}})$ and $i' \in \mathcal{N}_\mathbf{K}(\mathcal{S}_{\mathbf{w} < \mathbf{v}})$, and so   \begin{math}
        \mathcal{N}_\mathbf{K}(\mathcal{S}_{\mathbf{w} > \mathbf{v}}) \cap \mathcal{N}_\mathbf{K}(\mathcal{S}_{\mathbf{w} < \mathbf{v}}) \neq \emptyset
    \end{math}.
    By \cref{proposition:strict-contraction-of-doeblin-coefficient}, we have $\rho([\mathbf{w}, \mathbf{v}] \, \mathbf{K}) < \rho([\mathbf{w}, \mathbf{v}])$ as desired.
\end{proof}

\balance
\bibliographystyle{IEEEtran}
\bibliography{arxiv}

\end{document}